
\documentclass[11pt]{article}%
\usepackage{amsmath}
\usepackage{hyperref}
\usepackage{amsfonts}
\usepackage{amssymb}
\usepackage{graphicx}
\usepackage{fullpage}
\usepackage{pgf}
\usepackage{tikz}
\usepackage{xy}
\usepackage{ifpdf}
\usepackage{siunitx}%
\setcounter{MaxMatrixCols}{30}
\providecommand{\U}[1]{\protect\rule{.1in}{.1in}}
\sisetup{group-separator = {,}}
\usetikzlibrary{calc,positioning,fit}
\usetikzlibrary{fit,svg.path,positioning}
\newtheorem{theorem}{Theorem}

\newtheorem{corollary}[theorem]{Corollary}

\newtheorem{lemma}[theorem]{Lemma}

\newtheorem{proposition}[theorem]{Proposition}

\newenvironment{proof}[1][Proof]{\noindent\textbf{#1.} }{\ \rule{0.5em}{0.5em}}
\newcommand{\T}[1]{T_{#1}}
\newcommand{\F}[1]{F_{#1}}
\def\Fredkin{\mathsf{Fredkin}}
\def\NOT{\mathsf{NOT}}
\def\NOTNOT{\mathsf{NOTNOT}}
\def\CNOT{\mathsf{CNOT}}
\def\CNOTNOT{\mathsf{CNOTNOT}}
\newcommand{\gateMOD}[1]{\mathsf{MOD#1}}
\xyoption{matrix}
\xyoption{frame}
\xyoption{arrow}
\xyoption{arc}
\ifpdf
\else
\PackageWarningNoLine{Qcircuit}{Qcircuit is loading in Postscript mode.  The Xy-pic options ps and dvips will be loaded.  If you wish to use other Postscript drivers for Xy-pic, you must modify the code in Qcircuit.tex}
\xyoption{ps}
\xyoption{dvips}
\fi
\entrymodifiers={!C\entrybox}

\newcommand{\qw}[1][-1]{\ar @{-} [0,#1]}
\newcommand{\qwx}[1][-1]{\ar @{-} [#1,0]}

\newcommand{\gate}[1]{*+<.6em>{#1} \POS ="i","i"+UR;"i"+UL **\dir{-};"i"+DL **\dir{-};"i"+DR **\dir{-};"i"+UR **\dir{-},"i" \qw}

\newcommand{\control}{*!<0em,.025em>-=-<.2em>{\bullet}}

\newcommand{\ctrl}[1]{\control \qwx[#1] \qw}

\newcommand{\targ}{*+<.02em,.02em>{\xy ="i","i"-<.39em,0em>;"i"+<.39em,0em> **\dir{-}, "i"-<0em,.39em>;"i"+<0em,.39em> **\dir{-},"i"*\xycircle<.4em>{} \endxy} \qw}
\newcommand{\qswap}{*=<0em>{\times} \qw}

\newcommand{\Qcircuit}{\xymatrix @*=<0em>}

\begin{document}

\title{The Classification of Reversible Bit Operations}
\author{Scott Aaronson\thanks{MIT. \ Email: aaronson@csail.mit.edu. \ Supported by an
Alan T.\ Waterman Award from the National Science Foundation, under grant
no.\ 1249349.}
\and Daniel Grier\thanks{MIT. \ Email: grierd@mit.edu. \ Supported by an NSF
Graduate Research Fellowship under Grant No. 1122374.}
\and Luke Schaeffer\thanks{MIT. \ Email: lrs@mit.edu.}}
\date{}
\maketitle

\begin{abstract}
We present a complete classification of all possible sets of classical
reversible gates acting on bits, in terms of which reversible transformations
they generate, assuming swaps and ancilla bits are available for free. \ Our
classification can be seen as the reversible-computing analogue of
\textit{Post's lattice}, a central result in mathematical logic from the
1940s. \ It is a step toward the ambitious goal of classifying all possible
quantum gate sets acting on qubits.

Our theorem implies a linear-time algorithm (which we have implemented), that
takes as input the truth tables of reversible gates $G$ and $H$, and that
decides whether $G$ generates $H$. \ Previously, this problem was not even
known to be decidable (though with effort, one can derive from abstract
considerations an algorithm that takes triply-exponential time). \ The theorem
also implies that any $n$-bit reversible circuit can be \textquotedblleft
compressed\textquotedblright\ to an equivalent circuit, over the same gates,
that uses at most\ $2^{n}\operatorname*{poly}\left(  n\right)  $\ gates and
$O(1)$\ ancilla bits; these are the first upper bounds on these quantities
known, and are close to optimal. \ Finally, the theorem implies that every
non-degenerate reversible gate can implement either every reversible
transformation, or every affine transformation, when restricted to an
\textquotedblleft encoded subspace.\textquotedblright

Briefly, the theorem says that every set of reversible gates generates either
all reversible transformations on $n$-bit strings (as the Toffoli gate does);
no transformations; all transformations that preserve Hamming weight (as the
Fredkin gate does); all transformations that preserve Hamming weight mod $k$
for some $k$; all affine transformations (as the Controlled-NOT gate does);
all affine transformations that preserve Hamming weight mod $2$ or mod $4$,
inner products mod $2$, or a combination thereof; or a previous class
augmented by a NOT or NOTNOT gate. \ Prior to this work, it was not even known
that every class was finitely generated. \ Ruling out the possibility of
additional classes, not in the list, requires some arguments about
polynomials, lattices, and Diophantine equations.

\end{abstract}
\tableofcontents

\section{Introduction\label{INTRO}}

The \textit{pervasiveness of universality}---that is, the likelihood that a
small number of simple operations already generate all operations in some
relevant class---is one of the central phenomena in computer science. \ It
appears, among other places, in the ability of simple logic gates to generate
all Boolean functions (and of simple quantum gates to generate all unitary
transformations); and in the simplicity of the rule sets that lead to
Turing-universality, or to formal systems to which G\"{o}del's theorems apply.
\ Yet precisely because universality is so pervasive, it is often more
interesting to understand the ways in which systems can \textit{fail} to be universal.

In 1941, the great logician Emil Post \cite{post}\ published a complete
classification of all the ways in which sets of Boolean logic gates can fail
to be universal: for example, by being monotone (like the $\operatorname{AND}$
and $\operatorname*{OR}$\ gates) or by being affine over $\mathbb{F}_{2}%
$\ (like $\operatorname{NOT}$\ and $\operatorname{XOR}$). \ In universal
algebra, closed classes of functions are known, somewhat opaquely, as
\textit{clones}, while the inclusion diagram of all Boolean clones is called
\textit{Post's lattice}. \ Post's lattice is surprisingly complicated, in part
because Post did not assume that the constant functions $0$ and $1$ were
available for free.\footnote{In Appendix \ref{POST}, we prove for completeness
that if one \textit{does} assume constants are free, then Post's lattice
dramatically simplifies, with all non-universal gate sets either monotone or
affine.}

This paper had its origin in our ambition to find the analogue of Post's
lattice for all possible sets of \textit{quantum} gates acting on qubits. \ We
view this as a large, important, and underappreciated goal: something that
could be to quantum computing theory almost what the Classification of Finite
Simple Groups was to group theory. \ To provide some context, there are many
finite sets of $1$-, $2$- and $3$-qubit quantum gates that are known to be
universal---either in the strong sense that they can be used to approximate
any $n$-qubit unitary transformation to any desired precision, or in the
weaker sense that they suffice to perform universal quantum computation
(possibly in an encoded subspace). \ To take two examples, Barenco et
al.\ \cite{barenco}\ showed universality for the $\operatorname*{CNOT}$\ gate
plus the set of all $1$-qubit gates, while Shi \cite{shi:gate}\ showed
universality for the $\operatorname*{Toffoli}$\ and $\operatorname*{Hadamard}$\ gates.

There are also sets of quantum gates that are known \textit{not} to be
universal: for example, the basis-preserving gates, the $1$-qubit gates, and
most interestingly, the so-called \textit{stabilizer gates}%
\ \cite{gottesman:hamm,ag}\ (that is, the $\operatorname*{CNOT}$,
$\operatorname*{Hadamard}$, and $\pi/4$-$\operatorname*{Phase}$ gates), as
well as the stabilizer gates conjugated by $1$-qubit unitary transformations.
\ What is \textit{not} known is whether the preceding list basically exhausts
the ways in which quantum gates on qubits can fail to be universal. \ Are
there other elegant discrete structures, analogous to the stabilizer gates,
waiting to be discovered? \ Are there any gate sets, other than conjugated
stabilizer gates, that might give rise to intermediate complexity classes,
neither contained in $\mathsf{P}$\ nor equal to $\mathsf{BQP}$?\footnote{To
clarify, there are many restricted models of quantum computing known that are
plausibly \textquotedblleft intermediate\textquotedblright\ in that sense,
including BosonSampling \cite{aark}, the one-clean-qubit model
\cite{knilaflamme}, and log-depth quantum circuits \cite{cw}. \ However, with
the exception of conjugated stabilizer gates, none of those models arises from
simply considering which unitary transformations can be generated by some set
of $k$-qubit gates. \ They all involve non-standard initial states, building
blocks other than qubits, or restrictions on how the gates can be composed.}
\ How can we claim to understand quantum circuits---the bread-and-butter of
quantum computing textbooks and introductory quantum computing courses---if we
do not know the answers to such questions?

Unfortunately, working out the full \textquotedblleft quantum Post's
lattice\textquotedblright\ appears out of reach at present. \ This might
surprise readers, given how much is known about particular quantum gate sets
(e.g., those containing $\operatorname*{CNOT}$\ gates), but keep in mind that
what is asked for is an accounting of \textit{all} possibilities, no matter
how exotic. \ Indeed, even classifying $1$- and $2$-qubit quantum gate sets
remains wide open (!), and seems, without a new idea, to require studying the
irreducible representations of thousands of groups. \ Recently, Aaronson and
Bouland \cite{aarbouland}\ completed a much simpler task, the classification
of $2$-mode beamsplitters; that was already a complicated undertaking.

\subsection{Classical Reversible Gates\label{REVINTRO}}

So one might wonder:\ can we at least understand all the possible sets of
\textit{classical reversible gates} acting on bits, in terms of which
reversible transformations they generate? \ This an obvious prerequisite to
the quantum case, since every classical reversible gate is also a unitary
quantum gate. \ But beyond that, the classical problem is extremely
interesting in its own right, with (as it turns out) a rich algebraic and
number-theoretic structure, and with many implications for reversible
computing as a whole.

The notion of reversible computing
\cite{fredkintoffoli,toffoli,landauer,bennett,lloyd:gate,saeedi}\ arose from
early work on the physics of computation, by such figures as Feynman, Bennett,
Benioff, Landauer, Fredkin, Toffoli, and Lloyd. \ This community was
interested in questions like: does universal computation inherently require
the generation of entropy (say, in the form of waste heat)? \ Surprisingly,
the theory of reversible computing showed that, in principle, the answer to
this question is \textquotedblleft no.\textquotedblright\ \ \textit{Deleting}
information unavoidably generates entropy, according to \textit{Landauer's
principle} \cite{landauer}, but deleting information is not necessary for
universal computation.

Formally, a reversible gate\ is just a permutation $G:\left\{  0,1\right\}
^{k}\rightarrow\left\{  0,1\right\}  ^{k}$\ of the set of $k$-bit strings, for
some positive integer $k$. \ The most famous examples are:

\begin{itemize}
\item the $2$-bit $\operatorname*{CNOT}$\ (Controlled-NOT) gate, which flips
the second bit if and only if the first bit is $1$;

\item the $3$-bit $\operatorname*{Toffoli}$\ gate, which flips the third bit
if and only if the first two bits are both $1$;

\item the $3$-bit $\operatorname*{Fredkin}$\ gate, which swaps the second and
third bits if and only if the first bit is $1$.
\end{itemize}

These three gates already illustrate some of the concepts that play important
roles in this paper. \ The $\operatorname*{CNOT}$\ gate can be used to copy
information in a reversible way, since it maps $x0$\ to $xx$; and also to
compute arbitrary affine functions over the finite field $\mathbb{F}_{2}$.
\ However, because $\operatorname*{CNOT}$\ is \textit{limited} to affine
transformations, it is not computationally universal. \ Indeed, in contrast to
the situation with irreversible logic gates, one can show that \textit{no}
$2$-bit classical reversible gate is computationally universal. \ The
$\operatorname*{Toffoli}$ gate is computationally universal, because (for
example) it maps $x,y,1$\ to $x,y,\overline{xy}$, thereby computing the
$\operatorname*{NAND}$\ function. \ Moreover, Toffoli showed \cite{toffoli}%
---and we prove for completeness in Section \ref{NONAFFCIRC}---that the
$\operatorname*{Toffoli}$\ gate is universal in a stronger sense: it generates
all possible reversible transformations $F:\left\{  0,1\right\}
^{n}\rightarrow\left\{  0,1\right\}  ^{n}$ if one allows the use of ancilla
bits, which must be returned to their initial states by the end.

But perhaps the most interesting case is that of the $\operatorname*{Fredkin}%
$\ gate. \ Like the $\operatorname*{Toffoli}$\ gate, the
$\operatorname*{Fredkin}$\ gate is computationally universal: for example, it
maps $x,y,0$\ to $x,\overline{x}y,xy$, thereby computing the
$\operatorname*{AND}$\ function. \ But the $\operatorname*{Fredkin}$\ gate is
\textit{not} universal in the stronger sense.\ \ The reason is that it is
\textit{conservative}: that is, it never changes the total Hamming weight of
the input. \ Far from being just a technical issue, conservativity was
regarded by Fredkin and the other reversible computing pioneers as a sort of
discrete analogue of the conservation of energy---and indeed, it plays a
central role in certain physical realizations of reversible computing (for
example, billiard-ball models, in which the total number of billiard balls
must be conserved).

However, all we have seen so far are three specific examples of reversible
gates, each leading to a different behavior. \ To anyone with a mathematical
mindset, the question remains: what are all the \textit{possible} behaviors?
\ For example: is Hamming weight the only possible \textquotedblleft conserved
quantity\textquotedblright\ in reversible computation? \ Are there other ways,
besides being affine, to fail to be computationally universal? \ Can one
\textit{derive}, from first principles, why the classes of reversible
transformations generated by $\operatorname*{CNOT}$, $\operatorname*{Fredkin}%
$, etc.\ are somehow special, rather than just pointing to the sociological
fact that these are classes that people in the early 1980s happened to study?

\subsection{Ground Rules\label{RULES}}

In this work, we achieve a complete classification of all possible sets of
reversible gates acting on bits, in terms of which reversible transformations
$F:\left\{  0,1\right\}  ^{n}\rightarrow\left\{  0,1\right\}  ^{n}$\ they
generate. \ Before describing our result, let us carefully explain the ground rules.

First, we assume that swapping bits is free. \ This simply means that we do
not care how the input bits are labeled---or, if we imagine the bits carried
by wires, then we can permute the wires in any way we like. \ The second rule
is that an unlimited number of ancilla bits may be used, \textit{provided }the
ancilla bits are returned to their initial states by the end of the
computation. \ This second rule might look unfamiliar, but in the context of
reversible computing, it is the right choice.

We need to allow ancilla bits because if we do not, then countless
transformations are disallowed for trivial reasons. \ (Restricting a
reversible circuit to use \textit{no} ancillas is like restricting a Turing
machine to use no memory, besides the $n$ bits that are used to write down the
input.) \ We are forced to say that, although our gates might generate some
reversible transformation $F\left(  x,0\right)  =\left(  G\left(  x\right)
,0\right)  $, they do not generate the smaller transformation $G$. \ The exact
value of $n$ then also takes on undeserved importance, as we need to worry
about \textquotedblleft small-$n$\ effects\textquotedblright: e.g., that a
$3$-bit gate cannot be applied to a $2$-bit input.

As for the number of ancilla bits: it will \textit{turn out}, because of our
classification theorem, that every reversible gate needs only $O(1)$\ ancilla
bits\footnote{Since it is easy to show that a constant number of ancilla bits
are sometimes needed (see Proposition \ref{needancilla}), this is the optimal
answer, up to the value of the constant (which might depend on the gate set).}
to generate every $n$-bit reversible transformation that it can generate at
all. \ However, we do not wish to prejudge this question; if there had been
reversible gates that could generate certain transformations, but only by
using (say) $2^{2^{n}}$\ ancilla bits, then that would have been fascinating
to know. \ For the same reason, we do not wish prematurely to restrict the
number of ancilla bits that can be $0$, or the number that can be $1$.

On the other hand, the ancilla bits must be returned to their original states
because if they are not, then the computation was not really reversible. \ One
can then learn something about the computation by examining the ancilla
bits---if nothing else, then the fact that the computation was done at all.
\ The symmetry between input and output is broken; one cannot then run the
computation backwards without setting the ancilla bits differently. \ This is
not just a philosophical problem: if the ancilla bits carry away information
about the input $x$, then \textit{entropy}, or waste heat, has been leaked
into the computer's environment. \ Worse yet, if the reversible computation is
a subroutine of a quantum computation, then the leaked entropy will cause
\textit{decoherence}, preventing the branches of the quantum superposition
with different $x$\ values from interfering with each other, as is needed to
obtain a quantum speedup. \ In reversible computing, the technical term for
ancilla bits that still depend on $x$ after a computation is complete is
\textit{garbage}.\footnote{In Section \ref{ALTGEN}\ and Appendix \ref{LOOSE},
we will discuss a modified rule, which allows a reversible circuit to change
the ancilla bits, as long as they change in a way that is independent of the
input $x$. \ We will show that this \textquotedblleft loose ancilla
rule\textquotedblright\ causes only a small change to our classification
theorem.}

\subsection{Our Results\label{RESULTS}}

Even after we assume that bit swaps and ancilla bits are free, it remains a
significant undertaking to work out the complete list of reversible gate
classes, and (especially!) to prove that the list is complete. \ Doing so is
this paper's main technical contribution.

We give a formal statement of the classification theorem in Section
\ref{THEOREM}, and we show the lattice of reversible gate classes in Figure
\ref{lattice}. \ (In Appendix \ref{NUMBER},\ we also calculate the exact
number of $3$-bit gates that generate each class.) \ For now, let us simply
state the main conclusions informally.

\begin{itemize}
\item[(1)] \textbf{Conserved Quantities.} \ The following is the complete list
of the \textquotedblleft global quantities\textquotedblright\ that reversible
gate sets can conserve (if we restrict attention to non-degenerate gate sets,
and ignore certain complications caused by linearity and affineness): Hamming
weight, Hamming weight mod $k$ for any $k\geq2$, and inner product mod $2$
between pairs of inputs.

\item[(2)] \textbf{Anti-Conservation.} \ There are gates, such as the
$\operatorname*{NOT}$\ gate, that \textquotedblleft
anti-conserve\textquotedblright\ the Hamming weight mod $2$ (i.e., always
change it by a fixed nonzero amount). \ However, there are no analogues of
these for any of the other conserved quantities.

\item[(3)] \textbf{Encoded Universality.} \ In terms of their
\textquotedblleft computational power,\textquotedblright\ there are only three
kinds of reversible gate sets: degenerate (e.g., $\operatorname{NOT}$s,
bit-swaps),\ non-degenerate but affine (e.g., $\operatorname*{CNOT}$), and
non-affine (e.g., $\operatorname*{Toffoli}$, $\operatorname*{Fredkin}$).
\ More interestingly, every non-affine gate set can implement every reversible
transformation, and every non-degenerate affine gate set can implement every
affine transformation, \textit{if} the input and output bits are encoded by
longer strings in a suitable way. \ For details about \textquotedblleft
encoded universality,\textquotedblright\ see Section \ref{ENCODED}.

\item[(4)] \textbf{Sporadic Gate Sets.} \ The conserved quantities interact
with linearity and affineness in complicated ways, producing \textquotedblleft
sporadic\textquotedblright\ affine gate sets that we have classified. \ For
example, non-degenerate affine gates can preserve Hamming weight mod $k$, but
only if $k=2$ or $k=4$. \ All gates that preserve inner product mod $2$ are
linear, and all linear gates that preserve Hamming weight mod $4$ also
preserve inner product mod $2$. \ As a further complication, affine gates can
be orthogonal or mod-$2$-preserving or mod-$4$-preserving in their linear
part, but not in their affine part.

\item[(5)] \textbf{Finite Generation.} \ For each closed class of reversible
transformations, there is a single gate that generates the entire class.
\ (\textit{A priori}, it is not even obvious that every class\ is finitely
generated, or that there is \textquotedblleft only\textquotedblright\ a
countable infinity of classes!) \ For more, see Section \ref{NATURE}.

\item[(6)] \textbf{Symmetry.} \ Every reversible gate set is symmetric under
interchanging the roles of $0$ and $1$. \ For more, see Section \ref{NATURE}.
\end{itemize}

\subsection{Algorithmic and Complexity Aspects\label{COMPLEXITY}}

Perhaps most relevant to theoretical computer scientists, our classification
theorem leads to new algorithms and complexity results about reversible gates
and circuits: results that follow easily from the classification, but that we
have no idea how to prove otherwise.

Let \textsc{RevGen} (Reversible Generation)\ be the following problem: we are
given as input the truth tables of reversible gates $G_{1},\ldots,G_{K}$, as
well as of a target gate $H$, and wish to decide whether the $G_{i}$'s
generate $H$. \ Then we obtain a linear-time algorithm for \textsc{RevGen}.
\ Here, of course, \textquotedblleft linear\textquotedblright\ means linear in
the sizes of the truth tables, which is $n2^{n}$ for an $n$-bit gate.
\ However, if just a tiny amount of \textquotedblleft summary
data\textquotedblright\ about each gate $G$ is provided---namely, the possible
values of $\left\vert G\left(  x\right)  \right\vert -\left\vert x\right\vert
$, where $\left\vert \cdot\right\vert $\ is the Hamming weight, as well as
which affine transformation $G$ performs if it is affine---then the algorithm
actually runs in $O\left(  n^{\omega}\right)  $ time, where $\omega$\ is the
matrix multiplication exponent.

We have implemented this algorithm;\ code is available for download at
\cite{schaefer:code}. \ For more details see Section \ref{ALGSEC}.

Our classification theorem also implies the first general upper bounds (i.e.,
bounds that hold for all possible gate sets) on the number of gates and
ancilla bits needed to implement reversible transformations. \ In particular,
we show (see Section \ref{COMPRESS}) that if a set of reversible gates
generates an $n$-bit transformation $F$ at all, then it does so via a circuit
with at most $2^{n}\operatorname*{poly}\left(  n\right)  $\ gates and
$O(1)$\ ancilla bits.\ \ These bounds are close to optimal.

By contrast, let us consider the situation for these problems without the
classification theorem. \ Suppose, for example, that we want to know whether a
reversible transformation $H:\left\{  0,1\right\}  ^{n}\rightarrow\left\{
0,1\right\}  ^{n}$\ can be synthesized using gates $G_{1},\ldots,G_{K}$. \ If
we knew some upper bound on the number of ancilla bits that might be needed by
the generating circuit, then if nothing else, we could of course solve this
problem by brute force. \ The trouble is that, without the classification, it
is not obvious how to prove \textit{any }upper bound on the number of
ancillas---not even, say, $\operatorname*{Ackermann}\left(  n\right)  $.
\ This makes it unclear, \textit{a priori}, whether \textsc{RevGen} is even
\textit{decidable}, never mind its complexity!

One \textit{can} show on abstract grounds that \textsc{RevGen}\ is decidable,
but with an astronomical running time. \ To explain this requires a short
digression. \ In universal algebra, there is a body of theory (see
e.g.\ \cite{lau}), which grew out of Post's original work \cite{post}, about
the general problem of classifying closed classes of functions (clones) of
various kinds. \ The upshot is that every clone is characterized by an
\textit{invariant} that all functions in the clone preserve: for example,
affineness for the $\operatorname*{NOT}$\ and $\operatorname*{XOR}$ functions,
or monotonicity for the $\operatorname*{AND}$\ and $\operatorname*{OR}$
functions. \ The clone can then be shown to contain \textit{all} functions
that preserve the invariant. \ (There is a formal definition of
\textquotedblleft invariant,\textquotedblright\ involving polymorphisms, which
makes this statement not a tautology, but we omit it.) \ Alongside the lattice
of clones of functions, there is a dual lattice of \textit{coclones} of
invariants, and there is a Galois connection relating the two: as one adds
more functions, one preserves fewer invariants, and vice versa.

In response to an inquiry by us, Emil Je\v{r}\'{a}bek recently showed
\cite{jerabek} that the clone/coclone duality can be adapted to the setting of
reversible gates. \ This means that we know, even without a classification
theorem, that every closed class of reversible transformations is uniquely
determined by the invariants that it preserves.

Unfortunately, this elegant characterization does not give rise to feasible
algorithms. \ The reason is that, for an $n$-bit gate $G:\left\{  0,1\right\}
^{n}\rightarrow\left\{  0,1\right\}  ^{n}$, the invariants could in principle
involve all $2^{n}$\ inputs, as well arbitrary polymorphisms mapping those
inputs into a commutative monoid. \ Thus the number of polymorphisms one needs
to consider grows at least like $2^{2^{2^{n}}}$. \ Now, the word problem for
commutative monoids is decidable, by reduction to the ideal membership problem
(see, e.g., \cite[p. 55]{khar}). \ And by putting these facts together, one
can derive an algorithm for \textsc{RevGen}\ that uses doubly-exponential
space and triply-exponential time, as a function of the truth table sizes: in
other words, $\exp\left(  \exp\left(  \exp\left(  \exp\left(  n\right)
\right)  \right)  \right)  $ time, as a function of $n$. \ We believe it
should also be possible to extract $\exp\left(  \exp\left(  \exp\left(
\exp\left(  n\right)  \right)  \right)  \right)  $\ upper bounds on the number
of gates and ancillas from this algorithm, although we have not verified the details.

\subsection{Proof Ideas\label{TECHNIQUES}}

We hope we have made the case that the classification theorem improves the
complexity situation for reversible circuit synthesis! \ Even so, some people
might regard classifying all possible reversible gate sets as a complicated,
maybe worthwhile, but fundamentally tedious exercise. \ Can't such problems be
automated via computer search? \ On the contrary, there are specific aspects
of reversible computation that make this classification problem both unusually
rich, and unusually hard to reduce to any finite number of cases.

We already discussed the astronomical number of possible invariants that even
a tiny reversible gate (say, a $3$-bit gate) might satisfy, and the
hopelessness of enumerating them by brute force. \ However, even if we could
cut down the number of invariants to something reasonable, there would still
be the problem\ that the size, $n$, of a reversible gate can be arbitrarily
large---and as one considers larger gates, one can discover more and more
invariants. \ Indeed, that is precisely what happens in our case, since the
Hamming weight mod $k$ invariant can only be \textquotedblleft
noticed\textquotedblright\ by considering gates on $k$ bits or more. \ There
are also \textquotedblleft sporadic\textquotedblright\ affine classes that can
only be found by considering $6$-bit gates.

Of course, it is not hard just to \textit{guess} a large number of reversible
gate classes (affine transformations, parity-preserving and parity-flipping
transformations, etc.), prove that these classes are all distinct, and then
prove that each one can be generated by a simple set of gates (e.g.,
$\operatorname*{CNOT}$ or $\operatorname*{Fredkin}+\operatorname*{NOT}$).
\ Also, once one has a sufficiently powerful gate (say, the
$\operatorname*{CNOT}$\ gate), it is often straightforward to classify all the
classes \textit{containing} that gate. \ So for example, it is relatively easy
to show that $\operatorname*{CNOT}$, together with any non-affine gate,
generates all reversible transformations.

As usual with classification problems, the hard part is to rule out exotic
additional classes: most of the work, one might say, is not about what is
there, but about what isn't there. \ It is one thing to synthesize some random
$1000$-bit reversible transformation using only $\operatorname*{Toffoli}%
$\ gates, but quite another to synthesize a $\operatorname*{Toffoli}$\ gate
using only the random $1000$-bit transformation!

Thinking about this brings to the fore the central issue: that in reversible
computation, it is not enough to output some desired string $F\left(
x\right)  $; one needs to output nothing else \textit{besides} $F\left(
x\right)  $. \ And hence, for example, it does not suffice to look inside the
random $1000$-bit reversible gate $G$, to show that it contains a
$\operatorname*{NAND}$\ gate, which is computationally universal. \ Rather,
one needs to deal with \textit{all} of $G$'s outputs, and show that one can
eliminate the undesired ones.

The way we do that involves another characteristic property of reversible
circuits: that they can have \textquotedblleft global conserved
quantities,\textquotedblright\ such as Hamming weight. \ Again and again, we
need to prove that if a reversible gate $G$ \textit{fails} to conserve some
quantity, such as the Hamming weight mod $k$, then that fact alone implies
that we can use $G$ to implement a desired behavior. \ This is where
elementary algebra and number theory come in.

There are two aspects to the problem. \ First, we need to understand something
about the possible quantities that a reversible gate can conserve. \ For
example, we will need the following three results:

\begin{itemize}
\item No non-conservative reversible gate can conserve inner products mod $k$,
unless $k=2$.

\item No reversible gate can change Hamming weight mod $k$ by a fixed, nonzero
amount, unless $k=2$.

\item No nontrivial linear gate can conserve Hamming weight mod $k$, unless
$k=2$ or $k=4$.
\end{itemize}

We prove each of these statements in Section \ref{HAMMING}, using arguments
based on complex polynomials. \ In Appendix \ref{ALTPROOF}, we give
alternative, more \textquotedblleft combinatorial\textquotedblright\ proofs
for the second and third statements.

Next, using our knowledge about the possible conserved quantities, we need
procedures that take any gate $G$ that fails to conserve some quantity, and
that use $G$ to implement a desired behavior (say, making a single copy of a
bit, or changing an inner product by exactly $1$). \ We then leverage that
behavior to generate a desired gate (say, a $\operatorname{Fredkin}$\ gate).
\ The two core tasks turn out to be the following:

\begin{itemize}
\item Given any non-affine gate, we need to construct a
$\operatorname*{Fredkin}$\ gate. \ We do this in Sections \ref{CONSERV}\ and
\ref{MOD}.

\item Given any non-orthogonal linear gate, we need to construct a
$\operatorname*{CNOTNOT}$\ gate, a parity-preserving version of
$\operatorname*{CNOT}$\ that maps $x,y,z$\ to $x,y\oplus x,z\oplus x$. \ We do
this in Section \ref{TOCNOTNOT}.
\end{itemize}

In both of these cases, our solution involves $3$-dimensional lattices: that
is, subsets of $\mathbb{Z}^{3}$\ closed under integer linear combinations.
\ We argue, in essence, that the only possible obstruction to the desired
behavior is a \textquotedblleft modularity obstruction,\textquotedblright\ but
the assumption about the gate $G$ rules out such an obstruction.

We can illustrate this with an example that ends up \textit{not} being needed
in the final classification proof, but that we worked out earlier in this
research.\footnote{In general, after completing the classification proof, we
were able to go back and simplify it substantially, by removing results---for
example, about the generation of $\operatorname{CNOT}$\ gates---that were
important for working out the lattice in the first place, but which then
turned out to be subsumed (or which \textit{could} be subsumed, with modest
additional effort) by later parts of the classification. \ Our current proof
reflects these simplifications.}\ \ Let $G$\ be any gate that does not
conserve (or anti-conserve) the Hamming weight mod $k$ for any $k\geq2$,
and\ suppose we want to use $G$\ to construct a $\operatorname*{CNOT}$\ gate.

\begin{figure}[h]
\centering
\begin{tikzpicture}[>=latex,scale=0.6]
\draw[very thin,color=gray] (-3.5,-0.5) grid (6.5,8.5);
\draw[->] (0,-0.5) -- (0,8.5);
\draw[<->] (-3.5,0) -- (6.5,0);
\draw[->,very thick,color=green] (0,0) -- (6,4);
\draw[->,very thick,color=blue] (0,0) -- (-3,2);
\draw[->,very thick,color=red] (0,0) -- (-1,8);
\node [below] at (1,0) {\phantom{\tiny (1,0)}};
\node [below] at (2,0) {\phantom{\tiny (2,0)}};
\node at (1.5, -1.5) {Generators};
\end{tikzpicture} \hspace{20px} \begin{tikzpicture}[>=latex,scale=0.6]
\draw[very thin,color=gray] (-0.5,-0.5) grid (7.5,8.5);
\draw[->] (0,-0.5) -- (0,8.5);
\draw[->] (-0.5,0) -- (7.5,0);
\draw[->,very thick,color=green] (1,0) -- (7,4);
\draw[->,very thick,color=blue] (7,4) -- (4,6);
\draw[->,very thick,color=blue] (4,6) -- (1,8);
\draw[->,very thick,color=red] (1,8) -- (2,0);
\node [below] at (1,0) {\tiny (1,0)};
\node [below] at (2,0) {\tiny (2,0)};
\node at (3.5, -1.5) {Copying Sequence};
\end{tikzpicture}
\caption{Moving within first quadrant of lattice to construct a COPY gate}%
\label{fig:lattice}%
\end{figure}
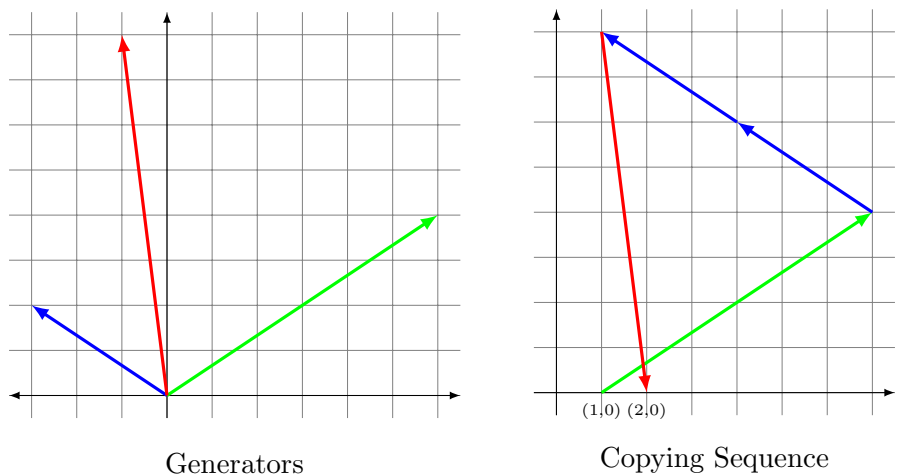

Then we examine how $G$ behaves on restricted inputs:\ in this case, on inputs
that consist entirely of some number of copies of $x$\ and $\overline{x}$,
where $x\in\left\{  0,1\right\}  $\ is a bit, as well as constant $0$ and $1$
bits. \ For example, perhaps $G$ can increase the number of copies of $x$\ by
$5$ while decreasing the number of copies of $\overline{x}$\ by $7$, and can
\textit{also} decrease the number of copies of $x$\ by $6$\ without changing
the number of copies of $\overline{x}$. \ Whatever the case, the set of
possible behaviors generates some lattice: in this case, a lattice in
$\mathbb{Z}^{2}$ (see Figure \ref{fig:lattice}). \ We need to argue that the
lattice contains a distinguished point encoding the desired \textquotedblleft
copying\textquotedblright\ behavior. \ In the case of the $\operatorname{CNOT}%
$ gate, the point is\ $\left(  1,0\right)  $, since we want one more copy of
$x$\ and no more copies of $\overline{x}$. \ Showing that the lattice contains
$\left(  1,0\right)  $, in turn, boils down to arguing that a certain system
of Diophantine linear equations must have a solution. \ One can do this,
finally, by using the assumption that $G$\ does not conserve or anti-conserve
the Hamming weight mod $k$ for any $k$.

To generate the $\operatorname*{Fredkin}$\ gate, we instead use the Chinese
Remainder Theorem to combine gates that change the inner product mod $p$ for
various primes $p$ into a gate that changes the inner product between two
inputs by exactly $1$; while to generate the $\operatorname*{CNOTNOT}$\ gate,
we exploit the assumption that our generating gates are linear. \ In all these
cases, it is crucial that we know, from Section \ref{HAMMING}, that certain
quantities \textit{cannot} be conserved by any reversible gate.

There are a few parts of the classification proof (for example, Section
\ref{NOT}, on affine gate sets) that basically \textit{do} come down to
enumerating cases, but we hope to have given a sense for the interesting parts.

\subsection{Related Work\label{RELATED}}

Surprisingly, the general question of classifying reversible gates such as
$\operatorname*{Toffoli}$\ and $\operatorname*{Fredkin}$\ appears never to
have been asked, let alone answered, prior to this work.

In the reversible computing literature, there are hundreds of papers on
synthesizing reversible circuits (see \cite{saeedi} for a survey), but most of
them focus on practical considerations: for example, trying to minimize the
number of $\operatorname*{Toffoli}$\ gates or other measures of interest,
often using software optimization tools. \ We found only a tiny amount of work
relevant to the classification problem: notably, an unpublished preprint by
Lloyd \cite{lloyd:gate}, which shows that every non-affine reversible gate is
computationally universal, if one does not care what garbage\ is generated in
addition to the desired output. \ Lloyd's result was subsequently rediscovered
by Kerntopf et al.\ \cite{kerntopf}\ and De Vos and Storme \cite{devos}. \ We
will reprove this result for completeness in Section \ref{GARBSEC}, as we use
it as one ingredient in our proof.

There is also work by Morita et al.\ \cite{morita} that uses brute-force
enumeration to classify certain reversible computing elements with $2$, $3$,
or $4$ wires, but the notion of \textquotedblleft reversible
gate\textquotedblright\ there is very different from the standard one (the
gates are for routing a single \textquotedblleft billiard
ball\textquotedblright\ element rather than for transforming bit strings,\ and
they have internal state). \ Finally, there is work by Strazdins
\cite{strazdins}, not motivated by reversible computing,\ which considers
classifying reversible Boolean functions, but which imposes a separate
requirement on each output bit that it belong to one of the classes from
Post's original lattice, and which thereby misses all the reversible gates
that conserve \textquotedblleft global\textquotedblright\ quantities, such as
the $\operatorname*{Fredkin}$\ gate.\footnote{Because of different rules
regarding constants, developed with Post's lattice rather than reversible
computing in mind, Strazdins also includes classes that we do not (e.g.,
functions that always map $0^{n}$\ or $1^{n}$\ to themselves, but are
otherwise arbitrary). \ To use our notation, his $13$-class lattice ends up
intersecting our infinite lattice in just five classes: $\left\langle
\varnothing\right\rangle $, $\left\langle \operatorname*{NOT}\right\rangle $,
$\left\langle \operatorname*{CNOTNOT},\operatorname*{NOT}\right\rangle $,
$\left\langle \operatorname*{CNOT}\right\rangle $, and $\left\langle
\operatorname*{Toffoli}\right\rangle $.}

\section{Notation and Definitions\label{DEF}}

$\mathbb{F}_{2}$ means the field of $2$ elements. \ $\left[  n\right]
$\ means $\left\{  1,\ldots,n\right\}  $. \ We denote by $e_{1},\ldots,e_{n}%
$\ the standard basis for the vector space $\mathbb{F}_{2}^{n}$: that is,
$e_{1}=\left(  1,0,\ldots,0\right)  $, etc.

Let $x=x_{1}\ldots x_{n}$\ be an $n$-bit string. \ Then $\overline{x}$ means
$x$ with all $n$ of its bits inverted. \ Also,$\ x\oplus y$ means bitwise XOR,
$x,y$ or $xy$\ means concatenation, $x^{k}$\ means the concatenation of $k$
copies of $x$, and $\left\vert x\right\vert $ means the Hamming weight. \ The
\textit{parity} of $x$\ is $\left\vert x\right\vert \operatorname{mod}2$.
\ The \textit{inner product} of $x$\ and $y$\ is the integer $x\cdot
y=x_{1}y_{1}+\cdots+x_{n}y_{n}$. \ Note that%
\[
x\cdot\left(  y\oplus z\right)  \equiv x\cdot y+x\cdot z\left(
\operatorname{mod}2\right)  ,
\]
but the above need not hold if we are not working mod $2$.

By $\operatorname{gar}\left(  x\right)  $, we mean garbage depending on $x$:
that is, \textquotedblleft scratch work\textquotedblright\ that a reversible
computation generates along the way to computing some desired function
$f\left(  x\right)  $. \ Typically, the garbage later needs to be
\textit{uncomputed}. \ Uncomputing, a term introduced by Bennett
\cite{bennett}, simply means running an entire computation in reverse, after
the output $f\left(  x\right)  $\ has been safely stored.

\subsection{Gates\label{GATES}}

By a \textit{(reversible) gate}, throughout this paper we will mean a
reversible transformation $G$ on the set of $k$-bit strings: that is, a
permutation of $\left\{  0,1\right\}  ^{k}$, for some fixed $k$. \ Formally,
the terms `gate' and `reversible transformation' will mean the same thing;
`gate'\ just connotes a reversible transformation that is particularly small
or simple. \ 

A gate is \textit{nontrivial} if it does something other than permute its
input bits,\ and \textit{non-degenerate} if it does something other than
permute its input bits and/or apply $\operatorname*{NOT}$'s to some subset of them.

A gate $G$\ is \textit{conservative} if it satisfies $\left\vert G\left(
x\right)  \right\vert =\left\vert x\right\vert $\ for all $x$. \ A gate is
\textit{mod-}$k$\textit{-respecting} if there exists a $j$ such that%
\[
\left\vert G\left(  x\right)  \right\vert \equiv\left\vert x\right\vert
+j\left(  \operatorname{mod}k\right)
\]
for all $x$. \ It's \textit{mod-}$k$\textit{-preserving} if moreover $j=0$.
\ It's \textit{mod-preserving} if it's mod-$k$-preserving for some $k\geq
2$,\ and \textit{mod-respecting} if it's mod-$k$-respecting for some $k\geq2$.

As special cases, a mod-$2$-respecting gate is also called
\textit{parity-respecting}, a mod-$2$-preserving gate is called
\textit{parity-preserving},\ and a gate $G$\ such that%
\[
\left\vert G\left(  x\right)  \right\vert \not \equiv \left\vert x\right\vert
\left(  \operatorname{mod}2\right)
\]
for all $x$ is called \textit{parity-flipping}. \ In Theorem \ref{noshifter},
we will prove that parity-flipping gates are the \textit{only} examples of
mod-respecting gates that are not mod-preserving.

The \textit{respecting number} of a gate $G$, denoted $k\left(  G\right)  $,
is the largest $k$ such that $G$ is mod-$k$-respecting. \ (By convention, if
$G$\ is conservative then $k\left(  G\right)  =\infty$, while if $G$ is
non-mod-respecting then $k\left(  G\right)  =1$.) \ We have the following fact:

\begin{proposition}
\label{chinese}$G$ is mod-$\ell$-respecting\ if and only if $\ell$\ divides
$k\left(  G\right)  $.
\end{proposition}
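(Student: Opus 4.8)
The plan is to fix a gate $G$ on $k$-bit strings and study the set
\[
S(G)=\left\{\,\ell\geq 1 : G \text{ is mod-}\ell\text{-respecting}\,\right\},
\]
showing that it is exactly the set of positive divisors of $k(G)$, with the convention that every positive integer divides $\infty$ (which takes care of the conservative case). Since $1\in S(G)$ always, and $k(G)$ is by definition the largest element of $S(G)$ when that set is bounded, and $\infty$ when it is unbounded (i.e. when $G$ is conservative), the proposition reduces to the claim that $S(G)$ is closed downward under divisibility and closed under least common multiples.

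Downward closure is immediate: if $\left\vert G(x)\right\vert\equiv\left\vert x\right\vert+j\pmod{\ell}$ for all $x$ and $\ell'\mid\ell$, the same congruence holds mod $\ell'$. The heart of the argument---and the reason for the proposition's nickname---is closure under lcm. Suppose $G$ is mod-$\ell_1$-respecting with shift $j_1$ and mod-$\ell_2$-respecting with shift $j_2$. Evaluating both congruences at $x=0^k$ gives $j_1\equiv\left\vert G(0^k)\right\vert\equiv j_2\pmod{\gcd(\ell_1,\ell_2)}$, so by the Chinese Remainder Theorem there is an integer $j$ with $j\equiv j_1\pmod{\ell_1}$ and $j\equiv j_2\pmod{\ell_2}$. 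Then $\left\vert G(x)\right\vert\equiv\left\vert x\right\vert+j$ holds modulo $\ell_1$ and modulo $\ell_2$, hence modulo $\operatorname{lcm}(\ell_1,\ell_2)$, for every $x$; thus $\operatorname{lcm}(\ell_1,\ell_2)\in S(G)$.

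Next I would dispose of the two cases. If $S(G)$ is bounded it has a maximum $m$; for any $\ell\in S(G)$ we have $\operatorname{lcm}(\ell,m)\in S(G)$ and $\operatorname{lcm}(\ell,m)\geq m$, forcing $\operatorname{lcm}(\ell,m)=m$ and hence $\ell\mid m$. Combined with downward closure this shows $S(G)$ is precisely the set of divisors of $m=k(G)$. If instead $S(G)$ is unbounded, then $G$ must be conservative: otherwise the value set $\{\,\left\vert G(x)\right\vert-\left\vert x\right\vert\,\}$ contains two distinct integers $d\neq d'$ (it contains a nonzero value and, since the values sum to $0$ over the permutation, it cannot be constant), and no $\ell>\left\vert d-d'\right\vert$ can lie in $S(G)$. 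In the conservative case $k(G)=\infty$ by convention, every $\ell$ lies in $S(G)$, and every $\ell$ divides $\infty$, so the equivalence holds there too.

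The only place requiring care is the lcm-closure step, and the single thing to verify there is that the shifts $j_1,j_2$ are automatically compatible modulo $\gcd(\ell_1,\ell_2)$---which is exactly why one evaluates the congruences at the all-zeros input. The remaining steps are routine bookkeeping about divisor-closed, lcm-closed subsets of $\mathbb{Z}^+$ and the handling of the conservative/$\infty$ convention.
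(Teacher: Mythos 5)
Your proof is correct and follows essentially the same route as the paper's: both reduce the claim to the fact that mod-$\ell_1$- and mod-$\ell_2$-respecting imply mod-$\operatorname{lcm}(\ell_1,\ell_2)$-respecting via the Chinese Remainder Theorem, and then invoke the maximality of $k(G)$. Your version is somewhat more careful, explicitly checking (by evaluating at $0^k$) that the two shifts are compatible modulo $\gcd(\ell_1,\ell_2)$ so that CRT applies, and handling the conservative/$\infty$ convention, both of which the paper leaves implicit.
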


\begin{proof}
If $\ell$\ divides $k\left(  G\right)  $, then certainly $G$ is mod-$\ell
$-respecting. \ Now, suppose $G$ is mod-$\ell$-respecting\ but $\ell$\ does
not divide $k\left(  G\right)  $. \ Then $G$\ is both mod-$\ell$%
-respecting\ and mod-$k\left(  G\right)  $-respecting. \ So by the Chinese
Remainder Theorem, $G$ is mod-$\operatorname{lcm}\left(  \ell,k\left(
G\right)  \right)  $-respecting. \ But this contradicts the definition of
$k\left(  G\right)  $.
\end{proof}

A gate $G$\ is \textit{affine} if it implements an affine transformation over
$\mathbb{F}_{2}$: that is, if there exists an invertible matrix $A\in
\mathbb{F}_{2}^{k\times k}$, and a vector $b\in\mathbb{F}_{2}^{k}$, such that
$G\left(  x\right)  =Ax\oplus b$ for all $x$. \ A gate is \textit{linear} if
moreover $b=0$. \ A gate is \textit{orthogonal} if it satisfies%
\[
G\left(  x\right)  \cdot G\left(  y\right)  \equiv x\cdot y\left(
\operatorname{mod}2\right)
\]
for all $x,y$. \ (We will observe, in Lemma \ref{orthoglin}, that every
orthogonal gate is linear.) \ Also, if $G\left(  x\right)  =Ax\oplus b$\ is
affine, then the \textit{linear part of }$G$ is the linear transformation
$G^{\prime}\left(  x\right)  =Ax$. \ We call $G$ orthogonal in its linear
part, mod-$k$-preserving in its linear part, etc.\ if $G^{\prime}$ satisfies
the corresponding invariant. \ A gate that is orthogonal in its linear part is
also called an \textit{isometry}.

Given two gates $G$ and $H$, their \textit{tensor product}, $G\otimes H$, is a
gate that applies $G$ and $H$ to disjoint sets of bits. \ We will often use
the tensor product to produce a single gate that combines the properties of
two previous gates. \ Also, we denote by $G^{\otimes t}$\ the tensor product
of $t$ copies of $G$.

\subsection{Gate Classes\label{CLASSES}}

Let $S=\left\{  G_{1},G_{2},\ldots\right\}  $ be a set of gates, possibly on
different numbers of bits and possibly infinite. \ Then $\left\langle
S\right\rangle =\left\langle G_{1},G_{2},\ldots\right\rangle $, \textit{the
class of reversible transformations generated by} $S$, can be defined as the
smallest set of reversible transformations $F:\left\{  0,1\right\}
^{n}\rightarrow\left\{  0,1\right\}  ^{n}$ that satisfies the following
closure properties:

\begin{enumerate}
\item[(1)] \textbf{Base case.} $\ \left\langle S\right\rangle $ contains $S$,
as well as the identity function $F\left(  x_{1}\ldots x_{n}\right)
=x_{1}\ldots x_{n}$\ for all $n\geq1$.

\item[(2)] \textbf{Composition rule.} \ If $\left\langle S\right\rangle
$\ contains $F\left(  x_{1}\ldots x_{n}\right)  $\ and $G\left(  x_{1}\ldots
x_{n}\right)  $, then $\left\langle S\right\rangle $\ also contains $F\left(
G\left(  x_{1}\ldots x_{n}\right)  \right)  $.

\item[(3)] \textbf{Swapping rule.} \ If $\left\langle S\right\rangle
$\ contains $F\left(  x_{1}\ldots x_{n}\right)  $, then $\left\langle
S\right\rangle $\ also contains all possible functions $\sigma\left(  F\left(
x_{\tau\left(  1\right)  }\ldots x_{\tau\left(  n\right)  }\right)  \right)
$\ obtained by permuting $F$'s input and output bits.

\item[(4)] \textbf{Extension rule.} \ If $\left\langle S\right\rangle
$\ contains $F\left(  x_{1}\ldots x_{n}\right)  $, then $\left\langle
S\right\rangle $\ also contains the function%
\[
G\left(  x_{1}\ldots x_{n},b\right)  :=\left(  F\left(  x_{1}\ldots
x_{n}\right)  ,b\right)  ,
\]
in which $b$\ occurs as a \textquotedblleft dummy\textquotedblright\ bit.

\item[(5)] \textbf{Ancilla rule.} \ If $\left\langle S\right\rangle
$\ contains a function $F$\ that satisfies%
\[
F\left(  x_{1}\ldots x_{n},a_{1}\ldots a_{k}\right)  =\left(  G\left(
x_{1}\ldots x_{n}\right)  ,a_{1}\ldots a_{k}\right)  ~~\forall x_{1}\ldots
x_{n}\in\left\{  0,1\right\}  ^{n},
\]
for some smaller function $G$ and fixed \textquotedblleft
ancilla\textquotedblright\ string\ $a_{1}\ldots a_{k}\in\left\{  0,1\right\}
^{k}$\ that do not depend on $x$, then $\left\langle S\right\rangle $\ also
contains $G$. \ (Note that, if the $a_{i}$'s are set to other values, then $F$
need not have the above form.)
\end{enumerate}

Note that because of reversibility, the set of $n$-bit\ transformations in
$\left\langle S\right\rangle $\ (for any $n$) always forms a group. \ Indeed,
if $\left\langle S\right\rangle $\ contains $F$, then clearly $\left\langle
S\right\rangle $\ contains all the iterates $F^{2}\left(  x\right)  =F\left(
F\left(  x\right)  \right)  $, etc. \ But since there must be some positive
integer $m$ such that $F^{m}\left(  x\right)  =x$, this means that
$F^{m-1}\left(  x\right)  =F^{-1}\left(  x\right)  $. \ Thus, we do not need a
separate rule stating that $\left\langle S\right\rangle $\ is closed under inverses.

We say $S$ \textit{generates} the reversible transformation $F$\ if
$F\in\left\langle S\right\rangle $. \ We also say that $S$\ generates
$\left\langle S\right\rangle $. \ If $\left\langle S\right\rangle $\ equals
the set of all permutations of $\left\{  0,1\right\}  ^{n}$, for all $n\geq1$,
then we call $S$\ \textit{universal}.

Given an arbitrary set $\mathcal{C}$\ of reversible transformations, we call
$\mathcal{C}$ a \textit{reversible gate class} (or \textit{class} for short)
if $\mathcal{C}$ is closed under rules (2)-(5) above: in other words, if there
exists an $S$ such that $\mathcal{C}=\left\langle S\right\rangle $.

A \textit{reversible circuit} for the function $F$, over the gate set $S$, is
an explicit procedure for generating $F$ by applying gates in $S$, and thereby
showing that $F\in\left\langle S\right\rangle $. \ An example is shown in
Figure~\ref{circuitfig}. \ Reversible circuit diagrams are read from left to
right, with each bit that occurs in the circuit (both input and ancilla bits)
represented by a horizontal line, and each gate represented by a vertical line.

If every gate $G\in S$\ satisfies some invariant, then we can also describe
$S$ and $\left\langle S\right\rangle $ as satisfying that invariant. \ So for
example, the set $\left\{  \operatorname*{CNOTNOT},\operatorname*{NOT}%
\right\}  $ is affine and parity-respecting, and so is the class that it
generates. \ Conversely, $S$ violates an invariant if any $G\in S$\ violates it.

Just as we defined the respecting number $k\left(  G\right)  $\ of a gate, we
would like to define the respecting number $k\left(  S\right)  $\ of an entire
gate set. \ To do so, we need a proposition about the behavior of $k\left(
G\right)  $ under tensor products.

\begin{figure}[h]
\centering
\begin{minipage}[c]{.3\textwidth}
\Qcircuit @C=1.5em @R=1em {
x_1 & & \ctrl{4} & \qw & \ctrl{4} & \qw \\
x_2 & & \qswap & \qw & \qswap & \qw \\
x_3 & & \qw & \qswap & \qw & \qw \\
x_4 & & \qw & \qswap & \qw & \qw \\
0 & & \qswap & \ctrl{-2} & \qswap & \qw \\
}
\end{minipage}
\caption{Generating a Controlled-Controlled-Swap gate from Fredkin}%
\label{circuitfig}%
\end{figure}
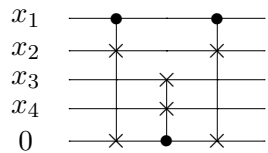

\begin{proposition}
\label{kgtensor}For all gates $G$ and $H$,%
\[
k\left(  G\otimes H\right)  =\gcd\left(  k\left(  G\right)  ,k\left(
H\right)  \right)  .
\]

\end{proposition}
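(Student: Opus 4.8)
The plan is to prove the two inequalities $k(G\otimes H)\mid\gcd(k(G),k(H))$ and $\gcd(k(G),k(H))\mid k(G\otimes H)$ separately, using the characterization of $k(\cdot)$ via Proposition \ref{chinese} (namely, that a gate is mod-$\ell$-respecting iff $\ell$ divides its respecting number). First I would record the elementary fact that if $G$ acts on $k$-bit strings and $H$ on $m$-bit strings, then $G\otimes H$ acts on $(k+m)$-bit strings $xy$ with $(G\otimes H)(xy)=(G(x),H(y))$, so that
\[
\bigl\lvert (G\otimes H)(xy)\bigr\rvert-\lvert xy\rvert=\bigl(\lvert G(x)\rvert-\lvert x\rvert\bigr)+\bigl(\lvert H(y)\rvert-\lvert y\rvert\bigr).
\]
This identity is the engine for everything.

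**First inclusion.** Suppose $\ell:=k(G\otimes H)$ and set $d:=\gcd(k(G),k(H))$; I want $\ell\mid d$. By Proposition \ref{chinese} it suffices to show $G\otimes H$ is mod-$d$-respecting. Since $d\mid k(G)$, $G$ is mod-$d$-respecting, so $\lvert G(x)\rvert\equiv\lvert x\rvert+j_G\pmod d$ for a fixed $j_G$; likewise $\lvert H(y)\rvert\equiv\lvert y\rvert+j_H\pmod d$. Adding and using the displayed identity gives $\lvert(G\otimes H)(xy)\rvert\equiv\lvert xy\rvert+(j_G+j_H)\pmod d$, so $G\otimes H$ is mod-$d$-respecting, hence $\ell\mid d$. (I should handle the conservative cases, where $k(G)=\infty$, by the convention that "$d\mid\infty$" and "$\infty$ divides nothing finite except via the lcm argument"; concretely, if $G$ is conservative then $j_G=0$ can be taken mod every $d$, so the argument goes through unchanged, and $d=k(H)$.)

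**Second inclusion — the main obstacle.** The harder direction is $d\mid\ell$, i.e. that $d$ divides $k(G\otimes H)$, equivalently (Proposition \ref{chinese} again) that $G\otimes H$ is mod-$d$-respecting forces nothing extra but that $G$ and $H$ are \emph{each} mod-$\ell$-respecting. So I would instead argue: $G\otimes H$ is mod-$\ell$-respecting with some fixed shift $j\pmod\ell$, meaning for all $x,y$,
\[
\bigl(\lvert G(x)\rvert-\lvert x\rvert\bigr)+\bigl(\lvert H(y)\rvert-\lvert y\rvert\bigr)\equiv j\pmod\ell.
\]
Holding $y$ fixed and varying $x$ shows $\lvert G(x)\rvert-\lvert x\rvert$ is constant mod $\ell$; holding $x$ fixed and varying $y$ shows the same for $H$. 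Hence $G$ is mod-$\ell$-respecting and $H$ is mod-$\ell$-respecting, so by Proposition \ref{chinese} $\ell\mid k(G)$ and $\ell\mid k(H)$, whence $\ell\mid\gcd(k(G),k(H))=d$. Wait — this gives $\ell\mid d$ \emph{again}, not $d\mid\ell$; so in fact the two arguments together already pin down $\ell=d$ only if I also get the reverse. Let me reorganize: the clean statement is that for \emph{every} $\ell\ge 2$, "$G\otimes H$ is mod-$\ell$-respecting" $\iff$ "$G$ is mod-$\ell$-respecting and $H$ is mod-$\ell$-respecting." The ($\Leftarrow$) direction is the first-inclusion computation above (adding the two congruences); the ($\Rightarrow$) direction is the fixing-one-argument trick just described. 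Given this biconditional, $k(G\otimes H)$ is the largest $\ell$ respected by $G\otimes H$, which is the largest $\ell$ respected by both $G$ and $H$, i.e. the largest $\ell$ with $\ell\mid k(G)$ and $\ell\mid k(H)$ (using Proposition \ref{chinese} in both directions), which is exactly $\gcd(k(G),k(H))$. The only real subtlety — the "obstacle" — is bookkeeping around the conservative case: if both $G,H$ are conservative then both sides are $\infty$; if exactly one is conservative, say $H$, then the biconditional reads "$G\otimes H$ mod-$\ell$-respecting $\iff$ $G$ mod-$\ell$-respecting," giving $k(G\otimes H)=k(G)=\gcd(k(G),\infty)$, consistent with the gcd convention. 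I would state the biconditional as the key lemma and then derive the proposition in one line.
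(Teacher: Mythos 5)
Your proof is correct and follows essentially the same route as the paper's: the forward direction is the same "add the two shifts" computation, and your fixing-one-argument trick is exactly the paper's observation that inputs to $G\otimes H$ differing only on $G$'s input witness the failure of mod-$\ell$-respecting whenever $\ell\nmid k(G)$. The only difference is presentational (you package the two directions as a single biconditional and then take the maximum over $\ell$), and your handling of the conservative/$\infty$ conventions is fine.
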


\begin{proof}
Letting $\gamma=\gcd\left(  k\left(  G\right)  ,k\left(  H\right)  \right)  $,
clearly $G\otimes H$\ is mod-$\gamma$-respecting. \ To see that $G\otimes
H$\ is not mod-$\ell$-respecting for any $\ell>\gamma$: by definition, $\ell
$\ must fail to divide either $k\left(  G\right)  $ or $k\left(  H\right)  $.
\ Suppose it\ fails to divide $k\left(  G\right)  $\ without loss of
generality. \ Then $G$ cannot be mod-$\ell$-respecting, by Proposition
\ref{chinese}. \ But if we consider pairs of inputs to $G\otimes H$\ that
differ only on $G$'s input, then this implies that $G\otimes H$\ is not
mod-$\ell$-respecting either.
\end{proof}

If $S=\left\{  G_{1},G_{2},\ldots\right\}  $, then because of Proposition
\ref{kgtensor},\ we can define $k\left(  S\right)  $\ as $\gcd\left(  k\left(
G_{1}\right)  ,k\left(  G_{2}\right)  ,\ldots\right)  $. \ For then not only
will every transformation in $\left\langle S\right\rangle $\ be mod-$k\left(
S\right)  $-respecting, but there will exist transformations in $\left\langle
S\right\rangle $\ that are not mod-$\ell$-respecting\ for any $\ell>k\left(
S\right)  $.

We then have that $S$ is mod-$k$-respecting if and only if $k$ divides
$k\left(  S\right)  $, and mod-respecting if and only if $S$ is mod-$k$%
-respecting for some $k\geq2$.

\subsection{Alternative Kinds of Generation\label{ALTGEN}}

We now discuss four alternative notions of what it can mean for a reversible
gate set to \textquotedblleft generate\textquotedblright\ a transformation.
\ Besides being interesting in their own right, some of these notions will
also be used in the proof of our main classification theorem.

\textbf{Partial Gates.} \ A \textit{partial reversible gate} is an injective
function $H:D\rightarrow\left\{  0,1\right\}  ^{n}$, where $D$ is some subset
of $\left\{  0,1\right\}  ^{n}$. \ Such an $H$ is \textit{consistent} with a
full reversible gate $G$ if $G\left(  x\right)  =H\left(  x\right)
$\ whenever $x\in D$. \ Also, we say that a reversible gate set $S$
\textit{generates} $H$ if $S$ generates any $G$ with which $H$ is consistent.
\ As an example, $\operatorname{COPY}$\ is the $2$-bit partial reversible gate
defined by the following relations:%
\[
\operatorname{COPY}\left(  00\right)  =00,~~~~~\operatorname{COPY}\left(
10\right)  =11.
\]
If a gate set $S$\ can implement the above behavior, using ancilla bits that
are returned to their original states by the end, then we say $S$
\textquotedblleft generates $\operatorname{COPY}$\textquotedblright;\ the
behavior on inputs $01$\ and $11$\ is irrelevant. \ Note that
$\operatorname{COPY}$\ is consistent with $\operatorname{CNOT}$. \ One can
think of $\operatorname{COPY}$\ as a bargain-basement $\operatorname{CNOT}$,
but one that might be bootstrapped up to a full $\operatorname{CNOT}$\ with
further effort.

\textbf{Generation With Garbage.}\ \ Let $D\subseteq\left\{  0,1\right\}
^{m}$, and $H:D\rightarrow\left\{  0,1\right\}  ^{n}$ be some function, which
need not be injective or surjective, or even have the same number of input and
output bits. \ Then we say that a reversible gate set $S$ \textit{generates
}$H$\textit{ with garbage} if there exists a reversible transformation
$G\in\left\langle S\right\rangle $, as well as an ancilla string $a$ and a
function $\operatorname*{gar}$, such that $G\left(  x,a\right)  =\left(
H\left(  x\right)  ,\operatorname*{gar}\left(  x\right)  \right)  $ for all
$x\in D$. \ As an example, consider the ordinary $2$-bit $\operatorname{AND}%
$\ function, from $\left\{  0,1\right\}  ^{2}$\ to $\left\{  0,1\right\}  $.
\ Since $\operatorname{AND}$\ destroys information, clearly no reversible gate
can generate it in the usual sense, but many reversible gates can generate
$\operatorname{AND}$\ with garbage: for instance, the $\operatorname{Toffoli}%
$\ and $\operatorname{Fredkin}$\ gates, as we saw in Section \ref{REVINTRO}.

\textbf{Encoded Universality.} \ This is a concept borrowed from quantum
computing \cite{bkdlw}. \ In our setting, encoded universality means that
there is some way of encoding $0$'s and $1$'s by longer strings, such that our
gate set can implement any desired transformation on the encoded bits. \ Note
that, while this is a weaker notion of universality than the ability to
generate arbitrary permutations of $\left\{  0,1\right\}  ^{n}$, it is
stronger than \textquotedblleft merely\textquotedblright\ computational
universality, because it still requires a transformation to be performed
reversibly, with no garbage left around. \ Formally, given a reversible gate
set $S$, we say that $S$ \textit{supports encoded universality} if there are
$k$-bit strings $\alpha\left(  0\right)  $ and $\alpha\left(  1\right)  $ such
that for every $n$-bit reversible transformation $F\left(  x_{1}\ldots
x_{n}\right)  =y_{1}\ldots y_{n}$, there exists a transformation
$G\in\left\langle S\right\rangle $\ that satisfies%
\[
G\left(  \alpha\left(  x_{1}\right)  \ldots\alpha\left(  x_{n}\right)
\right)  =\alpha\left(  y_{1}\right)  \ldots\alpha\left(  y_{n}\right)
\]
for all $x\in\left\{  0,1\right\}  ^{n}$. \ Also, we say that $S$
\textit{supports affine encoded universality} if this is true for every affine
$F$.

As a well-known example, the $\operatorname{Fredkin}$\ gate is not universal
in the usual sense, because it preserves Hamming weight. \ But it is easy to
see that $\operatorname{Fredkin}$\ supports encoded universality, using the
so-called \textit{dual-rail encoding}, in which every $0$ bit is encoded as
$01$, and every $1$ bit is encoded as $10$. \ In Section \ref{ENCODED}, we
will show, as a consequence of our classification theorem, that \textit{every}
reversible gate set (except for degenerate sets) supports either encoded
universality or affine encoded universality.

\textbf{Loose Generation.} \ Finally, we say that a gate set $S$%
\ \textit{loosely generates} a reversible transformation $F:\left\{
0,1\right\}  ^{n}\rightarrow\left\{  0,1\right\}  ^{n}$, if there exists a
transformation $G\in\left\langle S\right\rangle $, as well as ancilla strings
$a$\ and $b$, such that%
\[
G\left(  x,a\right)  =\left(  F\left(  x\right)  ,b\right)
\]
for all $x\in\left\{  0,1\right\}  ^{n}$. \ In other words, $G$ is allowed to
change the ancilla bits, so long as they change in a way that is independent
of the input $x$. \ Under this rule, one could perhaps tell by examining the
ancilla bits \textit{that} $G$\ was applied, but one could not tell to which
input. \ This suffices for some applications of reversible computing, though
not for others.\footnote{For example, if $G$ were applied to a quantum
superposition, then it would still maintain coherence among all the inputs to
which it was applied---though perhaps not between those inputs and other
inputs in the superposition to which it was \textit{not} applied.}

\section{Stating the Classification Theorem\label{THEOREM}}

In this section we state our main result, and make a few preliminary remarks
about it. \ First let us define the gates that appear in the classification theorem.

\begin{itemize}
\item $\operatorname*{NOT}$ is the $1$-bit gate that maps $x$\ to
$\overline{x}$.

\item $\operatorname*{NOTNOT}$, or $\operatorname{NOT}^{\otimes2}$, is the
$2$-bit gate that maps $xy$\ to $\overline{x}\overline{y}$.
\ $\operatorname*{NOTNOT}$\ is a parity-preserving variant of
$\operatorname*{NOT}$.

\item $\operatorname*{CNOT}$ (Controlled-NOT) is the $2$-bit gate that maps
$x,y$\ to $x,y\oplus x$. $\operatorname*{CNOT}$\ is affine.

\item $\operatorname*{CNOTNOT}$ is the $3$-bit gate that maps $x,y,z$\ to
$x,y\oplus x,z\oplus x$. \ $\operatorname*{CNOTNOT}$\ is affine and parity-preserving.

\item $\operatorname*{Toffoli}$ (also called Controlled-Controlled-NOT, or
CCNOT) is the $3$-bit gate that maps $x,y,z$\ to $x,y,z\oplus xy$.

\item $\operatorname*{Fredkin}$ (also called Controlled-SWAP, or CSWAP) is the
$3$-bit gate that maps $x,y,z$\ to $x,y\oplus x\left(  y\oplus z\right)
,z\oplus x\left(  y\oplus z\right)  $. \ In other words, it swaps $y$\ with
$z$\ if $x=1$, and does nothing if $x=0$. \ $\operatorname*{Fredkin}$ is
conservative: it never changes the Hamming weight.

\item $\operatorname*{C}_{k}$ is a $k$-bit gate that maps $0^{k}$\ to $1^{k}%
$\ and $1^{k}$\ to $0^{k}$, and all other $k$-bit strings to themselves.
\ $\operatorname*{C}_{k}$\ preserves the Hamming weight mod $k$. \ Note that
$\operatorname*{C}_{1}=\operatorname*{NOT}$, while $\operatorname*{C}_{2}$\ is
equivalent to $\operatorname*{NOTNOT}$, up to a bit-swap.

\item $\operatorname*{T}_{k}$ is a $k$-bit gate (for even $k$) that maps
$x$\ to $\overline{x}$\ if $\left\vert x\right\vert $\ is odd, or to $x$ if
$\left\vert x\right\vert $\ is even. \ A different definition is%
\[
\operatorname*{T}\nolimits_{k}\left(  x_{1}\ldots x_{k}\right)  =\left(
x_{1}\oplus b_{x},\ldots,x_{k}\oplus b_{x}\right)  ,
\]
where $b_{x}:=x_{1}\oplus\cdots\oplus x_{k}$. \ This shows that
$\operatorname*{T}_{k}$\ is linear. \ Indeed, we also have%
\[
\operatorname*{T}\nolimits_{k}\left(  x\right)  \cdot\operatorname*{T}%
\nolimits_{k}\left(  y\right)  \equiv x\cdot y+\left(  k+2\right)  b_{x}%
b_{y}\equiv x\cdot y\left(  \operatorname{mod}2\right)  ,
\]
which shows that $\operatorname*{T}\nolimits_{k}$\ is orthogonal. \ Note also
that, if $k\equiv2\left(  \operatorname{mod}4\right)  $, then
$\operatorname*{T}\nolimits_{k}$\ preserves Hamming weight mod $4$: if
$\left\vert x\right\vert $\ is even then $\left\vert \operatorname*{T}%
\nolimits_{k}\left(  x\right)  \right\vert =\left\vert x\right\vert $, while
if $\left\vert x\right\vert $\ is odd then%
\[
\left\vert \operatorname*{T}\nolimits_{k}\left(  x\right)  \right\vert \equiv
k-\left\vert x\right\vert \equiv2-\left\vert x\right\vert \equiv\left\vert
x\right\vert \left(  \operatorname{mod}4\right)  .
\]

\item $\operatorname*{F}_{k}$ is a $k$-bit gate (for even $k$) that maps
$x$\ to $\overline{x}$\ if $\left\vert x\right\vert $\ is even, or to $x$ if
$\left\vert x\right\vert $\ is odd. \ A different definition is%
\[
\operatorname*{F}\nolimits_{k}\left(  x_{1}\ldots x_{k}\right)  =\overline
{\operatorname*{T}\nolimits_{k}\left(  x_{1}\ldots x_{k}\right)  }=\left(
x_{1}\oplus b_{x}\oplus1,\ldots,x_{k}\oplus b_{x}\oplus1\right)
\]
where $b_{x}$\ is as above. \ This shows that $\operatorname*{F}_{k}$\ is
affine. \ Indeed, if $k$ is a multiple of $4$, then $\operatorname*{F}_{k}%
$\ preserves Hamming weight mod $4$: if $\left\vert x\right\vert $ is odd then
$\left\vert \operatorname*{F}_{k}\left(  x\right)  \right\vert =\left\vert
x\right\vert $, while if $\left\vert x\right\vert $\ is even then%
\[
\left\vert \operatorname*{F}\nolimits_{k}\left(  x\right)  \right\vert \equiv
k-\left\vert x\right\vert \equiv\left\vert x\right\vert \left(
\operatorname{mod}4\right)  .
\]
Since $\operatorname*{F}_{k}$\ is equal to $\operatorname*{T}\nolimits_{k}$ in
its linear part, $\operatorname*{F}_{k}$\ is also an isometry.
\end{itemize}

We can now state the classification theorem.

\begin{theorem}
[Main Result]\label{main}Every set of reversible gates generates one of the
following classes:

\begin{enumerate}
\item The trivial class (which contains only bit-swaps).

\item The class of all transformations (generated by $\operatorname*{Toffoli}$).

\item The class of all conservative transformations (generated by
$\operatorname*{Fredkin}$).

\item For each $k\geq3$, the class of all mod-$k$-preserving transformations
(generated by $\operatorname*{C}_{k}$).

\item The class of all affine transformations (generated by
$\operatorname*{CNOT}$).

\item The class of all parity-preserving affine transformations (generated by
$\operatorname*{CNOTNOT}$).

\item The class of all mod-$4$-preserving affine transformations (generated by
$\operatorname*{F}_{4}$).

\item The class of all orthogonal linear transformations (generated by
$\operatorname*{T}_{4}$).

\item The class of all mod-$4$-preserving orthogonal linear transformations
(generated by $\operatorname*{T}_{6}$).

\item Classes 1, 3, 7, 8, or 9 augmented by a $\operatorname*{NOTNOT}$\ gate
(note: 7 and 8 become equivalent this way).

\item Classes 1, 3, 6, 7, 8, or 9\ augmented by a $\operatorname*{NOT}$\ gate
(note: 7 and 8 become equivalent this way).
\end{enumerate}

Furthermore, all the above classes are distinct except when noted otherwise,
and they fit together in the lattice diagram shown in Figure \ref{lattice}%
.\footnote{Let us mention that $\operatorname*{Fredkin}+\operatorname*{NOTNOT}%
$\ generates the class of all parity-preserving transformations, while
$\operatorname*{Fredkin}+\operatorname*{NOT}$\ generates the class of all
parity-respecting transformations. \ We could have listed the
parity-preserving transformations as a special case of the mod-$k$-preserving
transformations: namely, the case $k=2$. \ If we had done so, though, we would
have had to include the caveat that $\operatorname*{C}_{k}$\ only generates
all mod-$k$-preserving transformations when $k\geq3$\ (when $k=2$, we also
need $\operatorname*{Fredkin}$\ in the generating set).\ \ And in any case,
the parity-respecting class would still need to be listed separately.}
\end{theorem}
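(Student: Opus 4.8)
The plan is to prove the theorem in three stages: first build up the full list of classes from generators, then prove the containments and non-containments that give the lattice, and finally — the real work — prove that no further classes exist. For the first stage, I would verify that each named gate does generate the claimed class: $\operatorname*{Toffoli}$ generates everything (Section \ref{NONAFFCIRC}), $\operatorname*{Fredkin}$ generates all conservative transformations, $\operatorname*{C}_k$ generates all mod-$k$-preserving transformations for $k\geq 3$, $\operatorname*{CNOT}$ generates all affine transformations, $\operatorname*{CNOTNOT}$ all parity-preserving affine ones, $\operatorname*{F}_4$ all mod-$4$-preserving affine ones, $\operatorname*{T}_4$ all orthogonal linear ones, and $\operatorname*{T}_6$ all mod-$4$-preserving orthogonal linear ones. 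Each of these is a ``positive'' synthesis result of the flavor ``once you have a powerful enough gate, you can reach everything in the obvious closure''; I would prove them by explicit circuit constructions, using ancillas freely, and in the affine/linear cases by reducing to generating a spanning set of elementary row operations (respecting the relevant invariant). The ``augmented'' classes in items 10 and 11 then come almost for free, using the observations already in the excerpt that $\operatorname*{F}_4$ and $\operatorname*{T}_4$ become equivalent once a $\operatorname*{NOT}$ or $\operatorname*{NOTNOT}$ is available, and that $\operatorname*{Fredkin}+\operatorname*{NOT}$ (resp.\ $+\operatorname*{NOTNOT}$) yields all parity-respecting (resp.\ parity-preserving) transformations.

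For the second stage, I would establish that the listed classes are pairwise distinct (except where noted) and arrange them into the lattice of Figure \ref{lattice}. Distinctness is the easy direction: each class is cut out by a conjunction of invariants — conservativity, mod-$k$-preservation, affineness, linearity, orthogonality, parity-respecting versus parity-preserving — and one exhibits, for each pair, a transformation satisfying one invariant set but not the other (e.g.\ $\operatorname*{Toffoli}$ is non-affine, $\operatorname*{CNOT}$ is affine but not orthogonal, $\operatorname*{T}_4$ is orthogonal but not mod-$4$-preserving, $\operatorname*{T}_6$ is both). Here I would lean on the three number-theoretic facts flagged in Section \ref{HAMMING} (no non-conservative gate preserves inner products mod $k$ for $k\neq 2$; no gate shifts Hamming weight mod $k$ by a fixed nonzero amount for $k\neq 2$; no nontrivial linear gate preserves Hamming weight mod $k$ unless $k\in\{2,4\}$), since these are exactly what rule out the ``missing'' combinations — e.g.\ why there is no mod-$3$-preserving \emph{affine} class, and why the orthogonal chain stops at mod $4$. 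The inclusion edges (e.g.\ orthogonal linear $\subseteq$ parity-preserving affine $\subseteq$ affine, mod-$4$-preserving linear $\subseteq$ conservative-mod-$4$, etc.) are then read off directly from the invariant definitions.

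The third stage — completeness — is the main obstacle and will occupy the bulk of the argument. The strategy is a case split on how much structure the gate set $S$ retains. If $S$ is non-affine, I would show $\langle S\rangle$ contains a $\operatorname*{Fredkin}$ gate (the construction of Sections \ref{CONSERV}--\ref{MOD}, via the $3$-dimensional lattice / Chinese Remainder Theorem argument sketched in the introduction); then, depending on whether $S$ is conservative, mod-$k$-preserving, mod-respecting but not mod-preserving (hence parity-flipping, by Theorem \ref{noshifter}), or fully non-mod-respecting, one gets classes 3, 4, and 2 together with their $\operatorname*{NOT}$/$\operatorname*{NOTNOT}$ augmentations from items 10--11. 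If $S$ is affine but non-degenerate, I would first dispose of the linear case: a non-orthogonal linear gate yields $\operatorname*{CNOTNOT}$ (Section \ref{TOCNOTNOT}), and then $S$ being parity-preserving or not, together with the mod-$4$ analysis, pins down classes 6, 7, 8, 9 and their augmentations; an orthogonal linear gate set falls into the $\operatorname*{T}_4$/$\operatorname*{T}_6$ chain. The genuinely affine (non-linear) case and the bookkeeping of which augmentations occur is the enumeration alluded to in Section \ref{NOT}. Finally, a degenerate $S$ generates only bit-swaps and $\operatorname*{NOT}$s/$\operatorname*{NOTNOT}$s, landing in class 1 or its augmentations. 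Throughout, the hard technical lemmas are the two ``bootstrapping'' constructions (non-affine $\Rightarrow$ $\operatorname*{Fredkin}$; non-orthogonal linear $\Rightarrow$ $\operatorname*{CNOTNOT}$), which are where the lattice/Diophantine arguments live and where essentially all the difficulty of ruling out exotic classes is concentrated.
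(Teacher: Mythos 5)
Your proposal is correct and follows essentially the same architecture as the paper's proof: explicit circuit constructions for the positive direction, invariants for distinctness, and completeness via the two bootstrapping lemmas (non-affine $\Rightarrow\operatorname*{Fredkin}$, non-orthogonal linear $\Rightarrow\operatorname*{CNOTNOT}$) together with the Hamming-weight/inner-product results of Section \ref{HAMMING} and the final affine-part enumeration. The only minor mismatch is organizational: the three number-theoretic facts do their work in the completeness stage (ruling out exotic conserved quantities), not in establishing distinctness, which in the paper needs only a direct check of invariants.
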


\begin{figure}[ptb]
\begin{center}
\label{lattice} \begin{tikzpicture}[>=latex]
\tikzstyle{class}=[circle, thick, minimum size=1.2cm, text width=1.0cm, align=center, draw, font=\tiny]
\tikzstyle{nonaffine}=[class, fill=blue!20]
\tikzstyle{affine}=[class, fill=green!20]
\tikzstyle{orthogonal}=[class, fill=yellow!20]
\tikzstyle{inf4}=[class, fill=yellow!20] 
\tikzstyle{automorphism}=[class,fill=red!20]
\matrix[row sep=0.8cm,column sep=0.8cm] {
& & & \node (ALL) [nonaffine]{$\top$}; & & & & \\
& & \node (CNOT) [affine]{$\CNOT$}; & & \node (FRNOT) [nonaffine]{$\Fredkin$ \\ $+\NOT$}; & & & \\
& & & \node (CNOTNOTN) [affine]{$\CNOTNOT$ \\ $+\NOT$}; & & \node (MOD2) [nonaffine]{$\gateMOD{2}$}; & \node (MOD3) {$\cdots$}; & \\
& & \node (F4NOT) [orthogonal]{$\F4+\NOT$}; & & \node (CNOTNOT) [affine]{$\CNOTNOT$}; & & \node (MOD4) [nonaffine]{$\gateMOD{4}$}; & \node (MOD6) {$\cdots$}; \\
& \node (T6NOT) [inf4]{$\T6+\NOT$}; & & \node (F4NOTNOT) [orthogonal]{$\F4+\NOTNOT$}; & & & \node (MOD8) [nonaffine]{$\gateMOD{8}$}; & \node (MOD12) {$\cdots$}; \\
\node (NOT) [automorphism]{$\NOT$}; & & \node (T6NOTNOT) [inf4]{$\T6+\NOTNOT$}; & \node (T4) [orthogonal]{$\T4$}; & \node (F4) [orthogonal]{$\F4$}; & & \node (MODELLIPSIS) {$\vdots$}; & \node (MODELLIPSIS2) {$\cdots$}; \\
& \node (NOTNOT) [automorphism]{$\NOTNOT$}; & & \node (T6) [inf4]{$\T6$}; &  & \node (FREDKIN) [nonaffine]{$\Fredkin$}; & & \\
& & \node (NONE) [automorphism]{$\bot$}; & & & & \\
};
\path[draw,->] (ALL) edge (FRNOT)
(FRNOT) edge (MOD2)
(ALL) edge[bend left=10] (MOD3)
(MOD2) edge (MOD4)
(MOD2) edge (MOD6)
(MOD3) edge (MOD6)
(MOD4) edge (MOD8)
(MOD4) edge (MOD12)
(MOD6) edge (MOD12)
(MOD8) edge (MODELLIPSIS)
(MOD8) edge (MODELLIPSIS2)
(MOD12) edge (MODELLIPSIS2)
(MODELLIPSIS) edge (FREDKIN)
(MODELLIPSIS2) edge (FREDKIN)
(FREDKIN) edge (NONE)
(ALL) edge (CNOT)
(CNOT) edge (CNOTNOTN)
(CNOTNOTN) edge (CNOTNOT)
(F4NOT) edge (F4NOTNOT)
(F4NOTNOT) edge (F4)
(T6NOT) edge (T6NOTNOT)
(T6NOTNOT) edge (T6)
(NOT) edge (NOTNOT)
(NOTNOT) edge (NONE)
(FRNOT) edge (CNOTNOTN)
(CNOTNOTN) edge (F4NOT)
(F4NOT) edge (T6NOT)
(T6NOT) edge (NOT)
(MOD2) edge (CNOTNOT)
(CNOTNOT) edge (F4NOTNOT)
(F4NOTNOT) edge (T6NOTNOT)
(T6NOTNOT) edge (NOTNOT)
(MOD4) edge (F4)
(F4) edge (T6)
(T6) edge (NONE)
(F4NOTNOT) edge (T4)
(T4) edge (T6);
\tikzstyle{class}=[rectangle, thick, minimum size=0.2cm, align=center, draw, font=\tiny]
\tikzstyle{nonaffine}=[class, fill=blue!20]
\tikzstyle{affine}=[class, fill=green!20]
\tikzstyle{orthogonal}=[class, fill=yellow!20]
\tikzstyle{automorphism}=[class,fill=red!20]
\tikzstyle{legend}=[font=\tiny]
\node (NONAFFINE) [nonaffine, below=1cm of FREDKIN]{};
\node (NONAFFINELABEL) [legend,right=2mm of NONAFFINE]{Non-affine};
\node (AFFINE) [affine, below=1mm of NONAFFINE]{};
\node (AFFINELABEL) [legend, right=2mm of AFFINE]{Affine};
\node (ISOMETRY) [orthogonal, below=1mm of AFFINE]{};
\node (ISOMETRYLABEL) [legend, right=2mm of ISOMETRY]{Isometry};
\node (AUTOMORPHISM) [automorphism, below=1mm of ISOMETRY]{};
\node (AUTOMORPHISMLABEL) [legend, right=2mm of AUTOMORPHISM]{Degenerate};
\node (box) [draw, rectangle, inner sep=4mm, fit = (NONAFFINE) (AUTOMORPHISMLABEL)] {};
\end{tikzpicture}
\end{center}
\caption{The inclusion lattice of reversible gate classes}%
\end{figure}
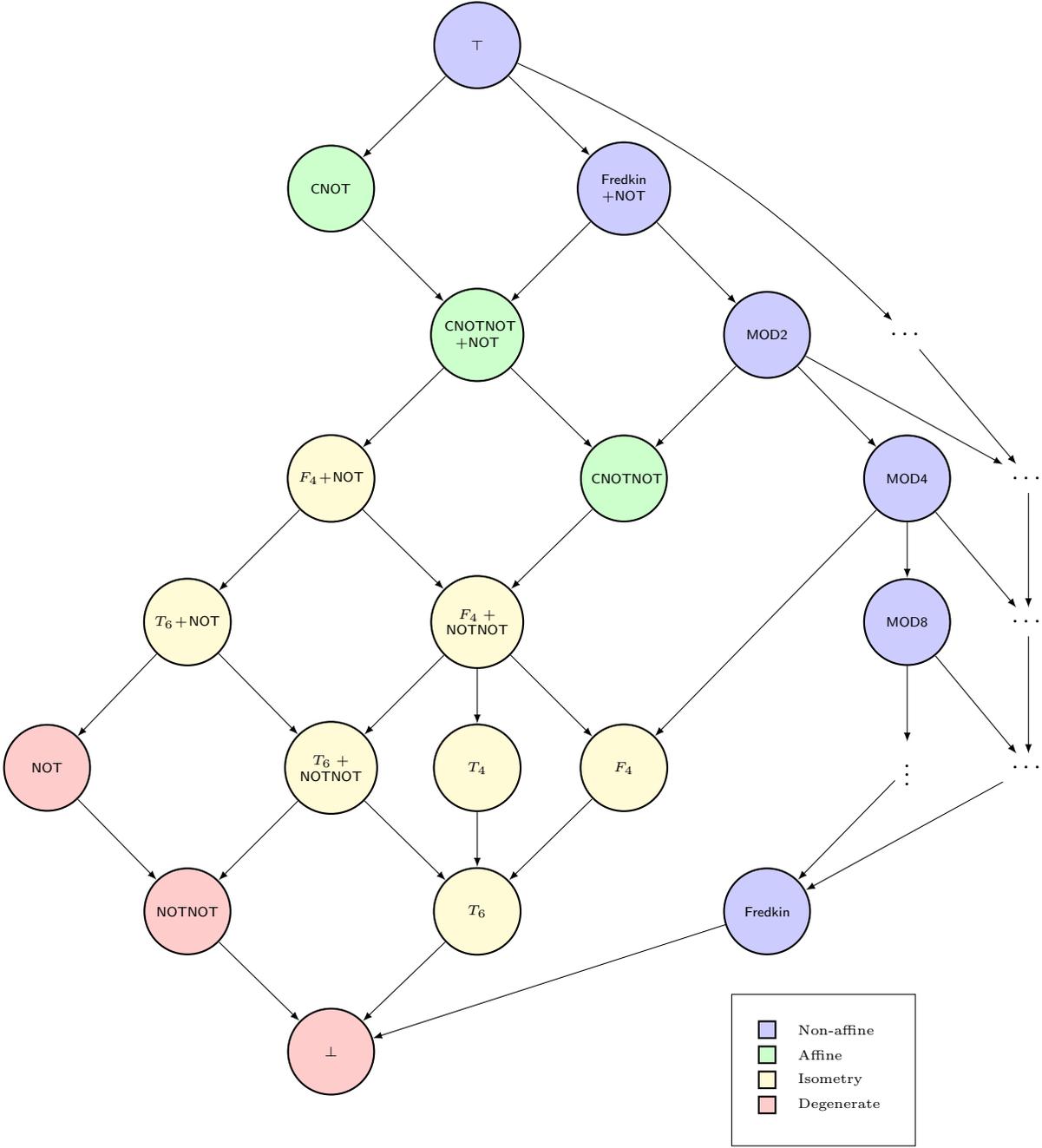

Let us make some comments about the structure of the lattice. \ The lattice
has a countably infinite number of classes, with the one infinite part given
by the mod-$k$-preserving classes. \ The mod-$k$-preserving classes are
partially ordered by divisibility, which means, for example, that the lattice
is not planar.\footnote{For consider the graph with the integers 2, 3, 4, 5,
6, 7, 8, 9, 10, 12, 14, 15, 18, 20, 21, 24, and 28 as its vertices,\ and with
an edge between each pair whose ratio is a prime. \ One can check that this
graph contains $K_{3,3}$ as a minor.} \ While there are infinite descending
chains in the lattice, there is no infinite ascending chain. \ This means
that, if we start from some reversible gate class and then add new gates that
extend its power, we must terminate after finitely many steps with the class
of all reversible transformations.

In Appendix \ref{LOOSE}, we will prove that if we allow loose generation, then
the only change to Theorem \ref{main}\ is that every $\mathcal{C}%
+\operatorname*{NOTNOT}$\ class collapses with the corresponding
$\mathcal{C}+\operatorname*{NOT}$ class.

\section{Consequences of the Classification\label{CONSEQ}}

To illustrate the power of the classification theorem, in this section we use
it to prove four general implications for reversible computation. \ While
these implications are easy to prove with the classification in hand, we do
not know how to prove any of them without it.

\subsection{Nature of the Classes\label{NATURE}}

Here is one immediate (though already non-obvious) corollary of Theorem
\ref{main}.

\begin{corollary}
\label{finitecor}Every reversible gate class $\mathcal{C}$\ is finitely
generated: that is, there exists a finite set $S$ such that $\mathcal{C}%
=\left\langle S\right\rangle $.
\end{corollary}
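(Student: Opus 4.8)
The plan is to read off the corollary directly from Theorem~\ref{main}. That theorem gives an explicit finite list of all reversible gate classes, and — crucially — for each class it names a \emph{single} gate (or a pair of gates) that generates it. So the strategy is simply: enumerate the eleven items in the statement of Theorem~\ref{main}, and for each one exhibit the stated finite generating set.

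Concretely, I would argue as follows. Let $\mathcal{C}$ be any reversible gate class. By Theorem~\ref{main}, $\mathcal{C}$ equals one of the classes in items 1--11. For items 1--9 the theorem already supplies a single generating gate: the trivial class is $\langle\varnothing\rangle$ (generated by the empty set, or if one insists on a nonempty set, by a bit-swap); the class of all transformations is $\langle\operatorname*{Toffoli}\rangle$; the conservative transformations are $\langle\operatorname*{Fredkin}\rangle$; the mod-$k$-preserving transformations are $\langle\operatorname*{C}_{k}\rangle$; and similarly $\langle\operatorname*{CNOT}\rangle$, $\langle\operatorname*{CNOTNOT}\rangle$, $\langle\operatorname*{F}_{4}\rangle$, $\langle\operatorname*{T}_{4}\rangle$, $\langle\operatorname*{T}_{6}\rangle$ for items 5--9. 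For items 10 and 11, the class is one of the finitely many base classes from items 1, 3, 6, 7, 8, 9 together with an extra $\operatorname*{NOTNOT}$ or $\operatorname*{NOT}$ gate; so a finite generating set is obtained by taking a generator of the base class (again a single gate, by the above) and adjoining $\operatorname*{NOTNOT}$ or $\operatorname*{NOT}$ — a set of size at most two. In every case $\mathcal{C}=\langle S\rangle$ for a set $S$ of size at most two, which is finite.

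I do not anticipate a genuine obstacle here: the corollary is essentially a restatement of the generation clauses already embedded in Theorem~\ref{main}, so the ``proof'' is just the bookkeeping of running through the list. The only point worth a word of care is the boundary case $k=2$ of the mod-$k$-preserving classes: as the footnote to Theorem~\ref{main} notes, $\operatorname*{C}_{2}\equiv\operatorname*{NOTNOT}$ alone does not generate all parity-preserving transformations, so one should instead use the finite set $\{\operatorname*{Fredkin},\operatorname*{NOTNOT}\}$ there (and $\{\operatorname*{Fredkin},\operatorname*{NOT}\}$ for parity-respecting) — still finite. Since every class on the list admits a generating set of size $\le 2$, one could even strengthen the corollary to say every reversible gate class is generated by at most two gates; I would mention this, but the stated form follows immediately.
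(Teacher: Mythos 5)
Your proposal is correct and matches the paper's intent exactly: the paper offers no separate proof, treating the corollary as immediate from the explicit generating sets named in Theorem~\ref{main}, which is precisely the bookkeeping you carry out (including the correct handling of the $k=2$ case via $\{\operatorname*{Fredkin},\operatorname*{NOTNOT}\}$). Your observation that every class is generated by at most two gates is also consistent with the paper, which goes on to sharpen this to a single gate in Corollary~\ref{singlegate}.
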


Indeed, we have something stronger.

\begin{corollary}
\label{singlegate}Every reversible gate class $\mathcal{C}$\ is generated by a
single gate $G\in\mathcal{C}$.
\end{corollary}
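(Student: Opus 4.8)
The plan is to read the statement straight off the classification theorem (Theorem \ref{main}). That theorem lists eleven families of classes and, for nine of them, already names a single generating gate; so the only thing to do is to replace the two-gate generating sets of the ``augmented'' families (items 10 and 11) by single gates. For classes 1--9 there is nothing to prove: class 1 (the trivial class) is generated by a single bit-swap gate, and classes 2--9 by $\operatorname*{Toffoli}$, $\operatorname*{Fredkin}$, $\operatorname*{C}_k$, $\operatorname*{CNOT}$, $\operatorname*{CNOTNOT}$, $\operatorname*{F}_4$, $\operatorname*{T}_4$, $\operatorname*{T}_6$ respectively. From items 10--11 the classes $\langle\operatorname*{NOTNOT}\rangle$ and $\langle\operatorname*{NOT}\rangle$ are already single-gate as well. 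This leaves the classes $\mathcal{C}+\operatorname*{NOT}$ and $\mathcal{C}+\operatorname*{NOTNOT}$ for $\mathcal{C}$ a \emph{nontrivial} base class from items 10--11 (so $\mathcal{C}\in\{3,6,7,8,9\}$).

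For such a class, let $G$ be the named generator of $\mathcal{C}$ (one of $\operatorname*{Fredkin},\operatorname*{CNOTNOT},\operatorname*{F}_4,\operatorname*{T}_4,\operatorname*{T}_6$), and take the single gate $G\otimes\operatorname*{NOT}$ (resp.\ $G\otimes\operatorname*{NOTNOT}$). Since $\langle G\otimes\operatorname*{NOT}\rangle\subseteq\langle G,\operatorname*{NOT}\rangle=\mathcal{C}+\operatorname*{NOT}$, this gate lies in the target class, and by Theorem \ref{main} the class it generates is one of the listed classes --- hence one contained in $\mathcal{C}+\operatorname*{NOT}$. To force equality it suffices to check that $G\otimes\operatorname*{NOT}$ lies in no \emph{proper} sub-class of $\mathcal{C}+\operatorname*{NOT}$ from the list, which one does by computing a few invariants of $G\otimes\operatorname*{NOT}$: affineness, linearity, being an isometry (orthogonal in the linear part), conservativity, the respecting number (which by Proposition \ref{kgtensor} equals $\gcd(k(G),2)$), and parity behaviour --- $G\otimes\operatorname*{NOT}$ is parity-flipping since $G$ is parity-preserving, while $G\otimes\operatorname*{NOTNOT}$ is parity-preserving. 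Reading these invariants against the explicit descriptions of the classes in Figure \ref{lattice}, together with the distinctness of all the classes asserted by Theorem \ref{main}, excludes every proper sub-class and leaves $\langle G\otimes\operatorname*{NOT}\rangle=\mathcal{C}+\operatorname*{NOT}$ (and identically for the $\operatorname*{NOTNOT}$ cases).

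The step I expect to require the most care is separating the several ``isometry'' classes from one another --- for instance $\operatorname*{F}_4+\operatorname*{NOT}$ from $\operatorname*{T}_6+\operatorname*{NOT}$. Hamming weight alone does not distinguish these; one must instead track the relevant invariant (mod-$4$-preservation, orthogonality) on the \emph{linear part} $G'\otimes I$ of the tensor product, using that $\operatorname*{F}_4$ and $\operatorname*{T}_4$ share a linear part and that a gate can be mod-$4$-preserving or orthogonal in its linear part without being so in its affine part, or vice versa. Finally, it is worth flagging why one should \emph{not} instead try to recover $G$ and $\operatorname*{NOT}$ individually from $G\otimes\operatorname*{NOT}$ by iterating and cancelling: since $\operatorname*{Fredkin},\operatorname*{F}_4,\operatorname*{T}_4,\operatorname*{T}_6$ all have order $2$, a naive parity count suggests such an extraction is impossible using disjoint registers, and any correct extraction would have to exploit overlapping gate applications --- which is exactly the kind of reasoning the classification theorem has already carried out, so invoking Theorem \ref{main} is the clean route.
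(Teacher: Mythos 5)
Your proposal is correct and follows essentially the same route as the paper: handle the single-generator classes trivially, and for each augmented class $\mathcal{C}+\operatorname*{NOT}$ or $\mathcal{C}+\operatorname*{NOTNOT}$ take the tensor product $G\otimes\operatorname*{NOT}$ or $G\otimes\operatorname*{NOTNOT}$, observe it lies in the class but (by its invariants) in no smaller listed class, and invoke Theorem \ref{main} to conclude it generates the whole class. The invariant checks you flag are exactly those recorded in Theorem \ref{distinct}.
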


\begin{proof}
This is immediate for all the classes listed in Theorem \ref{main}, except the
ones involving $\operatorname*{NOT}$\ or $\operatorname*{NOTNOT}$\ gates.
\ For classes of the form $\mathcal{C}=\left\langle G,\operatorname*{NOT}%
\right\rangle $ or $\mathcal{C}=\left\langle G,\operatorname*{NOTNOT}%
\right\rangle $, we just need a single gate $G^{\prime}$ that is clearly
generated by $\mathcal{C}$, and clearly \textit{not} generated by a smaller
class. \ We can then appeal to Theorem \ref{main} to assert that $G^{\prime}$
\textit{must} generate $\mathcal{C}$. \ For each of the relevant
$G$'s---namely, $\operatorname*{Fredkin}$, $\operatorname*{CNOTNOT}$,
$\operatorname*{F}_{4}$, and $\operatorname*{T}_{6}$---one such $G^{\prime}%
$\ is the tensor product, $G\otimes\operatorname*{NOT}$\ or $G\otimes
\operatorname*{NOTNOT}$.
\end{proof}

We also wish to point out a non-obvious symmetry property that follows from
the classification theorem. \ Given an $n$-bit reversible transformation $F$,
let $F^{\ast}$, or the \textit{dual} of $F$, be $F^{\ast}\left(  x_{1}\ldots
x_{n}\right)  :=\overline{F\left(  \overline{x_{1}\ldots x_{n}}\right)  }$.
\ The dual can be thought of as $F$\ with the roles of $0$ and $1$
interchanged: for example, $\operatorname*{Toffoli}^{\ast}\left(  xyz\right)
$\ flips $z$\ if and only if $x=y=0$. \ Also, call a gate $F$
\textit{self-dual} if $F^{\ast}=F$, and call a reversible gate class
$\mathcal{C}$ \textit{dual-closed} if $F^{\ast}\in\mathcal{C}$\ whenever
$F\in\mathcal{C}$. \ Then:

\begin{corollary}
\label{dualclosed}Every reversible gate class $\mathcal{C}$\ is dual-closed.
\end{corollary}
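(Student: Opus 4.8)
The plan is to reduce everything to the classification theorem by first checking that dualization is compatible with the generation operation. Concretely, I would verify that $\left(\cdot\right)^{\ast}$ commutes with the five closure rules of Section~\ref{CLASSES}: the identity function is self-dual; $\left(F\circ G\right)^{\ast}=F^{\ast}\circ G^{\ast}$, which is a one-line computation since double complementation cancels; dualization commutes with permuting input and output bits; and the extension and ancilla rules survive dualization once we replace the dummy bit $b$ by $\overline{b}$ and the ancilla string $a$ by $\overline{a}$ --- indeed, from $F\left(x,a\right)=\left(G\left(x\right),a\right)$ one reads off $F^{\ast}\left(x,\overline{a}\right)=\left(G^{\ast}\left(x\right),\overline{a}\right)$. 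From this it follows that $\left\langle S^{\ast}\right\rangle =\left\langle S\right\rangle ^{\ast}$ for every set of gates $S$, where $S^{\ast}=\left\{ G^{\ast}:G\in S\right\}$ and $\mathcal{C}^{\ast}=\left\{ F^{\ast}:F\in\mathcal{C}\right\}$; both inclusions are immediate because $\left(\cdot\right)^{\ast}$ is an involution. Consequently $\mathcal{C}=\left\langle S\right\rangle$ is dual-closed if and only if $S^{\ast}\subseteq\left\langle S\right\rangle$, so it is enough to exhibit such an $S$ for each class appearing in Theorem~\ref{main}.

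By Corollary~\ref{singlegate} (or simply by reading off the generating sets listed in Theorem~\ref{main}), this becomes a finite check. Most of the named generators turn out to be \emph{self-dual}: I would show $G^{\ast}=G$ for $\operatorname*{NOT}$, $\operatorname*{NOTNOT}$, $\operatorname*{C}_{k}$, $\operatorname*{F}_{4}$, $\operatorname*{T}_{4}$, and $\operatorname*{T}_{6}$, in each case because complementing the input reverses the Hamming-weight-parity condition that triggers an inversion in exactly the way that cancels the output complementation (for $\operatorname*{C}_{k}$, complementation just swaps $0^{k}$ with $1^{k}$ and permutes the other strings among themselves). For the four remaining generators I would argue that the dual lies in the same class directly: $\operatorname*{Toffoli}^{\ast}$ is just some reversible transformation, hence lies in $\top$; $\operatorname*{Fredkin}^{\ast}$ is the gate swapping $y$ with $z$ iff $x=0$, which is conservative and so lies in $\left\langle \operatorname*{Fredkin}\right\rangle$; $\operatorname*{CNOT}^{\ast}=\left(I\otimes\operatorname*{NOT}\right)\circ\operatorname*{CNOT}$ together with $\operatorname*{NOT}\in\left\langle \operatorname*{CNOT}\right\rangle$ (fix the control bit to the constant $1$, which is returned unchanged); and similarly $\operatorname*{CNOTNOT}^{\ast}=\left(I\otimes\operatorname*{NOTNOT}\right)\circ\operatorname*{CNOTNOT}$ with $\operatorname*{NOTNOT}\in\left\langle \operatorname*{CNOTNOT}\right\rangle$. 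The classes of the form $\mathcal{C}_{0}+\operatorname*{NOT}$ or $\mathcal{C}_{0}+\operatorname*{NOTNOT}$ then need no extra work: their generating sets are one of the gates just treated together with $\operatorname*{NOT}$ or $\operatorname*{NOTNOT}$, both self-dual.

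I do not expect a genuine obstacle here; the proof is essentially bookkeeping, and the one spot to be careful about is the compatibility of $\left(\cdot\right)^{\ast}$ with the ancilla rule, where one must remember that a legal ancilla string may be arbitrary, so $\overline{a}$ is just as admissible as $a$. After that, everything collapses to the short self-duality verifications above plus the single mildly non-obvious fact --- itself an immediate consequence of the ancilla rule --- that $\operatorname*{CNOT}$ and $\operatorname*{CNOTNOT}$ generate $\operatorname*{NOT}$ and $\operatorname*{NOTNOT}$, respectively.
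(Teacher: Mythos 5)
Your proof is correct, and it takes a mildly but genuinely different route from the paper's. The paper argues class-by-class using the \emph{invariant} characterizations established by the circuit constructions: for every class containing $\operatorname*{NOT}$ or $\operatorname*{NOTNOT}$ dual-closure is immediate (conjugate by negations), and for the rest it observes that conservativity, mod-$k$-preservation, and affineness are manifestly symmetric under interchanging $0$ and $1$, falling back on self-duality of $\operatorname*{T}_{k}$ and $\operatorname*{F}_{k}$ only for the orthogonal/isometry classes where the invariant is not manifestly symmetric. You instead prove the general commutation law $\left\langle S\right\rangle ^{\ast}=\left\langle S^{\ast}\right\rangle$ from the closure rules and reduce everything to the single condition $S^{\ast}\subseteq\left\langle S\right\rangle$, verified generator by generator. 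Your version is more uniform and actually fills a small gap the paper glosses over: self-duality of $\operatorname*{T}_{k}$ alone yields dual-closure of $\left\langle \operatorname*{T}_{k}\right\rangle$ only via exactly the commutation lemma you prove (or via the unstated fact that \emph{every} orthogonal linear map is self-dual, since the rows of an orthogonal matrix over $\mathbb{F}_{2}$ have odd weight and hence $A1^{n}=1^{n}$). The trade-off is that your argument needs the explicit bookkeeping with the five closure rules, whereas the paper's leans on the already-proved equivalence of the gate and invariant definitions of each class; both ultimately invoke Theorem \ref{main} only to enumerate the classes and their generators. All of your individual verifications (the dualized ancilla string $\overline{a}$, self-duality of $\operatorname*{C}_{k}$, $\operatorname*{T}_{k}$, $\operatorname*{F}_{k}$, and the identities $\operatorname*{CNOT}^{\ast}=\left(I\otimes\operatorname*{NOT}\right)\circ\operatorname*{CNOT}$ and $\operatorname*{Fredkin}^{\ast}=$ ``swap iff the control is $0$'') check out.
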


\begin{proof}
This is obvious for all the classes listed in Theorem \ref{main}\ that include
a $\operatorname*{NOT}$\ or $\operatorname*{NOTNOT}$\ gate. \ For the others,
we simply need to consider the classes one by one: the notions of
\textquotedblleft conservative,\textquotedblright\ \textquotedblleft
mod-$k$-respecting,\textquotedblright\ and \textquotedblleft mod-$k$%
-preserving\textquotedblright\ are manifestly the same after we interchange
$0$ and $1$. \ This is less manifest for the notion of \textquotedblleft
orthogonal,\textquotedblright\ but one can check that $\operatorname*{T}%
\nolimits_{k}$\ and $\operatorname*{F}_{k}$\ are self-dual for all even $k$.
\end{proof}

\subsection{Linear-Time Algorithm\label{ALGSEC}}

If one wanted, one could interpret this entire paper as addressing a
straightforward \textit{algorithms} problem: namely, the \textsc{RevGen}%
\ problem defined in Section \ref{COMPLEXITY}, where we are given as input a
set of reversible gates $G_{1},\ldots,G_{K}$, as well as a target reversible
transformation $H$, and we want to know whether the $G_{i}$'s generate $H$.
\ From that perspective, our contribution is to reduce the known upper bound
on the complexity of \textsc{RevGen}: from recursively-enumerable (!), or
triply-exponential time if we use Je\v{r}\'{a}bek's recent clone/coclone
duality for reversible gates \cite{jerabek}, all the way down to linear time.

\begin{theorem}
\label{alg}There is a linear-time algorithm for \textsc{RevGen}.
\end{theorem}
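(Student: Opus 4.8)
The plan is to use Theorem \ref{main} to collapse \textsc{RevGen} into the computation of a handful of easily-checkable invariants. The algorithm has two phases. In the first phase, we examine the generating gates $G_1,\ldots,G_K$ and determine which of the classes in the list of Theorem \ref{main} is exactly $\langle G_1,\ldots,G_K\rangle$; in the second phase, we test whether the target $H$ satisfies the membership criterion for that class. Both phases will run in time linear in the combined truth-table size $\sum_i n_i 2^{n_i}+m2^m$, where $n_i$ is the arity of $G_i$ and $m$ that of $H$.

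For the first phase I would compute, for each gate $G$ among the $G_i$ (and the same data for $H$ in the second phase): (i) the set $D_G=\{\,|G(x)|-|x|:x\in\{0,1\}^{n_i}\,\}$ of Hamming-weight shifts, whence the respecting number $k(G)$ is the $\gcd$ of pairwise differences of elements of $D_G$ (with $k(G)=\infty$ exactly when $D_G=\{0\}$, i.e.\ $G$ is conservative), and whence one also reads off whether $G$ is parity-flipping; (ii) whether $G$ is affine, tested by reconstructing the candidate $b=G(0)$ and matrix $A$ (its $j$-th column being $G(e_j)\oplus b$) and then verifying $G(x)=Ax\oplus b$ for all $x$ --- crucially, enumerating the inputs $x$ in Gray-code order so that $Ax$ changes by a single column addition, hence in $O(n_i)$ time, at each step, for $O(n_i2^{n_i})$ total; (iii) if $G$ is affine, whether it is linear ($b=0$), whether it is degenerate ($A$ a permutation matrix), and whether its linear part is orthogonal ($A^{\mathsf T}A=I$ over $\mathbb{F}_2$), which costs only $O(n_i^{\omega})$ and is dominated. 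Aggregating, the set $S=\{G_1,\ldots,G_K\}$ is affine/linear/degenerate/orthogonal-in-linear-part iff every $G_i$ is; its respecting number is $k(S)=\gcd_i k(G_i)$ (a bounded integer, so trivially computed); and $S$ contains a parity-flipping gate iff some $D_{G_i}$ consists of odd integers. By Theorem \ref{main}, together with the fact (established earlier in the paper) that each of these invariants is closed under composition, swaps, extension and the ancilla rule --- so that ``$S$ satisfies it'' is equivalent to ``$\langle S\rangle$ satisfies it'' --- the class $\langle S\rangle$ is the unique entry of the list consistent with exactly this collection of invariants. For instance: if $S$ is non-affine, non-degenerate, not conservative, mod-preserving with $k(S)=k\ge 3$, then $\langle S\rangle$ is the class of all mod-$k$-preserving transformations; if $S$ is linear, orthogonal and non-degenerate, then $\langle S\rangle$ is the class of all orthogonal linear transformations if $4\nmid k(S)$ and the class of all mod-$4$-preserving orthogonal linear transformations if $4\mid k(S)$; the remaining cases are dispatched similarly by walking the lattice of Figure \ref{lattice}.

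For the second phase, every class in Theorem \ref{main} is defined by a conjunction of exactly the invariants above (all transformations; conservative; mod-$k$-preserving; affine; parity-preserving affine; mod-$4$-preserving affine; orthogonal linear; mod-$4$-preserving orthogonal linear; or one of these augmented by a $\operatorname*{NOT}$ or $\operatorname*{NOTNOT}$ gate --- the augmentation amounting, by Theorem \ref{noshifter}, simply to relaxing a ``preserving'' requirement to the corresponding ``respecting'' one, and replacing conditions on the full affine map by conditions on its linear part). So, having identified the class in the first phase, we run the same linear-time battery of checks on $H$ and output ``yes'' iff $H$ meets the corresponding conjunction; correctness is immediate, since $H\in\langle S\rangle$ iff $H$ lies in that class.

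I expect the main obstacle to be the bookkeeping in the first phase rather than any single computation: one must verify that the invariant list above is \emph{fine enough} to separate every pair of classes in the lattice --- the delicate cases being the ``sporadic'' isometry classes and their $\operatorname*{NOT}$/$\operatorname*{NOTNOT}$ augmentations, where the distinctions turn on the interaction of linearity vs.\ affineness, orthogonality of the linear part, and divisibility of $k(S)$ by $4$ --- and also that each invariant really is closed under the five generation rules. A secondary point requiring care is the linear-time bound for the affineness test: the naive verification of $G(x)=Ax\oplus b$ over all $x$ is $\Theta(n^2 2^n)$, and the Gray-code incremental evaluation (or an equivalent device) is what brings it down to $\Theta(n2^n)$; the orthogonality and degeneracy sub-tests cost only $\operatorname{poly}(n)$ and are dominated, and everything else --- Hamming weights, $\gcd$s, and comparing a constant amount of summary data --- is routine.
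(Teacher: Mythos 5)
Your proposal is correct and follows essentially the same route as the paper: extract the summary data $W(G)$ and (if affine) the pair $(A,b)$ from each truth table in $O(n2^n)$ time, invoke Theorem \ref{main} to pin down each generated class from that data, and reduce the membership question for $H$ to a comparison in the lattice. The only differences are cosmetic --- you check $H$ against the class's defining invariants rather than testing $\langle G_1,\ldots,G_K\rangle=\langle G_1,\ldots,G_K,H\rangle$, and you spell out the Gray-code device for the affineness test that the paper leaves to ``any reasonable RAM model.''
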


\begin{proof}
It suffices to give a linear-time algorithm that takes as input the truth
table of a single reversible transformation $G:\left\{  0,1\right\}
^{n}\rightarrow\left\{  0,1\right\}  ^{n}$, and that decides which class it
generates. \ For we can then compute $\left\langle G_{1},\ldots,G_{K}%
\right\rangle $ by taking the least upper bound of $\left\langle
G_{1}\right\rangle ,\ldots,\left\langle G_{K}\right\rangle $, and can also
solve the membership problem by checking whether%
\[
\left\langle G_{1},\ldots,G_{K}\right\rangle =\left\langle G_{1},\ldots
,G_{K},H\right\rangle .
\]
The algorithm is as follows: first, make a single pass through $G$'s truth
table, in order to answer the following two questions.

\begin{itemize}
\item Is $G$ affine, and if so, what is its matrix representation, $G\left(
x\right)  =Ax\oplus b$?

\item What is $W\left(  G\right)  :=\left\{  \left\vert G\left(  x\right)
\right\vert -\left\vert x\right\vert :x\in\left\{  0,1\right\}  ^{n}\right\}
$?
\end{itemize}

In any reasonable RAM model, both questions can easily be answered in
$O\left(  n2^{n}\right)  $ time, which is the number of bits in $G$'s truth table.

If $G$ is non-affine, then Theorem \ref{main}\ implies that we can determine
$\left\langle G\right\rangle $\ from $W\left(  G\right)  $\ alone. \ If
$G$\ is affine, then Theorem \ref{main}\ implies we can determine
$\left\langle G\right\rangle $\ from $\left(  A,b\right)  $ alone, though it
is also convenient to use $W\left(  G\right)  $. \ We need to take the gcd of
the numbers in $W\left(  G\right)  $, check whether $A$\ is orthogonal, etc.,
but the time needed for these operations is only $\operatorname*{poly}\left(
n\right)  $, which is negligible compared to the input size of $n2^{n}$.
\end{proof}

We have implemented the algorithm described in Theorem \ref{alg},\ and Java
code is available for download \cite{schaefer:code}.

\subsection{Compression of Reversible Circuits\label{COMPRESS}}

We now state a \textquotedblleft complexity-theoretic\textquotedblright%
\ consequence of Theorem \ref{main}.

\begin{theorem}
\label{compressthm}Let $R$ be a reversible circuit, over any gate set $S$,
that maps $\left\{  0,1\right\}  ^{n}$ to $\left\{  0,1\right\}  ^{n}$, using
an unlimited number of gates and ancilla bits. \ Then there is another
reversible circuit, over the same gate set $S$, that applies the same
transformation as $R$ does, and that uses only $2^{n}\operatorname*{poly}%
(n)$\ gates and $O(1)$\ ancilla bits.\footnote{Here the big-$O$'s suppress
constant factors that depend on the gate set in question.}
\end{theorem}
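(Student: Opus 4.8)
The plan is to discard the structure of $R$ entirely and re-synthesize, from the classification, the transformation it computes. Write $F:\{0,1\}^n\to\{0,1\}^n$ for the transformation that $R$ realizes once its ancillas are set correctly; by the ancilla rule (closure property~(5)), $F\in\langle S\rangle$, and this is the only thing about $R$ we shall use. By Theorem \ref{main}, $\langle S\rangle$ is one of the finitely many classes on the list, and each class $\mathcal{C}$ has a fixed finite canonical generating set $S_0$ --- one of $\{\operatorname{Toffoli}\}$, $\{\operatorname{Fredkin}\}$, $\{\operatorname{C}_k\}$, $\{\operatorname{CNOT}\}$, $\{\operatorname{CNOTNOT}\}$, $\{\operatorname{F}_4\}$, $\{\operatorname{T}_4\}$, $\{\operatorname{T}_6\}$, or one of these with $\operatorname{NOT}$ or $\operatorname{NOTNOT}$ adjoined --- whose gates all have arity bounded by a constant $r$ depending only on $S$ (in the mod-$k$ cases $r$ may depend on $k$). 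So it suffices to prove: (a) every $F\in\mathcal{C}$ on $n$ bits is built from $S_0$ using $2^n\operatorname{poly}(n)$ gates and $O(1)$ ancillas; and (b) such an $S_0$-circuit converts into an $S$-circuit at the cost of only a constant factor in gates and an additive $O(1)$ in ancillas. Then $\langle S\rangle=\langle S_0\rangle$, and the theorem follows by applying (a) to $F$ and then (b).

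Part (b) is bookkeeping. Since $S_0\subseteq\langle S\rangle$, by the definition of $\langle\cdot\rangle$ as the smallest set closed under the stated rules, each $g\in S_0$ has some \emph{finite} circuit over $S$, using a constant number $c_g$ of $S$-gates and $d_g$ ancillas returned to their initial state. Allocate $D:=\max_{g\in S_0}d_g=O(1)$ ancillas once; because each such sub-circuit restores its ancillas before finishing, the same $D$ bits can be reused for every occurrence of every generator. Replacing each of the (at most $2^n\operatorname{poly}(n)$) generator applications of the circuit from (a) by its $S$-implementation thus produces an $S$-circuit with $2^n\operatorname{poly}(n)$ gates and $O(1)+D=O(1)$ ancillas. (We do not even need the stronger fact, itself a corollary of Theorem \ref{main}, that each small gate is generable with $O(1)$ ancillas --- only that \emph{some} finite number works.)

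Part (a) goes down the list. For $S_0=\{\operatorname{Toffoli}\}$, $F$ is an arbitrary permutation of $\{0,1\}^n$, and its classical decomposition into $O(2^n)$ (multiply-)controlled $\operatorname{NOT}$s --- each realized by $O(n)$ $\operatorname{Toffoli}$s and $O(1)$ ancillas, as in the constructions of Section \ref{NONAFFCIRC} --- gives $O(n2^n)$ gates and $O(1)$ ancillas. For $S_0=\{\operatorname{CNOT}\}$, write $F(x)=Ax\oplus b$ with $A\in\mathbb{F}_2^{n\times n}$ invertible: Gaussian elimination expresses $A$ as a product of $O(n^2)$ elementary row additions (each a $\operatorname{CNOT}$), and $\oplus b$ costs $n$ further $\operatorname{CNOT}$s controlled off a single ancilla pinned to $1$. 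The remaining affine and linear classes ($\operatorname{CNOTNOT}$, $\operatorname{F}_4$, $\operatorname{T}_4$, $\operatorname{T}_6$) are subgroups of the affine group cut out by combinations of the invariants ``parity-preserving,'' ``mod-$4$-preserving,'' and ``orthogonal,'' and Gaussian elimination kept inside the subgroup (which its canonical gates generate, by Theorem \ref{main}) again uses $\operatorname{poly}(n)$ gates and $O(1)$ ancillas. For $S_0=\{\operatorname{Fredkin}\}$ and $S_0=\{\operatorname{C}_k\}$, $F$ permutes the points of each Hamming-weight class (resp.\ each weight-mod-$k$ class) separately; bootstrapping a $\operatorname{Toffoli}$-like gadget inside a single weight class, then invoking the permutation bound above, again yields $2^n\operatorname{poly}(n)$ gates and $O(1)$ ancillas. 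Finally the ``$\mathcal{C}+\operatorname{NOT}$'' and ``$\mathcal{C}+\operatorname{NOTNOT}$'' classes reduce to these after an $O(n)$-gate parity correction. In every case the gate count is $2^n\operatorname{poly}(n)$ and the ancilla count $O(1)$, and combined with (b) this proves the theorem.

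The step I expect to be the real work is part (a) for the classes that are neither ``all transformations'' nor affine --- the conservative and mod-$k$-preserving classes --- where one must exhibit a genuinely polynomial-in-$2^n$ synthesis and not merely a finite circuit; the permutation case is classical, the affine cases are routine linear algebra, and part (b) is mechanical. (For the matching lower bound behind ``close to optimal'': a counting argument over circuits of length $\ell$ on $n+O(1)$ wires forces $\ell=\Omega(n2^n/\log n)$ to realize all $(2^n)!$ permutations, so $2^n\operatorname{poly}(n)$ gates is optimal up to a $\operatorname{poly}(n)$ factor, while $O(1)$ ancillas is optimal up to the constant by Proposition \ref{needancilla}.)
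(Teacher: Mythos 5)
Your proposal is correct and follows essentially the same route as the paper: replace $R$ by the class membership of the transformation it computes, invoke the explicit $2^{n}\operatorname*{poly}(n)$-gate, $O(1)$-ancilla constructions of Section \ref{CONSTRUC} for the canonical generators of that class, and then substitute a fixed finite $S$-circuit for each canonical generator, reusing one block of $O(1)$ ancillas since those circuits restore their ancillas. The only difference is that you sketch the per-class syntheses inline, whereas the paper defers them wholesale to Section \ref{CONSTRUC}.
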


\begin{proof}
If $S$\ is one of the gate sets listed in Theorem \ref{main}, then this
follows immediately by examining the reversible circuit constructions in
Section \ref{CONSTRUC}, for each class in the classification. \ Building, in
relevant parts, on results by others \cite{shende,benorcleve}, we will take
care in Section \ref{CONSTRUC} to ensure that each non-affine circuit
construction uses at most $2^{n}\operatorname*{poly}(n)$\ gates and
$O(1)$\ ancilla bits, while each affine construction uses at most $O(n^{2})$
gates and $O(1)$\ ancilla bits (most actually use no ancilla bits).

Now suppose $S$ is \textit{not} one of the sets listed in Theorem \ref{main},
but some other set that generates one of the listed classes. \ So for example,
suppose $\left\langle S\right\rangle =\left\langle \operatorname*{Fredkin}%
,\operatorname*{NOT}\right\rangle $. \ Even then, we know that $S$
\textit{generates} $\operatorname*{Fredkin}$\ and $\operatorname*{NOT}$, and
the number of gates and ancillas needed to do so is just some constant,
independent of $n$. \ Furthermore, each time we need a
$\operatorname*{Fredkin}$\ or $\operatorname*{NOT}$,\ we can reuse the same
ancilla bits, by the assumption that those bits are returned to their original
states. \ So we can simply simulate the appropriate circuit construction from
Section \ref{CONSTRUC}, using only a constant factor more gates and $O\left(
1\right)  $\ more ancilla bits than the original construction.
\end{proof}

As we said in Section \ref{COMPLEXITY}, without the classification theorem, it
is not obvious how to prove \textit{any upper bound whatsoever} on the number
of gates or ancillas, for arbitrary gate sets $S$. \ Of course, any circuit
that uses $T$ gates also uses at most $O\left(  T\right)  $\ ancillas; and
conversely, any circuit that uses $M$ ancillas needs at most $\left(
2^{n+M}\right)  !$ gates, for counting reasons. \ But the best upper bounds on
either quantity that follow from clone theory and the ideal membership problem
appear to have the form $\exp\left(  \exp\left(  \exp\left(  \exp\left(
n\right)  \right)  \right)  \right)  $.

A constant number of ancilla bits \textit{is} sometimes needed, and not only
for the trivial reasons that our gates might act on more than $n$ bits,\ or
only (e.g.)\ be able to map $0^{n}$\ to $0^{n}$ if no ancillas are available.

\begin{proposition}
[Toffoli \cite{toffoli}]\label{needancilla}If no ancillas are allowed, then
there exist reversible transformations of $\left\{  0,1\right\}  ^{n}$\ that
cannot be generated by any sequence of reversible gates on $n-1$\ bits or fewer.
\end{proposition}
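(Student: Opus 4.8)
The plan is to run a parity (sign) argument in the symmetric group $S_{2^{n}}$ on the set of $n$-bit strings. First I would observe that any reversible gate $G$ on $k$ bits with $k\le n-1$, when applied to an $n$-bit input (acting as the identity on the other $n-k$ bits, with its input wires placed arbitrarily via the swapping rule), induces a permutation of $\{0,1\}^{n}$ of a very restricted form: writing an $n$-bit string as $(y,z)$, where $y$ holds the $k$ bits on which $G$ acts and $z$ the remaining $n-k$ bits, the induced map is $(y,z)\mapsto(G(y),z)$. Hence $\{0,1\}^{n}$ is partitioned into $2^{n-k}$ blocks indexed by $z$, and on each block the gate acts by a congruent copy of the $2^{k}$-element permutation $G:\{0,1\}^{k}\to\{0,1\}^{k}$.

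Next I would compute the sign of this induced permutation. Its cycle type is $2^{n-k}$ disjoint copies of the cycle type of $G$, so its sign equals $\operatorname{sgn}(G)^{2^{n-k}}$. Since $k\le n-1$, the exponent $2^{n-k}$ is even, so this sign is $+1$: every gate on $n-1$ bits or fewer induces an \emph{even} permutation of $\{0,1\}^{n}$. As $\operatorname{sgn}$ is a homomorphism $S_{2^{n}}\to\{\pm1\}$, the subgroup of $S_{2^{n}}$ generated by all such gates is contained in the alternating group $A_{2^{n}}$, a proper subgroup of $S_{2^{n}}$ for all $n\ge1$.

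Finally I would exhibit a transformation lying outside that subgroup. For $n\ge2$, the transformation that swaps the strings $0^{n}$ and $0^{n-1}1$ and fixes every other $n$-bit string is a single transposition, hence an odd permutation of $\{0,1\}^{n}$; it therefore lies outside $A_{2^{n}}$ and cannot be generated by any sequence of gates on $n-1$ bits or fewer. (For $n=1$ the only gate on $0$ bits is the identity, so $\NOT$ itself already witnesses the claim.)

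I do not anticipate a serious obstacle; the only point requiring care is to confirm that a $\le(n-1)$-bit gate, once its input bits have been distributed among the $n$ wires by the swapping rule, still decomposes into $2^{n-k}$ congruent blocks — which it does, since permuting wires merely relabels the blocks while preserving the within-block permutation — so that the sign computation is unaffected. The essential content is simply that $2^{n-k}$ is even whenever $k\le n-1$, which forces the generated group into $A_{2^{n}}$.
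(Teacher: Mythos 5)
Your argument is correct and is essentially the paper's own proof: both rest on the observation that a gate acting on at most $n-1$ of the $n$ bits induces a permutation of $\{0,1\}^{n}$ whose cycle structure is repeated $2^{n-k}$ times (an even number), hence an even permutation, while odd permutations of $\{0,1\}^{n}$ exist. Your write-up merely makes the sign computation and the witnessing transposition explicit.
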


\begin{proof}
For all $k\geq1$, any $\left(  n-k\right)  $-bit gate induces an even
permutation of $\left\{  0,1\right\}  ^{n}$---since each cycle is repeated
$2^{k}$\ times, once for every setting of the $k$ bits on which the gate
doesn't act. \ But there are also odd permutations of $\left\{  0,1\right\}
^{n}$.
\end{proof}

It is also easy to show, using a Shannon counting argument, that there exist
$n$-bit reversible transformations that require $\Omega\left(  2^{n}\right)
$\ gates to implement, and $n$-bit affine transformations that require
$\Omega\left(  n^{2}/\log n\right)  $\ gates. \ Thus the bounds in Theorem
\ref{compressthm}\ on the number of gates $T$\ are, for each class, off from
the optimal bounds only by $\operatorname*{polylog}T$ factors.

\subsection{Encoded Universality\label{ENCODED}}

If we only care about which Boolean functions $f:\left\{  0,1\right\}
^{n}\rightarrow\left\{  0,1\right\}  $\ can be computed, and are completely
uninterested in what garbage is output along with $f$, then it is not hard to
see that all reversible gate sets fall into three classes. \ Namely,
non-affine gate sets (such as $\operatorname*{Toffoli}$ and
$\operatorname*{Fredkin}$) can compute all Boolean functions;\footnote{This
was proven by Lloyd \cite{lloyd:gate}, as well as by Kerntopf et
al.\ \cite{kerntopf}\ and De Vos and Storme \cite{devos}; we include a proof
for completeness in Section \ref{GARBSEC}.} non-degenerate affine gate sets
(such as $\operatorname*{CNOT}$ and $\operatorname*{CNOTNOT}$) can compute all
affine functions; and degenerate gate sets (such as $\operatorname*{NOT}$ and
$\operatorname*{NOTNOT}$) can compute only $1$-bit functions. \ However, the
classification theorem lets us make a more interesting statement. \ Recall the
notion of \textit{encoded universality} from Section \ref{ALTGEN}, which
demands that every reversible transformation (or every affine transformation)
be implementable without garbage, once $0$\ and $1$\ are \textquotedblleft
encoded\textquotedblright\ by longer strings $\alpha\left(  0\right)  $\ and
$\alpha\left(  1\right)  $ respectively.

\begin{theorem}
\label{encodethm}Besides the trivial, $\operatorname{NOT}$,\ and
$\operatorname{NOTNOT}$\ classes, every reversible gate class supports encoded
universality if non-affine, or affine encoded universality if affine.
\end{theorem}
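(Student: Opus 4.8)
The plan is to read off the result from the classification theorem, handling the non-affine and affine cases separately. First I would dispose of the affine case. By Theorem~\ref{main}, every affine class is one of: $\left\langle \operatorname*{CNOT}\right\rangle$, $\left\langle \operatorname*{CNOTNOT}\right\rangle$, $\left\langle \operatorname*{F}_{4}\right\rangle$, $\left\langle \operatorname*{T}_{4}\right\rangle$, $\left\langle \operatorname*{T}_{6}\right\rangle$, or one of these augmented by a $\operatorname*{NOT}$ or $\operatorname*{NOTNOT}$ gate. For each of these, I would exhibit an explicit encoding under which one can simulate an arbitrary affine (or, in the augmented cases, an arbitrary affine with the appropriate invariant dropped) transformation on the encoded bits. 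The key observation is that conserved quantities such as Hamming weight mod $k$, parity, and inner product mod $2$ all become \emph{trivial} under a suitably chosen encoding: for instance, the dual-rail encoding $\alpha(0)=01$, $\alpha(1)=10$ makes every string of encoded bits have the same Hamming weight (namely half the length), so the mod-$k$-preserving and parity-preserving constraints impose nothing on transformations of the encoded space, and similarly for the orthogonality constraint once one checks the inner products of codewords. Thus for each affine class I need to pick an encoding killing exactly its defining invariants, then invoke the fact that the underlying gate already generates \emph{all} transformations satisfying those invariants to conclude it can realize any affine map on the encoded bits.

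Next I would handle the non-affine case. By Theorem~\ref{main}, every non-affine, non-trivial class either is $\left\langle \operatorname*{Toffoli}\right\rangle$ (all transformations---nothing to prove, the identity encoding works), or preserves Hamming weight mod~$k$ for some $k\geq 2$ (including $\operatorname*{Fredkin}$, the conservative case, as $k=\infty$), possibly augmented by $\operatorname*{NOT}$ or $\operatorname*{NOTNOT}$. Again the dual-rail encoding $\alpha(0)=01$, $\alpha(1)=10$ is the natural choice: any $n$-bit encoded string has Hamming weight exactly $n$, so \emph{every} permutation of the encoded subspace automatically preserves Hamming weight (and hence Hamming weight mod~$k$ for every $k$). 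Since the gate set generates all mod-$k$-preserving transformations on its native bit-string space, and the desired encoded transformation extends to such a mod-$k$-preserving transformation on the full $2n$-bit space (acting as the identity off the codeword subspace, or more carefully, permuting codewords and fixing non-codewords, which preserves Hamming weight), the gate set can implement it. The $\operatorname*{NOT}$- and $\operatorname*{NOTNOT}$-augmented classes only add more power, so encoded universality still holds.

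The main technical step---and the place I expect to spend the most care---is verifying that the target encoded transformation genuinely lies in the relevant class when viewed as a transformation of the full ambient space. Concretely, given an $n$-bit reversible $F$, I want a $2n$-bit transformation $G$ that maps $\alpha(x_1)\cdots\alpha(x_n)$ to $\alpha(y_1)\cdots\alpha(y_n)$ where $F(x)=y$, and that \emph{also} satisfies the invariant in question (conservativity, mod-$k$-preservation, orthogonality, affineness as appropriate). Defining $G$ to permute the $2^n$ codewords according to $F$ and to fix all $2^{2n}-2^n$ non-codewords does preserve Hamming weight (every codeword has weight $n$, so no weight changes), handling all the mod-$k$ and conservative classes cleanly. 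For the orthogonal linear and affine-with-invariant classes, though, I cannot just ``fix everything off the subspace,'' because such a $G$ need not be linear or orthogonal; instead I would need the encoded subspace to be a genuine \emph{linear} subspace (which dual-rail is not, since $00\cdots0$ is not a codeword) and pick a linear-algebraic encoding---e.g. encode a bit by a pair of bits constrained to an affine line, or use a longer encoding---so that the extension of $F$ to a global linear (resp.\ affine, resp.\ orthogonal) map exists. Pinning down one encoding that simultaneously trivializes the invariant \emph{and} admits a structure-preserving global extension, for each of the five affine classes, is the crux; I would expect to verify it case by case, using $\operatorname*{CNOTNOT}$-style parity-preserving encodings for the parity classes and an inner-product-respecting encoding (so that codeword inner products mod~$2$ match the bit-level inner products) for the orthogonal classes.
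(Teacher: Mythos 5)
Your non-affine argument is complete and in fact takes a cleaner route than the paper: you define the encoded transformation globally (permute the dual-rail codewords according to $F$, fix all non-codewords), observe that it is conservative since every codeword has weight $n$, and conclude it lies in every mod-$k$-preserving class and hence is generated. The paper instead simulates $\operatorname{Fredkin}$ and $\operatorname{CNOT}$ gate-by-gate on dual-rail-encoded bits (one $\operatorname{Fredkin}$ applied to $xyz$ and another to $x\overline{y}\,\overline{z}$ gives an encoded $\operatorname{Fredkin}$; a $\operatorname{Fredkin}$ applied to $xy\overline{y}$ gives an encoded $\operatorname{CNOT}$), then invokes the fact that $\operatorname{Fredkin}+\operatorname{CNOT}$ generates everything. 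Both work; yours avoids any circuit construction, theirs avoids having to check membership of a global map in the class.

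The affine half, however, is a genuine gap: you correctly identify that ``fix everything off the code subspace'' destroys linearity, and that one must instead find, for each of $\left\langle\operatorname{CNOTNOT}\right\rangle$, $\left\langle\operatorname{F}_4\right\rangle$, $\left\langle\operatorname{T}_4\right\rangle$, $\left\langle\operatorname{T}_6\right\rangle$, an encoding admitting a structure-preserving extension of an arbitrary affine $F$ --- but you stop at naming this as ``the crux'' without producing the encodings or the verification. The paper closes this gap by a different and much lighter reduction: since $\operatorname{CNOT}$ generates all affine transformations (Theorem \ref{cnotcirc}), it suffices to simulate a \emph{single} encoded $\operatorname{CNOT}$ and then compose; no global extension of $F$ is ever needed. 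Concretely, with the repetition code $0\mapsto 00$, $1\mapsto 11$, one $\operatorname{CNOTNOT}$ from one copy of $x$ to both copies of $y$ is an encoded $\operatorname{CNOT}$; likewise $\operatorname{T}_4(xyy0)=(x,x\oplus y,x\oplus y,0)$ up to bit permutation; with dual rail, $\operatorname{F}_4(xy\overline{y}1)=(x,x\oplus y,\overline{x\oplus y},1)$ up to permutation; and with $0\mapsto 0011$, $1\mapsto 1100$, $\operatorname{T}_6(xyy\overline{y}\,\overline{y}0)$ realizes an encoded $\operatorname{CNOT}$. Note that your instinct that dual rail cannot serve a linear class is an artifact of your global-extension framing: under the local-simulation framing it works fine for $\operatorname{F}_4$, because each simulation step is literally an application of the generator gate (plus a restored ancilla), so membership in the class is automatic. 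To complete your proof you would either need to supply these one-gate simulations or carry out the case-by-case extension argument you sketched; as written, the affine half is not yet a proof.
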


\begin{proof}
For $\left\langle \operatorname{Fredkin}\right\rangle $, and for all the
non-affine classes above $\left\langle \operatorname{Fredkin}\right\rangle $,
we use the so-called \textquotedblleft dual-rail encoding,\textquotedblright%
\ where $0$ is encoded by $01$\ and $1$\ is encoded by $10$. \ Given three
encoded bits, $x\overline{x}y\overline{y}z\overline{z}$, we can simulate a
$\operatorname{Fredkin}$ gate by applying one $\operatorname{Fredkin}$ to
$xyz$ and another to $x\overline{y}\overline{z}$, and can also simulate a
$\operatorname{CNOT}$ by applying a $\operatorname{Fredkin}$\ to
$xy\overline{y}$. \ But $\operatorname{Fredkin}+\operatorname{CNOT}%
$\ generates everything.

The dual-rail encoding also works for simulating all affine transformations
using an $\operatorname{F}_{4}$ gate. \ For note that%
\begin{align*}
\operatorname{F}_{4}\left(  xy\overline{y}1\right)   &  =\left(
1,\overline{x\oplus y},x\oplus y,x\right) \\
&  =\left(  x,x\oplus y,\overline{x\oplus y},1\right)  ,
\end{align*}
where we used that we can permute bits for free. \ So given two encoded bits,
$x\overline{x}y\overline{y}$, we can simulate a $\operatorname{CNOT}$\ from
$x$\ to $y$\ by applying $\operatorname{F}_{4}$\ to $x$, $y$, $\overline{y}$,
and one ancilla bit initialized to $1$.

For $\left\langle \operatorname{CNOTNOT}\right\rangle $, we use a repetition
encoding, where $0$ is encoded by $00$ and $1$ is encoded by $11$. \ Given two
encoded bits, $xxyy$, we can simulate a $\operatorname{CNOT}$ from $x$ to
$y$\ by applying a $\operatorname{CNOTNOT}$ from either copy of $x$\ to both
copies of $y$. \ This lets us perform all affine transformations on the
encoded subspace.

The repetition encoding also works for $\left\langle \operatorname{T}%
_{4}\right\rangle $. \ For notice that%
\begin{align*}
\operatorname{T}_{4}\left(  xyy0\right)   &  =\left(  0,x\oplus y,x\oplus
y,x\right) \\
&  =\left(  x,x\oplus y,x\oplus y,0\right)  .
\end{align*}
Thus, to simulate a $\operatorname{CNOT}$ from $x$\ to $y$, we use one copy of
$x$, both copies of $y$, and one ancilla bit initialized to $0$.

Finally, for $\left\langle \operatorname{T}_{6}\right\rangle $, we encode $0$
by $0011$\ and $1$\ by $1100$. \ Notice that%
\begin{align*}
\operatorname{T}_{6}\left(  xyy\overline{y}\overline{y}0\right)   &  =\left(
0,x\oplus y,x\oplus y,\overline{x\oplus y},\overline{x\oplus y},x\right) \\
&  =\left(  x,x\oplus y,x\oplus y,\overline{x\oplus y},\overline{x\oplus
y},0\right)  .
\end{align*}
So given two encoded bits, $xx\overline{x}\overline{x}yy\overline{y}%
\overline{y}$, we can simulate a $\operatorname{CNOT}$ from $x$\ to $y$ by
using one copy of $x$, all four copies of $y$ and\ $\overline{y}$, and one
ancilla bit initialized to $0$.
\end{proof}

In the proof of Theorem \ref{encodethm}, notice that, every time we simulated
$\operatorname{Fredkin}\left(  xyz\right)  $\ or $\operatorname{CNOT}\left(
xy\right)  $, we had to examine only a single bit in the encoding of the
control bit $x$. \ Thus, Theorem \ref{encodethm}\ actually yields a stronger
consequence: that given an ordinary, unencoded input string $x_{1}\ldots
x_{n}$, we can use any non-degenerate reversible gate first to
\textit{translate} $x$ into its encoded version $\alpha\left(  x_{1}\right)
\ldots\alpha\left(  x_{n}\right)  $, and then to perform arbitrary
transformations or affine transformations on the encoding.

\section{Structure of the Proof\label{STRUCTURE}}

The proof of Theorem \ref{main} naturally divides into four components.
\ First, we need to verify that all the gates mentioned in the theorem really
do satisfy the invariants that they are claimed to satisfy---and as a
consequence, that any reversible transformation they generate also satisfies
the invariants. \ This is completely routine.

Second, we need to verify that all pairs of classes that Theorem
\ref{main}\ says are distinct, \textit{are} distinct. \ We handle this in
Theorem \ref{distinct}\ below (there are only a few non-obvious cases).

Third, we need to verify that the \textquotedblleft gate
definition\textquotedblright\ of each class coincides with its
\textquotedblleft invariant definition\textquotedblright---i.e., that each
gate really does generate all reversible transformations that satisfy its
associated invariant. \ For example, we need to show that
$\operatorname*{Fredkin}$\ generates all conservative transformations, that
$\operatorname*{C}_{k}$\ generates all transformations that preserve Hamming
weight mod $k$, and that $\operatorname*{T}_{4}$\ generates all orthogonal
linear transformations. \ Many of these results are already known, but for
completeness, we prove all of them in Section \ref{CONSTRUC}, by giving
explicit constructions of reversible circuits.\footnote{The upshot of the
Galois connection for clones \cite{jerabek} is that, if we could prove that a
list of invariants for a given gate set $S$ was the \textit{complete} list of
invariants satisfied by $S$, then this second part of the proof would be
unnecessary: it would follow automatically that $S$ generates all reversible
transformations that satisfy the invariants. \ But this begs the question: how
do we prove that a list of invariants for $S$\ is complete? \ In each case,
the easiest way we could find to do this, was just by explicitly describing
circuits of $S$-gates to generate all transformations that satisfy the stated
invariants.}

Finally, we need to show that there are no \textit{additional} reversible gate
classes, besides the ones listed in Theorem \ref{main}. \ This is by far the
most interesting part, and occupies the majority of the paper. \ The
organization is as follows:

\begin{itemize}
\item In Section \ref{HAMMING},\ we collect numerous results about what
reversible transformations can and cannot do to Hamming weights mod $k$\ and
inner products mod $k$, in both the affine and the non-affine cases; these
results are then drawn on in the rest of the paper. \ (Some of them are even
used for the circuit constructions in Section \ref{CONSTRUC}.)

\item In Section \ref{NONAFFINE}, we complete the classification of all
non-affine gate sets. \ In Section \ref{ABOVEFREDKIN}, we show that the only
classes that contain a $\operatorname*{Fredkin}$\ gate are $\left\langle
\operatorname*{Fredkin}\right\rangle $\ itself,$\ \left\langle
\operatorname*{Fredkin},\operatorname*{NOTNOT}\right\rangle $, $\left\langle
\operatorname*{Fredkin},\operatorname*{NOT}\right\rangle $, $\left\langle
\operatorname*{C}_{k}\right\rangle $ for $k\geq3$, and $\left\langle
\operatorname*{Toffoli}\right\rangle $. \ Next, in Section \ref{CONSERV}, we
show that every nontrivial conservative gate generates
$\operatorname*{Fredkin}$. \ Then, in Section \ref{MOD},\ we build on the
result of Section \ref{MOD}\ to\ show that every non-affine gate set generates
$\operatorname*{Fredkin}$.

\item In Section \ref{AFFINE}, we complete the classification of all affine
gate sets. \ For simplicity, we start with \textit{linear} gate sets only.
\ In Section \ref{SWAMPLAND}, we show that every nontrivial mod-$4$-preserving
linear gate generates $\operatorname{T}_{6}$, and that every nontrivial,
\textit{non}-mod-$4$-preserving orthogonal gate generates $\operatorname{T}%
_{4}$. \ Next, in Section \ref{TOCNOTNOT}, we show that every non-orthogonal
linear gate generates $\operatorname{CNOTNOT}$. \ Then, in Section
\ref{TOCNOT}, we show that every non-parity-preserving linear gate generates
$\operatorname{CNOT}$. \ Since $\operatorname{CNOT}$\ generates all linear
transformations, completes the classification of linear gate sets. \ Finally,
in Section \ref{NOT}, we \textquotedblleft put back the affine
part,\textquotedblright\ showing that it can lead to only $8$ additional
classes besides the linear classes $\left\langle \varnothing\right\rangle $,
$\left\langle \operatorname{T}_{6}\right\rangle $, $\left\langle
\operatorname{T}_{4}\right\rangle $, $\left\langle \operatorname{CNOTNOT}%
\right\rangle $, and $\left\langle \operatorname{CNOT}\right\rangle $.
\end{itemize}

\begin{theorem}
\label{distinct}All pairs of classes asserted to be distinct by Theorem
\ref{main}, are distinct.
\end{theorem}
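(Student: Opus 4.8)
The plan is to exhibit a short list of \emph{closed invariants}---properties preserved under the five closure rules of Section~\ref{CLASSES}---rich enough that any two of the classes listed in Theorem~\ref{main} disagree on at least one of them. The basic principle is elementary: if a generator $G$ of $\langle S\rangle$ satisfies an invariant $I$, then so does every transformation in $\langle S\rangle$; and if some generator of $\langle S\rangle$ violates $I$, then $\langle S\rangle$ contains a transformation violating $I$. Hence two classes are already distinct once one of them satisfies an invariant that some generator of the other violates. The invariants I would use are: (i) degeneracy; (ii) affineness; (iii) linearity (equivalently, whether the class fixes $0^{n}$); (iv) the respecting number $k(\cdot)$, refined by whether the class is mod-$k$-\emph{preserving} or only mod-$k$-\emph{respecting}; (v) the parity trichotomy---parity-preserving, versus parity-respecting-but-containing-a-parity-flipping-gate, versus neither; (vi) orthogonality, i.e.\ being an isometry (orthogonal in the linear part); and (vii) the linear-part refinement ``mod-$4$-preserving in the linear part.''

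The first step is to tabulate the value of each invariant on each named gate; essentially all of this is already recorded in Section~\ref{THEOREM}. For example $k(\operatorname*{Toffoli})=1$, $k(\operatorname*{Fredkin})=\infty$, $k(\operatorname*{C}_{k})=k$, $k(\operatorname*{CNOT})=1$, $k(\operatorname*{CNOTNOT})=k(\operatorname*{T}_{4})=k(\operatorname*{NOT})=k(\operatorname*{NOTNOT})=2$, and $k(\operatorname*{F}_{4})=k(\operatorname*{T}_{6})=4$; $\operatorname*{T}_{4}$ and $\operatorname*{T}_{6}$ are linear and orthogonal, $\operatorname*{F}_{4}$ is an isometry but not linear, $\operatorname*{CNOT}$ and $\operatorname*{CNOTNOT}$ are linear but not orthogonal, $\operatorname*{NOT}$ is parity-flipping, $\operatorname*{NOTNOT}$ is parity-preserving, and so on. For the $\operatorname*{NOT}$- and $\operatorname*{NOTNOT}$-augmented classes I combine these using Proposition~\ref{kgtensor} and the discussion following it (so, e.g., $k(\langle\operatorname*{F}_{4},\operatorname*{NOTNOT}\rangle)=\gcd(4,2)=2$), noting that adjoining $\operatorname*{NOT}$ makes a class parity-respecting but not parity-preserving, while adjoining $\operatorname*{NOTNOT}$ keeps parity-preservation and isometry-ness but destroys linearity.

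The second step is routine bookkeeping over pairs. Affine and non-affine classes are separated by~(ii). Among non-affine classes, $\langle\operatorname*{Toffoli}\rangle$, $\langle\operatorname*{Fredkin}\rangle$, and the $\langle\operatorname*{C}_{k}\rangle$ for $k\geq3$ are pairwise separated by $k(\cdot)\in\{1,\infty,k\}$, while $\langle\operatorname*{Fredkin},\operatorname*{NOT}\rangle$ and $\langle\operatorname*{Fredkin},\operatorname*{NOTNOT}\rangle$ are pulled apart from these and from each other using $k(\cdot)=2$ together with the parity trichotomy (and from $\langle\operatorname*{Fredkin}\rangle$ using that $\operatorname*{NOTNOT}$ is non-conservative). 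Among affine classes, $\langle\operatorname*{CNOT}\rangle$, $\langle\operatorname*{CNOTNOT}\rangle$, $\langle\operatorname*{F}_{4}\rangle$ are separated by $k(\cdot)\in\{1,2,4\}$; the linear classes $\langle\operatorname*{T}_{4}\rangle$ and $\langle\operatorname*{T}_{6}\rangle$ by $k(\cdot)\in\{2,4\}$; and the residual collisions at $k=2$ and $k=4$ (namely $\operatorname*{CNOTNOT}$ vs.\ $\operatorname*{T}_{4}$, and $\operatorname*{F}_{4}$ vs.\ $\operatorname*{T}_{6}$) by orthogonality and by linearity respectively. Finally, invariant~(i) separates the trivial class, $\langle\operatorname*{NOT}\rangle$, and $\langle\operatorname*{NOTNOT}\rangle$ from every other class; among these three, nontriviality separates the trivial class, and parity separates $\langle\operatorname*{NOT}\rangle$ from $\langle\operatorname*{NOTNOT}\rangle$.

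The main obstacle is the only genuinely delicate family of comparisons: the $\operatorname*{NOTNOT}$- and $\operatorname*{NOT}$-augmented classes built from $\operatorname*{F}_{4}$ versus those built from $\operatorname*{T}_{6}$, since these agree on degeneracy, affineness, non-linearity, the parity class, isometry-ness, and even the respecting number~$2$. Here I would invoke invariant~(vii): $\operatorname*{T}_{6}$ is mod-$4$-preserving in its linear part (it is linear and mod-$4$-preserving), and so is $\operatorname*{NOTNOT}$ (whose linear part is the identity), and this property is closed under composition and the remaining rules; hence every transformation in $\langle\operatorname*{T}_{6},\operatorname*{NOTNOT}\rangle$ is mod-$4$-preserving in its linear part. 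But $\operatorname*{F}_{4}$ has the same linear part as $\operatorname*{T}_{4}$, which is \emph{not} mod-$4$-preserving, so $\operatorname*{F}_{4}\notin\langle\operatorname*{T}_{6},\operatorname*{NOTNOT}\rangle$ and the two classes are distinct; the identical argument handles $\langle\operatorname*{T}_{6},\operatorname*{NOT}\rangle$ versus $\langle\operatorname*{F}_{4},\operatorname*{NOT}\rangle$. (This is consistent with the claimed coincidences $\langle\operatorname*{F}_{4},\operatorname*{NOTNOT}\rangle=\langle\operatorname*{T}_{4},\operatorname*{NOTNOT}\rangle$ and $\langle\operatorname*{F}_{4},\operatorname*{NOT}\rangle=\langle\operatorname*{T}_{4},\operatorname*{NOT}\rangle$, since $\operatorname*{F}_{4}$ and $\operatorname*{T}_{4}$ have equal linear parts.) Finally I would double-check that no augmented class collapses onto a base class---for instance $\langle\operatorname*{CNOTNOT},\operatorname*{NOT}\rangle\neq\langle\operatorname*{CNOT}\rangle$ because the former is parity-respecting and the latter is not, and $\langle\operatorname*{CNOTNOT},\operatorname*{NOT}\rangle$ differs from the isometry-based augmented classes because $\operatorname*{CNOTNOT}$ is not an isometry.
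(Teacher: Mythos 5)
Your proposal takes essentially the same route as the paper's proof: fix a list of invariants closed under the generation rules, tabulate them on the generating gates, and separate each pair, with the only genuinely delicate comparisons (the $\operatorname{NOT}$/$\operatorname{NOTNOT}$-augmentations of $\operatorname{T}_{6}$ versus those of $\operatorname{F}_{4}=\operatorname{T}_{4}$) resolved by ``mod-$4$-preserving in the linear part,'' which is exactly the invariant the paper uses for that case.

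One slip worth flagging: your invariant (iii), linearity (``equivalently, whether the class fixes $0^{n}$''), is \emph{not} closed under the ancilla rule, because ancilla bits may be initialized to $1$. For instance, $\operatorname{CNOT}$ is linear and fixes $00$, yet $\operatorname{CNOT}(x,1)=(x,x\oplus 1)$ shows (after a bit-swap) that $\left\langle \operatorname{CNOT}\right\rangle$ contains $\operatorname{NOT}$, which is neither linear nor fixes $0$. This is precisely why the paper phrases all of its separating invariants for the affine classes in terms of the \emph{linear part} only. The one place you invoke (iii) --- showing $\operatorname{F}_{4}\notin\left\langle \operatorname{T}_{6}\right\rangle$ --- is rescued by your own invariant (vi): orthogonality \emph{is} closed under all five rules (under the ancilla rule because $G(x)\cdot G(y)+\left\vert a\right\vert \equiv x\cdot y+\left\vert a\right\vert \left(\operatorname{mod}2\right)$ forces $G(x)\cdot G(y)\equiv x\cdot y$), every orthogonal gate is linear by Lemma \ref{orthoglin}, and $\operatorname{F}_{4}$ is not orthogonal. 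With that substitution the argument goes through and matches the paper's.
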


\begin{proof}
In each case, one just needs to observe that the gate that generates a given
class A, satisfies some invariant violated by the gate that generates another
class B. \ (Here we are using the \textquotedblleft gate
definitions\textquotedblright\ of the classes, which will be proven equivalent
to the invariant definitions in Section \ref{CONSTRUC}.) \ So for example,
$\left\langle \operatorname*{Fredkin}\right\rangle $\ cannot contain
$\operatorname*{CNOT}$\ because $\operatorname*{Fredkin}$\ is conservative;
conversely, $\left\langle \operatorname*{CNOT}\right\rangle $\ cannot contain
$\operatorname*{Fredkin}$\ because $\operatorname*{CNOT}$\ is affine.

The only tricky classes are those involving $\operatorname*{NOT}$\ and
$\operatorname*{NOTNOT}$ gates:\ indeed, these classes \textit{do} sometimes
coincide, as noted in Theorem \ref{main}. \ However, in all cases where the
classes are distinct, their distinctness is witnessed by the following invariants:

\begin{itemize}
\item $\left\langle \operatorname*{Fredkin},\operatorname*{NOT}\right\rangle $
and $\left\langle \operatorname*{Fredkin},\operatorname*{NOTNOT}\right\rangle
$\ are conservative in their linear part.

\item $\left\langle \operatorname*{CNOTNOT},\operatorname*{NOT}\right\rangle $
is parity-preserving in its linear part.

\item $\left\langle \operatorname*{F}_{4},\operatorname*{NOT}\right\rangle
=\left\langle \operatorname*{T}_{4},\operatorname*{NOT}\right\rangle $ and
$\left\langle \operatorname*{F}_{4},\operatorname*{NOTNOT}\right\rangle
=\left\langle \operatorname*{T}_{4},\operatorname*{NOTNOT}\right\rangle $ are
orthogonal in their linear part (isometries).

\item $\left\langle \operatorname*{T}_{6},\operatorname*{NOT}\right\rangle $
and $\left\langle \operatorname*{T}_{6},\operatorname*{NOTNOT}\right\rangle
$\ are orthogonal and mod-$4$-preserving in their linear part.
\end{itemize}

As a final remark, even if a reversible transformation is implemented with the
help of ancilla bits, as long as the ancilla bits start and end in the same
state $a_{1}\ldots a_{k}$, they have no effect on any of the invariants
discussed above, and for that reason are irrelevant.
\end{proof}

\section{Hamming Weights and Inner Products\label{HAMMING}}

The purpose of this section is to collect various mathematical results about
what a reversible transformation $G:\left\{  0,1\right\}  ^{n}\rightarrow
\left\{  0,1\right\}  ^{n}$ can and cannot do to the Hamming weight of its
input, or to the inner product of two inputs. \ That is, we study the possible
relationships that can hold between $\left\vert x\right\vert $\ and
$\left\vert G\left(  x\right)  \right\vert $, or between $x\cdot y$\ and
$G\left(  x\right)  \cdot G\left(  y\right)  $ (especially modulo various
positive integers $k$). \ Not only are these results used heavily in the rest
of the classification, but some of them might be of independent interest.

\subsection{Ruling Out Mod-Shifters\label{MODSHIFTERS}}

Call a reversible transformation a \textit{mod-shifter} if it always shifts
the Hamming weight mod $k$ of its input string by some fixed, nonzero amount.
\ When $k=2$, clearly mod-shifters exist: indeed, the humble
$\operatorname*{NOT}$\ gate satisfies $\left\vert \operatorname*{NOT}\left(
x\right)  \right\vert \equiv\left\vert x\right\vert +1\left(
\operatorname{mod}2\right)  $\ for all $x\in\left\{  0,1\right\}  $, and
likewise for any other parity-flipping gate. \ However, we now show that
$k=2$\ is the \textit{only} possibility: mod-shifters do not exist for any
larger $k$.

\begin{theorem}
\label{noshifter}There are no mod-shifters for $k\geq3$. \ In other words: let
$G$ be a reversible transformation on $n$-bit strings, and suppose%
\[
\left\vert G\left(  x\right)  \right\vert \equiv\left\vert x\right\vert
+j\left(  \operatorname{mod}k\right)
\]
for all $x\in\left\{  0,1\right\}  ^{n}$. \ Then either $j=0$\ or $k=2$.
\end{theorem}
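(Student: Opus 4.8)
The plan is to extract a polynomial identity from the hypothesis and then argue that $k$ divides $2$. I would start by encoding the Hamming weight via a complex exponential. Let $\omega = e^{2\pi i/k}$ be a primitive $k$-th root of unity, and consider the generating-function-style quantity
\[
S := \sum_{x \in \{0,1\}^n} \omega^{|x|}.
\]
On one hand, $S = \prod_{i=1}^{n}(1 + \omega) = (1+\omega)^n$, since the sum factors over coordinates. On the other hand, because $G$ is a bijection of $\{0,1\}^n$, we may reindex the sum by $y = G(x)$, and the hypothesis $|G(x)| \equiv |x| + j \pmod k$ gives $\omega^{|G(x)|} = \omega^{j}\,\omega^{|x|}$, so that $S = \sum_{x}\omega^{|G(x)|}\cdot\omega^{-j}\cdot\omega^{j}$... more directly: $\sum_x \omega^{|x|} = \sum_x \omega^{|G^{-1}(x)|}$ (just relabeling the index of summation), and $|G^{-1}(x)| \equiv |x| - j \pmod k$, hence $S = \omega^{-j} S$. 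Therefore $(1 - \omega^{-j})(1+\omega)^n = 0$.

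Now I would split on cases. If $1 + \omega \neq 0$ in $\mathbb{C}$, then $(1+\omega)^n \neq 0$, so we must have $\omega^{-j} = 1$, i.e. $\omega^{j} = 1$, i.e. $k \mid j$, which means $j \equiv 0$; combined with $0 \le j < k$ this forces $j = 0$, giving one of the two allowed conclusions. The remaining case is $1 + \omega = 0$, i.e. $\omega = -1$; since $\omega$ is a \emph{primitive} $k$-th root of unity, $\omega = -1$ happens precisely when $k = 2$. That is the other allowed conclusion, so in all cases either $j \equiv 0 \pmod k$ or $k = 2$, as claimed. (One should be slightly careful that the statement is about $j$ as an integer with the hypothesis holding mod $k$; the natural normalization is to take $0 \le j < k$, and "$j = 0$" in the theorem should be read mod $k$.)

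The argument as sketched is short, so I do not expect a serious obstacle; the only subtlety is making sure the reindexing step is airtight. The key fact being exploited is exactly that $G$ is a \emph{permutation} of $\{0,1\}^n$ — not merely a function — so that summing any quantity over all inputs $x$ equals summing it over all outputs $G(x)$. If instead one tried this with an arbitrary (non-invertible) map the identity $S = \omega^{-j}S$ would fail. I would present the root-of-unity computation as the heart of the proof, flag the $\omega = -1$ degeneracy as the source of the $k = 2$ exception, and note in passing that the same complex-polynomial technique will be reused in the subsequent subsections of Section~\ref{HAMMING} for the companion statements about inner products mod $k$ and about linear gates preserving Hamming weight mod $k$.
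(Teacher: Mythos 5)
Your proposal is correct and follows essentially the same route as the paper: both rest on the identity $\sum_{x}z^{|x|}=(1+z)^{n}$ together with bijectivity of $G$, yielding $(1-z^{j})(1+z)^{n}=0$ at $k$-th roots of unity. The only difference is cosmetic: the paper phrases this as a divisibility statement modulo $z^{k}-1$ and finishes with a degree argument, whereas you evaluate at a single \emph{primitive} $k$-th root $\omega=e^{2\pi i/k}$ and read off the dichotomy $\omega^{j}=1$ (so $j\equiv 0$) versus $1+\omega=0$ (so $k=2$) directly, which is a slightly cleaner way to extract the same conclusion.
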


\begin{proof}
Suppose the above equation holds for all $x$. \ Then introducing a new complex
variable $z$, we have%
\[
z^{\left\vert G\left(  x\right)  \right\vert }\equiv z^{\left\vert
x\right\vert +j}\left(  \operatorname{mod}\left(  z^{k}-1\right)  \right)
\]
(since working mod $z^{k}-1$\ is equivalent to setting $z^{k}=1$). \ Since the
above is true for all $x$,%
\begin{equation}
\sum_{x\in\left\{  0,1\right\}  ^{n}}z^{\left\vert G\left(  x\right)
\right\vert }\equiv\sum_{x\in\left\{  0,1\right\}  ^{n}}z^{\left\vert
x\right\vert }z^{j}\left(  \operatorname{mod}\left(  z^{k}-1\right)  \right)
. \label{damn}%
\end{equation}
By reversibility, we have%
\[
\sum_{x\in\left\{  0,1\right\}  ^{n}}z^{\left\vert G\left(  x\right)
\right\vert }=\sum_{x\in\left\{  0,1\right\}  ^{n}}z^{\left\vert x\right\vert
}=\left(  z+1\right)  ^{n}.
\]
Therefore equation (\ref{damn}) simplifies to%
\[
\left(  z+1\right)  ^{n}\left(  z^{j}-1\right)  \equiv0\left(
\operatorname{mod}\left(  z^{k}-1\right)  \right)  .
\]
Now, since $z^{k}-1$\ has no repeated roots, it can divide $\left(
z+1\right)  ^{n}\left(  z^{j}-1\right)  $\ only if it divides $\left(
z+1\right)  \left(  z^{j}-1\right)  $. \ For this we need either $j=0$,
causing $z^{j}-1=0$, or else $j=k-1$ (from degree considerations). \ But it is
easily checked that the equality%
\[
z^{k}-1=\left(  z+1\right)  \left(  z^{k-1}-1\right)
\]
holds only if $k=2$.
\end{proof}

In Appendix \ref{ALTPROOF}, we provide an alternative proof of Theorem
\ref{noshifter}, using linear algebra. \ The alternative proof is longer, but
perhaps less mysterious.

\subsection{\label{IPMODK}Inner Products Mod \texorpdfstring{$k$}{k}}

We have seen that there exist \textit{orthogonal} gates (such as the
$\operatorname*{T}_{k}$\ gates), which preserve inner products mod $2$. \ In
this section, we first show that no reversible gate that changes Hamming
weights can preserve inner products mod $k$ for any $k\geq3$. \ We then
observe that, if a reversible gate is orthogonal, then it must be linear, and
we give necessary and conditions for orthogonality.

\begin{theorem}
\label{iponly2}Let $G$ be a non-conservative $n$-bit reversible gate, and
suppose%
\[
G\left(  x\right)  \cdot G\left(  y\right)  \equiv x\cdot y\left(
\operatorname{mod}k\right)
\]
for all $x,y\in\left\{  0,1\right\}  ^{n}$. \ Then $k=2$.
\end{theorem}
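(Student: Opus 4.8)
The plan is a short generating‑function argument over $\mathbb{C}$ that mirrors the proof of Theorem~\ref{noshifter}. By non‑conservativity, fix a string $x_{0}\in\{0,1\}^{n}$ with $|G(x_{0})|\neq|x_{0}|$, and write $a=|x_{0}|$, $b=|G(x_{0})|$, and $d=|a-b|\geq 1$. Introduce a complex variable $z$ and consider the polynomial $\Sigma(z):=\sum_{y\in\{0,1\}^{n}}z^{\,G(x_{0})\cdot G(y)}$. On one hand, reindexing the sum by the bijection $y\mapsto G(y)$ gives the exact identity $\Sigma(z)=\sum_{w\in\{0,1\}^{n}}z^{\,G(x_{0})\cdot w}$; summing coordinatewise, the $b$ coordinates where $G(x_{0})$ equals $1$ each contribute a factor $1+z$ and the remaining $n-b$ contribute a factor $2$, so $\Sigma(z)=(1+z)^{b}2^{\,n-b}$. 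On the other hand, the hypothesis gives $G(x_{0})\cdot G(y)\equiv x_{0}\cdot y\pmod{k}$ for every $y$, hence term by term $z^{\,G(x_{0})\cdot G(y)}\equiv z^{\,x_{0}\cdot y}\pmod{z^{k}-1}$, so $\Sigma(z)\equiv\sum_{y}z^{\,x_{0}\cdot y}=(1+z)^{a}2^{\,n-a}\pmod{z^{k}-1}$. Combining the two, I obtain
\[
(1+z)^{b}\,2^{\,n-b}\equiv(1+z)^{a}\,2^{\,n-a}\pmod{z^{k}-1}.
\]

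Next I would evaluate this congruence at the $k$-th roots of unity. Since $z^{k}-1$ has no repeated roots, $\mathbb{C}[z]/(z^{k}-1)$ is precisely the ring of functions on the set of $k$-th roots of unity, so the congruence is equivalent to the statement that $(1+\zeta)^{b}2^{\,n-b}=(1+\zeta)^{a}2^{\,n-a}$ for every $k$-th root of unity $\zeta$. Taking any $\zeta\neq-1$ and cancelling the common nonzero factor $(1+\zeta)^{\min(a,b)}2^{\,n-\max(a,b)}$ yields $(1+\zeta)^{d}=2^{d}$. Passing to absolute values gives $|1+\zeta|=2$; but for $|\zeta|=1$ one has $|1+\zeta|^{2}=2+2\operatorname{Re}\zeta$, so $|1+\zeta|=2$ forces $\zeta=1$. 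Hence the only $k$-th roots of unity are $1$ and (if $k$ is even) $-1$, which means $k\leq 2$; since the statement is vacuous for $k=1$, we conclude $k=2$.

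The argument is short enough that there is no genuine obstacle, but the step that must be handled with care is the direction of the root‑of‑unity evaluation: what is needed is not merely that a polynomial congruence modulo $z^{k}-1$ implies pointwise equality on the $k$-th roots of unity, but the converse, and this relies on $z^{k}-1$ having no repeated roots — exactly the observation already used in Theorem~\ref{noshifter}. Everything else is routine: the combinatorial identity $\sum_{y}z^{\,x\cdot y}=(1+z)^{|x|}2^{\,n-|x|}$, the reindexing that uses bijectivity of $G$, and the elementary fact that $|1+\zeta|=2$ holds on the unit circle only for $\zeta=1$.
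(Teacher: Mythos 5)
Your proposal is correct and is essentially the paper's own proof: the same promotion of the congruence to $\mathbb{C}[z]/(z^{k}-1)$, the same summation identity $\sum_{y}z^{x\cdot y}=(1+z)^{|x|}2^{n-|x|}$ with reindexing by bijectivity, and the same geometric endgame (the paper phrases $|1+\zeta|=2$ as the intersection of the unit circle with the circle of radius $2$ about $-1$, which meets it only at $z=1$). Evaluating pointwise at the roots of unity rather than arguing via polynomial divisibility is only a cosmetic difference.
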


\begin{proof}
As in the proof of Theorem~\ref{noshifter}, we promote the congruence to a
congruence over complex polynomials:
\[
z^{G(x)\cdot G(y)}\equiv z^{x\cdot y}\left(  \operatorname{mod}\left(
z^{k}-1\right)  \right)
\]
Fix a string $x\in\{0,1\}^{n}$ such that $\left\vert G(x)\right\vert
>\left\vert x\right\vert $,\ which must exist because $G$ is non-conservative.
\ Then sum the congruence over all $y$:
\[
\sum_{y\in\left\{  0,1\right\}  ^{n}}z^{G\left(  x\right)  \cdot G\left(
y\right)  }\equiv\sum_{y\in\left\{  0,1\right\}  ^{n}}z^{x\cdot y}\left(
\operatorname{mod}\left(  z^{k}-1\right)  \right)  .
\]
The summation on the right simplifies as follows.
\[
\sum_{y\in\left\{  0,1\right\}  ^{n}}z^{x\cdot y}=\sum_{y\in\left\{
0,1\right\}  ^{n}}\prod_{i=1}^{n}z^{x_{i}y_{i}}=\prod_{i=1}^{n}\sum_{y_{i}%
\in\left\{  0,1\right\}  }z^{x_{i}y_{i}}=\prod_{i=1}^{n}\left(  1+z^{x_{i}%
}\right)  =\left(  1+z\right)  ^{\left\vert x\right\vert }2^{n-\left\vert
x\right\vert }.
\]
Similarly,%
\[
\sum_{y\in\left\{  0,1\right\}  ^{n}}z^{G\left(  x\right)  \cdot G\left(
y\right)  }=\left(  1+z\right)  ^{\left\vert G\left(  x\right)  \right\vert
}2^{n-\left\vert G\left(  x\right)  \right\vert },
\]
since summing over all $y$ is the same as summing over all $G\left(  y\right)
$. So we have
\begin{align*}
\left(  1+z\right)  ^{\left\vert G\left(  x\right)  \right\vert }%
2^{n-\left\vert G\left(  x\right)  \right\vert }  &  \equiv\left(  1+z\right)
^{\left\vert x\right\vert }2^{n-\left\vert x\right\vert }\left(
\operatorname{mod}\left(  z^{k}-1\right)  \right)  ,\\
0  &  \equiv(1+z)^{\left\vert x\right\vert }2^{n-\left\vert G\left(  x\right)
\right\vert }\left(  2^{\left\vert G\left(  x\right)  \right\vert -\left\vert
x\right\vert }-\left(  1+z\right)  ^{\left\vert G\left(  x\right)  \right\vert
-\left\vert x\right\vert }\right)  \left(  \operatorname{mod}\left(
z^{k}-1\right)  \right)  ,
\end{align*}
or equivalently, letting%
\[
p\left(  x\right)  :=2^{\left\vert G\left(  x\right)  \right\vert -\left\vert
x\right\vert }-\left(  1+z\right)  ^{\left\vert G\left(  x\right)  \right\vert
-\left\vert x\right\vert },
\]
we find that $z^{k}-1$\ divides $(1+z)^{\left\vert x\right\vert }p\left(
x\right)  $ as a polynomial. \ Now, the roots of $z^{k}-1$ lie on the unit
circle centered at $0$. \ Meanwhile, the roots of $p\left(  x\right)  $ lie on
the circle in the complex plane of radius $2$, centered at $-1$. \ The only
point of intersection of these two circles is $z=1$, so that is the only root
of $z^{k}-1$ that can be covered by $p\left(  x\right)  $. \ On the other
hand, clearly $z=-1$ is the only root of $(1+z)^{\left\vert x\right\vert }$.
\ Hence, the only roots of $z^{k}-1$ are $1$ and $-1$, so we conclude that
$k=2$.
\end{proof}

We now study reversible transformations that preserve inner products mod $2$.

\begin{lemma}
\label{orthoglin}Every orthogonal gate $G$ is linear.
\end{lemma}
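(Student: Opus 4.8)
The plan is to reconstruct the linear structure of $G$ directly from the inner-product data, using the fact that mod-$2$ inner products with all the standard basis vectors determine a bit string completely. First I would normalize: since $G$ is orthogonal, $G(x)\cdot G(y)\equiv x\cdot y\pmod 2$ for all $x,y$. Taking $y=0$ gives $G(x)\cdot G(0)\equiv 0$ for all $x$, and as $x$ ranges over $\{0,1\}^n$ the image $G(x)$ ranges over all of $\{0,1\}^n$ (by reversibility), so $G(0)\cdot w\equiv 0$ for every $w\in\{0,1\}^n$; taking $w=e_i$ forces $G(0)=0$. So $G$ fixes the origin, which is the first thing one needs for linearity.

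Next I would show $G$ is linear on the basis vectors in the strong sense needed. Define $v_i := G(e_i)$ for $i\in[n]$. The orthogonality relation gives $v_i\cdot v_j \equiv e_i\cdot e_j = \delta_{ij}\pmod 2$, so $\{v_1,\dots,v_n\}$ is an "orthonormal" set with respect to the mod-$2$ bilinear form; in particular it is linearly independent over $\mathbb{F}_2$ and hence a basis. Now let $A$ be the invertible matrix over $\mathbb{F}_2$ whose $i$-th column is $v_i$, so that $A e_i = v_i = G(e_i)$. The claim is that $G(x) = Ax$ for every $x$. To prove this, fix $x\in\{0,1\}^n$ and compare $G(x)$ with $Ax$ coordinate by coordinate, but in the transformed basis: for each $j\in[n]$,
\[
G(x)\cdot v_j \equiv G(x)\cdot G(e_j) \equiv x\cdot e_j \pmod 2,
\]
using orthogonality in the last step. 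On the other hand, since $\{v_1,\dots,v_n\}$ is a basis and the form $u,w\mapsto u\cdot w \bmod 2$ is nondegenerate, there is a dual basis $\{v_1^*,\dots,v_n^*\}$ with $v_i\cdot v_j^* \equiv \delta_{ij}$; writing any $u\in\mathbb{F}_2^n$ as $u = \sum_j (u\cdot v_j^*) v_j$ shows that the vector $u$ is determined by the values $u\cdot v_j$ over... wait — I need pairing against $v_j$, not $v_j^*$. The cleaner route: because $A$ is invertible, the map $w\mapsto (w\cdot v_1,\dots,w\cdot v_n) = A^{\mathsf T} w$ is a bijection $\mathbb{F}_2^n\to\mathbb{F}_2^n$. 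Hence a vector $w$ is uniquely determined by the tuple $(w\cdot v_j)_{j\in[n]}$. For $w = G(x)$ this tuple is $(x\cdot e_j)_{j} = (x_j)_j = x$, and for $w = Ax$ it is $(Ax\cdot v_j)_j = ((A^{\mathsf T}v_j)\cdot x)_j$; since $A^{\mathsf T}v_j = A^{\mathsf T}A e_j$ and $v_i\cdot v_j\equiv\delta_{ij}$ says $A^{\mathsf T}A = I$ over $\mathbb{F}_2$, this tuple is also $(e_j\cdot x)_j = x$. The two tuples agree, so $G(x) = Ax$, and $G$ is linear.

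The main obstacle is bookkeeping about which bilinear form we are using and ensuring the duality argument is airtight over $\mathbb{F}_2$ — specifically, establishing that $A^{\mathsf T}A = I$ over $\mathbb{F}_2$ from the orthonormality of the columns and then using invertibility of $A^{\mathsf T}$ to conclude "$w\mapsto(w\cdot v_j)_j$ is injective." Everything else ($G(0)=0$, linear independence of the $v_i$, promoting the pointwise identity to all of $\mathbb{F}_2^n$) is routine. As a sanity remark consistent with the paper: $A^{\mathsf T}A = I$ over $\mathbb{F}_2$ is exactly the statement that $A$ is an orthogonal matrix over $\mathbb{F}_2$, so this lemma says orthogonal gates are precisely multiplication by $\mathbb{F}_2$-orthogonal matrices, with no affine shift allowed.
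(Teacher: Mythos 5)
Your proof is correct, and it takes a somewhat different route from the paper's. The paper's own argument is shorter and never constructs a matrix: it computes
\[
G(x\oplus y)\cdot G(z)\equiv (x\oplus y)\cdot z\equiv x\cdot z+y\cdot z\equiv G(x)\cdot G(z)+G(y)\cdot G(z)\equiv\bigl(G(x)\oplus G(y)\bigr)\cdot G(z)\pmod 2
\]
for all $z$, and since $G$ is a bijection the vectors $G(z)$ range over all of $\{0,1\}^n$, so nondegeneracy of the mod-$2$ form forces $G(x\oplus y)=G(x)\oplus G(y)$; additivity (together with $G(0)=0$, which follows by taking $x=y=0$) is exactly $\mathbb{F}_2$-linearity. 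Your route instead builds the candidate matrix $A$ from the images $v_i=G(e_i)$, checks $A^{\mathsf T}A=I$ over $\mathbb{F}_2$, and pins down $G(x)$ by its inner products against the orthonormal family $\{v_j\}$. Both arguments ultimately rest on the same fact---a vector in $\mathbb{F}_2^n$ is determined by its mod-$2$ inner products against a spanning set---but the paper uses the spanning set $\{G(z)\}_z$ (all of $\{0,1\}^n$, by reversibility) while you use the basis $\{v_j\}$, whose spanning property you must first establish via $A^{\mathsf T}A=I$. Your version is more constructive and delivers the content of Lemma~\ref{ipcond} as a byproduct; the paper's is a three-line computation that postpones the matrix characterization to that separate lemma. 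Two minor remarks: your opening step proving $G(0)=0$ is harmless but redundant, since the final identity $G(x)=Ax$ already covers $x=0$; and the mid-proof detour about dual bases is correctly abandoned in favor of the cleaner ``$w\mapsto A^{\mathsf T}w$ is a bijection'' argument, which is airtight.
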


\begin{proof}
Suppose%
\[
G\left(  x\right)  \cdot G\left(  y\right)  \equiv x\cdot y\left(
\operatorname{mod}2\right)  .
\]
Then for all $x,y,z$,%
\begin{align*}
G\left(  x\oplus y\right)  \cdot G\left(  z\right)   &  \equiv\left(  x\oplus
y\right)  \cdot z\\
&  \equiv x\cdot z+y\cdot z\\
&  \equiv G\left(  x\right)  \cdot G\left(  z\right)  +G\left(  y\right)
\cdot G\left(  z\right) \\
&  \equiv\left(  G\left(  x\right)  \oplus G\left(  y\right)  \right)  \cdot
G\left(  z\right)  \left(  \operatorname{mod}2\right)  .
\end{align*}
But if the above holds for all possible $z$, then%
\[
G\left(  x\oplus y\right)  \equiv G\left(  x\right)  \oplus G\left(  y\right)
\left(  \operatorname{mod}2\right)  .
\]

\end{proof}

Theorem \ref{iponly2}\ and Lemma \ref{orthoglin}\ have the following corollary.

\begin{corollary}
\label{nonlinip}Let $G$ be any non-conservative, nonlinear gate. \ Then for
all $k\geq2$, there exist inputs $x,y$\ such that%
\[
G\left(  x\right)  \cdot G\left(  y\right)  \not \equiv x\cdot y\left(
\operatorname{mod}k\right)  .
\]

\end{corollary}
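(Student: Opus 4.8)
The plan is to argue by contradiction and simply feed the hypotheses into Theorem~\ref{iponly2} and Lemma~\ref{orthoglin}. Suppose the conclusion fails, so that there exists some $k\geq 2$ for which $G(x)\cdot G(y)\equiv x\cdot y\pmod{k}$ holds for \emph{every} pair of inputs $x,y$. Since $G$ is assumed non-conservative, this is exactly the situation covered by Theorem~\ref{iponly2}, which forces $k=2$.

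So it remains to rule out $k=2$. But $G(x)\cdot G(y)\equiv x\cdot y\pmod{2}$ for all $x,y$ is precisely the definition of $G$ being orthogonal, and then Lemma~\ref{orthoglin} tells us that $G$ must be linear. This contradicts the standing hypothesis that $G$ is nonlinear, so no such $k$ can exist, which is what we wanted.

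There is essentially no technical obstacle here; the only thing to verify is that the hypotheses line up cleanly, namely that ``non-conservative'' is the exact precondition of Theorem~\ref{iponly2}, and that ``preserves inner products mod $2$'' is literally the definition of orthogonality used in Lemma~\ref{orthoglin}. With that noted, the corollary is just the contrapositive conjunction of the two preceding results: Theorem~\ref{iponly2} kills every $k\geq 3$, and Lemma~\ref{orthoglin} kills $k=2$.
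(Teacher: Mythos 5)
Your proposal is correct and is exactly the argument the paper intends: the corollary is stated as an immediate consequence of Theorem~\ref{iponly2} (eliminating $k\geq 3$ for non-conservative gates) and Lemma~\ref{orthoglin} (eliminating $k=2$ for nonlinear gates). Nothing further is needed.
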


Also:

\begin{lemma}
\label{ipcond}A linear transformation $G(x)=Ax$ is orthogonal if and only if
$A^{T}A$\ is the identity: that is, if $A$'s column vectors\ satisfy
$\left\vert v_{i}\right\vert \equiv1\left(  \operatorname{mod}2\right)  $\ for
all $i$ and $v_{i}\cdot v_{j}\equiv0\left(  \operatorname{mod}2\right)  $\ for
all $i\neq j$.
\end{lemma}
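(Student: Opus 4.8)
The plan is to rewrite the orthogonality condition as a statement about the matrix $A^{T}A$ over $\mathbb{F}_{2}$, and then to test it against the standard basis. First I would use linearity to compute, working mod $2$,
\[
G(x)\cdot G(y)=(Ax)\cdot(Ay)=x^{T}\left(A^{T}A\right)y,\qquad x\cdot y=x^{T}y,
\]
where the key point is that the mod-$2$ inner product is a symmetric bilinear form on $\mathbb{F}_{2}^{n}$, using the identity $x\cdot(y\oplus z)\equiv x\cdot y+x\cdot z\ (\operatorname{mod}2)$ recorded in Section \ref{DEF}. Thus $G$ is orthogonal precisely when $x^{T}\left(A^{T}A\right)y\equiv x^{T}y\ (\operatorname{mod}2)$ for all $x,y\in\left\{0,1\right\}^{n}$.

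For the forward direction I would substitute $x=e_{i}$ and $y=e_{j}$, which isolates a single entry and yields $\left(A^{T}A\right)_{ij}\equiv\delta_{ij}\ (\operatorname{mod}2)$; that is, $A^{T}A=I$ over $\mathbb{F}_{2}$. The converse is immediate, since $A^{T}A=I$ makes $x^{T}\left(A^{T}A\right)y=x^{T}y$ identically. Finally I would unwind $A^{T}A=I$ entrywise: writing $v_{1},\ldots,v_{n}$ for the columns of $A$, we have $\left(A^{T}A\right)_{ij}=\sum_{\ell}A_{\ell i}A_{\ell j}\equiv v_{i}\cdot v_{j}\ (\operatorname{mod}2)$, and because the entries of $A$ are $0$ or $1$, the diagonal entry equals $\sum_{\ell}A_{\ell i}=\left\vert v_{i}\right\vert$. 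Hence $A^{T}A=I$ is equivalent to $\left\vert v_{i}\right\vert\equiv1\ (\operatorname{mod}2)$ for every $i$ together with $v_{i}\cdot v_{j}\equiv0\ (\operatorname{mod}2)$ for every $i\neq j$, which is exactly the asserted characterization.

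There is no real obstacle here; the only point requiring care is the passage from ``orthogonality for all $x,y$'' to ``the matrix identity on basis vectors,'' which is legitimate only because the mod-$2$ inner product is bilinear over $\mathbb{F}_{2}$ — the same subtlety the paper already flags, namely that this bilinearity fails if one does not reduce mod $2$. Everything else is routine bookkeeping.
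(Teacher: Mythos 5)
Your proof is correct and follows essentially the same route as the paper's: testing orthogonality on the standard basis vectors to extract the column conditions, and using the bilinearity identity $Ax\cdot Ay\equiv x^{T}\left(A^{T}A\right)y\ (\operatorname{mod}2)$ for the converse. The paper's version is just a more compressed write-up of the same argument.
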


\begin{proof}
This is just the standard characterization of orthogonal matrices; that we are
working over $\mathbb{F}_{2}$\ is irrelevant. \ First, if $G$ preserves inner
products mod $2$ then for all $i\neq j$,%
\begin{align*}
1  &  \equiv e_{i}\cdot e_{i}\equiv\left(  Ae_{i}\right)  \cdot\left(
Ae_{i}\right)  \equiv\left\vert v_{i}\right\vert \left(  \operatorname{mod}%
2\right)  ,\\
0  &  \equiv e_{i}\cdot e_{j}\equiv\left(  Ae_{i}\right)  \cdot\left(
Ae_{j}\right)  \equiv v_{i}\cdot v_{j}\left(  \operatorname{mod}2\right)  .
\end{align*}
Second, if $G$ satisfies the conditions then%
\[
Ax\cdot Ay\equiv(Ax)^{T}Ay\equiv x^{T}(A^{T}A)y\equiv x^{T}y\equiv x\cdot
y\left(  \operatorname{mod}2\right)  .
\]

\end{proof}

\subsection{Why Mod 2 and Mod 4 Are Special\label{MOD2MOD4}}

Recall that $\wedge$\ denotes bitwise AND. \ We first need an
\textquotedblleft inclusion/exclusion formula\textquotedblright\ for the
Hamming weight of a bitwise sum of strings.

\begin{lemma}
\label{inclex}For all $v_{1},\ldots,v_{t}\in\left\{  0,1\right\}  ^{n}$, we
have%
\[
\left\vert v_{1}\oplus\cdots\oplus v_{t}\right\vert =\sum_{\emptyset\subset
S\subseteq\lbrack t]}(-2)^{\left\vert S\right\vert -1}\left\vert
\bigwedge_{i\in S}v_{i}\right\vert .
\]

\end{lemma}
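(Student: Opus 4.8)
The plan is to prove the identity
\[
\left\vert v_{1}\oplus\cdots\oplus v_{t}\right\vert =\sum_{\emptyset\subset S\subseteq[t]}(-2)^{\left\vert S\right\vert -1}\left\vert \bigwedge_{i\in S}v_{i}\right\vert
\]
by reducing it to a single-coordinate identity and then summing over coordinates. Since Hamming weight is additive over bit positions --- that is, $\left\vert w\right\vert = \sum_{j=1}^{n} w_{j}$ for any string $w$ --- and since the operations $\oplus$ and $\wedge$ act coordinatewise, it suffices to prove the identity when $n=1$, i.e. when each $v_{i}$ is a single bit $a_{i} \in \{0,1\}$. In that case $\left\vert \bigwedge_{i\in S} v_{i}\right\vert = \prod_{i\in S} a_{i}$, and $\left\vert v_{1}\oplus\cdots\oplus v_{t}\right\vert = a_{1}\oplus\cdots\oplus a_{t}$, the parity of $a_{1}+\cdots+a_{t}$. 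So the whole lemma comes down to the purely arithmetic claim that, for $a_{1},\ldots,a_{t}\in\{0,1\}$,
\[
a_{1}\oplus\cdots\oplus a_{t} = \sum_{\emptyset\subset S\subseteq[t]}(-2)^{\left\vert S\right\vert -1}\prod_{i\in S}a_{i}.
\]

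\textbf{Proving the single-bit identity.} First I would rewrite the right-hand side as $-\tfrac{1}{2}\sum_{\emptyset\subset S\subseteq[t]}\prod_{i\in S}(-2a_{i})$, and then recognize the sum over nonempty subsets as a product expansion: $\sum_{S\subseteq[t]}\prod_{i\in S}(-2a_{i}) = \prod_{i=1}^{t}(1-2a_{i})$, so the sum over \emph{nonempty} $S$ equals $\prod_{i=1}^{t}(1-2a_{i}) - 1$. Hence the right-hand side equals $\tfrac{1}{2}\bigl(1 - \prod_{i=1}^{t}(1-2a_{i})\bigr)$. Now the key observation is that $1-2a = (-1)^{a}$ for $a\in\{0,1\}$, so $\prod_{i=1}^{t}(1-2a_{i}) = (-1)^{a_{1}+\cdots+a_{t}}$, and therefore the right-hand side is $\tfrac{1}{2}\bigl(1 - (-1)^{a_{1}+\cdots+a_{t}}\bigr)$, which is exactly $0$ when $a_{1}+\cdots+a_{t}$ is even and $1$ when it is odd --- that is, precisely $a_{1}\oplus\cdots\oplus a_{t}$. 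This completes the $n=1$ case, and summing over the $n$ coordinates yields the lemma.

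\textbf{Main obstacle.} Honestly there is no serious obstacle here: once one decides to work coordinatewise, the remainder is the standard inclusion--exclusion / generating-function manipulation, with the only mild cleverness being the identification $1-2a_{i}=(-1)^{a_{i}}$ that turns the product into a sign. The one point that deserves a sentence of care is the interchange of the coordinate sum with the subset sum --- i.e. checking that $\left\vert\bigoplus_{i} v_{i}\right\vert = \sum_{j=1}^{n}\bigl(\bigoplus_{i}(v_{i})_{j}\bigr)$ and $\left\vert\bigwedge_{i\in S}v_{i}\right\vert = \sum_{j=1}^{n}\prod_{i\in S}(v_{i})_{j}$ --- but both are immediate from the definitions of $\oplus$, $\wedge$, and Hamming weight. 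An alternative route, which I would mention only as a remark, is a direct induction on $t$ using $\left\vert u\oplus v\right\vert = \left\vert u\right\vert + \left\vert v\right\vert - 2\left\vert u\wedge v\right\vert$ as the base relation, but the coordinatewise argument is cleaner and I would present that as the proof.
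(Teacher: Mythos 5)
Your proof is correct and follows essentially the same route as the paper: reduce to a single coordinate, then evaluate the subset sum via a product expansion. The paper additionally normalizes to the case $v_1=\cdots=v_t=1$ and invokes the binomial theorem, which is exactly the specialization of your identity $\sum_{S}\prod_{i\in S}(-2a_i)=\prod_i(1-2a_i)$ to $a_i=1$, so the two arguments are the same computation.
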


\begin{proof}
It suffices to prove the lemma for $n=1$, since in the general case we are
just summing over all $i\in\left[  n\right]  $. \ Thus, assume without loss of
generality that $v_{1}=\cdots=v_{t}=1$. \ Our problem then reduces to proving
the following identity:
\[
\sum_{i=1}^{t}(-2)^{i-1}\binom{t}{i}=\left\{
\begin{tabular}
[c]{ll}%
$0$ & if $t$ is even\\
$1$ & if $t$ is odd,
\end{tabular}
\ \ \ \right.
\]
which follows straightforwardly from the binomial theorem.
\end{proof}

\begin{lemma}
\label{affcontriv}No nontrivial affine gate $G$\ is conservative.
\end{lemma}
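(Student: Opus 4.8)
The plan is to argue by contrapositive: rather than directly showing a nontrivial affine gate is non-conservative, I will show that any conservative affine gate must in fact be \emph{trivial}, i.e.\ it merely permutes its input bits. Write $G(x)=Ax\oplus b$ with $A\in\mathbb{F}_2^{k\times k}$ invertible and $b\in\mathbb{F}_2^{k}$, as in the definition of an affine gate.

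The first step is to evaluate conservativity at the all-zeros string. Since $\left\vert 0^k\right\vert=0$, conservativity forces $\left\vert G(0^k)\right\vert=0$; but $G(0^k)=b$, so $\left\vert b\right\vert=0$, hence $b=0$ and $G$ is in fact linear, $G(x)=Ax$. The second step is to evaluate conservativity at each standard basis vector $e_i$. Here $\left\vert e_i\right\vert=1$, so $\left\vert Ae_i\right\vert=1$; but $Ae_i$ is precisely the $i$-th column of $A$, so every column of $A$ has Hamming weight exactly $1$, i.e.\ is itself a standard basis vector. Since $A$ is invertible its columns are distinct, so sending $i$ to the index supporting the $i$-th column of $A$ defines a permutation $\pi$ of $[k]$, and $A$ is the permutation matrix of $\pi$. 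Thus $G(x_1\ldots x_k)=x_{\pi^{-1}(1)}\ldots x_{\pi^{-1}(k)}$ simply permutes the input bits, which is exactly the definition of a trivial gate. Taking the contrapositive gives the lemma.

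There is essentially no obstacle in this argument; the only points meriting (minor) care are (a) recalling that ``trivial'' means ``permutes its input bits,'' so that a permutation matrix with zero affine part genuinely qualifies, and (b) the elementary observation that an invertible $0/1$ matrix all of whose columns have weight $1$ must be a permutation matrix, since distinctness of the columns rules out repeated basis vectors. I would also remark that conservativity is only used on the $k+1$ inputs $0^k,e_1,\ldots,e_k$, not on all of $\{0,1\}^k$.
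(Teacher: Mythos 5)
Your proof is correct and is essentially identical to the paper's: both evaluate conservativity at $0^k$ to force $b=0$ and at the standard basis vectors $e_i$ to force $A$ to be a permutation matrix, concluding that $G$ is trivial. You have merely filled in the small details (distinctness of columns from invertibility) that the paper leaves implicit.
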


\begin{proof}
Let $G\left(  x\right)  =Ax\oplus b$; then $\left\vert G\left(  0^{n}\right)
\right\vert =\left\vert 0^{n}\right\vert =0$\ implies $b=0^{n}$. \ Likewise,
$\left\vert G\left(  e_{i}\right)  \right\vert =\left\vert e_{i}\right\vert
=1$\ for all $i$\ implies that $A$ is a permutation matrix. \ But then $G$ is trivial.
\end{proof}

\begin{theorem}
\label{k2or4}If $G$ is a nontrivial linear gate that preserves Hamming weight
mod $k$, then either $k=2$ or $k=4$.
\end{theorem}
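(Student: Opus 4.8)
The plan is to promote the congruence $\left\vert G(x)\right\vert \equiv \left\vert x\right\vert \pmod{k}$ to a statement about complex polynomials, exactly as in the proofs of Theorems \ref{noshifter} and \ref{iponly2}, and then squeeze the allowed values of $k$ using information about both the linear part $A$ and the products of columns of $A$. First I would write $G(x)=Ax$ with columns $v_1,\ldots,v_n \in \{0,1\}^n$, so that $G(e_i)=v_i$ and, by Lemma \ref{inclex}, $\left\vert G(x)\right\vert = \left\vert \bigoplus_{i : x_i=1} v_i\right\vert = \sum_{\emptyset \subset S \subseteq \{i : x_i = 1\}} (-2)^{\left\vert S\right\vert-1}\left\vert \bigwedge_{i\in S} v_i\right\vert$. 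The hypothesis that this equals $\left\vert x\right\vert$ mod $k$ for all $x$ is then a family of identities indexed by subsets of $[n]$; taking $x=e_i$ gives $\left\vert v_i\right\vert \equiv 1 \pmod k$ for all $i$ (so already no $v_i$ is $0^n$, consistent with $A$ invertible), and taking $x=e_i\oplus e_j$ gives $\left\vert v_i\right\vert + \left\vert v_j\right\vert - 2\left\vert v_i\wedge v_j\right\vert \equiv 2 \pmod k$, i.e.\ $2(v_i\cdot v_j) \equiv 0 \pmod k$ for every $i\neq j$.

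Next I would feed these constraints into a generating-function computation. Introduce a formal variable $z$ and consider $\sum_{x\in\{0,1\}^n} z^{\left\vert G(x)\right\vert}$; by reversibility this equals $(1+z)^n$, and by the mod-$k$ hypothesis it is congruent mod $z^k-1$ to $\sum_x z^{\left\vert x\right\vert}=(1+z)^n$ — which is a tautology, so the naive global sum carries no information. The right move, as in Theorem \ref{iponly2}, is to restrict the sum: fix a subset $T\subseteq[n]$ and sum only over $x$ supported inside $T$, i.e.\ compute $\sum_{x\subseteq T} z^{\left\vert G(x)\right\vert}$. Using the inclusion-exclusion expansion of $\left\vert G(x)\right\vert$, this sum factors nicely in terms of the quantities $\left\vert \bigwedge_{i\in S} v_i\right\vert$ for $S\subseteq T$, and comparing it mod $z^k-1$ with $(1+z)^{\left\vert T\right\vert}$ yields a polynomial divisibility constraint. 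Choosing $T$ of size $2$ and $3$ should already be enough: the $\left\vert T\right\vert=2$ case recovers $k\mid 2(v_i\cdot v_j)$, and to push past $k=4$ one uses a well-chosen triple together with the parity constraints $\left\vert v_i\right\vert$ odd to force a contradiction. The key arithmetic lemma to isolate is something like: if $a\equiv 1 \pmod k$ and $k\mid 2b$ for the relevant column weights and pairwise overlaps, then the triple-overlap term forces $k\mid 8$ or $k\mid 4$, and $k=8$ gets excluded by exhibiting the specific pattern of $v_i$'s it would require and showing it violates one of the lower-order congruences.

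The main obstacle, I expect, is the $k=8$ (and more generally $k=2^m$, $m\geq 3$) case: ruling out $k=3,5,6,7,\dots$ is quick from the column-weight parity ($\left\vert v_i\right\vert$ odd) combined with $k \mid 2(v_i\cdot v_j)$, which forces $k$ to be a power of $2$, but excluding higher powers of $2$ requires genuinely using that $A$ is \emph{invertible} over $\mathbb{F}_2$ — the congruences alone are satisfiable for larger $k$ by degenerate patterns, so one must argue that an invertible $\{0,1\}$-matrix whose columns all have odd weight and satisfy the higher overlap congruences mod $8$ simply cannot exist. I would handle this by the complex-polynomial divisibility argument: after restricting the sum over $x\subseteq T$ for a cleverly chosen $T$, the factor $z^k-1$ must divide a product of terms of the form $2^{a}-(1+z)^{a}$ (times powers of $1+z$), whose roots lie on a circle of radius $2$ about $-1$ and therefore meet the unit circle only at $z=1$; combined with the $(1+z)$-factors contributing only $z=-1$, this forces $z^k-1$ to have all its roots in $\{1,-1\}$ unless the relevant exponents $\left\vert G(x)\right\vert - \left\vert x\right\vert$ vanish, and then a short case check (mirroring the final line of Theorem \ref{iponly2}) pins $k$ to $2$ or $4$. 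Finally I would note that both values actually occur — $\operatorname{CNOTNOT}$ (or $\operatorname{NOTNOT}$, up to a swap) for $k=2$ and $\operatorname{T}_6$ for $k=4$ — so the bound is tight.
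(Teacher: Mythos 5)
Your opening move is sound and matches the paper's: from linearity and Lemma \ref{inclex} you correctly extract $\left\vert v_{i}\right\vert \equiv 1\left(\operatorname{mod}k\right)$ and $2\left(v_{i}\cdot v_{j}\right)\equiv 0\left(\operatorname{mod}k\right)$. But the heart of the proof --- actually excluding every $k\notin\{2,4\}$ --- is not carried out, and the two mechanisms you propose for it both have genuine problems. First, the claim that the column-weight and pairwise-overlap congruences ``force $k$ to be a power of $2$'' is unjustified: for odd $k$ the condition $\left\vert v_{i}\right\vert\equiv 1\left(\operatorname{mod}k\right)$ does not make $\left\vert v_{i}\right\vert$ odd, and the pairwise data alone is satisfiable for, say, $k=3$ (columns of weight $4$ with pairwise integer inner products divisible by $3$ violate nothing at this order). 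Second, the restricted sum $\sum_{x\subseteq T}z^{\left\vert G(x)\right\vert}$ does \emph{not} ``factor nicely'': it is the weight enumerator of the linear code spanned by $\{v_{i}\}_{i\in T}$, and unlike the sum $\sum_{y}z^{x\cdot y}$ in Theorem \ref{iponly2} (which factors coordinatewise because $y$ ranges over all of $\{0,1\}^{n}$ independently in each bit), a code's weight enumerator has no product decomposition into factors $(1+z)^{a}$ and $2^{a}-(1+z)^{a}$. So the divisibility-by-$z^{k}-1$ argument you want to run never gets off the ground, and the exclusion of $k=8$ and higher powers of $2$ is left as an unproved assertion.

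The missing idea is that you do not need a new polynomial argument at all: Theorem \ref{iponly2} can be invoked as a black box rather than imitated. For arbitrary $x,y$ (not just standard basis vectors), linearity gives $G(x\oplus y)=G(x)\oplus G(y)$, and applying $\left\vert u\oplus v\right\vert=\left\vert u\right\vert+\left\vert v\right\vert-2(u\cdot v)$ on both sides together with mod-$k$-preservation yields
\[
2\left(x\cdot y\right)\equiv 2\left(G(x)\cdot G(y)\right)\left(\operatorname{mod}k\right)\qquad\text{for all }x,y.
\]
If $k$ is odd this says $G$ preserves inner products mod $k$; if $k$ is even it says $G$ preserves inner products mod $k/2$. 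Since a nontrivial linear gate is non-conservative (Lemma \ref{affcontriv}), Theorem \ref{iponly2} forbids preservation of inner products mod any modulus $\geq 3$ (and mod $1$ is vacuous), which immediately forces $k=2$ or $k=4$. I recommend restructuring your argument around this reduction; alternatively, the appendix of the paper gives a self-contained combinatorial route, but it requires a careful counting argument about the sets $S=\{x:2\left\vert x\right\vert\equiv\left\vert v_{1}\right\vert-1\left(\operatorname{mod}k\right)\}$, not just the low-order congruences you derived.
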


\begin{proof}
For all $x,y$, we have%
\begin{align*}
\left\vert x\right\vert +\left\vert y\right\vert -2\left(  x\cdot y\right)
&  \equiv\left\vert x\oplus y\right\vert \\
&  \equiv\left\vert G\left(  x\oplus y\right)  \right\vert \\
&  \equiv\left\vert G\left(  x\right)  \oplus G\left(  y\right)  \right\vert
\\
&  \equiv\left\vert G\left(  x\right)  \right\vert +\left\vert G\left(
y\right)  \right\vert -2\left(  G\left(  x\right)  \cdot G\left(  y\right)
\right) \\
&  \equiv\left\vert x\right\vert +\left\vert y\right\vert -2\left(  G\left(
x\right)  \cdot G\left(  y\right)  \right)  \left(  \operatorname{mod}%
k\right)  ,
\end{align*}
where the first and fourth lines used Lemma \ref{inclex}, the second and fifth
lines used that $G$ is mod-$k$-preserving, and the third line used linearity.
\ Hence%
\begin{equation}
2\left(  x\cdot y\right)  \equiv2\left(  G\left(  x\right)  \cdot G\left(
y\right)  \right)  \left(  \operatorname{mod}k\right)  . \label{wow}%
\end{equation}
If $k$ is odd, then equation (\ref{wow}) implies%
\[
x\cdot y\equiv G\left(  x\right)  \cdot G\left(  y\right)  \left(
\operatorname{mod}k\right)  .
\]
But since $G$ is nontrivial and linear, Lemma \ref{affcontriv}\ says that $G$
is non-conservative. \ So by Theorem \ref{iponly2}, the above equation cannot
be satisfied for any odd $k\geq3$. \ Likewise, if $k$ is even, then
(\ref{wow}) implies%
\[
x\cdot y\equiv G\left(  x\right)  \cdot G\left(  y\right)  \left(
\operatorname{mod}\frac{k}{2}\right)  .
\]
Again by Theorem \ref{iponly2}, the above can be satisfied only if $k=2$\ or
$k=4$.
\end{proof}

In Appendix \ref{ALTPROOF}, we provide an alternative proof of Theorem
\ref{k2or4}, one that does not rely on Theorem \ref{iponly2}.

\begin{theorem}
\label{affine4cond} Let $\{o_{i}\}_{i=1}^{n}$ be an orthonormal basis over
$\mathbb{F}_{2}$. An affine transformation $F(x)=Ax\oplus b$ is mod-$4$%
-preserving if and only if $\left\vert b\right\vert \equiv0\left(
\operatorname{mod}4\right)  $,\ and the vectors $v_{i} := A o_{i}$ satisfy
$\left\vert v_{i}\right\vert +2\left(  v_{i}\cdot b\right)  \equiv|o_{i}|
\left(  \operatorname{mod}4\right)  $\ for all $i$ and $v_{i}\cdot v_{j}%
\equiv0\left(  \operatorname{mod}2\right)  $\ for all $i\neq j$.
\end{theorem}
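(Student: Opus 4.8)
The plan is to collapse everything to a single family of congruences mod $4$, one for each subset $T\subseteq[n]$, obtained by expressing every input in the orthonormal basis and pushing it through the inclusion/exclusion identity of Lemma \ref{inclex}.

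First I would observe that, reduced mod $4$, Lemma \ref{inclex} becomes very short: since $4 \mid (-2)^{|S|-1}$ whenever $|S|\geq 3$, for any $w_1,\dots,w_m\in\{0,1\}^n$ we have
$$\left| w_1 \oplus \cdots \oplus w_m \right| \equiv \sum_i |w_i| - 2\sum_{i<j} |w_i \wedge w_j| \pmod 4,$$
and since $|w_i \wedge w_j| = w_i \cdot w_j$ as integers while $2a \equiv 2b \pmod 4$ depends only on $a\bmod 2$, the cross terms may be read as $-2\,(w_i\cdot w_j)$ mod $2$. Now fix $x$ and write $x=\bigoplus_{i\in T}o_i$ for the unique $T\subseteq[n]$ (the $o_i$ form a basis, being orthonormal hence linearly independent). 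Orthonormality gives $|o_i|$ odd and $o_i\cdot o_j\equiv 0\pmod 2$, so the cross terms vanish mod $4$ and $|x|\equiv\sum_{i\in T}|o_i|\pmod 4$. On the other side $F(x)=\bigl(\bigoplus_{i\in T}v_i\bigr)\oplus b$, and applying the collapsed formula with $b$ thrown in as an extra vector,
$$|F(x)| \equiv |b| + \sum_{i\in T}|v_i| - 2\!\!\sum_{\substack{i<j\\ i,j\in T}}\!\!(v_i\cdot v_j) - 2\sum_{i\in T}(v_i\cdot b) \pmod 4.$$
Hence, using $-2\equiv 2\pmod 4$, $F$ is mod-$4$-preserving if and only if for every $T\subseteq[n]$,
$$|b| + \sum_{i\in T}\bigl(|v_i| + 2(v_i\cdot b) - |o_i|\bigr) - 2\!\!\sum_{\substack{i<j\\ i,j\in T}}\!\!(v_i\cdot v_j) \equiv 0 \pmod 4.$$

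To finish I would specialize $T$ in one direction and reassemble in the other. Taking $T=\varnothing$ forces $|b|\equiv 0\pmod 4$; then $T=\{i\}$ forces $|v_i|+2(v_i\cdot b)\equiv|o_i|\pmod 4$; and $T=\{i,j\}$, after deleting the three summands already known to be $0$ mod $4$, forces $2(v_i\cdot v_j)\equiv 0\pmod 4$, i.e.\ $v_i\cdot v_j\equiv 0\pmod 2$. Conversely, under the three stated conditions each term $|v_i|+2(v_i\cdot b)-|o_i|$ is $\equiv 0\pmod 4$, each $2(v_i\cdot v_j)$ is $\equiv 0\pmod 4$, and $|b|\equiv 0\pmod 4$, so the displayed congruence holds for all $T$ and $F$ is mod-$4$-preserving.

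I expect no genuine obstacle; the only points to watch are the asymmetry introduced by $b$ — it enters the inclusion/exclusion expansion of $|F(x)|$ as an extra vector, producing the $|b|$ term and the $v_i\cdot b$ terms, with no counterpart on the $|x|$ side — together with the repeated use of the fact that a factor of $2$ mod $4$ records only parities, which is what permits swapping $|w_i\wedge w_j|$ for $w_i\cdot w_j$ throughout.
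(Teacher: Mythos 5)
Your proof is correct and takes essentially the same route as the paper: both rest on Lemma \ref{inclex} truncated mod $4$ (only singleton and pairwise terms survive), extract the three conditions by evaluating at $0^n$, $o_i$, and $o_i\oplus o_j$, and verify the converse by summing over an arbitrary $T$. Your presentation via a single congruence for all $T$ is a mild streamlining but not a different argument.
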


\begin{proof}
First, if $F$\ is mod-$4$-preserving, then%
\[
0\equiv\left\vert F\left(  0^{n}\right)  \right\vert \equiv\left\vert
A0^{n}\oplus b\right\vert \equiv\left\vert b\right\vert \left(
\operatorname{mod}4\right)  ,
\]
and hence%
\[
|o_{i}| \equiv\left\vert F\left(  o_{i}\right)  \right\vert \equiv\left\vert
Ao_{i}\oplus b\right\vert \equiv\left\vert v_{i}\oplus b\right\vert
\equiv\left\vert v_{i}\right\vert +\left\vert b\right\vert -2\left(
v_{i}\cdot b\right)  \equiv\left\vert v_{i}\right\vert +2\left(  v_{i}\cdot
b\right)  \left(  \operatorname{mod}4\right)
\]
for all $i$, and hence%
\begin{align*}
|o_{i} + o_{j}|  &  \equiv\left\vert F\left(  o_{i}\oplus o_{j}\right)
\right\vert \equiv\left\vert v_{i}\oplus v_{j}\oplus b\right\vert
\equiv\left\vert v_{i}\right\vert +\left\vert v_{j}\right\vert +\left\vert
b\right\vert -2\left(  v_{i}\cdot v_{j}\right)  -2\left(  v_{i}\cdot b\right)
-2\left(  v_{j}\cdot b\right)  +4\left\vert v_{i}\wedge v_{j}\wedge
b\right\vert \\
&  \equiv\left\vert v_{i}\right\vert +\left\vert v_{j}\right\vert +2\left(
v_{i}\cdot v_{j}\right)  +2\left(  v_{i}\cdot b\right)  +2\left(  v_{j}\cdot
b\right)  \left(  \operatorname{mod}4\right) \\
&  \equiv|o_{i}| + |o_{j}| +2\left(  v_{i}\cdot v_{j}\right)  \left(
\operatorname{mod}4\right)
\end{align*}
for all $i\neq j$, from which we conclude that $v_{i}\cdot v_{j}\equiv0\left(
\operatorname{mod}2\right)  $.

Second, if $F$ satisfies the conditions,\ then for any $x=\sum_{i\in S}o_{i}$,
we have%
\begin{align*}
\left\vert F\left(  x\right)  \right\vert  &  =\left\vert b\oplus\sum_{i\in
S}v_{i}\right\vert \\
&  =\left\vert b\right\vert +\sum_{i\in S}\left\vert v_{i}\right\vert
-2\sum_{i\in S}\left(  b\cdot v_{i}\right)  -2\sum_{i\in S~<~j\in S}\left(
v_{i}\cdot v_{j}\right)  +4(\cdots)\\
&  \equiv\sum_{i\in S}\left\vert v_{i}\right\vert -2\left(  b\cdot
v_{i}\right) \\
&  \equiv\sum_{i\in S} |o_{i} |\left(  \operatorname{mod}4\right)  ,
\end{align*}
where the second line follows from Lemma~\ref{inclex}. Furthermore, we have
that
\begin{align*}
|x| = \left|  \sum_{i\in S} o_{i}\right|  = \sum_{i\in S} |o_{i}| - 2
\sum_{i\in S < j \in S} (o_{i} \cdot o_{j}) + 4 (\ldots) \equiv\sum_{i\in S}
|o_{i}| \left(  \operatorname{mod}4\right)  ,
\end{align*}
where the last equality follows from the fact that $\{o_{i}\}_{i=1}^{n}$ is an
orthonormal basis. Therefore, we conclude that $|F(x)| \equiv|x| \left(
\operatorname{mod}4\right)  $.
\end{proof}

We note two corollaries of Theorem \ref{affine4cond}\ for later use.

\begin{corollary}
\label{mod4orthog}Any linear transformation $A\in\mathbb{F}_{2}^{n\times n}$
that preserves Hamming weight mod $4$ is also orthogonal.
\end{corollary}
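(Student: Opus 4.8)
The plan is to derive this immediately from Theorem \ref{affine4cond} together with the characterization of orthogonal linear maps in Lemma \ref{ipcond}. First I would specialize Theorem \ref{affine4cond} to the linear case, i.e., take $b=0^{n}$, and choose the orthonormal basis $\{o_{i}\}_{i=1}^{n}$ to be the standard basis $\{e_{i}\}_{i=1}^{n}$ of $\mathbb{F}_{2}^{n}$ (which is indeed orthonormal over $\mathbb{F}_{2}$, since $|e_{i}|=1$ and $e_{i}\cdot e_{j}=0$ for $i\neq j$). With this choice, $v_{i}=Ae_{i}$ is simply the $i$-th column of $A$.

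Next, I would read off the conclusion of the ``only if'' direction of Theorem \ref{affine4cond}. Since $b=0^{n}$, the condition $|b|\equiv 0\pmod 4$ is automatic, the condition $|v_{i}|+2(v_{i}\cdot b)\equiv |o_{i}|\pmod 4$ becomes $|v_{i}|\equiv |e_{i}|=1\pmod 4$, and the condition $v_{i}\cdot v_{j}\equiv 0\pmod 2$ for $i\neq j$ carries over unchanged. In particular, $|v_{i}|\equiv 1\pmod 4$ forces $|v_{i}|\equiv 1\pmod 2$ for every $i$.

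Finally, I would observe that ``$|v_{i}|\equiv 1\pmod 2$ for all $i$ and $v_{i}\cdot v_{j}\equiv 0\pmod 2$ for all $i\neq j$'' is exactly the criterion in Lemma \ref{ipcond} for the linear transformation $x\mapsto Ax$ to be orthogonal (equivalently, $A^{T}A=I$ over $\mathbb{F}_{2}$). Hence $A$ is orthogonal, completing the proof.

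There is essentially no obstacle here: the only things to get right are that the standard basis qualifies as an orthonormal basis in the sense of Theorem \ref{affine4cond}, and that the mod-$4$ weight condition on the columns implies the weaker mod-$2$ condition needed to invoke Lemma \ref{ipcond}. Everything else is bookkeeping.
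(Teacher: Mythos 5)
Your proof is correct and matches the paper's intended argument: the paper states this as an immediate corollary of Theorem \ref{affine4cond} (specialized to $b=0^{n}$ and the standard basis) combined with the orthogonality criterion of Lemma \ref{ipcond}, which is exactly what you do. The bookkeeping (standard basis is orthonormal; $|v_{i}|\equiv 1\pmod 4$ implies $|v_{i}|$ odd) is handled correctly.
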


\begin{corollary}
\label{mod4cond}An orthogonal transformation $A\in\mathbb{F}_{2}^{n\times n}%
$\ preserves Hamming weight mod $4$ if and only if all of its columns have
Hamming weight $1$\ mod $4$.
\end{corollary}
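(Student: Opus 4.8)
The plan is to obtain this immediately from Theorem \ref{affine4cond}, specialized to the linear case $b = 0^n$. I would invoke that theorem with $\{o_i\}_{i=1}^n$ taken to be the standard basis $\{e_i\}_{i=1}^n$, which is indeed orthonormal over $\mathbb{F}_2$ (since $|e_i| \equiv 1 \pmod 2$ and $e_i \cdot e_j \equiv 0 \pmod 2$ for $i \neq j$). Then $v_i := A o_i = A e_i$ is exactly the $i$-th column of $A$, and $|o_i| = 1$.

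Setting $b = 0^n$, the three hypotheses of Theorem \ref{affine4cond} become: $|0^n| \equiv 0 \pmod 4$, which is automatic; $|v_i| + 2(v_i \cdot 0^n) \equiv |e_i| \pmod 4$ for all $i$, i.e.\ $|v_i| \equiv 1 \pmod 4$; and $v_i \cdot v_j \equiv 0 \pmod 2$ for all $i \neq j$. The point is that this last condition, together with the fact that orthogonality forces each $|v_i| \equiv 1 \pmod 2$, is precisely the column characterization of orthogonality from Lemma \ref{ipcond}. Hence, once we assume $A$ is orthogonal, the cross-term condition comes for free, and Theorem \ref{affine4cond} reads: $A$ preserves Hamming weight mod $4$ if and only if $|v_i| \equiv 1 \pmod 4$ for every column $v_i$ of $A$ --- which is the claim.

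Concretely: for the forward direction, if $A$ is orthogonal and mod-$4$-preserving, Theorem \ref{affine4cond} (with $b = 0$ and the standard basis) yields $|v_i| \equiv |e_i| = 1 \pmod 4$ directly. For the converse, if $A$ is orthogonal with all columns of Hamming weight $\equiv 1 \pmod 4$, then all three hypotheses of Theorem \ref{affine4cond} hold --- the first trivially, the second by assumption, and the third by Lemma \ref{ipcond} applied to the orthogonal matrix $A$ --- so $A$ is mod-$4$-preserving.

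I do not expect a genuine obstacle here: the content is entirely inherited from Theorem \ref{affine4cond}. The only point requiring a little care is checking that the standard basis qualifies as an orthonormal basis in the sense required, and that, after putting $b = 0$, the term $2(v_i \cdot b)$ vanishes --- so that the condition on the columns is genuinely mod $4$, rather than the weaker mod $2$ condition that would survive for a general affine $F$.
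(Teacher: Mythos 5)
Your proof is correct and matches the paper's intent exactly: the paper states Corollary \ref{mod4cond} without proof as an immediate consequence of Theorem \ref{affine4cond}, and your specialization to $b=0^n$ with the standard basis, using Lemma \ref{ipcond} to supply the cross-term condition from orthogonality, is precisely the implicit derivation.
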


\section{Reversible Circuit Constructions\label{CONSTRUC}}

In this section, we show that all the classes of reversible transformations
listed in Theorem \ref{main}, are indeed generated by the gates that we
claimed, by giving explicit synthesis procedures. \ In order to justify
Theorem~\ref{compressthm}, we also verify that\ in each case, only
$O(1)$\ ancilla bits are needed, even though this constraint makes some of the
constructions more complicated than otherwise.

Many of our constructions---those for $\operatorname*{Toffoli}$\ and
$\operatorname*{CNOT}$, for example---have appeared in various forms in the
reversible computing literature, and are included here only for completeness.
\ Others---those for $\operatorname{C}_{k}$ and $\operatorname*{F}_{4}$, for
example---are new as far as we know, but not hard.

\subsection{Non-Affine Circuits\label{NONAFFCIRC}}

We start with the non-affine classes: $\left\langle \operatorname*{Toffoli}%
\right\rangle $, $\left\langle \operatorname*{Fredkin}\right\rangle $,
$\left\langle \operatorname*{Fredkin},\operatorname*{C}_{k}\right\rangle $,
and $\left\langle \operatorname*{Fredkin},\operatorname*{NOT}\right\rangle $.

\begin{theorem}
[variants in \cite{toffoli,shende}]\label{toffolicirc}$\operatorname*{Toffoli}%
$ generates all reversible transformations on $n$ bits, using only $2$ ancilla
bits.\footnote{Notice that we need at least $2$ so that we can generate
$\operatorname*{CNOT}$ and $\operatorname*{NOT}$ using $\operatorname{Toffoli}%
$.}
\end{theorem}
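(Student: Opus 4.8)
The plan is to show that the Toffoli gate is universal for reversible computation on $n$ bits, with the auxiliary space cost held to a constant. I would first reduce to a convenient intermediate target, namely generalized Toffoli gates $C^{n-1}\mathrm{NOT}$ (flip one bit iff all other $n-1$ bits are $1$), together with $\mathrm{CNOT}$ and $\mathrm{NOT}$, all of which can be obtained from $\mathrm{Toffoli}$ using a handful of ancilla bits. The standard construction (see \cite{barenco}) builds a $k$-controlled NOT from ordinary $3$-bit Toffoli gates by computing the AND of the controls into a small number of scratch bits and then uncomputing; with a logarithmic number of borrowable ancillas one gets a linear-size circuit, but—crucially—one can also compute a large AND using only one or two clean ancilla bits at the cost of a constant-factor blowup in gate count, via the well-known trick of computing the AND of the first half, ANDing that with the second half, and carefully uncomputing in a binary-tree fashion that recycles scratch space. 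This keeps the ancilla count at $O(1)$, which is exactly what Theorem~\ref{compressthm} needs.

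Next I would show that $\{\,C^{n-1}\mathrm{NOT},\ \mathrm{CNOT},\ \mathrm{NOT}\,\}$ generates all of $S_{2^n}$, the full symmetric group on $\{0,1\}^n$. The cleanest route is: (i) these gates generate a transitive subgroup of $S_{2^n}$ (indeed $\mathrm{CNOT}$ and $\mathrm{NOT}$ already act transitively, since they give all affine maps and affine maps act transitively on $\mathbb{F}_2^n$); (ii) the gates include an odd permutation of $\{0,1\}^n$ when $n\ge 1$ is such that $C^{n-1}\mathrm{NOT}$ is a single transposition on the two strings $1^{n-1}0$ and $1^{n}$—that is a transposition, hence odd; (iii) therefore the group generated is not contained in the alternating group. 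A transitive subgroup of $S_m$ containing a $3$-cycle or, more conveniently here, one that is $2$-transitive and contains a transposition, must be all of $S_m$; alternatively, one shows directly that one can build an arbitrary transposition $(u\ v)$ of bit strings: conjugate the transposition $(1^{n-1}0,1^n)$ by an affine map sending $\{u,v\}$ to $\{1^{n-1}0,1^n\}$ (possible whenever $u\ne v$ differ in exactly one coordinate; for general $u,v$ one writes the transposition as a product of transpositions of Hamming-distance-one strings), and since transpositions generate $S_{2^n}$ we are done.

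The step I expect to be the main obstacle is not the group theory—which is classical—but keeping the \emph{ancilla budget} at $O(1)$ while realizing the generalized Toffoli gates and while assembling an arbitrary permutation as a product of transpositions. Naively, decomposing a $2^n$-element permutation into transpositions of Hamming-distance-one strings, and realizing each via a conjugated $C^{n-1}\mathrm{NOT}$, could blow the ancilla count up if one is not careful to reuse the same clean scratch bits for every gate in the sequence (legitimate, since each gate returns its ancillas to their initial state before the next begins). So the real content of the proof is a bookkeeping argument: describe a fixed pool of, say, $2$ ancilla bits, show every generalized Toffoli can be synthesized from $3$-bit Toffolis using only that pool, and conclude that the whole circuit for an arbitrary $F$ uses $2$ ancillas and $2^n\operatorname{poly}(n)$ gates—matching the bound promised for Theorem~\ref{compressthm}. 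I would also remark that $2$ ancillas are genuinely needed, since $\mathrm{Toffoli}$ by itself acts only on $3$-bit blocks and one needs the extra room to synthesize $\mathrm{CNOT}$ and $\mathrm{NOT}$, as the footnote indicates.
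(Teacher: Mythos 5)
Your skeleton matches the paper's: decompose an arbitrary permutation into transpositions, reduce everything to multiply-controlled NOTs, and keep the ancilla count constant by a recursive halving construction in which sub-gadgets borrow bits outside their own support. The one genuinely different step is how you realize an individual transposition. The paper builds $\sigma_{y,z}$ directly from a $y$-controlled and a $z$-controlled NOT acting on a flag ancilla (toggle the flag on input $y$ or $z$, conditionally map $y\leftrightarrow z$ via CNOTs from the flag, then untoggle), whereas you write $(u\ v)$ as a product of Hamming-distance-one transpositions and realize each as an affine conjugate of the single transposition $\bigl(1^{n-1}0,\ 1^{n}\bigr)$ effected by $C^{n-1}\mathrm{NOT}$. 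Your route is correct — the conjugating affine map exists because any invertible $A$ with $A(e_i)=e_n$ can be completed, and affine maps are generated by $\mathrm{CNOT}$ plus one ancilla — and it has the virtue of isolating the group theory cleanly; the paper's flag-ancilla circuit avoids the $O(n^2)$ CNOT conjugation per transposition and makes the gate count slightly tighter, though both land at $2^{n}\operatorname{poly}(n)$. The one place you should be careful is the constant-ancilla synthesis of $C^{n-1}\mathrm{NOT}$ itself, which you assert rather than construct: the phrase ``computing the AND of the first half into scratch bits in a binary-tree fashion,'' taken literally, consumes $\Theta(\log n)$ clean scratch bits. The version that actually achieves $O(1)$ is the one where $1^{n}$-$\mathrm{CNOT}(x_1\ldots x_n,y)$ is built from two interleaved applications each of $1^{\lceil n/2\rceil}$-$\mathrm{CNOT}$ and $1^{\lfloor n/2\rfloor+1}$-$\mathrm{CNOT}$ targeting a single \emph{borrowed} bit $a$ whose initial value is irrelevant, so that every recursive call can borrow a bit from the complement of its own inputs and no fresh ancilla is ever allocated. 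With that construction spelled out, and with the $\mathrm{NOT}$ gates pushed to the front of the circuit so that only two ancillas fixed to $1$ are ever needed simultaneously, your argument closes the same way the paper's does.
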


\begin{proof}
Any reversible transformation $F:\left\{  0,1\right\}  ^{n}\rightarrow\left\{
0,1\right\}  ^{n}$ is a permutation of $n$-bit strings, and any permutation
can be written as a product of transpositions. \ So it suffices to show how to
use $\operatorname*{Toffoli}$\ gates to implement an arbitrary transposition
$\sigma_{y,z}$: that is, a mapping that sends $y=y_{1}\ldots y_{n}$\ to
$z=z_{1}\ldots z_{n}$ and $z$ to $y$, and all other $n$-bit strings to themselves.

Given any $n$-bit string $w$, let us define $w$-$\operatorname*{CNOT}$ to be
the $\left(  n+1\right)  $-bit gate that flips its last bit if its first
$n$\ bits are equal to $w$, and that does nothing otherwise. \ (Thus, the
$\operatorname*{Toffoli}$\ gate is $11$-$\operatorname*{CNOT}$, while
$\operatorname*{CNOT}$\ itself is $1$-$\operatorname*{CNOT}$.) \ Given
$y$-$\operatorname*{CNOT}$\ and $z$-$\operatorname*{CNOT}$\ gates, we can
implement the transposition $\sigma_{y,z}$\ as follows on input $x$:

\begin{enumerate}
\item Initialize an ancilla bit, $a=1$.

\item Apply $y$-$\operatorname*{CNOT}\left(  x,a\right)  $.

\item Apply $z$-$\operatorname*{CNOT}\left(  x,a\right)  $.

\item Apply $\operatorname*{NOT}$\ gates to all $x_{i}$'s such that $y_{i}\neq
z_{i}$.

\item For each $i$\ such that $y_{i}\neq z_{i}$, apply $\operatorname*{CNOT}%
\left(  a,x_{i}\right)  $.

\item Apply $z$-$\operatorname*{CNOT}\left(  x,a\right)  $.

\item Apply $y$-$\operatorname*{CNOT}\left(  x,a\right)  $.
\end{enumerate}

Thus, all that remains is to implement $w$-$\operatorname*{CNOT}$ using
$\operatorname*{Toffoli}$. \ Observe that we can simulate any $w$%
-$\operatorname*{CNOT}$\ using $1^{n}$-$\operatorname*{CNOT}$, by negating
certain input bits (namely, those for which $w_{i}=0$) before and after we
apply the $1^{n}$-$\operatorname*{CNOT}$. \ An example of the transposition
$\sigma_{011,101}$ is given in Figure~\ref{fig:transposition}.

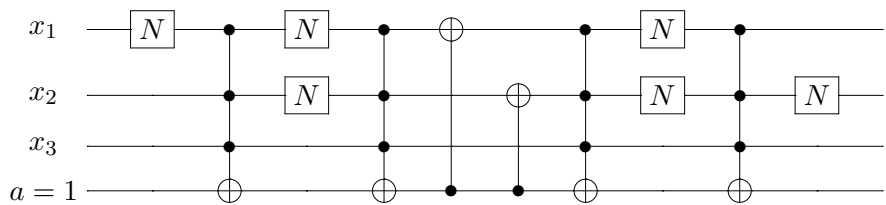
\begin{figure}[h]
\centering
\begin{minipage}[c]{.9\textwidth}
\Qcircuit @C=1.5em @R=1em {
x_1 & & \gate{N} & \ctrl{1} & \gate{N} & \ctrl{1}  & \targ & \qw & \ctrl{1} & \gate{N} & \ctrl{1} & \qw & \qw \\
x_2 & & \qw & \ctrl{1} & \gate{N} & \ctrl{1}  & \qw & \targ & \ctrl{1} & \gate{N} & \ctrl{1} & \gate{N} & \qw \\
x_3 & & \qw & \ctrl{1} & \qw & \ctrl{1}  & \qw & \qw & \ctrl{1} & \qw & \ctrl{1} & \qw & \qw \\
a=1 & & \qw & \targ & \qw & \targ  & \ctrl{-3} & \ctrl{-2} & \targ & \qw & \targ & \qw & \qw \\
}
\end{minipage}
\caption{Generating the transposition $\sigma_{011,101}$}%
\label{fig:transposition}%
\end{figure}

So it suffices to implement $1^{n}$-$\operatorname*{CNOT}$, with control bits
$x_{1}\ldots x_{n}$ and target bit $y$. \ The base case is $n=2$, which we
implement directly using $\operatorname*{Toffoli}$. \ For $n\geq3$, we do the following.

\begin{itemize}
\item Let $a$ be an ancilla.

\item Apply $1^{\left\lceil n/2\right\rceil }$-$\operatorname*{CNOT}\left(
x_{1}\ldots x_{\left\lceil n/2\right\rceil },a\right)  $.

\item Apply $1^{\left\lfloor n/2\right\rfloor +1}$-$\operatorname*{CNOT}%
\left(  x_{\left\lceil n/2\right\rceil +1}\ldots x_{n},a,y\right)  $.

\item Apply $1^{\left\lceil n/2\right\rceil }$-$\operatorname*{CNOT}\left(
x_{1}\ldots x_{\left\lceil n/2\right\rceil },a\right)  $.

\item Apply $1^{\left\lfloor n/2\right\rfloor +1}$-$\operatorname*{CNOT}%
\left(  x_{\left\lceil n/2\right\rceil +1}\ldots x_{n},a,y\right)  $.
\end{itemize}

The crucial point is that this construction works whether the ancilla is
initially $0$ or $1$. \ In other words, we can use \textit{any} bit which is
not one of the inputs, instead of a new ancilla. \ For instance, we can have
one bit dedicated for use in $1^{n}$-$\operatorname*{CNOT}$ gates, which we
use in the recursive applications of $1^{\lceil n/2\rceil}$%
-$\operatorname*{CNOT}$ and $1^{\lfloor n/2\rfloor+1}$-$\operatorname*{CNOT}$,
and the recursive applications within them, and so on.\footnote{The number of
$\operatorname*{Toffoli}$\ gates $T(n)$ needed to implement a\ $1^{n}%
$-$\operatorname*{CNOT}$ (which dominates the cost of a transposition) by this
recursive scheme, is given by the recurrence
\[
T(n)=2T(1+\lfloor n/2\rfloor)+2T(\lceil n/2\rceil)
\]
which we solve to obtain $T\left(  n\right)  =O\left(  n^{2}\right)  $.}

Carefully inspecting the above proof shows that $O\left(  n^{2}2^{n}\right)  $
gates and $3$ ancilla bits suffice to generate any transformation. \ Notice
the main reason we need two of the three ancillas is to apply the
$\operatorname*{NOT}$ gate while the ancilla $a$ is active. \ Case analysis
shows that any circuit constructible from $\operatorname*{NOT}$,
$\operatorname*{CNOT}$, and $\operatorname*{Toffoli}$ is equivalent to a
circuit of $\operatorname*{NOT}$ gates followed by a circuit of
$\operatorname*{CNOT}$ and $\operatorname*{Toffoli}$ gates. \ For example, see
Figure~\ref{fig:not_push}. \ This at most triples the size of the circuit.
\ Therefore, we can construct a circuit that uses only two ancilla bits: apply
the recursive construction, push the $\operatorname*{NOT}$ gates to the front,
and use two ancilla bits to generate the $\operatorname*{NOT}$ gates. \ The
recursive construction itself uses one ancilla bit, plus one more to implement
$\operatorname*{CNOT}$.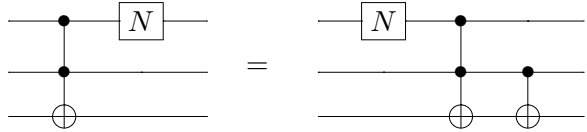
\begin{figure}[h]
\centering
\begin{minipage}[c]{.4\textwidth}
\Qcircuit @C=1.5em @R=1em {
&\ctrl{1} & \gate{N} & \qw \\
&\ctrl{1} & \qw  & \qw \\
&\targ & \qw & \qw \\
}
\end{minipage}\hspace{15px}$=$\hspace{15px}
\begin{minipage}[c]{.4\textwidth}
\Qcircuit @C=1.5em @R=1em {
& \gate{N} & \ctrl{1} & \qw & \qw \\
& \qw &\ctrl{1}  &\ctrl{1} & \qw \\
& \qw &\targ  & \targ & \qw \\
}
\end{minipage}
\caption{Example of equivalent $\operatorname*{Toffoli}$ circuit with
$\operatorname*{NOT}$ gates pushed to the front}%
\label{fig:not_push}%
\end{figure}
\end{proof}

The particular construction above was inspired by a result of Ben-Or and Cleve
\cite{benorcleve}, in which they compute algebraic formulas in a straight-line
computation model using a constant number of registers. \ We note that Toffoli
\cite{toffoli}\ proved a version of Theorem \ref{toffolicirc}, but with
$O\left(  n\right)  $\ ancilla bits rather than $O\left(  1\right)  $. \ More
recently, Shende et al.\ \cite{shende}\ gave a slightly more complicated
construction which uses only $1$ ancilla bit, and also gives explicit bounds
on the number of Toffoli gates required based on the number of fixed points of
the permutation. \ Recall that at least $1$ ancilla bit is needed by
Proposition \ref{needancilla}.

Next, let $\operatorname*{CCSWAP}$, or Controlled-Controlled-SWAP, be the
$4$-bit gate that swaps its last two bits if its first two bits are both $1$,
and otherwise does nothing.

\begin{proposition}
\label{ccswap}$\operatorname*{Fredkin}$ generates $\operatorname*{CCSWAP}$.
\end{proposition}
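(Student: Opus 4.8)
The plan is to realize $\operatorname*{CCSWAP}$ on control bits $x_{1},x_{2}$ and target bits $y,z$ via the standard compute--act--uncompute pattern: reversibly deposit the AND $x_{1}\wedge x_{2}$ onto a single ancilla bit $a$ (initialized to $0$) using one $\operatorname*{Fredkin}$ gate, then apply a $\operatorname*{Fredkin}$ gate controlled by $a$ to swap $y$ with $z$, and finally run the first gate again to restore $x_{2}$ and return $a$ to $0$. This is exactly the three-$\operatorname*{Fredkin}$ circuit shown in Figure~\ref{circuitfig}.

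Concretely, I would argue as follows. First, apply $\operatorname*{Fredkin}$ with control $x_{1}$ to the pair $(x_{2},a)$. Since $a=0$, afterward the $a$-wire holds $x_{2}$ when $x_{1}=1$ and $0$ when $x_{1}=0$, i.e.\ it holds $x_{1}\wedge x_{2}$; correspondingly the $x_{2}$-wire now holds $x_{2}\wedge\overline{x_{1}}$. Second, apply $\operatorname*{Fredkin}$ with control $a$ to the pair $(y,z)$; this swaps $y$ and $z$ precisely when $a=1$, i.e.\ precisely when $x_{1}=x_{2}=1$, which is the conditional swap we want. Third, apply $\operatorname*{Fredkin}$ with control $x_{1}$ to $(x_{2},a)$ once more.

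The point on which correctness of the last step rests is that the middle gate touches only $y$ and $z$, so just before the third gate the triple (control $x_{1}$, $x_{2}$-wire, $a$-wire) is in exactly the state the first gate left it in; and $\operatorname*{Fredkin}$ with a fixed control bit is an involution on the two wires it may swap, so the third gate sends the $x_{2}$-wire back to $x_{2}$ and the $a$-wire back to $0$. A two-case check (on $x_{1}=0$ versus $x_{1}=1$) then confirms that the composed transformation fixes $x_{1},x_{2}$, swaps $y$ with $z$ exactly when $x_{1}=x_{2}=1$, and returns the ancilla to its initial value $0$ in all cases --- so $\operatorname*{Fredkin}$ generates $\operatorname*{CCSWAP}$.

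There is no deep obstacle here; the only thing requiring (mild) care is the justification of the uncomputation, namely the observation above that the middle gate leaves the three bits acted on by the outer gates untouched, so re-applying the controlled involution cleanly reverses the first gate. It is also worth noting that the construction uses just one ancilla bit and three $\operatorname*{Fredkin}$ gates, which keeps it within the $O(1)$-ancilla, $\operatorname*{poly}(n)$-gate budget needed later for Theorem~\ref{compressthm}.
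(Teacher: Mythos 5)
Your construction is exactly the one in the paper: three $\operatorname*{Fredkin}$ gates, with the outer two (controlled by the first control bit, acting on the second control bit and a $0$-ancilla) depositing and then uncomputing the AND on the ancilla, and the middle one using that ancilla to conditionally swap the targets. The argument is correct and matches the paper's proof of Proposition~\ref{ccswap}.
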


\begin{proof}
Let $a$ be an ancilla bit initialized to $0$. \ We implement
$\operatorname*{CCSWAP}\left(  x,y,z,w\right)  $\ by applying
$\operatorname*{Fredkin}\left(  x,y,a\right)  $, then $\operatorname*{Fredkin}%
\left(  a,z,w\right)  $, then again $\operatorname*{Fredkin}\left(
x,y,a\right)  $.
\end{proof}

We can now prove an analogue of Theorem \ref{toffolicirc}\ for conservative transformations.

\begin{theorem}
\label{fredkincirc}$\operatorname*{Fredkin}$ generates all conservative
transformations on $n$ bits, using only $5$ ancilla bits.
\end{theorem}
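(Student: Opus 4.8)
The plan is to mimic the proof of Theorem~\ref{toffolicirc}, but carried out entirely inside the conservative world. Since a conservative $F$ preserves Hamming weight, it acts as an independent permutation on each ``slice'' $\{x\in\{0,1\}^n : |x|=w\}$, and therefore can be written as a product of transpositions $\sigma_{y,z}$ with $|y|=|z|$. So it suffices to realize each such $\sigma_{y,z}$ using $\operatorname*{Fredkin}$ together with the free bit-swaps, and then compose. Each of these building blocks will only temporarily touch a constant number of ancilla bits, which it returns to their initial states, so the ancilla count for the whole of $F$ is the same as for a single $\sigma_{y,z}$.

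To realize $\sigma_{y,z}$ when $|y|=|z|$, look at the coordinates on which $y$ and $z$ differ: they split into a set $P$ where $y$ has a $1$ (and $z$ a $0$) and a set $Q$ where $y$ has a $0$ (and $z$ a $1$), with $|P|=|Q|=:c$. Pairing $P$ and $Q$ coordinate-by-coordinate, $z$ is obtained from $y$ by swapping $c$ designated bit-pairs; introduce the chain $y=w_0,w_1,\dots,w_c=z$ where $w_i$ comes from $w_{i-1}$ by swapping the $i$-th such pair. Using the elementary identity $(w_0\,w_c)=(w_0\,w_1)(w_1\,w_c)(w_0\,w_1)$ and recursing on $(w_1\,w_c)$, we write $\sigma_{y,z}$ as a palindromic product of $2c-1$ transpositions $\sigma_{w,w'}$, each of which swaps two strings differing in exactly two coordinates $i,j$ with patterns $(1,0)$ and $(0,1)$ there. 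But such a two-coordinate transposition is exactly the gate ``swap bits $i$ and $j$ iff the other $n-2$ bits equal a fixed string $u$'' --- that is, a ``$u$-$\operatorname*{Fredkin}$'' with $n-2$ control bits and two target bits. (Alternatively, one could first conjugate the pair $(y,z)$ by a free bit-permutation into a canonical form; either route works, but the fine-grained control on all $n-2$ remaining bits is genuinely needed, since merely swapping two blocks of coordinates would also move many other equal-weight strings.)

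It remains to build $u$-$\operatorname*{Fredkin}$ from $\operatorname*{Fredkin}$. With an all-$1$ control string this runs parallel to the $1^n$-$\operatorname*{CNOT}$ construction in Theorem~\ref{toffolicirc}: recursively split the control register in half, fold the AND of each half into an ancilla, and apply $\operatorname*{CCSWAP}$ (available by Proposition~\ref{ccswap}); the ``AND into an ancilla'' steps are done conservatively exactly as in the proof of Proposition~\ref{ccswap}, by one $\operatorname*{Fredkin}$ that temporarily clobbers a control bit and a matching one that restores it, and the four-fold \emph{compute/use/uncompute/use} pattern again makes the construction correct whether the ancilla starts at $0$ or $1$, so a bounded pool of ancillas can be reused all the way down the recursion. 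The one new issue is that $\operatorname*{NOT}$ gates are forbidden, so $0$-controls cannot be turned into $1$-controls by negating bits; this is handled by the observation that an unconditional bit-swap is free and that $(\text{swap }i,j)\circ\operatorname*{Fredkin}(c;i,j)$ is precisely ``swap $i,j$ iff $c=0$'', with an analogous free-swap conjugation producing the negated-literal version of each AND-combining step.

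I expect the main obstacle to be bookkeeping rather than conceptual: checking that after all these conservative substitutions and the recursion every ancilla is genuinely returned to its starting value, and nailing down that a constant number of ancilla bits --- one verifies $5$ --- is enough (the constant exceeds the Toffoli case because $\operatorname*{CCSWAP}$ needs two combined controls and the clobber-and-restore and $0$-control gadgets use a little scratch space). The gate-count bound is then immediate as in Theorem~\ref{toffolicirc}: the reduction from $F$ to transpositions and then to $u$-$\operatorname*{Fredkin}$ gates costs only $\operatorname*{poly}(n)$ factors, and each $u$-$\operatorname*{Fredkin}$ costs $O(n^2)$ $\operatorname*{Fredkin}$ gates by the balanced recursion.
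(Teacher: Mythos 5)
Your proof is correct, and it diverges from the paper's in one substantive step. Both arguments share the same skeleton: decompose a conservative $F$ into transpositions $\sigma_{y,z}$ of equal-weight strings, reduce those to fully-controlled swaps, and build the fully-controlled swap recursively from $\operatorname*{CCSWAP}$ (Proposition~\ref{ccswap}) with a dual-rail ancilla pair whose ``works whether it starts as $01$ or $10$'' property lets a bounded pool be reused down the recursion, handling $0$-controls by conjugating a $\operatorname*{Fredkin}$ control with a free unconditional swap (the paper's Figure~\ref{fig:fred_equiv}). Where you differ is the middle reduction: the paper implements $\sigma_{y,z}$ by a sandwich that uses $y$- and $z$-controlled swaps to toggle a two-bit dual-rail flag, performs the $y\leftrightarrow z$ bit-swaps as ordinary $\operatorname*{Fredkin}$s controlled by that flag, and then untoggles --- costing two extra ancillas at the transposition level but only four fully-controlled gates per transposition, hence $O(n^2)$ Fredkins each. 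You instead walk a chain $y=w_0,\dots,w_c=z$ through intermediate equal-weight strings and expand $(w_0\,w_c)$ as a palindromic product of $2c-1$ distance-two transpositions, each of which is exactly a fully-controlled swap; this is a purely group-theoretic reduction that needs no transposition-level ancillas but uses $\Theta(n)$ expensive fully-controlled gates per transposition, giving $O(n^3)$ Fredkins each --- still comfortably within the $2^n\operatorname*{poly}(n)$ bound of Theorem~\ref{compressthm}, and still within the $5$-ancilla budget. The one place you are glib is the negated-control bookkeeping you yourself flag: in the three-Fredkin realization of $\operatorname*{CCSWAP}$, the second control enters as a $\operatorname*{Fredkin}$ \emph{target}, not a control, so before applying the swap-conjugation trick you must re-orient that Fredkin (e.g., use the to-be-negated bit as the control of a conditional swap of the other control with a fresh $0$ ancilla, so that the ancilla acquires the conjunction with the negated literal). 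This is genuinely needed, but it is the same care the paper's own construction requires and it goes through, so it is bookkeeping rather than a gap.
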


\begin{proof}
In this proof, we will use the \textit{dual-rail representation}, in which
$0$\ is encoded as $01$\ and $1$ is encoded as $10$. \ We will also use
Proposition \ref{ccswap}, that $\operatorname*{Fredkin}$\ generates
$\operatorname*{CCSWAP}$.

As in Theorem \ref{toffolicirc}, we can decompose any reversible
transformation $F:\left\{  0,1\right\}  ^{n}\rightarrow\left\{  0,1\right\}
^{n}$ as a product of transpositions $\sigma_{y,z}$. \ In this case, each
$\sigma_{y,z}$\ transposes two $n$-bit strings $y=y_{1}\ldots y_{n}$\ and
$z=z_{1}\ldots z_{n}$\ of the same Hamming weight.

Given any $n$-bit string $w$, let us define $w$-$\operatorname*{CSWAP}$ to be
the $\left(  n+2\right)  $-bit gate that swaps its last two bits if its first
$n$\ bits are equal to $w$, and that does nothing otherwise. \ (Thus,
$\operatorname*{Fredkin}$\ is $1$-$\operatorname*{CSWAP}$, while
$\operatorname*{CCSWAP}$\ is $11$-$\operatorname*{CSWAP}$.) \ Then given
$y$-$\operatorname*{CSWAP}$\ and $z$-$\operatorname*{CSWAP}$\ gates, where
$\left\vert y\right\vert =\left\vert z\right\vert $, as well as
$\operatorname*{CCSWAP}$\ gates, we can implement the transposition
$\sigma_{y,z}$\ on input $x$\ as follows:

\begin{enumerate}
\item Initialize two ancilla bits (comprising three dual-rail registers) to
$a\overline{a}=01$.

\item Apply $y$-$\operatorname*{CSWAP}\left(  x_{1}\ldots x_{n},a,\overline
{a}\right)  $.

\item Apply $z$-$\operatorname*{CSWAP}\left(  x_{1}\ldots x_{n},a,\overline
{a}\right)  $.

\item Pair off the $i$'s\ such that $y_{i}=1$\ and $z_{i}=0$, with the equally
many $j$'s such that $z_{j}=1$\ and $y_{j}=0$. \ For each such $\left(
i,j\right)  $ pair, apply $\operatorname*{Fredkin}\left(  a,x_{i}%
,x_{j}\right)  $.

\item Apply $z$-$\operatorname*{CSWAP}\left(  x_{1}\ldots x_{n},a,\overline
{a}\right)  $.

\item Apply $y$-$\operatorname*{CSWAP}\left(  x_{1}\ldots x_{n},a,\overline
{a}\right)  $.
\end{enumerate}

The logic here is exactly the same as in the construction of
transpositions\ in Theorem \ref{toffolicirc}; the only difference is that now
we need to conserve Hamming weight.

All that remains is to implement $w$-$\operatorname*{CSWAP}$ using
$\operatorname*{CCSWAP}$. \ First let us show how to implement $1^{n}%
$-$\operatorname*{CSWAP}$ using $\operatorname*{CCSWAP}$. \ Once again, we do
so using a recursive construction. \ For the base case, $n=2$, we just use
$\operatorname*{CCSWAP}$. \ For $n\geq3$, we implement $1^{n}$%
-$\operatorname*{CSWAP}\left(  x_{1},\ldots,x_{n},y,z\right)  $ as follows:

\begin{itemize}
\item Initialize two ancilla bits (comprising one dual-rail register) to
$a\overline{a}=01$.

\item Apply $1^{\left\lceil n/2\right\rceil }$-$\operatorname*{CSWAP}\left(
x_{1}\ldots x_{\left\lceil n/2\right\rceil },a,\overline{a}\right)  $.

\item Apply $1^{\left\lfloor n/2\right\rfloor +1}$-$\operatorname*{CSWAP}%
\left(  x_{\left\lceil n/2\right\rceil +1}\ldots x_{n},a,y,z\right)  $.

\item Apply $1^{\left\lceil n/2\right\rceil }$-$\operatorname*{CSWAP}\left(
x_{1}\ldots x_{\left\lceil n/2\right\rceil },a,\overline{a}\right)  $.

\item Apply $1^{\left\lfloor n/2\right\rfloor +1}$-$\operatorname*{CSWAP}%
\left(  x_{\left\lceil n/2\right\rceil +1}\ldots x_{n},a,y,z\right)  $.
\end{itemize}

The logic is the same as in the construction of $1^{n}$-$\operatorname*{CNOT}%
$\ in Theorem \ref{toffolicirc} except we now use $2$ ancilla bits for the
dual rail representation.

Finally, we need to implement $w$-$\operatorname*{CSWAP}\left(  x_{1}\ldots
x_{n},y,z\right)  $, for arbitrary $w$, using $1^{n}$-$\operatorname*{CSWAP}$.
\ We do so by first constructing $w$-$\operatorname*{CSWAP}$ from
$\operatorname*{NOT}$ gates and $1^{n}$-$\operatorname*{CSWAP}$. \ Observe
that we only use the $\operatorname*{NOT}$ gate on the control bits of the
$\operatorname*{Fredkin}$ gates used during the construction so the
equivalence given in Figure~\ref{fig:fred_equiv} holds (i.e., we can remove
the $\operatorname*{NOT}$ gates).

\begin{figure}[h]
\centering
\begin{minipage}[c]{.4\textwidth}
\Qcircuit @C=1.5em @R=1em {
&\gate{N} & \ctrl{2} & \gate{N} & \qw\\
&\qw & \qswap  & \qw & \qw \\
&\qw & \qswap & \qw & \qw\\
}
\end{minipage}\hspace{15px}$=$\hspace{15px}
\begin{minipage}[c]{.4\textwidth}
\Qcircuit @C=1.5em @R=1em {
&\qw & \ctrl{2} & \qw \\
&\qswap & \qswap  & \qw \\
&\qswap \qwx & \qswap & \qw \\
}
\end{minipage}
\caption{Removing $\operatorname*{NOT}$ gates from the
$\operatorname*{Fredkin}$ circuit}%
\label{fig:fred_equiv}%
\end{figure}Hence, we can build a $w$-$\operatorname*{CSWAP}$ out of
$\operatorname*{CCSWAP}$s using only $5$\ ancilla bits: $1$ for
$\operatorname*{CCSWAP}$, $2$ for the $1^{n}$-$\operatorname*{CSWAP}$, and $2$
for a transposition.
\end{proof}

We note that, before the above construction was found by the authors,
unpublished and independent work by Siyao Xu and Qian Yu first showed that
$O(1)$ ancillas were sufficient.

In \cite{fredkintoffoli}, the result that $\operatorname*{Fredkin}$\ generates
all conservative transformations is stated without proof, and credited to
B.\ Silver. \ We do not know how many ancilla bits Silver's construction used.

Next, we prove an analogue of Theorem \ref{toffolicirc}\ for the
mod-$k$-respecting transformations, for all $k\geq2$. \ First, let
$\operatorname{CC}_{k}$, or Controlled-$\operatorname{C}_{k}$, be the $\left(
k+1\right)  $-bit gate that applies $\operatorname{C}_{k}$\ to the final $k$
bits if the first bit is $1$, and does nothing if the first bit is $0$.

\begin{proposition}
\label{ckcck}$\operatorname*{Fredkin}+\operatorname{C}_{k}$ generates
$\operatorname{CC}_{k}$, using $2$ ancilla bits, for all $k\geq2$.
\end{proposition}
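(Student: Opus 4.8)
The plan is to exhibit an explicit circuit over $\{\operatorname{Fredkin},\operatorname{C}_{k}\}$ (with free bit-swaps) that realizes $\operatorname{CC}_{k}$ using exactly two ancilla bits. The guiding observation is that $\operatorname{C}_{k}$ acts nontrivially \emph{only} on the strings $0^{k}$ and $1^{k}$: on every $k$-bit string of Hamming weight strictly between $0$ and $k$ it is the identity. So, rather than trying to "switch $\operatorname{C}_{k}$ on" when the control bit $c$ equals $1$, I would "switch it off" when $c=0$, by conditionally feeding $\operatorname{C}_{k}$ a string it is guaranteed to fix.

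Concretely, I would take one ancilla $a$ initialized to $0$ and one ancilla $b$ initialized to $1$. First I would note that a "negatively controlled" Fredkin — swap $y$ with $z$ iff the control is $0$ — is available for free, since it equals an ordinary $\operatorname{Fredkin}(c,y,z)$ followed by an unconditional $\operatorname{SWAP}(y,z)$. The circuit for $\operatorname{CC}_{k}(c,x_{1}\ldots x_{k})$ is then: (i) conditionally on $c=0$, swap $x_{1}\leftrightarrow a$ and $x_{2}\leftrightarrow b$; (ii) apply $\operatorname{C}_{k}$ to the $k$ bits now occupying the target register; (iii) undo step (i). When $c=1$, steps (i) and (iii) do nothing, so $\operatorname{C}_{k}$ acts directly on $x_{1}\ldots x_{k}$, exactly as wanted, and $a,b$ are untouched. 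When $c=0$, the target register in step (ii) is $(0,1,x_{3},\ldots,x_{k})$, which for every $k\geq2$ contains both a $0$ and a $1$ and hence has weight strictly between $0$ and $k$; so $\operatorname{C}_{k}$ fixes it, step (iii) restores $x_{1},x_{2},a,b$, and the net effect on the non-ancilla bits is the identity. In both cases $a$ and $b$ return to $0$ and $1$, so they are legitimate ancillas, and the whole circuit uses only a constant number of gates ($4$ Fredkins, one $\operatorname{C}_{k}$, and some free swaps).

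The remaining work is just the routine case split on $c\in\{0,1\}$ sketched above, tracking the contents of positions $1,2,a,b$ through the five gates. The one point that requires care — and the closest thing to an obstacle — is that the two swapped-in ancilla values must differ, one being $0$ and the other $1$: if both ancillas were $0$, then when $x_{3}=\cdots=x_{k}=0$ the target register in step (ii) would be $0^{k}$, and $\operatorname{C}_{k}$ would fire when it should not. This is precisely why two ancilla bits are needed rather than one, and why one of them must be initialized to $1$.
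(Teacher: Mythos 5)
Your construction is correct and is essentially identical to the paper's proof: both initialize ancillas to $0$ and $1$, conditionally swap them into the first two target positions when the control is $0$ (realizing the negated control as a Fredkin followed by an unconditional swap, exactly as in the paper's footnote), apply $\operatorname{C}_{k}$, and undo the swap. Your explicit justification that the swapped-in register then has Hamming weight strictly between $0$ and $k$, and hence is fixed by $\operatorname{C}_{k}$, is the same (correct) reason the paper's circuit works.
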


\begin{proof}
To implement $\operatorname{CC}_{k}$\ on input bits $x,y_{1}\ldots y_{k}$, we
do the following:

\begin{enumerate}
\item Initialize ancilla bits $a,b$ to $0,1$\ respectively.

\item Use $\operatorname*{Fredkin}$\ gates and swaps to swap $y_{1},y_{2}%
$\ with $a,b$, conditioned on $x=0$.\footnote{In more detail, use
$\operatorname*{Fredkin}$\ gates to swap $y_{1},y_{2}$\ with $a,b$,
conditioned on $x=1$. \ Then swap $y_{1},y_{2}$\ with $a,b$\ unconditionally.}

\item Apply $\operatorname{C}_{k}$\ to $y_{1}\ldots y_{k}$.

\item Repeat step 2.
\end{enumerate}
\end{proof}

Then we have the following.

\begin{theorem}
\label{ckcirc}$\operatorname*{Fredkin}+\operatorname{CC}_{k}$ generates all
mod-$k$-preserving transformations, for $k \geq1$, using only $5$ ancilla bits.
\end{theorem}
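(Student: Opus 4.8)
The plan is to follow the template of Theorems \ref{toffolicirc} and \ref{fredkincirc}, using $\operatorname{C}_{k}$-type gates to absorb the extra slack in Hamming weight that mod-$k$-preservation (as opposed to conservativity) permits. Since $F$ is mod-$k$-preserving, it maps each ``level set'' $L_{r}:=\{x\in\{0,1\}^{n}:\left\vert x\right\vert \equiv r\ (\operatorname{mod}k)\}$ into itself, and being a bijection it permutes each $L_{r}$; hence $F$ is a product of transpositions $\sigma_{y,z}$ with $\left\vert y\right\vert \equiv\left\vert z\right\vert \ (\operatorname{mod}k)$, and it suffices to implement one such $\sigma_{y,z}$. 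We may assume $y\neq z$ and, relabeling if necessary, $\left\vert z\right\vert \geq\left\vert y\right\vert$, so $\left\vert z\right\vert -\left\vert y\right\vert =mk$ for some integer $m\geq0$. As a preliminary step I would reconstruct, verbatim from Theorem \ref{fredkincirc}, the $w$-$\operatorname*{CSWAP}$ gates out of $\operatorname*{Fredkin}$ (via $\operatorname*{CCSWAP}$ and a recursive $1^{n}$-$\operatorname*{CSWAP}$).

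The construction of $\sigma_{y,z}$ on input $x$ then runs as follows.
\begin{enumerate}
\item Initialize a dual-rail ancilla register $a\overline{a}=01$.
\item Apply $y$-$\operatorname*{CSWAP}(x_{1}\ldots x_{n},a,\overline{a})$, then $z$-$\operatorname*{CSWAP}(x_{1}\ldots x_{n},a,\overline{a})$; now $a=1$ exactly when $x\in\{y,z\}$.
\item Let $D=\{i:y_{i}=1,z_{i}=0\}$ and $U=\{i:y_{i}=0,z_{i}=1\}$, so $\left\vert U\right\vert -\left\vert D\right\vert =mk$. Pair off the elements of $D$ with $\left\vert D\right\vert$ of the elements of $U$ and, for each pair $(i,j)$, apply $\operatorname*{Fredkin}(a,x_{i},x_{j})$. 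Partition the remaining $mk$ elements of $U$ into $m$ blocks of size $k$, and apply $\operatorname{CC}_{k}$ with control $a$ to each block.
\item Apply $z$-$\operatorname*{CSWAP}$ and then $y$-$\operatorname*{CSWAP}$ again to restore $a\overline{a}=01$.
\end{enumerate}
Correctness is the same involution argument as in Theorem \ref{fredkincirc}, now also checking the $\operatorname{C}_{k}$ blocks: if $x=y$, the paired swaps together with the $\operatorname{C}_{k}$'s (which see $0^{k}\mapsto1^{k}$ on each block, since $y$ is $0$ there) turn $x$ into $z$; if $x=z$, the very same operations turn $x$ into $y$ (the $\operatorname{C}_{k}$'s now see $1^{k}\mapsto0^{k}$); and if $x\notin\{y,z\}$ then $a=0$ throughout step 3, which therefore does nothing. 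Since the net effect of step 3 is exactly to swap $y$ and $z$, the controlled swaps in step 4 fire precisely when those in step 2 fired, uncomputing the ancilla.

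The step I expect to demand the most care is the bookkeeping that pins the ancilla count at $5$: the dual-rail register ($2$ bits) must stay live through step 3 as the shared control for the $\operatorname*{Fredkin}$ and $\operatorname{CC}_{k}$ gates, while the $\operatorname*{Fredkin}$-based implementations of $y$-$\operatorname*{CSWAP}$ and $z$-$\operatorname*{CSWAP}$ cycle a separate block of $3$ ancillas ($1$ for $\operatorname*{CCSWAP}$, $2$ for the recursive $1^{n}$-$\operatorname*{CSWAP}$), returned to their initial values between uses exactly as tallied in Theorem \ref{fredkincirc}; one must verify these two blocks are never simultaneously active, giving $3+2=5$ and no dependence on $\operatorname{CC}_{k}$ introducing further ancillas. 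The degenerate cases are immediate: when $m=0$ the construction collapses to the conservative construction of Theorem \ref{fredkincirc}, and when $k=1$ every transposition is admissible, $\operatorname{CC}_{1}=\operatorname*{CNOT}$, and the blocks have size $1$. Finally, a gate count --- each $w$-$\operatorname*{CSWAP}$ costs $O(n^{2})$ $\operatorname*{Fredkin}$ gates by the recursion, with $O(1)$ such gates plus $O(n)$ further $\operatorname*{Fredkin}$/$\operatorname{CC}_{k}$ gates per transposition, and $O(2^{n})$ transpositions in all --- yields the $2^{n}\operatorname{poly}(n)$ bound needed for Theorem \ref{compressthm}.
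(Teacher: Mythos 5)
Your proposal is correct and follows essentially the same route as the paper: the paper's proof likewise reuses the transposition machinery of Theorem \ref{fredkincirc} verbatim and modifies only the pairing step, absorbing the residual surplus of $qk$ unmatched positions with $q$ applications of $\operatorname{CC}_{k}$ controlled by the dual-rail ancilla. Your block-of-size-$k$ bookkeeping, the involution argument for uncomputing the ancilla, and the $3+2=5$ ancilla tally all match the paper's construction.
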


\begin{proof}
The proof is exactly the same as that of Theorem \ref{fredkincirc}, except for
one detail. \ Namely, let $y$ and $z$ be $n$-bit strings such that $\left\vert
y\right\vert \equiv\left\vert z\right\vert \left(  \operatorname{mod}k\right)
$. \ Then in the construction of the transposition $\sigma_{y,z}$\ from
$y$-$\operatorname*{CSWAP}$\ and $z$-$\operatorname*{CSWAP}$ gates, when we
are applying step 5, it is possible that $\left\vert y\right\vert -\left\vert
z\right\vert $\ is some nonzero multiple of $k$, say $qk$. \ If so, then we
can no longer pair off each $i$ such that $y_{i}=1$\ and $z_{i}=0$ with a
unique $j$ such that $z_{j}=1$\ and $y_{j}=0$: after we have done that, there
will remain a surplus of `$1$' bits of size $qk$, either in $y$ or in $z$, as
well as a matching surplus of `$0$' bits of size $qk$\ in the other
string.\ \ However, we can get rid of both surpluses using $q$ applications of
a $\operatorname*{CC}_{k}$\ gate (which we have by Proposition \ref{ckcck}),
with $c$\ as the control bit.
\end{proof}

As a special case of Theorem \ref{ckcirc}, note that $\operatorname*{Fredkin}%
+\operatorname{CC}_{1}=\operatorname*{Fredkin}+\operatorname{CNOT}$\ generates
all mod-$1$-preserving transformations---or in other words, all transformations.

We just need one additional fact about the $\operatorname{C}_{k}$\ gate.

\begin{proposition}
\label{ckfredkin}$\operatorname{C}_{k}$ generates $\operatorname*{Fredkin}$,
using $k-2$ ancilla bits, for all $k\geq3$.
\end{proposition}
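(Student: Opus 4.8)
The plan is to use the fact that $\operatorname*{Fredkin}$, viewed as a permutation of $\left\{0,1\right\}^{3}$, is nothing but the transposition that exchanges the strings $101$ and $110$ and fixes the other six strings. Accordingly I would introduce $k-2$ ancilla bits $a_{1},\ldots,a_{k-2}$, all initialized to $1$, so that we are dealing with $k+1$ bits $(x,y,z,a_{1},\ldots,a_{k-2})$ altogether; by the ancilla rule it then suffices to build, out of $\operatorname{C}_{k}$ gates (each acting on some $k$ of these $k+1$ bits), a reversible transformation that, on the ``layer'' where every ancilla equals $1$, exchanges the two $(k+1)$-bit strings $u:=101\cdot1^{k-2}$ and $v:=110\cdot1^{k-2}$ and fixes everything else on that layer. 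What the transformation does on the other layers is immaterial.

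The key step is to compute, directly from the definition of $\operatorname{C}_{k}$ (flip all $k$ chosen bits iff they are currently all equal), the permutations of $\left\{0,1\right\}^{k+1}$ effected by two well-chosen $\operatorname{C}_{k}$ gates. Let $G$ be $\operatorname{C}_{k}$ applied to all the bits other than the $y$-coordinate, and $G'$ be $\operatorname{C}_{k}$ applied to all the bits other than the $z$-coordinate. A short check shows that, as permutations, $G=\left(0^{k+1},u\right)\left(p,1^{k+1}\right)$ and $G'=\left(0^{k+1},v\right)\left(q,1^{k+1}\right)$, where $p$ is the string with a single $1$ in the $y$-coordinate and $q$ the string with a single $1$ in the $z$-coordinate. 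I would then apply $G$, $G'$, $G$ in that order and, by tracking only the six strings $0^{k+1},1^{k+1},u,v,p,q$ that any of these gates moves, verify that this composition equals $\left(u,v\right)\left(p,q\right)$.

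To finish, note that on the all-ones-ancilla layer the transposition $\left(u,v\right)$ is exactly the $101\leftrightarrow110$ swap on the bits $x,y,z$, i.e.\ $\operatorname*{Fredkin}$, while $\left(p,q\right)$ involves only strings in which every ancilla equals $0$ and so acts trivially on the layer we care about; in particular the ancillas are returned to $1^{k-2}$. Hence the three-gate circuit $G,G',G$ generates $\operatorname*{Fredkin}$ using the $k-2$ ancillas. I do not expect a real obstacle: the only work is the finite bookkeeping for the composition $G,G',G$, and the only point that needs attention is the smallest case $k=3$ (a single ancilla), where one should confirm that the six strings above are still pairwise distinct and that the all-zeros-ancilla layer is genuinely distinct from the all-ones-ancilla layer — both of which hold as soon as $k-2\ge1$.
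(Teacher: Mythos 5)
Your proposal is correct and is essentially the paper's own construction: the paper's proof is exactly the three-gate circuit $\operatorname{C}_k(x,y,a)$, $\operatorname{C}_k(x,z,a)$, $\operatorname{C}_k(x,y,a)$ with ancillas $a=1^{k-2}$, which coincides with your $G',G,G'$ up to the (irrelevant, by the $y\leftrightarrow z$ symmetry of $\operatorname{Fredkin}$) choice of which of $y,z$ is omitted first. Your explicit cycle bookkeeping, showing the composite is $(u,v)(p,q)$ with $(p,q)$ supported off the all-ones-ancilla layer, is a correct verification of what the paper states without proof.
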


\begin{proof}
Let $a_{1}\ldots a_{k-2}$\ be ancilla bits initially set to $1$. \ Then to
implement $\operatorname{Fredkin}$ on input bits $x,y,z$, we apply:%
\begin{align*}
&  \operatorname{C}_{k}\left(  x,y,a_{1}\ldots a_{k-2}\right)  ,\\
&  \operatorname{C}_{k}\left(  x,z,a_{1}\ldots a_{k-2}\right)  ,\\
&  \operatorname{C}_{k}\left(  x,y,a_{1}\ldots a_{k-2}\right)  .
\end{align*}

\end{proof}

Combining Theorem \ref{ckcirc}\ with Proposition \ref{ckfredkin}\ now yields
the following.

\begin{corollary}
\label{ckcor}$\operatorname{C}_{k}$ generates all mod-$k$-preserving
transformations for $k\geq3$, using only $k + 3$ ancilla bits.
\end{corollary}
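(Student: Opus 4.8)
The plan is to chain together the three results just proved. By Proposition~\ref{ckfredkin}, $\operatorname{C}_{k}$ generates $\operatorname*{Fredkin}$ (for $k\geq 3$), and then by Proposition~\ref{ckcck}, $\operatorname*{Fredkin}+\operatorname{C}_{k}$ --- hence $\operatorname{C}_{k}$ alone --- generates $\operatorname{CC}_{k}$. Since $\operatorname{C}_{k}$ generates both $\operatorname*{Fredkin}$ and $\operatorname{CC}_{k}$, Theorem~\ref{ckcirc} immediately gives that $\operatorname{C}_{k}$ generates every mod-$k$-preserving transformation. So the only real content of the corollary is the ancilla count: I need to check that $k+3$ ancilla bits suffice all told.

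For the accounting, recall that an ancilla bit may be recycled the instant it has been returned to its starting value, so it is the peak number of simultaneously live ancillas that matters. Theorem~\ref{ckcirc} builds the target transformation from $5$ ``top-level'' ancilla bits together with a sequence of $\operatorname*{Fredkin}$ and $\operatorname{CC}_{k}$ gates. Each $\operatorname*{Fredkin}$ gate is realized from $\operatorname{C}_{k}$'s by Proposition~\ref{ckfredkin} using $k-2$ scratch bits that are restored to their initial state, so one fixed block of $k-2$ fresh ancillas serves all the $\operatorname*{Fredkin}$ simulations in succession. Each $\operatorname{CC}_{k}$ gate is realized by Proposition~\ref{ckcck}, which asks for $2$ more ancillas plus $\operatorname*{Fredkin}$ gates --- and those inner $\operatorname*{Fredkin}$ gates are simulated out of the same $k-2$ fresh block. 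A crude first bound is therefore $5 + 2 + (k-2) = k+5$.

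To get down to $k+3$, the key observation I would make is that the $2$ ancillas required by Proposition~\ref{ckcck} do not have to be new. In the construction of Theorem~\ref{ckcirc}, a $\operatorname{CC}_{k}$ gate is only ever invoked while repairing a surplus of $qk$ bits during the synthesis of a transposition $\sigma_{y,z}$; at that moment only the dual-rail transposition register among the five top-level ancillas is live, while the bits reserved for $1^{n}$-$\operatorname*{CSWAP}$ and for $\operatorname*{CCSWAP}$ are idle, so two of those may be borrowed to play the roles of the ancillas $a,b$ in Proposition~\ref{ckcck}. With that identification the only ancilla bits ever used beyond the original $5$ are the $k-2$ bits of the $\operatorname*{Fredkin}$-simulation block, and one checks that even at the deepest point of the nested construction the total number of live ancillas never exceeds $5 + (k-2) = k+3$.

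The main obstacle is exactly this last bookkeeping step: I have to confirm that at every moment where a $\operatorname{CC}_{k}$ gate is being expanded, two of the five top-level ancillas really are free, and that the $k-2$ Fredkin-simulation ancillas are disjoint from them and from the gates' explicit inputs --- so that no configuration of the combined circuit ever has more than $k+3$ ancilla bits simultaneously in use. Granting that, the rest is a routine composition of Propositions~\ref{ckfredkin} and~\ref{ckcck} with Theorem~\ref{ckcirc}.
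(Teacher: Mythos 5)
Your proposal is correct and follows the paper's route exactly: the paper obtains this corollary by the same chaining of Proposition \ref{ckfredkin}, Proposition \ref{ckcck}, and Theorem \ref{ckcirc}, stating the ancilla count without further elaboration. Your explicit bookkeeping --- one reusable block of $k-2$ bits serving every $\operatorname{Fredkin}$ simulation, plus borrowing two idle top-level ancillas (one holding $0$, one holding $1$) to play the roles of $a,b$ in Proposition \ref{ckcck} --- is precisely what is needed to bring the naive bound of $k+5$ down to the stated $k+3$, and it checks out.
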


Finally, we handle the parity-flipping case.

\begin{proposition}
\label{cc2}$\operatorname*{Fredkin}+\operatorname*{NOTNOT}$\ (and hence,
$\operatorname*{Fredkin}+\operatorname{NOT}$) generates $\operatorname{CC}%
_{2}$.
\end{proposition}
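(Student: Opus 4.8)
The natural strategy is to bootstrap from Proposition~\ref{ckcck} rather than build $\operatorname{CC}_{2}$ from scratch. Proposition~\ref{ckcck} already asserts (for all $k\geq2$, hence in particular for $k=2$) that $\operatorname*{Fredkin}+\operatorname{C}_{2}$ generates $\operatorname{CC}_{2}$ using two ancilla bits. So the only thing left to supply is that the gate $\operatorname{C}_{2}$ is itself available from $\operatorname*{NOTNOT}$ together with the free bit-swaps. This is precisely the observation, already recorded after the definition of $\operatorname{C}_{k}$, that $\operatorname{C}_{2}$ equals $\operatorname*{NOTNOT}$ up to a bit-swap: $\operatorname*{NOTNOT}$ sends $yz$ to $\overline{y}\,\overline{z}$, and composing with the swap of the two bits gives $yz\mapsto\overline{z}\,\overline{y}$, which maps $00\mapsto11$, $11\mapsto00$, $01\mapsto01$, $10\mapsto10$ — exactly $\operatorname{C}_{2}$. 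Hence $\operatorname{C}_{2}\in\langle\operatorname*{NOTNOT}\rangle$, so $\langle\operatorname*{Fredkin},\operatorname*{NOTNOT}\rangle\supseteq\langle\operatorname*{Fredkin},\operatorname{C}_{2}\rangle\ni\operatorname{CC}_{2}$ by Proposition~\ref{ckcck}.

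For the parenthetical strengthening, I would observe that $\operatorname*{NOT}$ generates $\operatorname*{NOTNOT}$: applying the $1$-bit $\operatorname*{NOT}$ gate to one target bit and then to the other realizes $yz\mapsto\overline{y}\,\overline{z}$. Therefore $\langle\operatorname*{Fredkin},\operatorname*{NOT}\rangle\supseteq\langle\operatorname*{Fredkin},\operatorname*{NOTNOT}\rangle$, which already contains $\operatorname{CC}_{2}$.

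\textbf{Main obstacle.} Frankly there is none of substance here — the statement is a short reduction, with all the real work done in Proposition~\ref{ckcck}. The only point that genuinely requires care is to confirm that Proposition~\ref{ckcck} is valid at $k=2$ (it is stated for all $k\geq2$): for $k=2$ the ancilla pair $(a,b)=(0,1)$ used in that construction has Hamming weight $1$ and is therefore fixed by $\operatorname{C}_{2}$, so the routing-and-undo argument goes through verbatim. If for some reason one wanted a self-contained proof, one could simply copy the construction in the proof of Proposition~\ref{ckcck} and replace each invocation of $\operatorname{C}_{2}$ by a $\operatorname*{NOTNOT}$ followed by a bit-swap; the correctness analysis is unchanged, since the $\operatorname*{Fredkin}$-gate routing swaps the two target bits of $\operatorname{CC}_{2}$ out for the fixed string $01$ exactly when the control bit is $0$, and $\operatorname{C}_{2}$ fixes $01$.
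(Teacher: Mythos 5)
Your proof is correct and takes the same route as the paper: the paper's proof of Proposition~\ref{cc2} is exactly the one-line reduction to Proposition~\ref{ckcck} via the observation that $\operatorname{C}_{2}$ is $\operatorname*{NOTNOT}$ up to a bit-swap. Your extra checks (that $\operatorname*{NOT}$ yields $\operatorname*{NOTNOT}$, and that the $k=2$ instance of Proposition~\ref{ckcck} goes through) are accurate but left implicit in the paper.
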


\begin{proof}
This follows from Proposition \ref{ckcck}, if we recall that $\operatorname{C}%
_{2}$\ is equivalent to $\operatorname*{NOTNOT}$\ up to an irrelevant bit-swap.
\end{proof}

\begin{theorem}
\label{fredkinnotcirc}$\operatorname*{Fredkin}+\operatorname{NOT}$ generates
all parity-respecting transformations on $n$ bits, using only $6$ ancilla bits.
\end{theorem}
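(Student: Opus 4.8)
The plan is to reduce everything to the parity-\emph{preserving} case, which the earlier circuit constructions already essentially dispatch, and then to handle the parity-\emph{flipping} case with a single extra $\operatorname{NOT}$. First I would observe that a parity-respecting transformation $F$ on $\{0,1\}^{n}$ satisfies $\lvert F(x)\rvert\equiv\lvert x\rvert+j\pmod 2$ for a fixed $j\in\{0,1\}$, so $F$ is either parity-preserving ($j=0$) or parity-flipping ($j=1$); hence it suffices to generate each of these two families.

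For the parity-preserving family, I would use that ``parity-preserving'' is literally ``mod-$2$-preserving,'' so Theorem~\ref{ckcirc} with $k=2$ says $\operatorname{Fredkin}+\operatorname{CC}_2$ generates every parity-preserving transformation using $5$ ancilla bits. By Proposition~\ref{cc2}, $\operatorname{Fredkin}+\operatorname{NOTNOT}$ generates $\operatorname{CC}_2$, and $\operatorname{NOTNOT}=\operatorname{NOT}^{\otimes 2}$ is generated by $\operatorname{NOT}$ on two disjoint bits with no ancillas; hence $\operatorname{Fredkin}+\operatorname{NOT}$ generates every parity-preserving transformation. (Equivalently, one can re-run the proof of Theorem~\ref{fredkincirc} directly: a parity-preserving $F$ decomposes into transpositions $\sigma_{y,z}$ with $\lvert y\rvert\equiv\lvert z\rvert\pmod 2$, the same dual-rail construction applies, and after pairing off the mismatched coordinates any residual weight discrepancy is a nonnegative even integer $2m$, cleared by $m$ applications of $\operatorname{CC}_2$, i.e.\ controlled-$\operatorname{NOTNOT}$, with the dual-rail ancilla as control, exactly as $\operatorname{CC}_k$ was used in Theorem~\ref{ckcirc}.)

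Next I would reduce the parity-flipping case to the parity-preserving case. Given a parity-flipping $F:\{0,1\}^{n}\to\{0,1\}^{n}$, let $G$ be $F$ followed by a single $\operatorname{NOT}$ on the first output bit. Then $\lvert G(x)\rvert=\lvert F(x)\rvert\pm 1$, so $\lvert G(x)\rvert\equiv\lvert F(x)\rvert+1\equiv\lvert x\rvert\pmod 2$; thus $G$ is parity-preserving and is generated by $\operatorname{Fredkin}+\operatorname{NOT}$ by the previous paragraph. Since $\operatorname{NOT}$ is an involution, $F$ equals that same $\operatorname{NOT}$ followed by $G$, so $F$ is generated as well, and prepending a $\operatorname{NOT}$ to a genuine (non-ancilla) bit costs nothing extra. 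Combined with the first paragraph, this shows $\operatorname{Fredkin}+\operatorname{NOT}$ generates every parity-respecting transformation; the reverse inclusion (that every generated transformation is parity-respecting) is routine since $\operatorname{Fredkin}$ is conservative and $\operatorname{NOT}$ is parity-flipping.

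The main obstacle I expect is not the logical structure but the ancilla bookkeeping needed to pin the count at the stated value $6$ rather than merely $O(1)$. The construction is deeply nested --- transposition decomposition, then $w$-$\operatorname{CSWAP}$, then $1^{n}$-$\operatorname{CSWAP}$, then $\operatorname{CCSWAP}$ (Proposition~\ref{ccswap}), and now additionally the $\operatorname{CC}_2$-from-$\operatorname{NOTNOT}$ subroutine (Propositions~\ref{cc2} and~\ref{ckcck}) --- so I would track precisely which ancillas are live at each stage and reuse them aggressively, which is legitimate because every ancilla is restored to its initial value before the next subroutine begins. The one genuinely new accounting point beyond Theorems~\ref{fredkincirc} and~\ref{ckcirc} is that implementing $\operatorname{CC}_2$ from $\operatorname{Fredkin}+\operatorname{NOTNOT}$ uses two ancillas of its own, but these are invoked only at the moment (the pairing-off step of the transposition construction) when the $\operatorname{CSWAP}$-internal ancillas are already idle, so they overlap with those; tracking the nesting carefully then confirms that $6$ ancilla bits suffice.
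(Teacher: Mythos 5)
Your proof is correct and follows essentially the same route as the paper's: reduce to the parity-preserving case via Theorem~\ref{ckcirc} with $k=2$ and Proposition~\ref{cc2}, then absorb the parity flip with a single $\operatorname{NOT}$. The only (immaterial) difference is that the paper tensors $F$ with $\operatorname{NOT}$ on a fresh $(n+1)$-st wire rather than composing $\operatorname{NOT}$ onto an existing output bit; just note that with your definition $G=\operatorname{NOT}_1\circ F$ the correcting $\operatorname{NOT}$ must be \emph{appended} after $G$, not prepended (prepending is fine if you instead set $G=F\circ\operatorname{NOT}_1$, which is equally parity-preserving).
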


\begin{proof}
Let $F$ be any parity-flipping transformation on $n$ bits. \ Then
$F\otimes\operatorname{NOT}$\ is an $\left(  n+1\right)  $-bit
parity-preserving transformation. \ So by Theorem \ref{ckcirc}, we can
implement $F\otimes\operatorname{NOT}$\ using $\operatorname*{Fredkin}%
+\operatorname{CC}_{2}$ (and we have $\operatorname{CC}_{2}$\ by Proposition
\ref{cc2}). \ We can then apply a $\operatorname{NOT}$\ gate to the $\left(
n+1\right)  ^{st}$\ bit to get $F$ alone.
\end{proof}

One consequence of Theorem \ref{fredkinnotcirc}\ is that every parity-flipping
transformation can be constructed from parity-preserving gates and exactly one
$\operatorname{NOT}$ gate.

\subsection{Affine Circuits\label{AFFCIRC}}

It is well-known that $\operatorname*{CNOT}$\ is a \textquotedblleft universal
affine gate\textquotedblright:

\begin{theorem}
\label{cnotcirc}$\operatorname*{CNOT}$ generates all affine transformations,
with only $1$ ancilla bit (or $0$ for linear transformations).
\end{theorem}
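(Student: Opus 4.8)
The plan is to show that $\operatorname*{CNOT}$ can implement an arbitrary affine transformation $F(x) = Ax \oplus b$ with an invertible matrix $A \in \mathbb{F}_2^{n \times n}$ and a vector $b \in \mathbb{F}_2^n$, using at most one ancilla bit. I would split this into the linear part ($b = 0$) and the affine shift ($b \neq 0$).

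For the linear part, the key observation is that $\operatorname*{CNOT}$ applied from bit $i$ to bit $j$ performs the row operation that adds row $i$ to row $j$ (equivalently, it realizes the elementary matrix $E_{ji} = I + e_j e_i^T$ acting on the left). Since $A$ is invertible over $\mathbb{F}_2$, it can be written as a product of such elementary matrices — this is just Gaussian elimination over $\mathbb{F}_2$, and one should note that no row swaps are needed because we are allowed bit-swaps for free (and in any case a swap is three $\operatorname*{CNOT}$s). So $A = E_{i_1 j_1} \cdots E_{i_m j_m}$, and applying the corresponding sequence of $\operatorname*{CNOT}$ gates, in the appropriate order, composes to the transformation $x \mapsto Ax$. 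This uses $0$ ancilla bits and $O(n^2)$ gates (each of the $O(n^2)$ pivot/elimination steps contributes $O(1)$ gates, or $O(n)$ if one counts a full column-clear as $n$ gates, giving the $\Omega(n^2/\log n)$ near-optimal bound mentioned earlier).

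For the affine shift, I would handle it by realizing the translation $x \mapsto x \oplus b$ using a single ancilla bit $a$ initialized to $1$: for each coordinate $i$ with $b_i = 1$, apply $\operatorname*{CNOT}(a, x_i)$, which flips $x_i$ exactly when $a = 1$; since $a$ is never changed by these gates, it is returned to its initial state $1$ at the end, so the ancilla rule applies and we have generated $x \mapsto x \oplus b$. Composing this with the linear-part construction gives $x \mapsto Ax \oplus b$ (apply the linear part first, then XOR in $b$), and the total ancilla count is $1$. One small point to check: when $b = 0$ no ancilla is needed at all, which gives the parenthetical claim for linear transformations.

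The only mildly delicate point — really the sole thing worth stating carefully — is that Gaussian elimination over $\mathbb{F}_2$ on an invertible matrix genuinely terminates using only "add row $i$ to row $j$" operations plus permutations: invertibility guarantees a nonzero pivot can always be brought into position by a bit-swap (free), after which all other $1$'s in that column are cleared by row additions. Everything else is a routine composition argument. I do not expect any real obstacle here; the interest of the theorem is entirely in the efficiency bookkeeping (one ancilla, $O(n^2)$ gates), not in the existence of the construction.
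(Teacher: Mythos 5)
Your proposal is correct and follows essentially the same route as the paper: realize the invertible linear part as a product of elementary row operations (each a $\operatorname*{CNOT}$) via Gaussian elimination over $\mathbb{F}_2$, with bit-swaps handling pivoting for free, and then XOR in the affine shift $b$ using $\operatorname*{CNOT}$s from a single ancilla initialized to $1$. The only cosmetic difference is that the paper states the gate count as at most $\binom{n}{2}$ rather than $O(n^2)$; the substance is identical.
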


\begin{proof}
Let $G\left(  x\right)  =Ax\oplus b$ be the affine transformation that we want
to implement, for some invertible matrix $A\in\mathbb{F}_{2}^{n\times n}$.
\ Then given an input $x=x_{1}\ldots x_{n}$, we first use
$\operatorname*{CNOT}$\ gates (at most $\binom{n}{2}$\ of them) to map $x$\ to
$Ax$, by reversing the sequence of row-operations that maps $A$\ to the
identity matrix in Gaussian elimination. \ Finally, if $b=b_{1}\ldots b_{n}$
is nonzero, then for each $i$ such that $b_{i}=1$, we apply a
$\operatorname*{CNOT}$\ from an ancilla bit that is initialized to $1$.
\end{proof}

A simple modification of Theorem \ref{cnotcirc}\ handles the parity-preserving case.

\begin{theorem}
\label{cnotnotcirc}$\operatorname{CNOTNOT}$ generates all parity-preserving
affine transformations with only $1$ ancilla bit (or $0$ for linear transformations).
\end{theorem}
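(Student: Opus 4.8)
The plan is to imitate the proof of Theorem~\ref{cnotcirc} for $\operatorname*{CNOT}$, replacing each single row-operation by a \emph{pair} of row-operations sharing a common source---that is, by a $\operatorname*{CNOTNOT}$ gate---and to observe that parity-preservation is precisely what guarantees the relevant quantities come in pairs. First I would record the structure of a parity-preserving affine transformation $G(x)=Ax\oplus b$: evaluating at $x=0^{n}$ forces $|b|\equiv 0\ (\operatorname{mod}2)$, and then $|Ax\oplus b|\equiv|Ax|\ (\operatorname{mod}2)$, so $G$ is parity-preserving iff $A$ is and $|b|$ is even. Moreover, if $A$ is a parity-preserving linear map then each column $Ae_{j}$ satisfies $|Ae_{j}|\equiv|e_{j}|=1\ (\operatorname{mod}2)$, i.e.\ every column of $A$ has \emph{odd} Hamming weight (this direction is immediate; the converse would use Lemma~\ref{inclex}, but is not needed here). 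Finally, both $\operatorname*{CNOTNOT}$ and every wire permutation are themselves parity-preserving, so left-multiplying $A$ by such gates keeps all of its columns of odd weight---this invariant is the engine of the argument.

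For the linear part I would show that every invertible $A\in\mathbb{F}_{2}^{n\times n}$ with odd-weight columns can be reduced to the identity using only wire swaps (free, by the swapping rule) and $\operatorname*{CNOTNOT}$ gates, with no ancilla, for $n\geq3$ (for $n\leq 2$ a parity-preserving linear map is the identity or a wire swap, which is free). Process the pivot columns $c=1,\dots,n$ in order, maintaining the invariant that the first $c-1$ columns of the current matrix already equal $e_{1},\dots,e_{c-1}$. At stage $c$: invertibility forces the current $c$-th column $v$ to have a $1$ in some row $\geq c$, so a swap among rows $c,\dots,n$ (which does not touch columns $1,\dots,c-1$) puts a $1$ at position $(c,c)$; the number of remaining rows $i$ with $v_{i}=1$ is then $|v|-1$, which is \emph{even} because $v$ has odd weight. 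Pair those rows off, and for each pair $(i,i')$ apply a $\operatorname*{CNOTNOT}$ with control bit $c$ and targets $i,i'$, clearing both entries at once; since every column $j<c$ has a $0$ in row $c$, these additions leave columns $1,\dots,c-1$ fixed, and the invariant is extended. After all $n$ stages the matrix is the identity, and---exactly as in Theorem~\ref{cnotcirc}---reversing this sequence of $\operatorname*{CNOTNOT}$ gates and swaps implements $x\mapsto Ax$ using no ancilla.

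For the affine shift, fix one ancilla bit to $1$, pair off the (evenly many) indices $i$ with $b_{i}=1$, and for each pair $(i,i')$ apply a $\operatorname*{CNOTNOT}$ with the ancilla as control and $i,i'$ as targets; this XORs $b$ into the input bits and leaves the ancilla (a control bit) unchanged. Composing this after the linear-part circuit yields $G$ with a single ancilla bit, and none when $b=0$. The whole construction uses $O(n^{2})$ gates, which is what Theorem~\ref{compressthm} will need. I expect the only delicate point to be the parity bookkeeping in the Gaussian-elimination step: one must order the operations (finishing each pivot column against its pivot before moving on) so that the count of entries to be cleared is provably even at the moment of clearing, and so that clearing entries above a pivot cannot re-dirty an already-finished column---both of which follow from the odd-column-weight invariant together with the $e_{1},\dots,e_{c-1}$ shape of the finished columns.
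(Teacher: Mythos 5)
Your proposal is correct and follows essentially the same route as the paper's proof: Gaussian elimination on the linear part, using the fact that parity-preservation forces every column of $A$ to have odd Hamming weight so that the row-elimination steps pair up into $\operatorname*{CNOTNOT}$ gates, followed by handling the affine shift $b$ (of even weight) via $\operatorname*{CNOTNOT}$ gates controlled by a single ancilla set to $1$. The paper states the pairing argument tersely, whereas you spell out the pivoting invariant explicitly; the extra detail is accurate.
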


\begin{proof}
Let $G\left(  x\right)  =Ax\oplus b$ be a parity-preserving affine
transformation. \ We first construct the linear part of\ $G$\ using Gaussian
elimination. \ Notice that for $G$ to be parity-preserving, the columns
$v_{i}$ of $A$ must satisfy $\left\vert v_{i}\right\vert \equiv1\left(
\operatorname{mod}2\right)  $ for all $i$. \ For this reason, the
row-elimination steps come in pairs, so we can implement them using
$\operatorname{CNOTNOT}$. \ Notice further that since $G$\ is
parity-preserving, we must have $\left\vert b\right\vert \equiv0\left(
\operatorname{mod}2\right)  $. \ So we can map $Ax$\ to $Ax\oplus b$, by using
$\operatorname{CNOTNOT}$\ gates plus one ancilla bit set to $1$ to simulate
$\operatorname{NOTNOT}$\ gates.
\end{proof}

Likewise (though, strictly speaking, we will not need this for the proof of
Theorem \ref{main}):

\begin{theorem}
\label{cnnpnotcirc}$\operatorname{CNOTNOT}+\operatorname{NOT}$ generates all
parity-respecting affine transformations using no ancilla bits.
\end{theorem}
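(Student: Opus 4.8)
The plan is to bootstrap off Theorem~\ref{cnotnotcirc}, whose single ancilla bit is used only to carry a constant $1$ for simulating $\operatorname{NOTNOT}$ gates when adding the affine shift $b$; once a genuine $\operatorname{NOT}$ is in the gate set, that ancilla is no longer needed. First I would observe that a parity-respecting affine transformation is mod-$2$-respecting, so (by the $k=2$ case of Theorem~\ref{noshifter}, or directly from the definition) it is either parity-preserving or parity-flipping, and it suffices to treat these two cases.

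For the parity-preserving case, write $G(x) = Ax \oplus b$ and factor $G = \tau_b \circ G_0$, where $G_0(x) = Ax$ is the linear part and $\tau_b(y) = y \oplus b$. Since $G(0^n) = b$ and $G$ is parity-preserving, $|b|$ is even; and since $|G(e_i)| \equiv 1 \pmod 2$ together with $|b|$ even forces each column $v_i = Ae_i$ of $A$ to have odd weight, $G_0$ is itself a parity-preserving linear transformation. By the linear case of Theorem~\ref{cnotnotcirc}, $\operatorname{CNOTNOT}$ generates $G_0$ with no ancillas. Finally $\tau_b$ is just a product of $|b|$ many $\operatorname{NOT}$ gates (equivalently $|b|/2$ copies of $\operatorname{NOTNOT}$ on disjoint bit-pairs, since $|b|$ is even), again using no ancilla; composing gives $G$ with no ancillas.

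For the parity-flipping case, set $G'(x) := e_1 \oplus G(x) = Ax \oplus (b \oplus e_1)$. Flipping one output bit shifts Hamming weight by $\pm 1$, so $G'$ is parity-preserving and affine, hence generated with no ancillas by the previous paragraph; then $G = (\text{a single }\operatorname{NOT}\text{ on the first bit}) \circ G'$, still using no ancillas. I do not expect a genuine obstacle here: the whole content is the remark that the lone ancilla of Theorem~\ref{cnotnotcirc} was only ever a carrier for the constant $1$ and is now supplied directly by $\operatorname{NOT}$, plus the routine reduction of the parity-flipping case to the parity-preserving one. The only things to double-check are trivial edge cases ($b = 0^n$, or $G$ a mere bit-permutation), which are immediate.
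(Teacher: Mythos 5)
Your proposal is correct and follows essentially the same route as the paper: apply the ancilla-free linear construction of Theorem~\ref{cnotnotcirc} to realize $Ax$, then XOR in $b$ with $\operatorname{NOT}$ gates. The paper does this in one step without your case split on parity-preserving versus parity-flipping, since once $\operatorname{NOT}$ is available the parity of $\left\vert b\right\vert$ is irrelevant, but that is only a cosmetic difference.
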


\begin{proof}
Use Theorem \ref{cnotnotcirc} to map $x$\ to $Ax$, and then use
$\operatorname{NOT}$\ gates to map $Ax$\ to $Ax\oplus b$.
\end{proof}

We now move on to the more complicated cases of $\left\langle
\operatorname*{F}_{4}\right\rangle $, $\left\langle \operatorname*{T}%
_{6}\right\rangle $, and $\left\langle \operatorname*{T}_{4}\right\rangle $.

\begin{theorem}
\label{f4circ}$\operatorname*{F}_{4}$ generates all mod-$4$-preserving affine
transformations using no ancilla bits.
\end{theorem}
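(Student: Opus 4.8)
The plan is to synthesize an arbitrary mod-$4$-preserving affine transformation $F(x)=Ax\oplus b$ by repeatedly composing it --- on its input side or its output side --- with $\operatorname*{F}_{4}$ gates applied to $4$-element subsets of the bits, together with bit-swaps, until $F$ is reduced to the identity. The first thing to record is the effect of one such composition on the pair $(A,b)$. Writing $v_{m}=Ae_{m}$ for the columns of $A$, composing with an $\operatorname*{F}_{4}$ on the input bits indexed by $S=\{i,j,k,l\}$ replaces $v_{m}$ by $v_{m}\oplus w$ for every $m\in S$ (leaving the other columns unchanged) and replaces $b$ by $b\oplus w$, where $w=v_{i}\oplus v_{j}\oplus v_{k}\oplus v_{l}$; composing on the output side instead produces the analogous ``row'' operation, and bit-swaps permute rows or columns of $A$. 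These are the elementary moves of a Gaussian-elimination argument. The crucial structural input is that the mod-$4$-preserving affine transformations are closed under composition and include $\operatorname*{F}_{4}$; hence after every move the current map is still mod-$4$-preserving, which by Theorem~\ref{affine4cond} means the pair $(A,b)$ still satisfies $|b|\equiv0\pmod{4}$, the columns of $A$ are pairwise orthogonal mod~$2$ and of odd weight, and $|v_{i}|+2(v_{i}\cdot b)\equiv1\pmod{4}$ for all $i$. This invariance is exactly what keeps the elimination from getting stuck.

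I would then argue by induction on $n$, peeling off one coordinate at a time. The key observation is that it suffices to bring the last column $v_{n}$ to $e_{n}$: once $v_{n}=e_{n}$, the pairwise orthogonality of the columns forces the last row of $A$ to be $e_{n}^{T}$, and then the condition $|v_{n}|+2(v_{n}\cdot b)\equiv1\pmod{4}$ forces $b_{n}=0$, so bit $n$ is decoupled and the restriction to the remaining $n-1$ bits is again a mod-$4$-preserving affine transformation, to which the inductive hypothesis applies. (In particular, once every column has been reduced to a standard basis vector these conditions automatically force the affine part to be $0$, so there is no separate step needed to clear it.) To drive $v_{n}$ to $e_{n}$ one uses a bit-swap when $|v_{n}|=1$, and otherwise a column move on $\{n,p,q,r\}$ with $p,q,r$ among the still-unprocessed columns --- which replaces $v_{n}$ by $v_{p}\oplus v_{q}\oplus v_{r}$ (and also changes $v_{p},v_{q},v_{r}$ and $b$, but no already-processed column) --- choosing $p,q,r$ so as to make progress; here one controls $|v_{p}\oplus v_{q}\oplus v_{r}|$ modulo $4$ via the inclusion--exclusion formula of Lemma~\ref{inclex} together with the pairwise mod-$2$ orthogonality of the columns. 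The base cases are the small values of $n$ (say $n\le4$), where every mod-$4$-preserving affine transformation is a bit-permutation or --- on four bits with nonzero affine part --- an $\operatorname*{F}_{4}$ composed with a bit-permutation.

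Since every gate in this construction acts only on the $n$ input bits, no ancilla bits are introduced, which gives the stated bound. The main obstacle is the delicacy of running Gaussian elimination with such restricted moves: a move must touch exactly four coordinates and XOR them with their own XOR, and with no scratch bits one cannot simply ``add one column to another.'' The real content is in verifying that a suitable quadruple always exists to make genuine progress toward $v_{n}=e_{n}$ --- possibly needing several moves in succession, or an interplay between row and column operations --- and in handling the endgame when only a few unprocessed columns remain, along with the bookkeeping of the (intermediate) nonzero affine part. The conceptual content is light; the case analysis is where the care goes.
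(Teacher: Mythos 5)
Your framework is the same as the paper's: peel off one coordinate at a time, and use the fact that the conditions of Theorem~\ref{affine4cond} (odd, pairwise-orthogonal columns, $|v_i|+2(v_i\cdot b)\equiv1\pmod 4$, $|b|\equiv0\pmod 4$) are preserved after every move to conclude that once a column is reduced to a standard basis vector, the corresponding row and the corresponding bit of $b$ are forced and the problem drops to $n-1$ bits. That part, and your base cases, are fine. But the step you yourself flag as ``the real content'' --- that a suitable quadruple always exists to make genuine progress --- is exactly the step you have not supplied, and the specific move you propose does not obviously provide it. Replacing $v_n$ by $v_p\oplus v_q\oplus v_r$ gives a vector of odd weight, but there is no monotone quantity decreasing here: the new weight can be as large as $|v_p|+|v_q|+|v_r|$, the move simultaneously perturbs $v_p,v_q,v_r$ (so you cannot cleanly quantify over ``still-unprocessed columns''), and ``possibly needing several moves in succession, or an interplay between row and column operations'' is precisely the assertion that needs a proof. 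As written, the argument could cycle.

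The paper closes this gap with a different and strictly simpler move, namely the row-side operation applied asymmetrically with respect to a \emph{fixed} column. Pick a column $v_i$ that is not all-ones (one exists since $A$ is invertible, hence not the all-ones matrix). As long as $|v_i|\ge 3$, apply $\operatorname*{F}_4$ to three output coordinates where $v_i$ has a $1$ and one where it has a $0$: since the XOR of those four rows has a $1$ in position $i$, this sends the three $1$'s to $0$'s and the single $0$ to a $1$, decreasing $|v_i|$ by exactly $2$ while touching no other quantity you care about for this column. This terminates with $|v_i|\le 2$, and then the preserved odd-weight condition forces $|v_i|=1$. No case analysis, no endgame, no interplay between row and column moves. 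I'd recommend either adopting that move or, if you want to keep your column-move strategy, actually exhibiting a potential function that your moves decrease; in its current form the proof has a genuine hole at its center.
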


\begin{proof}
Let $F\left(  x\right)  =Ax\oplus b$ be an $n$-bit affine transformation,
$n\geq2$, that preserves Hamming weight mod $4$. \ Using $\operatorname*{F}%
_{4}$\ gates, we will show how to map $F\left(  x\right)  =y_{1}\ldots y_{n}%
$\ to $x=x_{1}\ldots x_{n}$. \ Reversing the construction then yields the
desired map from $x$\ to $F\left(  x\right)  $.

At any point in time, each $y_{j}$\ is some affine function of the $x_{i}$'s.
\ We say that $x_{i}$\ \textquotedblleft occurs in\textquotedblright\ $y_{j}$,
if $y_{j}$\ depends on $x_{i}$. \ At a high level, our procedure will consist
of the following steps, repeated up to $n-3$ times:

\begin{enumerate}
\item Find an\ $x_{i}$\ that does not occur in every $y_{j}$.

\item Manipulate the $y_{j}$'s so that $x_{i}$\ occurs in exactly \textit{one}
$y_{j}$.

\item Argue that no \textit{other} $x_{i^{\prime}}$\ can then occur in that
$y_{j}$. \ Therefore, we have recursively reduced our problem to one involving
a reversible, mod-$4$-preserving, affine function on $n-1$\ variables.
\end{enumerate}

It is not hard to see that the only mod-$4$-preserving affine functions on $3$
or fewer variables, are permutations of the bits. \ So if we can show that the
three steps above can always be carried out, then we are done.

First, since $A$ is invertible, it is not the all-$1$'s matrix, which means
that there must be an $x_{i}$\ that does not occur in every $y_{j}$.

Second, if there are at least three occurrences of $x_{i}$, then apply
$\operatorname*{F}_{4}$ to three positions in which $x_{i}$ occurs, plus one
position in which $x_{i}$ does not occur. \ The result of this is to decrease
the number of occurrences of $x_{i}$ by $2$. \ Repeat until there are at most
two occurrences of $x_{i}$. \ Since $\operatorname*{F}_{4}$\ is mod-$4$%
-preserving and affine, the resulting transformation $F^{\prime}\left(
x\right)  =A^{\prime}x+b^{\prime}$\ must still be mod-$4$-preserving and
affine, so it must still satisfy the conditions of Lemma \ref{affine4cond}.
\ In particular, no column vector of $A^{\prime}$\ can have even Hamming
weight. \ Since two occurrences of $x_{i}$\ would necessitate such a column
vector, we know that $x_{i}$ must occur only once.

Third, if $x_{i}$\ occurs only once in $F^{\prime}\left(  x\right)  $, then
the corresponding column vector $v_{i}$ has exactly one nonzero element.
\ Since $\left\vert v_{i}\right\vert =1$, we know by Lemma~\ref{affine4cond}
that $v_{i}\cdot b\equiv0\left(  \operatorname{mod}2\right)  $, which means
that $b$ has a $0$ in the position where $v_{i}$ has a $1$. \ Now consider the
row of $A^{\prime}$ that includes the nonzero entry of $v_{i}$. \ If any other
column $v_{i^{\prime}}$ is also nonzero in that row, then $v_{i}\cdot
v_{i^{\prime}}\equiv1\left(  \operatorname{mod}2\right)  $, which once again
contradicts the conditions of Lemma~\ref{affine4cond}. \ Thus, no other
$x_{i^{\prime}}$\ occurs in the same $y_{j}$\ that $x_{i}$\ occurs in.
\ Indeed no constant occurs there either, since otherwise $F^{\prime}$\ would
no longer be mod-$4$-preserving. \ So we have reduced to the $\left(
n-1\right)  \times\left(  n-1\right)  $\ case.
\end{proof}

The same argument, with slight modifications, handles $\left\langle
\operatorname*{T}_{4}\right\rangle $\ and $\left\langle \operatorname*{T}%
_{6}\right\rangle $.

\begin{theorem}
\label{t4circ}$\operatorname*{T}_{4}$ generates all orthogonal
transformations, using no ancilla bits.
\end{theorem}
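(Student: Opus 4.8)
The plan is to follow the strategy of Theorem~\ref{f4circ} almost verbatim, replacing $\operatorname*{F}_4$ by $\operatorname*{T}_4$ and the conditions of Lemma~\ref{affine4cond} by those of Lemmas~\ref{orthoglin} and~\ref{ipcond}. Given an orthogonal transformation $G$, Lemma~\ref{orthoglin} says $G$ is linear, so $G(x)=Ax$ with $A^{T}A=I$ over $\mathbb{F}_{2}$ by Lemma~\ref{ipcond}; in particular every column $v_i$ of $A$ has odd Hamming weight, and distinct columns satisfy $v_i\cdot v_j\equiv 0\pmod 2$. As in Theorem~\ref{f4circ}, I would show how to use $\operatorname*{T}_4$ gates to map $y_1\ldots y_n=G(x)$ back to $x_1\ldots x_n$; reversing the circuit, and using that $\operatorname*{T}_4$ is its own inverse, then produces $G$ itself with no ancilla bits.

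The reduction step: at all times each $y_j$ is a linear function of the $x_i$'s, and since composing $G$ with $\operatorname*{T}_4$ (which is orthogonal) keeps the running transformation orthogonal and linear, the column and inner-product conditions of Lemma~\ref{ipcond} persist throughout. Since $A$ is invertible it is not the all-ones matrix, so some $x_i$ fails to occur in some $y_j$; fix such an $x_i$, which therefore occurs in at most $n-1$ of the outputs. Whenever $x_i$ occurs in at least three outputs, pick three positions $j_1,j_2,j_3$ in which it occurs and one position $j_4$ in which it does not (available because $x_i$ occurs in at most $n-1$ outputs, and we only reach this step when $n\ge 4$), and apply $\operatorname*{T}_4$ to $y_{j_1},y_{j_2},y_{j_3},y_{j_4}$. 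A one-line computation using $\operatorname*{T}_4(x_1\ldots x_4)=(x_1\oplus b_x,\ldots,x_4\oplus b_x)$ shows that the coefficient of $x_i$ becomes $0$ in $y_{j_1},y_{j_2},y_{j_3}$ and $1$ in $y_{j_4}$, so the number of occurrences of $x_i$ drops by exactly $2$. Because that number is always odd (odd column weight), we may iterate until $x_i$ occurs in exactly one output $y_j$. Then $v_i$ is a single standard basis vector; for every other column $v_{i'}$, orthogonality forces $v_i\cdot v_{i'}\equiv 0\pmod 2$, so $v_{i'}$ has a $0$ in that row, meaning $y_j=x_i$ and no other variable occurs there. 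This exhibits the running transformation as $x_i\mapsto x_i$ together with an orthogonal linear map on the remaining $n-1$ coordinates, and we recurse.

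For the base case I would check directly that every orthogonal linear transformation on $n\le 3$ bits is a permutation of the bits: a column of Hamming weight $3$ must be $1^{3}$, and then its inner product with any other odd-weight column is $\equiv 1\pmod 2$, contradicting Lemma~\ref{ipcond}; hence all columns have weight $1$, and an invertible $0/1$ matrix with unit columns is a permutation matrix. Reversing the whole procedure turns $x$ into $G(x)$ using only $\operatorname*{T}_4$ gates and no ancillas.

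I expect the only real friction to be two bookkeeping points: confirming that applying $\operatorname*{T}_4$ to three ``occurring'' positions and one ``non-occurring'' position decreases the occurrence count by exactly two while preserving orthogonality and linearity — immediate from the explicit formula for $\operatorname*{T}_4$ and the group structure of orthogonal transformations — and verifying that a fourth free output position is always available when needed, which holds since $x_i$ occurs in at most $n-1$ outputs and we invoke $\operatorname*{T}_4$ only while $n\ge 4$. Neither is a genuine obstacle, so the theorem follows by the same template as Theorem~\ref{f4circ}.
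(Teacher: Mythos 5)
Your proposal is correct and follows the same route as the paper, which likewise proves this theorem by running the Theorem~\ref{f4circ} reduction with $\operatorname*{T}_4$ in place of $\operatorname*{F}_4$ and invoking Lemma~\ref{ipcond} (odd column weights, pairwise orthogonal columns) in place of Lemma~\ref{affine4cond}. Your write-up simply supplies more of the bookkeeping details (the exact effect of one $\operatorname*{T}_4$ application, the availability of a non-occurring position, and the $n\le 3$ base case), all of which check out.
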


\begin{proof}
The construction is identical to that of Theorem~\ref{f4circ}, except with
$\operatorname*{T}_{4}$\ instead of $\operatorname*{F}_{4}$.\ \ When reducing
the number of occurrences of $x_{i}$ to at most $2$, Lemma \ref{ipcond}%
\ assures us that $\left\vert v_{i}\right\vert \equiv1\left(
\operatorname{mod}2\right)  $.
\end{proof}

\begin{theorem}
\label{t6circ}$\operatorname*{T}_{6}$ generates all mod-$4$-preserving linear
transformations, using no ancilla bits.
\end{theorem}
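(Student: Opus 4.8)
The plan is to mimic the proofs of Theorems \ref{f4circ} and \ref{t4circ} essentially line for line, replacing the step-size-$2$ reduction performed by $\operatorname*{F}_4$ or $\operatorname*{T}_4$ with a step-size-$4$ reduction performed by $\operatorname*{T}_6$. Concretely, given a mod-$4$-preserving linear $G(x)=Ax$ on $n$ bits with $n\geq 6$, I would describe a circuit of $\operatorname*{T}_6$ gates taking $G(x)=y_1\ldots y_n$ back to $x=x_1\ldots x_n$ (the reverse circuit then realizes $G$). The base case is $n\leq 5$: by Corollaries \ref{mod4orthog} and \ref{mod4cond}, every column of $A$ has Hamming weight $\equiv 1\pmod 4$, hence weight $1$ or $5$; in $\leq 5$ coordinates a weight-$5$ column is the all-ones vector, two such columns would be equal, and a weight-$5$ column has inner product $1$ with any weight-$1$ column, contradicting orthogonality, so $A$ must be a permutation matrix, which costs no gates. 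This handles small $n$ outright and also anchors the recursion below.

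For the inductive step, the key computation is that applying $\operatorname*{T}_6$ to coordinates $j_1,\ldots,j_6$ replaces each $y_{j_m}$ by $y_{j_m}\oplus(y_{j_1}\oplus\cdots\oplus y_{j_6})$. So if a fixed input variable $x_i$ currently appears in exactly $t$ of those six outputs, then for $t$ even its occurrence pattern is unchanged, while for $t$ odd it vanishes from those $t$ outputs and appears in the remaining $6-t$, changing its total occurrence count by $6-2t$, i.e.\ by $+4$, $0$, or $-4$ as $t=1,3,5$. The loop is then: (1) since $A$ is invertible it is not the all-ones matrix, so some column $v_i$ has $|v_i|<n$, and by Corollaries \ref{mod4orthog} and \ref{mod4cond} we have $|v_i|\equiv 1\pmod 4$, so $|v_i|\in\{1,5,9,\ldots\}$; (2) while $|v_i|\geq 5$, apply $\operatorname*{T}_6$ to five outputs containing $x_i$ and one that does not (available because $|v_i|\leq n-1$, a bound that only improves as $|v_i|$ drops by $4$ each time), reducing $|v_i|$ by $4$ per application until $x_i$ occurs in a single output $y_j$, and throughout, linearity, invertibility, and mod-$4$-preservation of $\operatorname*{T}_6$ keep the running transformation subject to Corollary \ref{mod4cond}; (3) now $v_i$ has a unique nonzero entry, in row $j$, and orthogonality (Lemma \ref{ipcond}) forces every other column to vanish in row $j$, so $y_j=x_i$ exactly; deleting row $j$ and column $i$ leaves a linear, invertible, mod-$4$-preserving map on $n-1$ variables (restrict to $x_i=0$ to check mod-$4$-preservation), and we recurse.

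Since no step introduces an ancilla, this also confirms the ancilla-free guarantee claimed, and hence the $O(1)$-ancilla bound of Theorem \ref{compressthm} for this class. The only delicate point, and the closest thing to an obstacle, is step (2) for small $n$: when $n=6$, $|v_i|$ can only be $1$ or $5$, and a single $\operatorname*{T}_6$ application consumes all six coordinates, so one must verify that a non-occurrence position genuinely exists; but that is exactly what the choice $|v_i|<n$ in step (1) secures. Apart from this bookkeeping, the argument is a routine transcription of the $\operatorname*{T}_4$ proof with ``odd weight'' replaced by ``weight $\equiv 1\pmod 4$'' and reduction step $2$ replaced by $4$.
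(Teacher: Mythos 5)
Your proposal is correct and follows essentially the same route as the paper: the recursive column-reduction of Theorem \ref{f4circ}, with $\operatorname*{T}_6$ used to decrease the occurrence count of a chosen variable by $4$ at a time and Corollary \ref{mod4cond} invoked to conclude that a count below $5$ must equal $1$. Your extra bookkeeping (the explicit base case for $n\leq 5$ and the check that a non-occurrence position exists when $n=6$) only fills in details the paper leaves implicit.
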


\begin{proof}
The construction is identical to that of Theorem~\ref{f4circ}, except for the
following change. \ Rather than using $\operatorname*{F}_{4}$\ to reduce the
number of occurrences of some $x_{i}$\ to at most $2$, we now use
$\operatorname*{T}_{6}$ to reduce the number of occurrences of $x_{i}$\ to at
most $4$. \ (If there are $5$ or more occurrences, then $\operatorname*{T}%
_{6}$\ can always decrease the number by $4$.) \ We then appeal to
Corollary~\ref{mod4cond}, which says that $\left\vert v_{i}\right\vert
\equiv1\left(  \operatorname{mod}4\right)  $ for each $i$. \ This implies that
no $x_{i}$ can occur $2$, $3$, or $4$ times in the output vector. \ But that
can only mean that $x_{i}$\ occurs once.
\end{proof}

By Lemma \ref{orthoglin} and Corollary \ref{mod4orthog}, an equivalent way to
state Theorem \ref{t6circ}\ is that $\operatorname*{T}_{6}$ generates
all\ affine transformations that are both mod-$4$-preserving and orthogonal.

All that remains is some \textquotedblleft cleanup work\textquotedblright%
\ (which, again, is not even needed for the proof of Theorem \ref{main}).

\begin{theorem}
\label{cleanupcirc}$\operatorname*{T}_{6}+\operatorname*{NOT}$ generates all
affine transformations that are mod-$4$-preserving (and therefore orthogonal)
in their linear part.

$\operatorname*{T}_{6}+\operatorname*{NOTNOT}$ generates all parity-preserving
affine transformations that are mod-$4$-preserving (and therefore orthogonal)
in their linear part.

$\operatorname*{F}_{4}+\operatorname*{NOT}$ (or equivalently,
$\operatorname*{T}_{4}+\operatorname*{NOT}$) generates all isometries.

$\operatorname*{F}_{4}+\operatorname*{NOTNOT}$ (or equivalently,
$\operatorname*{T}_{4}+\operatorname*{NOTNOT}$) generates all
parity-preserving isometries.

$\operatorname*{NOT}$ generates all degenerate transformations.

$\operatorname*{NOTNOT}$ generates all parity-preserving degenerate transformations.

In none of these cases are any ancilla bits needed.
\end{theorem}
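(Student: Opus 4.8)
The plan is to reduce every claim to the ancilla-free \emph{linear} constructions of Theorems \ref{t4circ} and \ref{t6circ}, together with the elementary fact that installing a constant shift $b$ costs only $\operatorname*{NOT}$ gates---or, in the parity-preserving case, $\operatorname*{NOTNOT}$ gates---and no ancillas.

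First I would dispose of the two parenthetical equivalences. Since $\operatorname*{F}_{k}(x)=\overline{\operatorname*{T}_{k}(x)}$, and complementing all $k$ bits is, for even $k$, a product of $k/2$ $\operatorname*{NOTNOT}$'s up to free bit-swaps, we get $\operatorname*{F}_{4}\in\langle\operatorname*{T}_{4},\operatorname*{NOTNOT}\rangle$ and, symmetrically, $\operatorname*{T}_{4}\in\langle\operatorname*{F}_{4},\operatorname*{NOTNOT}\rangle$; hence $\langle\operatorname*{F}_{4},\operatorname*{NOTNOT}\rangle=\langle\operatorname*{T}_{4},\operatorname*{NOTNOT}\rangle$, and the same argument with $\operatorname*{NOT}$ in place of $\operatorname*{NOTNOT}$ gives $\langle\operatorname*{F}_{4},\operatorname*{NOT}\rangle=\langle\operatorname*{T}_{4},\operatorname*{NOT}\rangle$. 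Thus throughout it suffices to work with $\operatorname*{T}_{4}$ and $\operatorname*{T}_{6}$.

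Next, given an affine $F(x)=Ax\oplus b$ whose linear part $A$ is mod-$4$-preserving (resp.\ orthogonal), Theorem \ref{t6circ} (resp.\ Theorem \ref{t4circ}) produces an ancilla-free circuit for $x\mapsto Ax$; composing it with $\operatorname*{NOT}$ gates applied directly to the output wires in the positions where $b_{i}=1$ yields $F$, still using no ancillas. For the $\operatorname*{NOTNOT}$ variants I would first note that an orthogonal---a fortiori a mod-$4$-preserving---linear map $A$ satisfies $|Ax|\equiv|x|\pmod 2$ (take $x=y$ in the orthogonality relation, or invoke Lemma \ref{ipcond}), so $F$ is parity-preserving exactly when $|b|\equiv 0\pmod 2$; in that case the positions with $b_{i}=1$ can be paired off and flipped using $|b|/2$ applications of $\operatorname*{NOTNOT}$, bit-swaps being free. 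This establishes claims~1--4.

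For the degenerate cases, any degenerate transformation has the form $G(x)=\pi(x)\oplus c$ for a coordinate permutation $\pi$ and a constant $c$: the swapping rule supplies $\pi$ for free, while one $\operatorname*{NOT}$ per position with $c_{i}=1$ supplies the rest, so $\langle\operatorname*{NOT}\rangle$ contains every degenerate transformation; if moreover $G$ is parity-preserving then $|c|$ is even (same parity computation as above), and pairs of $\operatorname*{NOT}$'s---i.e., $\operatorname*{NOTNOT}$ gates---suffice. No step uses ancillas. The only points requiring any care---and the closest thing to an obstacle---are the two class-equality observations, the parity computation $|b|\equiv 0\pmod 2$ that licenses the $\operatorname*{NOTNOT}$ constructions, and the remark that $b$ is installed by $\operatorname*{NOT}$'s acting directly on the output wires rather than by $\operatorname*{CNOT}$'s from a fresh $1$-initialized ancilla (as in Theorem \ref{cnotcirc}), which is what holds the ancilla count at zero.
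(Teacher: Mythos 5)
Your proof is correct and follows essentially the same route as the paper's (one-sentence) proof: apply the ancilla-free linear constructions of Theorems \ref{t4circ} and \ref{t6circ} to the linear part, then install the affine shift $b$ using $\operatorname*{NOT}$ or $\operatorname*{NOTNOT}$ gates directly on the output wires. Your additional checks---the $\operatorname*{F}_{4}$/$\operatorname*{T}_{4}$ equivalences modulo $\operatorname*{NOTNOT}$, the observation that $|b|$ is even in the parity-preserving cases, and the degenerate cases---are correct fillings-in of details the paper leaves implicit.
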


\begin{proof}
As in Theorem \ref{cnnpnotcirc}, we simply apply the relevant construction for
the linear part (e.g., Theorem \ref{t4circ}\ or \ref{t6circ}), then handle the
affine part using $\operatorname*{NOT}$\ or $\operatorname*{NOTNOT}$\ gates.
\end{proof}

\section{The Non-Affine Part\label{NONAFFINE}}

Our goal, in this section, is to prove that there are no non-affine classes
besides the ones listed in Theorem \ref{main}: namely, the conservative
transformations, the parity-respecting transformations,\ the mod-$k$%
-preserving transformations for $k\geq2$, and all transformations.

We will divide our analysis into two parts. \ We first show, in Section
\ref{ABOVEFREDKIN},\ that once a $\operatorname*{Fredkin}$\ gate is available,
matters become fairly simple. \ At that point, the only possibilities are
$\left\langle \operatorname*{Fredkin}\right\rangle $, $\left\langle
\operatorname*{Fredkin},\operatorname*{NOTNOT}\right\rangle $,\ $\left\langle
\operatorname*{Fredkin},\operatorname*{NOT}\right\rangle $, $\left\langle
\operatorname*{C}_{k}\right\rangle $\ for $k\geq3$, and $\left\langle
\operatorname{Toffoli}\right\rangle $. \ Then, in Sections \ref{CONSERV}\ and
\ref{MOD}, we prove the harder result that \textit{every non-affine gate
generates} $\operatorname*{Fredkin}$. \ This, in turn, is broken into three pieces:

\begin{itemize}
\item In Section \ref{GARBSEC}, we reprove a result of Lloyd \cite{lloyd:gate}%
, showing that every non-affine gate is capable of universal computation with garbage.

\item In Section \ref{CONSERV}, we show that every nontrivial conservative
gate generates $\operatorname*{Fredkin}$ (using the result of Section
\ref{GARBSEC}\ as one ingredient).

\item In Section \ref{MOD}, we build on the result of Section \ref{CONSERV},
to show that every non-affine gate\ generates $\operatorname*{Fredkin}$.
\ This requires our first use of lattices, and also draws on some of the
results about inner products and modularity obstructions from Section
\ref{HAMMING}.
\end{itemize}

Summarizing the results of this section, we will obtain the following.

\begin{theorem}
\label{nonaffinedone}Every non-affine gate set generates one of the following
classes: $\left\langle \operatorname*{Fredkin}\right\rangle $, $\left\langle
\operatorname*{C}_{k}\right\rangle $\ for some $k\geq3$,\ $\left\langle
\operatorname*{Fredkin},\operatorname*{NOTNOT}\right\rangle $, $\left\langle
\operatorname*{Fredkin},\operatorname*{NOT}\right\rangle $, or $\left\langle
\operatorname*{Toffoli}\right\rangle $.
\end{theorem}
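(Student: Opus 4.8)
The plan is to split the proof into two halves that mirror the section structure to come. The first half (Section \ref{ABOVEFREDKIN}) pins down every reversible gate class that \emph{already} contains a $\operatorname*{Fredkin}$ gate; the second half (Sections \ref{CONSERV}--\ref{MOD}) shows that \emph{every} non-affine gate set generates a $\operatorname*{Fredkin}$ gate. Theorem \ref{nonaffinedone} then follows by feeding the conclusion of the second half into the first.

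For the first half, suppose $S$ generates $\operatorname*{Fredkin}$, and let $k=k(S)$ be its respecting number (well-defined by Proposition \ref{kgtensor}). If $k=\infty$, every gate of $S$ is conservative, so $\langle S\rangle$ sits inside the conservative class, which is $\langle\operatorname*{Fredkin}\rangle$ by Theorem \ref{fredkincirc}; hence $\langle S\rangle=\langle\operatorname*{Fredkin}\rangle$. If $k\geq3$, then Theorem \ref{noshifter} forbids any gate of $S$ from shifting Hamming weight by a nonzero amount mod $k$, so $S$ is actually mod-$k$-preserving, and $\langle S\rangle$ lies inside the mod-$k$-preserving class; conversely, using $\operatorname*{Fredkin}$ to rearrange the weight changes realized by $S$ one synthesizes a ``weight pump by exactly $k$,'' hence a $\operatorname*{C}_k$ gate, so $\langle S\rangle=\langle\operatorname*{C}_k\rangle$ by Corollary \ref{ckcor}. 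If $k=2$, the same reasoning yields either the parity-preserving class $\langle\operatorname*{Fredkin},\operatorname*{NOTNOT}\rangle$ (when every gate of $S$ is parity-preserving) or, if some gate of $S$ is parity-flipping, the parity-respecting class $\langle\operatorname*{Fredkin},\operatorname*{NOT}\rangle$ of Theorem \ref{fredkinnotcirc}. Finally, if $k=1$, the set of $\operatorname*{Fredkin}$-assisted Hamming-weight changes is a subgroup of $\mathbb{Z}$ equal to all of $\mathbb{Z}$; a short Diophantine argument (like the $\operatorname{COPY}$ construction sketched in Section \ref{TECHNIQUES}) then produces a clean weight change of exactly $1$, hence a $\operatorname*{CNOT}$, and since $\operatorname*{Fredkin}+\operatorname*{CNOT}$ is universal by Theorem \ref{ckcirc}, we get $\langle S\rangle=\langle\operatorname*{Toffoli}\rangle$.

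The second half proceeds in three stages. Stage one (Section \ref{GARBSEC}) is Lloyd's theorem: a non-affine $G$ has an output coordinate depending non-affinely on its inputs, and such a Boolean function together with $\oplus$ and the constants already generates all Boolean functions, so $G$ computes $\operatorname{AND}$ (and everything else) \emph{with garbage}. Stage two (Section \ref{CONSERV}) upgrades this to a garbage-free construction when $G$ is conservative: via the dual-rail encoding one turns ``$\operatorname{AND}$ with garbage'' into a genuine controlled operation that conserves Hamming weight, and then bootstraps a full $\operatorname*{Fredkin}$ --- conservativity being exactly what guarantees the garbage is balanced enough to be uncomputed.

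Stage three (Section \ref{MOD}) handles an arbitrary non-affine $G$ that need not be conservative, and this is where I expect the real difficulty. One restricts $G$ to inputs built only from copies of a bit $x$, copies of $\overline{x}$, and constants $0,1$; the vectors of $(\#x,\#\overline{x},\dots)$-changes realizable by $G$ and its tensor powers span a sublattice $\Lambda\subseteq\mathbb{Z}^{3}$, and one must show $\Lambda$ contains the distinguished point encoding ``change the inner product of two inputs by exactly $1$,'' which suffices (via the Stage-two machinery) to synthesize a $\operatorname*{Fredkin}$. The only conceivable obstruction is a modular one --- $\Lambda$ trapped inside a proper congruence sublattice --- and this is ruled out by assembling the impossibility results of Section \ref{HAMMING}: Corollary \ref{nonlinip} (no non-conservative nonlinear gate preserves inner products mod any $k\geq2$), Theorem \ref{noshifter} (no mod-shifters for $k\geq3$), together with the Chinese Remainder Theorem to glue inner-product-mod-$p$ changes over distinct primes $p$ into an inner-product change of exactly $1$. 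The bulk of the effort is the careful setup of $\Lambda$, the split into the already-conservative subcase (handed back to Stage two) versus the genuinely non-conservative subcase, and the verification that the extracted lattice point really compiles into a clean $\operatorname*{Fredkin}$ circuit; once the correct lattice is in hand, the number theory itself is brief.
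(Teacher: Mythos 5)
Your architecture matches the paper's: first classify everything above $\operatorname*{Fredkin}$ by the respecting number $k(S)$ (this is Theorem \ref{fredking} and Corollary \ref{fredqueen}), then show every non-affine gate generates $\operatorname*{Fredkin}$ in three stages (Lloyd's garbage lemmas, the conservative case, the non-conservative case via a lattice in $\mathbb{Z}^{3}$ plus the Chinese Remainder Theorem and Corollary \ref{nonlinip}). The first half and Stages one and three are faithful sketches of what the paper does, modulo one small slip in Stage three: the relevant lattice is not the $(\#x,\#\overline{x},\dots)$ lattice of single-input weight changes (that one appears only in the introduction's $\operatorname{COPY}$ discussion and is explicitly not used in the final proof), but the lattice of changes to the \emph{type} $(a,b,c)$ of a \emph{pair} of inputs, counting $01$, $10$, and $11$ positions; the target point is $(1,1,-1)$, which yields a weak quasi-Fredkin gate. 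Your target (an inner-product change of exactly $1$) and your list of number-theoretic inputs are nonetheless the right ones.

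The genuine gap is Stage two. You assert that dual-rail encoding plus the observation that ``conservativity guarantees the garbage is balanced enough to be uncomputed'' upgrades $\operatorname{AND}$-with-garbage to a clean $\operatorname*{Fredkin}$. But the standard Bennett compute--copy--uncompute pattern is unavailable here: a conservative gate set has no $\operatorname{CNOT}$ and cannot copy a bit, so after producing $x,\operatorname{gar}(x),F(x)$ and uncomputing the garbage you are stuck holding $x,F(x)$ with no way to erase the extra $x$, and ``balanced garbage'' does not by itself tell you how to route it away reversibly. The paper's solution is a specific gadget you have not identified: the \emph{catalyzer} of Lemma \ref{genswap}, a programmable swap network built from a strong quasi-Fredkin gate (extracted combinatorially from any nontrivial conservative gate in Lemma \ref{conquasifred}) that maps $0^{n}1^{n}$ to any prescribed weight-$n$ string while leaving its program string intact. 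Lemma \ref{catalyst} then uses the catalyzer forward to materialize $F(x)\overline{F(x)}$ and backward to dissolve $x,\overline{F(x)}$ into $0^{n}1^{n}$ --- and it is exactly here, in checking $\lvert x,\overline{F(x)}\rvert=n$, that conservativity of $F$ enters. Without this (or an equivalent) mechanism, Stage two does not go through, and since Stage three also terminates by handing a strong quasi-Fredkin gate to the same machinery, the gap propagates to the non-conservative case as well.
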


\subsection{Above Fredkin\label{ABOVEFREDKIN}}

Our goal, in this section, is to classify all reversible gate classes
containing $\operatorname*{Fredkin}$. \ We already know from Theorem
\ref{fredkincirc}\ that $\operatorname*{Fredkin}$\ generates all conservative
transformations. \ We will prove a substantial generalization of that result.
\ First, however, we need a proposition that will also be used later in the
paper. \ Given a reversible transformation $G$, let%
\[
W\left(  G\right)  :=\left\{  \left\vert G\left(  x\right)  \right\vert
-\left\vert x\right\vert :x\in\left\{  0,1\right\}  ^{n}\right\}
\]
be the set of possible changes that $G$ can cause to the Hamming weight of its input.

\begin{proposition}
\label{whatdanielneeds}Let $G$ be any non-conservative gate. \ Then for all
integers $q$, there exists a $t$ such that $q\cdot k\left(  G\right)  \in
W\left(  G^{\otimes t}\right)  $.
\end{proposition}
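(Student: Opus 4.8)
The plan is to reduce the statement to an elementary fact about numerical semigroups. Write $W := W(G)$, and note that since $G^{\otimes t}$ applies $G$ to $t$ disjoint blocks of bits, the change it causes to the Hamming weight is the sum of the changes on the individual blocks; hence $W(G^{\otimes t}) = \{\,w_1 + \cdots + w_t : w_1,\dots,w_t \in W\,\}$, the $t$-fold sumset of $W$. Consequently $\mathcal{S} := \bigcup_{t\ge 1} W(G^{\otimes t})$ is exactly the additive submonoid of $\mathbb{Z}$ generated by $W$, and the proposition asks precisely that $q\cdot k(G) \in \mathcal{S}$ for every integer $q$.

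Next I would record two easy facts. First, because $G$ is non-conservative, not all of the numbers $|G(x)|-|x|$ vanish, yet by reversibility they sum to $0$ over all $x\in\{0,1\}^n$; hence $W$ contains a strictly positive element $a$ and a strictly negative element $-b$ (in particular $|W|\ge 2$). Second, by definition $k(G)$ is the largest $k$ for which all elements of $W$ are mutually congruent mod $k$, i.e.\ $k(G) = \gcd\{\,w-w' : w,w'\in W\,\}$ (a positive integer since $|W|\ge 2$); and $\gcd(W)$ divides every $w$, hence every difference $w-w'$, hence $k(G)$. Since $\mathcal{S}\subseteq \gcd(W)\mathbb{Z}$ is obvious and $\gcd(W)\mid k(G)$, it therefore suffices to prove the equality $\mathcal{S} = \gcd(W)\cdot\mathbb{Z}$; then $q\cdot k(G)\in \gcd(W)\mathbb{Z} = \mathcal{S}$ for every $q$.

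The core step, and the one I expect to need the most care, is this structure lemma: a finite set $W\subseteq\mathbb{Z}$ containing both a positive and a negative element generates $\gcd(W)\mathbb{Z}$ as an additive monoid. After dividing through by $\gcd(W)$ one may assume $\gcd(W)=1$ and must show $\mathcal{S}=\mathbb{Z}$. Let $\mathcal{S}^{+} := \mathcal{S}\cap\mathbb{Z}_{>0}$; this is a sub-semigroup of $(\mathbb{Z}_{>0},+)$ whose gcd is $1$, because for any $w\in W$ and large $N$ both $w+Na$ and $w+(N+1)a$ lie in $\mathcal{S}^{+}$, so $\gcd(\mathcal{S}^{+})$ divides their difference $a$ and hence divides $w$ too, forcing $\gcd(\mathcal{S}^{+})\mid\gcd(W)=1$. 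By the elementary Frobenius coin-problem fact that a numerical semigroup of gcd $1$ omits only finitely many nonnegative integers, $\mathcal{S}^{+}$ contains every sufficiently large positive integer; symmetrically, using $-b$ in place of $a$, the set $\mathcal{S}^{-} := \mathcal{S}\cap\mathbb{Z}_{<0}$ contains every sufficiently negative integer. Since $\mathcal{S}$ is closed under addition, writing an arbitrary $m\in\mathbb{Z}$ as $m = x + (m-x)$ with $x\in\mathcal{S}^{+}$ chosen large enough that $m-x\in\mathcal{S}^{-}$ shows $m\in\mathcal{S}$. Hence $\mathcal{S}=\mathbb{Z}$, and undoing the rescaling yields $\mathcal{S}=\gcd(W)\mathbb{Z}$.

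In summary, everything outside the structure lemma is bookkeeping about $W(G^{\otimes t})$, reversibility, and the definition of $k(G)$; the one place genuine work happens is the passage from ``$W$ has gcd $1$ and contains elements of both signs'' to ``$W$ generates all of $\mathbb{Z}$ as a monoid,'' which is where the coin-problem fact does the lifting.
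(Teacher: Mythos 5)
Your proof is correct and follows essentially the same route as the paper's: set $\gamma=\gcd(W(G))$, observe $\gamma\mid k(G)$, use reversibility (the values $\left\vert G(x)\right\vert-\left\vert x\right\vert$ sum to zero) to get elements of both signs in $W(G)$, and conclude that the iterated sumsets $W(G^{\otimes t})$ cover every multiple of $\gamma$. The only difference is that the paper asserts this last step in a single sentence, whereas you supply the justification explicitly via the Frobenius coin-problem argument---a legitimate filling-in of detail rather than a different approach.
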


\begin{proof}
Let $\gamma$\ be the gcd of the elements in $W\left(  G\right)  $. \ Then
clearly $G$\ is mod-$\gamma$-respecting. \ By Proposition \ref{chinese}, this
means that $\gamma$\ must divide $k\left(  G\right)  $.\footnote{Indeed, by
using Theorem \ref{noshifter}, one can show that $\gamma=k\left(  G\right)  $,
except in the special case that $G$\ is parity-flipping, where we have
$\gamma=1$\ and $k\left(  G\right)  =2$.}

Now by reversibility, $W\left(  G\right)  $\ has both positive and negative
elements. \ But this means that we can find any integer multiple of $\gamma
$\ in \textit{some} set of the form%
\[
W\left(  G^{\otimes t}\right)  =\left\{  w_{1}+\cdots+w_{m}:w_{1},\ldots
,w_{m}\in W\left(  G\right)  \right\}  .
\]
Therefore we can find any integer multiple of $k\left(  G\right)  $\ in some
$W\left(  G^{\otimes t}\right)  $\ as well.
\end{proof}

We can now characterize all reversible gate sets that contain
$\operatorname*{Fredkin}$.

\begin{theorem}
\label{fredking}Let $G$ be any gate.\ \ Then$\ \operatorname*{Fredkin}+G$
generates all mod-$k\left(  G\right)  $-preserving transformations (including
in the cases $k\left(  G\right)  =1$, in which case $\operatorname*{Fredkin}%
+G$\ generates all transformations, and \ $k\left(  G\right)  =\infty$, in
which case $\operatorname*{Fredkin}+G$\ generates all conservative transformations).
\end{theorem}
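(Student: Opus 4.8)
The plan is to handle the three regimes $k(G)=1$, $k(G)=\infty$, and $2\le k(G)<\infty$ in a unified way, reducing each to the circuit-construction results of Section~\ref{CONSTRUC} (Theorems~\ref{fredkincirc}, \ref{ckcirc}, and~\ref{toffolicirc}). The case $k(G)=\infty$ means $G$ is conservative, so $\langle\operatorname*{Fredkin}+G\rangle$ is certainly contained in the conservative class, and it contains all conservative transformations already by Theorem~\ref{fredkincirc}; here nothing new is needed. For the finite cases the key observation is that, by definition of $k(G)$, the gate $G$ is mod-$k(G)$-preserving possibly up to a fixed shift (it is mod-$k(G)$-respecting), and the class $\langle\operatorname*{Fredkin}+G\rangle$ is visibly \emph{contained} in the class of mod-$k(G)$-respecting transformations (since $\operatorname*{Fredkin}$ is conservative hence mod-$k$-preserving for every $k$). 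So the real content is the reverse containment: I must exhibit enough structure inside $\langle\operatorname*{Fredkin}+G\rangle$ to invoke Theorem~\ref{ckcirc}, which says $\operatorname*{Fredkin}+\operatorname{CC}_k$ generates all mod-$k$-preserving transformations.

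First I would reduce to producing, from $\operatorname*{Fredkin}$ and $G$, a gate that changes Hamming weight by exactly $+k(G)$ on some input while being conservative elsewhere in a controllable way — morally a $\operatorname{C}_{k(G)}$-like gate. The tool is Proposition~\ref{whatdanielneeds}: since $G$ is non-conservative in the finite case (if $G$ were conservative we would be in the $k(G)=\infty$ branch), for every integer $q$ there is a tensor power $G^{\otimes t}$ with $q\cdot k(G)\in W(G^{\otimes t})$; taking $q=1$ gives an input $x^\ast$ with $|G^{\otimes t}(x^\ast)|-|x^\ast|=k(G)$. Now I would use $\operatorname*{Fredkin}$ gates (which are freely available) to route the bits so that this weight change can be applied conditionally: conjugating $G^{\otimes t}$ by $\operatorname*{Fredkin}$-controlled swaps that move $x^\ast$ (and its complement) into position when a control bit is $1$, and restoring the ancilla pattern otherwise — exactly the trick used in Proposition~\ref{ckcck} to build $\operatorname{CC}_k$ from $\operatorname*{Fredkin}+\operatorname{C}_k$. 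This should yield a partial gate realizing the $\operatorname{CC}_{k(G)}$ behavior on the relevant inputs, with all ancillas returned to their starting values.

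Once $\operatorname{CC}_{k(G)}$ (or even just its essential weight-$+k(G)$ controlled action) is in hand, Theorem~\ref{ckcirc} finishes the finite case $k(G)\ge 2$, and the special sub-case $k(G)=1$ is handled by the remark following Theorem~\ref{ckcirc}, namely $\operatorname*{Fredkin}+\operatorname{CC}_1=\operatorname*{Fredkin}+\operatorname{CNOT}$ generates all transformations — and $\operatorname{CC}_1$ is just $\operatorname{CNOT}$, which a non-conservative gate with $k(G)=1$ supplies the needed weight change for via Proposition~\ref{whatdanielneeds} with $q=1$. The step I expect to be the main obstacle is the routing/conditioning argument: I need to turn the raw fact ``$G^{\otimes t}$ shifts the weight of one specific string $x^\ast$ by $k(G)$'' into a clean \emph{controlled} gate whose garbage is fully uncomputed, and I must be careful that the bits of $x^\ast$ and the ancillas can be prepared and cleaned up using only $\operatorname*{Fredkin}$ and swaps, without inadvertently relying on a $\operatorname{NOT}$. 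Making this bookkeeping airtight — and confirming it costs only $O(1)$ ancilla bits so it meshes with Theorem~\ref{compressthm} — is where the care goes; the algebra and number theory are already done in Section~\ref{HAMMING} and Proposition~\ref{whatdanielneeds}.
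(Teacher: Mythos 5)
Your skeleton matches the paper's: reduce to producing $\operatorname{CC}_{k(G)}$ so that Theorem~\ref{ckcirc} finishes, and use Proposition~\ref{whatdanielneeds} to find a tensor power of $G$ that shifts the Hamming weight of some specific input by exactly $k(G)$. But the step you yourself flag as ``the main obstacle'' --- turning that raw weight shift into a clean, controlled, garbage-free gate --- is the actual content of the proof, and your proposal does not supply the two ideas needed to carry it out. First, a $\operatorname{C}_k$-type gate must map $0^k\mapsto 1^k$ \emph{and} $1^k\mapsto 0^k$ within a single circuit, and the control condition you test (it must be something like ``$x_1=\cdots=x_k$,'' which is symmetric under this swap, or else the control ancilla cannot be uncomputed afterward) has to be preserved by the action. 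A lone application of $G^{\otimes t}$ only moves one string up in weight; it gives you no handle on the reverse direction. The paper resolves this by replacing $G$ with the involution $H=\mathrm{swap}\circ(G\otimes G^{-1})$, so that $H^{\otimes t}$ exchanges a pair $y\leftrightarrow z$ with $|z|=|y|+k$, and the same conjugated circuit $R$ satisfies both $R(y)=z'$ and $R(z')=y$. Second, when the control is off the register on which $R$ is unconditionally applied must contain something that $R$ \emph{fixes}, or the gate fails to act as the identity and the ancillas are not restored; an arbitrary $G^{\otimes t}$ need not have any usable fixed point, whereas $H$ fixes every string of the form $(u,G(u))$, which is exactly what the paper loads into the idle register $w'$. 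Your phrase ``restoring the ancilla pattern otherwise'' is precisely this issue, and it does not follow from the Proposition~\ref{ckcck}-style conditional-swap trick alone.

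There is also a smaller piece of routing you gloss over: the string $z=H^{\otimes t}(y)$ must be brought into the normal form $0^{a-k}1^{b+k}$ by a Fredkin circuit $U$ that simultaneously fixes $y$ (the paper controls $U$ on a coordinate where $y$ and $z$ differ), so that the input bits $x_1\ldots x_k$ can be embedded as $0^{a-k}x_1\ldots x_k 1^b$, which equals $y$ when $x=0^k$ and $z'$ when $x=1^k$. None of these steps is routine bookkeeping; together they are the proof. As written, your proposal is a correct plan with the hard part deferred, so I would count it as having a genuine gap rather than being a complete argument.
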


\begin{proof}
Let $k=k\left(  G\right)  $. \ If $k=\infty$\ then we are done by Theorem
\ref{fredkincirc}, so assume $k$ is finite. \ We will assume without loss of
generality that $G$\ is mod-$k$-preserving. \ By Theorem \ref{noshifter}, the
only other possibility is that $G$ is parity-flipping, but in that case we can
simply repeat everything below with $G\otimes G$, which is parity-preserving
and satisfies $k\left(  G\otimes G\right)  =2$, rather than with $G$ itself.

By Theorem \ref{ckcirc}, it suffices to use $\operatorname*{Fredkin}+G$\ to
generate the $\operatorname{CC}_{k}$\ gate. \ Let $H$ be the gate $G\otimes
G^{-1}$, followed by a swap of the two input registers. \ Observe that $H^{2}$
is the identity. \ Also, by Proposition \ref{kgtensor},%
\[
k\left(  H\right)  =\gcd\left(  k\left(  G\right)  ,k\left(  G^{-1}\right)
\right)  =k.
\]
So by Proposition \ref{whatdanielneeds}, there exists a positive integer $t$,
as well as inputs $y=y_{1}\ldots y_{n}$ and $z=z_{1}\ldots z_{n}$ such that
$z=H^{\otimes t}\left(  y\right)  $ (and $y=H^{\otimes t}\left(  z\right)  $,
since $\left(  H^{\otimes t}\right)  ^{2}=I$), and $\left\vert z\right\vert
=\left\vert y\right\vert +k$.

We can assume without loss of generality that $y$ has the form $0^{a}1^{b}%
$---i.e., that its bits are in sorted order. \ We would like to sort the bits
of $z$ as well. \ Notice that, since $\left\vert z\right\vert >\left\vert
y\right\vert $, there is some $i\in\left[  n\right]  $ such that $y_{i}=0$ and
$z_{i}=1$. \ So we can easily design a circuit $U$ of $\operatorname{Fredkin}$
gates, controlled by bit $i$, which reorders the bits of $z$ so that%
\[
z^{\prime}:=U\left(  z\right)  =0^{a-k}1^{b+k}%
\]
whereas $U\left(  y\right)  =y$.

Observe that $H^{\otimes t}$ has a large number of fixed points: we have
$H\left(  u,G\left(  u\right)  \right)  =\left(  u,G\left(  u\right)  \right)
$ for any $u$; hence any string of the form $u_{1},G\left(  u_{1}\right)
,\ldots,u_{t},G\left(  u_{t}\right)  $\ is a fixed point of $H^{\otimes t}$.
\ Call one of these fixed points $w$, and let $w^{\prime}:=U\left(  w\right)
$.

We now consider a circuit $R$\ that applies $U^{-1}$, followed by $H^{\otimes
t}$, followed by $U$. \ This $R$ satisfies the following identities:%
\begin{align*}
R\left(  y\right)   &  =U\left(  H^{\otimes t}\left(  U^{-1}\left(  y\right)
\right)  \right)  =U\left(  H^{\otimes t}\left(  y\right)  \right)  =U\left(
z\right)  =z^{\prime}.\\
R\left(  z^{\prime}\right)   &  =U\left(  H^{\otimes t}\left(  U^{-1}\left(
z^{\prime}\right)  \right)  \right)  =U\left(  H^{\otimes t}\left(  z\right)
\right)  =U\left(  y\right)  =y.\\
R\left(  w^{\prime}\right)   &  =U\left(  H^{\otimes t}\left(  U^{-1}\left(
w^{\prime}\right)  \right)  \right)  =U\left(  H^{\otimes t}\left(  w\right)
\right)  =U\left(  w\right)  =w^{\prime}.
\end{align*}
Using $R$, we now construct $\operatorname{CC}_{k}\left(  x_{1}\ldots x_{k},
c\right)  $. \ Let $A$ and $B$ be two $n$-bit registers, initialized to
$A:=w^{\prime}$ and $B:=0^{a-k}x_{1}\ldots x_{k}1^{b}$. \ Also, let
$q\overline{q}$\ be two ancilla bits in dual-rail representation, initialized
to $q\overline{q}=01$. \ Then to apply $\operatorname{CC}_{k}$, we do the following:

\begin{enumerate}
\item Swap $q$ with $\overline{q}$\ if and only if $x_{1}=\cdots=x_{k}$ and
$c=1$.

\item Swap $A$\ with $B$ if and only if $q=1$.

\item Apply $R$ to the $A$ register.

\item Swap $A$\ with $B$ if and only if $q=1$.

\item Swap $q$ with $\overline{q}$\ if and only if $x_{1}=\cdots=x_{k}$ and
$c=1$.
\end{enumerate}

Here each conditional swap is implemented using $\operatorname*{Fredkin}%
$\ gates; recall from Theorem \ref{fredkincirc}\ that $\operatorname*{Fredkin}%
$\ generates every conservative transformation.

It is not hard to check that the above sequence maps $x_{1}\ldots x_{k}=0^{k}%
$\ to $1^{k}$ and $x_{1}\ldots x_{k}=1^{k}$\ to $0^{k}$ if $c = 1$, otherwise
it maps the inputs to themselves. \ Furthermore, the ancilla bits are returned
to their original states in all cases, since $w^{\prime}$\ is a fixed point of
$R$. \ Therefore we have implemented $\operatorname{CC}_{k}$.
\end{proof}

Theorem \ref{fredking}\ has the following corollary.

\begin{corollary}
\label{fredqueen}Let $S$\ be any non-conservative gate set. \ Then
$\operatorname*{Fredkin}+S$ generates one of the following classes:
$\left\langle \operatorname*{Fredkin},\operatorname*{NOTNOT}\right\rangle $,
$\left\langle \operatorname*{Fredkin},\operatorname*{NOT}\right\rangle $,
$\left\langle \operatorname*{C}_{k}\right\rangle $\ for some $k\geq3$, or
$\left\langle \operatorname{Toffoli}\right\rangle $.
\end{corollary}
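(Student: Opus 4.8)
The plan is to collapse the gate set $S$ to a single gate by tensoring, invoke Theorem \ref{fredking}, and then read off the class by a short case analysis on the respecting number $k(S)$.

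First I would note that, since $S$ is non-conservative, $k(S)$ is a finite positive integer, and that (because the gcd of any set of positive integers is already the gcd of some finite subset) there exist gates $G_{1},\ldots,G_{r}\in S$ with $\gcd(k(G_{1}),\ldots,k(G_{r}))=k(S)$. Setting $G^{*}:=G_{1}\otimes\cdots\otimes G_{r}$, Proposition \ref{kgtensor} gives $k(G^{*})=k(S)=:k$, and $G^{*}\in\langle S\rangle$ since tensor products are obtained from the extension, swapping, and composition rules. Applying Theorem \ref{fredking} to $G^{*}$ then shows that $\operatorname*{Fredkin}+S$ generates \emph{all} mod-$k$-preserving transformations; this settles the ``contains'' direction uniformly, so it remains only to establish the reverse inclusion case by case.

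Next I would split on $k$. If $k=1$, ``mod-$1$-preserving'' just means all transformations, so $\langle\operatorname*{Fredkin},S\rangle=\langle\operatorname*{Toffoli}\rangle$ by Theorem \ref{toffolicirc}. If $k\geq3$, every $G\in S$ satisfies $k\mid k(G)$, hence is mod-$k$-respecting by Proposition \ref{chinese}, hence mod-$k$-preserving by Theorem \ref{noshifter}; since $\operatorname*{Fredkin}$ is conservative, the whole class is mod-$k$-preserving, so it equals the mod-$k$-preserving class, which is $\langle\operatorname*{C}_{k}\rangle$ by Corollary \ref{ckcor}. The only genuinely interesting case is $k=2$: every $G\in S$ is then parity-respecting, hence parity-preserving or parity-flipping. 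If none is parity-flipping, then $\langle\operatorname*{Fredkin},S\rangle$ is generated by parity-preserving gates, so it is contained in — and hence, by the ``contains'' direction, equal to — the parity-preserving class $\langle\operatorname*{Fredkin},\operatorname*{NOTNOT}\rangle$. If some $G\in S$ is parity-flipping on $m$ bits, I would extract a $\operatorname*{NOT}$ as follows: the gate $N:=\operatorname*{NOT}\otimes\operatorname{id}_{m-1}$ on $m$ bits is parity-flipping, so $N\circ G^{-1}$ is parity-preserving and hence already lies in $\langle\operatorname*{Fredkin},S\rangle$ (which contains every parity-preserving transformation); composing with $G\in\langle\operatorname*{Fredkin},S\rangle$ gives $N=(N\circ G^{-1})\circ G\in\langle\operatorname*{Fredkin},S\rangle$, and the ancilla rule applied to $N$ with ancilla value $0^{m-1}$ yields $\operatorname*{NOT}$ itself. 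Thus $\langle\operatorname*{Fredkin},S\rangle\supseteq\langle\operatorname*{Fredkin},\operatorname*{NOT}\rangle$, which is the parity-respecting class by Theorem \ref{fredkinnotcirc}; and since every gate of $S$ is parity-respecting, this inclusion is an equality.

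The main (essentially only) obstacle is the parity-flipping subcase of $k=2$: one must be certain that a full $1$-bit $\operatorname*{NOT}$ can be recovered from an \emph{arbitrary} parity-flipping gate together with the parity-preserving transformations already in hand, and the composition trick $N=(N\circ G^{-1})\circ G$ is the clean way to do this. Everything else is bookkeeping with the respecting number and direct appeals to Theorems \ref{fredking}, \ref{noshifter}, \ref{toffolicirc}, \ref{fredkinnotcirc}, and Corollary \ref{ckcor}.
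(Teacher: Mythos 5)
Your proposal is correct and follows essentially the same route as the paper: reduce $S$ to a single gate with $k(G)=k(S)$ via Proposition \ref{kgtensor}, apply Theorem \ref{fredking} for the ``contains'' direction, and split on $k(S)\in\{1,2,\geq 3\}$, using Theorem \ref{noshifter} to isolate the parity-flipping subcase. Your handling of that subcase --- pairing $\operatorname{NOT}\otimes\operatorname{id}_{m-1}$ with $G^{-1}$ to land in the parity-preserving class and then recomposing with $G$ --- is the same trick the paper uses (there applied to an arbitrary parity-flipping $F$ via $F\otimes G^{-1}$), so there is nothing substantive to flag.
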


\begin{proof}
We know from Proposition \ref{kgtensor}\ that $S$\ generates a single gate $G$
such that $k\left(  G\right)  =k\left(  S\right)  $. \ If $k\left(  S\right)
\geq3$, then Theorem \ref{fredking} implies that $\operatorname*{Fredkin}%
+G$\ generates all $k\left(  S\right)  $-preserving\ transformations, which
equals $\left\langle \operatorname*{C}_{k\left(  S\right)  }\right\rangle
$\ by Corollary \ref{ckcor}. \ If $k\left(  S\right)  =2$\ and $S$\ is
parity-preserving, then Theorem \ref{fredking} implies that
$\operatorname*{Fredkin}+G$\ generates all parity-preserving\ transformations,
which equals $\left\langle \operatorname*{Fredkin},\operatorname*{NOTNOT}%
\right\rangle $\ by Proposition \ref{cc2}. \ If $k\left(  S\right)  =1$, then
Theorem \ref{fredking} implies that $\operatorname*{Fredkin}+G$\ generates all
transformations, which equals $\left\langle \operatorname{Toffoli}%
\right\rangle $\ by Theorem \ref{toffolicirc}.

By Theorem \ref{noshifter}, the one remaining case is that $k\left(  S\right)
=2$\ and some $G\in S$\ is parity-flipping. \ By Theorem \ref{fredking},
certainly $\operatorname*{Fredkin}+G$\ at least generates all
parity-preserving\ transformations. \ Furthermore, let $F$\ be any
parity-flipping transformation. \ Then $F\otimes G^{-1}$\ is
parity-preserving. \ So we can use $\operatorname*{Fredkin}+G$\ to implement
$F\otimes G^{-1}$, then compose with $G$ itself to get $F$. \ Therefore we
generate all parity-flipping transformations, which equals $\left\langle
\operatorname*{Fredkin},\operatorname*{NOT}\right\rangle $\ by Theorem
\ref{fredkinnotcirc}.
\end{proof}

\subsection{Computing with Garbage\label{GARBSEC}}

For completeness, in this section we reprove some lemmas first shown by Seth
Lloyd \cite{lloyd:gate} in an unpublished 1992 technical
report,\footnote{Prompted by the present work, Lloyd has recently\ posted his
1992 report to the arXiv.} and later rediscovered by Kerntopf et
al.\ \cite{kerntopf} and De Vos and Storme \cite{devos}. \ We will use these
lemmas to show the power of non-affine gates.

Recall the notion of \textit{generating with garbage} from Section
\ref{ALTGEN}.

\begin{lemma}
[\cite{lloyd:gate,devos}]\label{lloydnot}Every nontrivial reversible gate
$G$\ generates $\operatorname*{NOT}$\ with garbage.
\end{lemma}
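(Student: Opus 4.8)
The plan is to show that any nontrivial reversible gate $G$, when restricted to an appropriately chosen pair of inputs, exhibits "$\NOT$-like" behavior on some output bit, after which we discard the other output bits as garbage. Recall that $G$ being nontrivial means it is not merely a permutation of its input bits. I would first isolate the consequence of nontriviality we actually need: there exist two input strings $u, v \in \{0,1\}^k$ that differ in at least one coordinate such that $G(u)$ and $G(v)$ differ in \emph{more than just} the coordinates where $u$ and $v$ differ, or more precisely, such that the map $u \mapsto G(u)$ does not act as a coordinate permutation on the ``line'' between them. Concretely, the cleanest route is: since $G$ is not a bit-permutation, there is some input $x$ and some coordinate $i$ such that flipping $x_i$ changes $G(x)$ in either zero coordinates or at least two coordinates (a genuine bit-permutation would change it in exactly one). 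By injectivity of $G$, flipping a single input bit cannot leave $G(x)$ unchanged, so in fact flipping $x_i$ must change $G(x)$ in at least two output coordinates, or in exactly one coordinate but in a way that, combined across different choices of $x$, is not consistent with a fixed permutation.

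Granting such a pair $u, v$ (with $v = u \oplus e_i$ for a suitable $i$) for which $G(u)$ and $G(v)$ differ in a coordinate $j$ where ``the control is $x_i$'': I would then set all the other input bits to the constant values they take in $u$ (equivalently in $v$, since $u$ and $v$ agree outside coordinate $i$), treat $x_i$ as the single ``live'' bit $x$, and read off output coordinate $j$. This gives a partial function $x \mapsto (\text{bit } j \text{ of } G(x, \text{const}))$, which takes the value $G(u)_j$ when $x = u_i$ and the value $G(v)_j = \overline{G(u)_j}$ when $x = v_i = \overline{u_i}$. Composing with $\NOT$ gates on the input and output if necessary to normalize $u_i$ to $0$ and $G(u)_j$ to $0$ (these normalizations are free since we are only claiming generation \emph{with garbage}, and in any case a single external $\NOT$ is cheap — though one should be careful here, since the whole point is to \emph{produce} a $\NOT$; instead I would simply choose the labeling of the output value so that the map $x \mapsto \overline{x}$ appears, possibly by reading $\overline{\text{bit }j}$ instead of bit $j$), we obtain exactly $\NOT$ on the live bit, with the remaining $k-1$ output bits of $G$ serving as $\operatorname{gar}(x)$ and the remaining $k-1$ input bits serving as the fixed ancilla string $a$.

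The main obstacle, and the step requiring the most care, is the very first one: extracting from ``$G$ is not a bit-permutation'' a \emph{single output coordinate $j$ and single input coordinate $i$} witnessing nontrivial dependence, cleanly enough that fixing the other inputs to constants yields a genuinely $2$-valued (hence $\NOT$-like) partial function rather than a constant function. The subtlety is that $G(x)_j$ restricted to a single live bit $x_i$ is either constant, the identity $x \mapsto x_i$, or the negation $x \mapsto \overline{x_i}$; I need to rule out the first possibility for \emph{some} choice of $(i, j)$ and \emph{some} setting of the frozen bits, while simultaneously ensuring it is not the identity for the particular $(i,j)$ I pick (or, if it is the identity for every achievable witness, then $G$ would after all be a coordinate permutation — this is the argument that must be made airtight). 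A convenient way to organize this: consider the ``influence graph'' recording, for each input $x$ and each pair $(i,j)$, whether flipping $x_i$ flips $G(x)_j$; nontriviality of $G$ forces this pattern to deviate from that of any permutation matrix, and a short case analysis (either some $G(x)_j$ depends on two input bits, or the ``permutation'' itself varies with $x$) produces the desired $(i,j)$ together with the frozen values. Once that combinatorial core is pinned down, the rest — freezing bits, discarding garbage, reading off the $\NOT$ — is immediate from the definition of generation with garbage in Section~\ref{ALTGEN}.
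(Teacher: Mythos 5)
Your skeleton is the right one, and it is in fact the paper's: finding an input pair $u$, $v=u\oplus e_i$ and an output coordinate $j$ with $G(u)_j=\overline{u_i}$ and $G(v)_j=\overline{v_i}$ is exactly the statement that the $j$-th output function $f_j$ is \emph{non-monotone}, and once such a witness exists, freezing the other inputs and discarding the other outputs gives $\operatorname{NOT}$ with garbage. But the crux --- which you correctly flag as ``the argument that must be made airtight'' --- is left unproven, and the route you sketch toward it would fail. You need to show that if every dependence is ``identity-like'' (equivalently, every $f_j$ is monotone), then $G$ is a bit-permutation. Your proposed ``influence graph'' dichotomy does not deliver this: an output coordinate of a reversible gate can depend on two or more input bits while still being monotone (e.g.\ a majority function can occur as one coordinate of a reversible gate), so ``some $G(x)_j$ depends on two input bits'' does not produce a negation-like witness at that $(i,j)$, and ``the permutation varies with $x$'' likewise does not obviously hand you one. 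The paper closes this gap with a poset argument: if all $f_j$ are monotone, reversibility forces $G(0^n)=0^n$, then forces the weight-$1$ strings to be permuted among themselves, and then by induction on Hamming weight forces $G$ to be exactly a coordinate permutation, contradicting nontriviality. Some argument of this kind is genuinely needed; it is not a ``short case analysis.''

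Two smaller problems. First, the suggestion to ``read $\overline{\text{bit }j}$ instead of bit $j$'' is circular: the physical output bit is $G(x)_j$, and if $G(x)_j=x_i$ you have produced a copy of $x_i$, not its negation; complementing it would require the very $\operatorname{NOT}$ you are constructing. You must find a witness where the negation itself appears, which is again why non-monotonicity (not mere dependence) is the right notion. Second, your opening claim that non-bit-permutations must flip at least two output coordinates under some single-bit input flip, ``or be inconsistent with a fixed permutation,'' omits the case where $G$ is a Hamming-cube isometry $x\mapsto\pi(x)\oplus c$ with $c\neq 0$: there every flip moves exactly one output coordinate consistently, yet $G$ is not a bit-permutation (it contains $\operatorname{NOT}$s outright, so the lemma holds trivially there, but your stated dichotomy misses it).
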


\begin{proof}
Let $G\left(  x_{1}\ldots x_{n}\right)  =y_{1}\ldots y_{n}$\ be nontrivial,
and let $y_{i}=f_{i}\left(  x_{1}\ldots x_{n}\right)  $. \ Then it suffices to
show that at least one $f_{i}$\ is a non-monotone Boolean function. \ For if
$f_{i}$\ is non-monotone, then by definition, there exist two inputs
$x,x^{\prime}\in\left\{  0,1\right\}  ^{n}$, which are identical except that
$x_{j}=1$\ and $x_{j}^{\prime}=0$ at some bit $j$, such that $f_{i}\left(
x\right)  =0$\ and $f_{i}\left(  x^{\prime}\right)  =1$. \ But then, if we set
the other $n-1$\ bits consistent with $x$\ and $x^{\prime}$, we have
$y_{i}=\operatorname*{NOT}\left(  x_{j}\right)  $.

Thus, suppose by contradiction that every $f_{i}$\ is monotone. \ Then
reversibility clearly implies\ that $G\left(  0^{n}\right)  =0^{n}$, and that
the set of strings of Hamming weight $1$ is mapped to itself: that is, there
exists a permutation $\sigma$\ such that $G\left(  e_{j}\right)
=e_{\sigma\left(  j\right)  }$ for all $j$. \ Furthermore, by monotonicity,
for all $j\neq k$ we have $G\left(  e_{j}\oplus e_{k}\right)  \geq
e_{\sigma\left(  j\right)  }\oplus e_{\sigma\left(  k\right)  }$. \ But then
reversibility implies that $G\left(  e_{j}\oplus e_{k}\right)  $\ can only be
$e_{\sigma\left(  j\right)  }\oplus e_{\sigma\left(  k\right)  }$\ itself, and
so on inductively, so that we obtain $G\left(  x_{1}\ldots x_{n}\right)
=x_{\sigma^{-1}\left(  1\right)  }\ldots x_{\sigma^{-1}\left(  n\right)  }%
$\ for all $x\in\left\{  0,1\right\}  ^{n}$. \ But this means that $G$ is
trivial, contradiction.
\end{proof}

\begin{proposition}
[folklore]\label{folkloreprop}For all $n\geq3$, every non-affine Boolean
function on $n$ bits has a non-affine subfunction on $n-1$ bits.
\end{proposition}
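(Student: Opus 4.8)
The plan is to use the standard characterization of affine Boolean functions in terms of ``discrete second derivatives.'' For a function $f:\{0,1\}^{n}\to\{0,1\}$ and an index $i$, write $\partial_{i}f(x):=f(x)\oplus f(x\oplus e_{i})$. Examining the algebraic normal form (ANF) of $f$ over $\mathbb{F}_{2}$, one sees at once that $\partial_{i}f$ is a constant function precisely when every monomial of $f$ that involves $x_{i}$ is the bare monomial $x_{i}$ itself; hence $f$ is affine if and only if $\partial_{i}f$ is constant for every $i$, which in turn holds if and only if $\partial_{i}\partial_{j}f\equiv 0$ for all $i\neq j$. I would first record this equivalence (it is a short ANF computation), so that ``$f$ is non-affine'' becomes the concrete statement that there exist $i\neq j$ and a point $a\in\{0,1\}^{n}$ with
\[
f(a)\oplus f(a\oplus e_{i})\oplus f(a\oplus e_{j})\oplus f(a\oplus e_{i}\oplus e_{j})=1 .
\]

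Given this, the proposition is almost immediate. Fix such $i,j,a$. Since $n\geq 3$, there is a coordinate $\ell\in[n]\setminus\{i,j\}$. Restrict $x_{\ell}$ to the constant $a_{\ell}$, obtaining a subfunction $g$ on the remaining $n-1$ bits. Because $\ell\notin\{i,j\}$, the four points $a$, $a\oplus e_{i}$, $a\oplus e_{j}$, $a\oplus e_{i}\oplus e_{j}$ all have $\ell$-th coordinate equal to $a_{\ell}$, so $g$ takes exactly the values of $f$ on those points. Hence the same square identity holds for $g$ at the corresponding $(n-1)$-bit point, and so $g$ is non-affine. This is where $n\geq 3$ is used twice: to have a third coordinate to freeze, and to guarantee $n-1\geq 2$, so that ``non-affine on $n-1$ bits'' is not vacuous.

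There is essentially no hard step: the only thing requiring a small computation is the ANF-level verification that $f$ is affine iff $\partial_{i}\partial_{j}f\equiv 0$ for all $i\neq j$, and even that is routine. If one instead wanted to argue purely with ANF polynomials, the subtlety to watch for is cancellation: restricting a variable to $1$ can collapse distinct monomials, so one should restrict a variable to $0$, which only deletes monomials and never creates cancellation. Concretely: take a maximal-degree monomial $M$ of $f$, necessarily of degree $\geq 2$; if some variable $x_{\ell}$ is absent from $M$, freeze $x_{\ell}=0$ and $M$ survives, so the restriction is still non-affine. The only remaining case is that the unique maximal monomial is $x_{1}\cdots x_{n}$, and that case is exactly the one handled cleanly by the derivative argument above. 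Accordingly, I would present the derivative proof as the main one and omit the ANF bookkeeping.
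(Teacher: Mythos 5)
Your proof is correct. It takes a mildly different route from the paper's: the paper restricts the first input bit to get two subfunctions $f_{0},f_{1}$, observes that if both are affine, say $f_{c}(x)=(a_{c}\cdot x)\oplus b_{c}$, then $f$ is non-affine iff $a_{0}\neq a_{1}$, and then freezes a coordinate other than the first bit and the bit where $a_{0},a_{1}$ disagree. You instead invoke the second-derivative characterization of affineness and freeze a coordinate outside a witnessing $2$-face $\{a,\,a\oplus e_{i},\,a\oplus e_{j},\,a\oplus e_{i}\oplus e_{j}\}$ on which $f$ has odd parity. The two witnesses are essentially equivalent (in the case where both restrictions are affine, $\partial_{1}\partial_{m}f\equiv 1$ is exactly the statement that $a_{0}$ and $a_{1}$ differ at $m$), but your version handles everything in a single uniform step with no case split, at the cost of first establishing the ANF fact that $f$ is affine iff all $\partial_{i}\partial_{j}f$ vanish; the paper's version avoids that lemma but needs the small computation $f=f_{0}\oplus x_{1}(f_{0}\oplus f_{1})$ to justify its "iff." Your closing remark about why one should not argue by naively restricting variables in the ANF (cancellation when setting a variable to $1$) is a sensible caution, though it is moot for the derivative argument you actually give.
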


\begin{proof}
Let $f:\left\{  0,1\right\}  ^{n}\rightarrow\left\{  0,1\right\}  $ be
non-affine, and let $f_{0}$\ and $f_{1}$\ be the $\left(  n-1\right)  $-bit
subfunctions obtained by restricting $f$'s first input bit to $0$ or $1$
respectively. \ If either $f_{0}$\ or $f_{1}$\ is itself non-affine, then we
are done. \ Otherwise, we have $f_{0}\left(  x\right)  =\left(  a_{0}\cdot
x\right)  \oplus b_{0}$\ and $f_{1}\left(  x\right)  =\left(  a_{1}\cdot
x\right)  \oplus b_{1}$, for some $a_{0},a_{1}\in\left\{  0,1\right\}  ^{n-1}$
and $b_{0},b_{1}\in\left\{  0,1\right\}  $. \ Notice that $f$ is non-affine if
and only if $a_{0}\neq a_{1}$. \ So there is some bit where $a_{0}$\ and
$a_{1}$\ are unequal. \ If we now remove any of the \textit{other} rightmost
$n-1$\ input bits (which must exist since $n-1\geq2$) from $f$, then we are
left with a non-affine function on $n-1$ bits.
\end{proof}

\begin{lemma}
[\cite{lloyd:gate,devos}]\label{lloydand}Every non-affine reversible gate
$G$\ generates the $2$-bit $\operatorname*{AND}$\ gate with garbage.
\end{lemma}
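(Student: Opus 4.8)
The plan is to reduce, by feeding constants into all but two input wires, to a two-bit non-affine Boolean function, and then to clean that function up into $\operatorname{AND}$ using the $\operatorname{NOT}$ gates that Lemma~\ref{lloydnot} supplies (with garbage).

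First, write $G(x_1\ldots x_n)=y_1\ldots y_n$ with $y_i=f_i(x_1\ldots x_n)$. Since $G$ is non-affine, at least one coordinate function $f_i$ is a non-affine Boolean function, and $n\geq 2$ because every one-bit Boolean function is affine. Applying Proposition~\ref{folkloreprop} repeatedly---each application restricting one input bit of $f_i$ to a suitable constant---we obtain values $d_\ell\in\{0,1\}$ for all but two of the input coordinates, say coordinates $j$ and $k$, such that the restricted function $g(x_j,x_k):=f_i(\ldots,d_\ell,\ldots)$ is still non-affine. Feeding the constants $d_\ell$ into the corresponding input wires of $G$ as (constant) ancilla bits, designating output wire $i$ as the working bit and the remaining $n-1$ output wires as garbage, we have thereby used $G$ to generate $g$ with garbage.

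Next I would classify the non-affine two-bit Boolean functions. A function $h:\{0,1\}^2\to\{0,1\}$ is affine exactly when $h(x,y)=ax\oplus by\oplus c$, which gives $8$ functions; the remaining $8$ are precisely those of the form $h(x,y)=(x\oplus c_1)(y\oplus c_2)\oplus c_3$ with $c_1,c_2,c_3\in\{0,1\}$ (one checks these $8$ are distinct and each non-affine, for instance because a two-bit function is non-affine iff its two one-variable restrictions in the remaining coordinate have distinct linear parts). Hence $g(x,y)=(x\oplus c_1)(y\oplus c_2)\oplus c_3$ for some constants $c_1,c_2,c_3$.

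Finally, since $G$ is non-affine it is nontrivial, so by Lemma~\ref{lloydnot} it generates $\operatorname{NOT}$ with garbage. To generate $\operatorname{AND}$ with garbage, take external inputs $x,y$: use $\operatorname{NOT}$-with-garbage to replace $x$ by $x\oplus c_1$ and $y$ by $y\oplus c_2$ (doing nothing if the relevant constant is $0$), feed these into the circuit for $g$ to obtain $g(x\oplus c_1,y\oplus c_2)=xy\oplus c_3$, and then, if $c_3=1$, apply $\operatorname{NOT}$-with-garbage once more to this bit, yielding $xy$. All the ancillas and unused output wires introduced along the way simply accumulate as garbage depending on $(x,y)$, so the composite transformation lies in $\langle G\rangle$ and witnesses that $G$ generates the two-bit $\operatorname{AND}$ gate with garbage. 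The only point needing care is the bookkeeping of garbage through these compositions, but since ``generating with garbage'' places no constraint on the ancilla/garbage registers, the compositions go through verbatim.
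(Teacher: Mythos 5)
Your proof is correct and follows essentially the same route as the paper's: use Lemma~\ref{lloydnot} to get $\operatorname{NOT}$ with garbage, apply Proposition~\ref{folkloreprop} inductively to restrict a non-affine coordinate function down to a non-affine $2$-bit function, and then observe that every such function is $\operatorname{AND}$ up to negations of inputs and output. The only difference is that you spell out the classification of the eight non-affine $2$-bit functions, which the paper states without proof.
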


\begin{proof}
Certainly every non-affine gate is nontrivial, so we know from Lemma
\ref{lloydnot}\ that $G$\ generates $\operatorname*{NOT}$ with garbage. \ For
this reason, it suffices to show that $G$\ can generate \textit{some}
non-affine $2$-bit gate with garbage (since all such gates are equivalent to
$\operatorname*{AND}$\ under negating inputs and outputs). \ Let $G\left(
x_{1}\ldots x_{n}\right)  =y_{1}\ldots y_{n}$, and let $y_{i}=f_{i}\left(
x_{1}\ldots x_{n}\right)  $. \ Then some particular $f_{i}$ must be a
non-affine Boolean function. \ So it suffices to show that, by restricting
$n-2$\ of $f_{i}$'s\ input bits, we can get a non-affine function on $2$ bits.
\ But this follows by inductively applying Proposition \ref{folkloreprop}.
\end{proof}

By using Lemma \ref{lloydand}, it is possible to prove directly that the only
classes that contain a $\operatorname*{CNOT}$ gate are $\left\langle
\operatorname*{CNOT}\right\rangle $\ (i.e., all affine transformations) and
$\left\langle \operatorname*{Toffoli}\right\rangle $\ (i.e., all
transformations)---or in other words, that if $G$ is any non-affine gate, then
$\left\langle \operatorname*{CNOT},G\right\rangle =\left\langle
\operatorname*{Toffoli}\right\rangle $. \ However, we will skip this result,
since it is subsumed by our later results.

Recall that $\operatorname*{COPY}$\ is the $2$-bit partial gate that maps
$00$\ to $00$\ and $10$ to $11$.

\begin{lemma}
[\cite{lloyd:gate,kerntopf}]\label{lloydcopy}Every non-degenerate reversible
gate $G$\ generates $\operatorname*{COPY}$\ with garbage.
\end{lemma}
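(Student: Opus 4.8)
The plan is to reduce the statement to a purely combinatorial fact about the Boolean hypercube: a non-degenerate gate, regarded as a permutation of $\{0,1\}^n$, cannot preserve the edge structure of $Q_n$, and a single violation of that structure already hands us a $\operatorname*{COPY}$ gate up to negations, which we then clean up using Lemma~\ref{lloydnot}.

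First I would record the classical fact that a bijection $F\colon\{0,1\}^n\to\{0,1\}^n$ which sends every pair of strings at Hamming distance $1$ to a pair at Hamming distance $1$ is automatically a graph automorphism of the hypercube $Q_n$: such an $F$ maps the $|E(Q_n)|$ edges injectively into edges, hence bijectively, so $F^{-1}$ also preserves edges. One can then either cite the standard classification $\operatorname{Aut}(Q_n)=\mathbb{Z}_2\wr S_n$, or reprove it in two lines (any $2$-face of $Q_n$ maps to a $2$-face, so the output coordinate flipped by flipping a given input coordinate is independent of the base point, which forces $F(x)=(x_{\pi(1)}\oplus c_1,\dots,x_{\pi(n)}\oplus c_n)$). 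Either way, every such $F$ is \emph{degenerate}. Contrapositively, since $G$ is non-degenerate, there exist adjacent inputs $u,v\in\{0,1\}^n$ with $u\oplus v=e_j$ such that $G(u)$ and $G(v)$ differ in at least two output coordinates $i_1\neq i_2$. (When $G$ is non-degenerate we automatically have $n\ge 2$, so this is never vacuous.)

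Next I would restrict $G$ to the line through $u$ and $v$: fix every input bit other than the $j$-th to agree with $u$ (these fixed bits are the ``ancilla string'' of the generation), and let the $j$-th input be a free bit $b$. Then the output pair $(y_{i_1},y_{i_2})$ equals $(G(u)_{i_1},G(u)_{i_2})$ when $b=u_j$ and equals $(G(v)_{i_1},G(v)_{i_2})$ when $b=\overline{u_j}$, and by the choice of $i_1,i_2$ these two pairs are bitwise complements. Hence the resulting partial map $b\mapsto(y_{i_1},y_{i_2})$ is exactly $\operatorname*{COPY}$ after possibly negating the input bit and/or either of the two output bits. Since $G$ is in particular nontrivial, Lemma~\ref{lloydnot} says $G$ generates $\operatorname*{NOT}$ with garbage; pre-composing such a $\operatorname*{NOT}$-gadget on the $b$-wire and post-composing one on whichever of the two output wires needs it turns this into $\operatorname*{COPY}$ on the nose, with the remaining output bits of $G$ together with the ancillas of the $\operatorname*{NOT}$-gadgets all folded into the garbage.

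The genuine content is the hypercube step; everything else is bookkeeping. The one point I would be careful to spell out is that ``generation with garbage'' composes — feeding the garbage-laden output of one gadget into the next gadget, which draws on fresh ancillas (available since unlimited ancillas are free) — so that the object we finally exhibit is a single transformation in $\langle G\rangle$ of the exact form $F(b,a)=(\operatorname*{COPY}(b),\operatorname{gar}(b))$ demanded by the definition. I expect no real obstacle beyond making sure the negation-cleanup and the ancilla accounting are handled cleanly; the small cases $n\le 2$ can be checked by hand if one wants to avoid leaning on the $\operatorname{Aut}(Q_n)$ classification there.
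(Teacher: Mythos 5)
Your proposal is correct and takes essentially the same route as the paper's proof: both reduce the lemma to showing that a non-degenerate $G$ must map some edge of the Hamming cube to a pair of outputs at distance at least $2$ (you via the automorphism group of $Q_n$, the paper via an induction outward from $0^n$ — the same fact), and then read off two copies of the free input bit and clean up the negations and the initially-$0$ ancilla using Lemma~\ref{lloydnot}.
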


\begin{proof}
Certainly every non-degenerate gate is nontrivial, so we know from Lemma
\ref{lloydnot}\ that $G$\ generates $\operatorname*{NOT}$ with garbage. \ So
it suffices to show that there is some pair of inputs $x,x^{\prime}\in\left\{
0,1\right\}  ^{n}$, which differ only at a single coordinate $i$, such that
$G\left(  x\right)  $\ and $G\left(  x^{\prime}\right)  $\ have Hamming
distance at least $2$. \ For then if we set $x_{i}:=z$, and regard\ the
remaining $n-1$\ coordinates of $x$ as ancillas, we will find at least two
copies of $z$\ or $\overline{z}$\ in $G\left(  x\right)  $, which we can
convert to at least two copies\ of $z$\ using $\operatorname*{NOT}$\ gates.
\ Also, if all of the ancilla bits that receive a copy of $z$\ were initially
$1$, then we can use a $\operatorname*{NOT}$\ gate to reduce to the case where
one of them was initially $0$.

Thus, suppose by contradiction that $G\left(  x\right)  $\ and $G\left(
x^{\prime}\right)  $\ are neighbors on the Hamming cube whenever $x$\ and
$x^{\prime}$\ are neighbors. \ Then starting from $0^{n}$\ and $G\left(
0^{n}\right)  $, we find that every $G\left(  e_{i}\right)  $\ must be a
neighbor of $G\left(  0^{n}\right)  $, every $G\left(  e_{i}\oplus
e_{j}\right)  $\ must be a neighbor of $G\left(  e_{i}\right)  $\ and
$G\left(  e_{j}\right)  $, and so on, so that $G$\ is just a rotation and
reflection of $\left\{  0,1\right\}  ^{n}$. \ But that means $G$ is
degenerate, contradiction.
\end{proof}

\subsection{Conservative Generates Fredkin\label{CONSERV}}

In this section, we prove the following theorem.

\begin{theorem}
\label{consfredkin}Let $G$ be any nontrivial conservative gate. \ Then $G$
generates $\operatorname{Fredkin}$.
\end{theorem}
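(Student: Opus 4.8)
The plan is to show that any nontrivial conservative gate $G$ can be bootstrapped up to a $\operatorname*{Fredkin}$ gate, using the earlier machinery about computing with garbage and the fact that $\operatorname*{Fredkin}$ generates all conservative transformations (Theorem \ref{fredkincirc}). Since $G$ is conservative and nontrivial, it is in particular non-degenerate (a conservative degenerate gate would have to be a permutation of bits, hence trivial), so by Lemma \ref{lloydcopy} the gate $G$ generates $\operatorname*{COPY}$ \emph{with garbage}: there is a transformation in $\langle G\rangle$ mapping $(z,a)\mapsto(z,z,\operatorname*{gar}(z))$ for a fixed ancilla string $a$. The key point is that, because $G$ is conservative, every transformation in $\langle G\rangle$ is conservative, so this ``copying with garbage'' operation is itself Hamming-weight-preserving. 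The strategy is then to use this weight-preserving copy operation to build a genuine (garbage-free) $\operatorname*{Fredkin}$ gate, exploiting conservativity to control what the garbage can be.

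Concretely, I would first argue that $G$ generates, with garbage, a controlled operation that detects a single bit value, and then combine two such operations in a reversible ``compute--uncompute'' sandwich à la Bennett. The obstacle in ordinary Bennett uncomputation is that one needs a clean copy of the control bit to uncompute the garbage; here the natural move is: on control bit $x$ and target bits $y,z$, use the weight-preserving $\operatorname*{COPY}$-with-garbage gadget to produce an extra copy of $x$ (plus garbage $\operatorname*{gar}(x)$), use that copy together with conservative transformations (which we have in full, by Theorem \ref{fredkincirc}, once we have \emph{any} nontrivial conservative gate $G$ — wait: we only have $G$, not yet all conservative transformations) to conditionally swap $y$ and $z$, and then run the $\operatorname*{COPY}$ gadget backwards to erase $\operatorname*{gar}(x)$ and the extra copy. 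The subtlety is that the conditional swap of $y,z$ controlled on a bit is precisely a $\operatorname*{Fredkin}$ gate, so this looks circular; the resolution must be that the gadget built from $G$ directly implements \emph{some} conservative controlled behavior on a few bits that is non-degenerate in the right sense, and one then checks that the smallest such behavior, after cleaning up garbage, is equivalent to $\operatorname*{Fredkin}$ up to the conservative transformations we can already realize on the ancillas.

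I would therefore structure the proof as: (1) observe $G$ is non-degenerate, hence generates $\operatorname*{COPY}$ with garbage, and by conservativity this is weight-preserving; (2) show that a weight-preserving copy-with-garbage gadget, composed with its inverse around a controlled permutation, lets us implement a controlled-swap \emph{on encoded bits} — using the dual-rail encoding, where $x$ is carried as $x\bar x$, a controlled swap becomes a weight-preserving permutation of a bounded number of bits that the gadget can realize garbage-free because the total weight constraint pins down the garbage register's final state; (3) extract an honest $\operatorname*{Fredkin}$ from this by the ancilla rule, since all ancilla/garbage bits return to their start. The main obstacle I expect is step (2): ruling out that the ``copying'' behavior extracted from $G$ is too weak — e.g., that $G$'s non-degeneracy only guarantees two copies of $z$ \emph{or} of $\bar z$ with uncontrollable sign, and that the garbage $\operatorname*{gar}(z)$ cannot be uncomputed because we lack a clean control bit. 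Overcoming it should use conservativity decisively: the weight bookkeeping forces the garbage to be a function only of $|x|$ in the relevant registers, which after restriction to a single input bit makes it constant, hence uncomputable. I would expect the actual write-up to proceed by first establishing an intermediate lemma that $G$ generates a $3$-bit conservative gate that is non-degenerate (not merely a bit-permutation), and then a case analysis or lattice argument — the same ``the only obstruction is modular'' philosophy flagged in Section~\ref{TECHNIQUES} — showing any such $3$-bit conservative non-degenerate gate generates $\operatorname*{Fredkin}$.
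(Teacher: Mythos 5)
Your proposal correctly identifies the two central difficulties (the apparent circularity of needing a controlled swap to build one, and the problem of uncomputing garbage without a clean control), but it does not resolve either of them, and the one concrete mechanism you offer for the second is wrong. Conservativity does \emph{not} ``pin down the garbage register's final state'': if $G$ generates $(x,a)\mapsto(H(x),\operatorname{gar}(x))$ with $H$ conservative, then conservativity of everything in $\langle G\rangle$ only forces $\lvert\operatorname{gar}(x)\rvert=\lvert a\rvert$ to be constant, not $\operatorname{gar}(x)$ itself --- the garbage can still range over all strings of that fixed weight as $x$ varies, so it is not ``constant, hence uncomputable.'' Likewise, your step (2) asserts that the copy-with-garbage gadget ``can realize'' a dual-rail controlled swap garbage-free, which is precisely the statement to be proved; and your closing guess --- reduce to a $3$-bit conservative non-degenerate gate and do a case analysis --- is not a route the paper takes and is not obviously available (nothing guarantees that a nontrivial conservative gate on many bits yields a nontrivial conservative gate on $3$ bits after fixing ancillas).

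The paper's actual resolution has two ingredients you are missing. First (Lemma \ref{conquasifred}), one reads off from the truth table of any nontrivial conservative $G$ a \emph{strong quasi-Fredkin} behavior: after permuting input and output wires, there are control strings $a,b,c,d$ with $Q(a,01)=(a,01)$, $Q(b,01)=(b,10)$, $Q(c,00)=(c,00)$, $Q(d,11)=(d,11)$; the fixed points on $00$ and $11$ come for free from conservativity (the all-$0$ and all-$1$ strings are fixed). This breaks the circularity: the ``conditional swap'' is not synthesized, it is already latent in $G$ and extracted by wire permutations alone. Second (Lemmas \ref{genswap} and \ref{catalyst}), these four behaviors are chained along a fixed schedule of candidate transpositions to build a \emph{catalyzer}: a circuit that maps $(p(x),0^n1^n)\mapsto(p(x),x)$ for any target $x$ of weight $n$, with the program string $p(x)$ returned \emph{unchanged}. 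It is this catalytic property --- not conservativity of the garbage --- that makes the Bennett-style compute/uncompute sandwich close up: one computes $p(F(x)\overline{F(x)})$ with garbage (using non-affineness of $G$, via Lemma \ref{affcontriv} and the Lloyd lemmas), applies the catalyzer, uncomputes the program string, and then repeats with $C^{-1}$ and a second program string to erase $x$ and $\overline{F(x)}$. Without the catalyzer, or some equivalent device for making the auxiliary data input-independent at the end, your sketch has no way to remove the garbage, so the proposal has a genuine gap.
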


The proof will be slightly more complicated than necessary, but we will then
reuse parts of it in Section \ref{MOD}, when we show that every non-affine,
\textit{non}-conservative\ gate generates $\operatorname{Fredkin}$.

Given a gate $Q$, let us call $Q$\ \textit{strong quasi-Fredkin} if there
exist control strings $a,b,c,d$\ such that%
\begin{align}
Q\left(  a,01\right)   &  =\left(  a,01\right)  ,\label{q1}\\
Q\left(  b,01\right)   &  =\left(  b,10\right)  ,\label{q2}\\
Q\left(  c,00\right)   &  =\left(  c,00\right)  ,\label{q3}\\
Q\left(  d,11\right)   &  =\left(  d,11\right)  . \label{q4}%
\end{align}

\begin{lemma}
\label{conquasifred}Let $G$ be any nontrivial $n$-bit conservative gate.
\ Then $G$ generates a strong quasi-Fredkin gate.
\end{lemma}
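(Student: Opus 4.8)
The plan is to start from the fact that $G$ is nontrivial and conservative, use Lemma \ref{lloydcopy} (or its underlying argument) to extract a pair of neighboring inputs on which $G$ behaves nontrivially, and then build up the four control strings $a,b,c,d$ required by the definition of strong quasi-Fredkin. Since $G$ is nontrivial, by the argument in Lemma \ref{lloydnot}/Lemma \ref{lloydcopy} there must exist two inputs $x, x'$ differing in exactly one coordinate $j$ — say $x_j = 1$, $x'_j = 0$ — such that $G(x)$ and $G(x')$ differ in more than just the image of coordinate $j$; because $G$ is conservative, $|G(x)| = |x| = |x'| + 1 = |G(x')| + 1$, so the "extra" difference is a genuine reshuffling of weight between two output coordinates. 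Concretely, I expect to find output coordinates $p, q$ such that as the input bit in position $j$ is toggled, the pair $(y_p, y_q)$ moves between $01$ and $10$ while the rest of the output is unaffected — this is exactly the controlled-swap behavior we want, with the bit in position $j$ playing the role of the control and positions $p, q$ the role of the swapped pair. Restricting all other input coordinates to their fixed values gives the control strings: the fixed setting of the remaining $n-1$ input bits, together with the single control bit $j$, yields a string of the form $(a, \cdot)$ where feeding $01$ into the $(p,q)$ positions comes back as $01$ when the control is $0$ (relation \eqref{q1}) and as $10$ when the control is $1$ (relation \eqref{q2}).

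The remaining work is to secure relations \eqref{q3} and \eqref{q4}, i.e., to find (possibly different) control strings $c$ and $d$ on which $G$ fixes $00$ and fixes $11$ in the two designated coordinate positions. Here conservativity does most of the job: a string of the form $(c, 00)$ has the same weight as whatever it maps to, and by choosing $c$ to itself be mapped to a suitable string, one can arrange that the two designated output positions are forced to $00$ (and symmetrically $11$ for $d$). I would argue this by a counting/pigeonhole argument on the fibers of $G$ restricted to the relevant subcube, or by directly exhibiting $c$ and $d$ as particular inputs: e.g., take $c$ to be an input whose designated pair of coordinates is $00$ and whose image also has $00$ in those positions (such an input must exist since $G$ is a weight-preserving permutation and the total number of weight-$w$ strings with $00$ in two fixed positions is matched on both sides), and dually for $d$ with $11$. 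The point is that we do not need $a, b, c, d$ to be equal or even related — the definition of strong quasi-Fredkin allows four independent control strings — so each relation can be arranged separately.

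The main obstacle I anticipate is relation \eqref{q2}: extracting a control string $b$ on which the $(p,q)$ pair genuinely flips $01 \mapsto 10$ requires showing that the nontrivial neighboring-pair behavior of $G$ can be localized to exactly two output coordinates, rather than spread across three or more. If toggling the input coordinate $j$ redistributes weight among several output coordinates at once, I will need an additional argument — perhaps composing $G$ with itself or with bit-permutations, or chasing a minimal such pair — to isolate a clean two-coordinate swap. I expect this to go through because conservativity is a strong constraint: it forbids $G$ from changing total weight, so the "non-neighbor" behavior guaranteed by nontriviality must manifest as weight moving from some output coordinate to another, and by taking the pair $(p,q)$ where a $+1$ and a $-1$ change occur, one gets precisely the $01 \leftrightarrow 10$ behavior on those two coordinates with the rest held fixed. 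Once all four relations are in hand, the lemma follows immediately from the definition.
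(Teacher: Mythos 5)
There is a genuine gap, and it is the one you yourself flag as the "main obstacle": your route to relation \eqref{q2} requires toggling a single input coordinate $j$ to swap exactly two output coordinates "while the rest of the output is unaffected." Nothing forces a nontrivial conservative gate to have such a localized behavior, and your proposed repairs (composing $G$ with itself, chasing a minimal pair) are not developed. More importantly, this localization is not needed, because the definition of strong quasi-Fredkin imposes four \emph{independent} input--output relations and nothing about what happens when a control bit is toggled. The paper's proof sidesteps the issue entirely with one observation you are missing: the swapping rule lets you permute the input bits and the output bits \emph{independently}. Starting from the permutation $\pi$ with $G(e_i)=e_{\pi(i)}$ (conservativity sends unit vectors to unit vectors), nontriviality yields an $x$ with $x_i=1$ but $G(x)_{\pi(i)}=0$, and conservativity then yields a $j$ with $x_j=0$ but $G(x)_{\pi(j)}=1$. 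Move $j,i$ to the last two input positions and $\pi(j),\pi(i)$ to the last two output positions; since $x$ and $G(x)$ restricted to the remaining $n-2$ positions have the \emph{same} Hamming weight, one further output permutation makes them agree there, giving $G(w,01)=(w,10)$ with no assumption that the scrambling was confined to two coordinates.

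The second gap is in relations \eqref{q1}, \eqref{q3}, \eqref{q4}, each of which requires the \emph{entire} string to be a fixed point, not merely the two target positions. Your pigeonhole argument only produces an input whose designated pair is $00$ both before and after applying $G$; it does not produce a $c$ with $G(c,00)=(c,00)$, and the neighboring-pair construction from Lemma \ref{lloydcopy} produces no fixed points at all, so \eqref{q1} is likewise unsupported. The paper gets all three for free: $G(0^n)=0^n$ and $G(1^n)=1^n$ by conservativity (giving $c=0^{n-2}$ and $d=1^{n-2}$), and the permutations above were chosen precisely so that the unit vector $0^{n-2}01$ is mapped to itself (giving $a=0^{n-2}$), with all four relations holding simultaneously for the single permuted gate. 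So the correct starting point is the deviation of $G$ from the unit-vector permutation $\pi$, not the Hamming-distance-$\geq 2$ neighboring pair.
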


\begin{proof}
By conservativity, $G$ maps unit vectors to unit vectors, say $G\left(
e_{i}\right)  =e_{\pi\left(  i\right)  }$ for some permutation $\pi$. But
since $G$ is nontrivial, there is some input $x\in\left\{  0,1\right\}  ^{n}$
such that $x_{i}=1$, but the corresponding bit $\pi\left(  i\right)  $ in
$G\left(  x\right)  $ is $0$. \ By conservativity, there must also be some bit
$j$ such that $x_{j}=0$, but bit $\pi\left(  j\right)  $ of $G\left(
x\right)  $ is $1$. \ Now permute the inputs to make bit $j$ and bit $i$ the
last two bits, permute the outputs to make bits $\pi\left(  j\right)  $ and
$\pi\left(  i\right)  $ the last two bits, and permute either inputs or
outputs to make $x$ match $G\left(  x\right)  $\ on the first $n-2$ bits.
\ After these permutations are performed, $x$\ has the form $w01$\ for some
$w\in\left\{  0,1\right\}  ^{n-2}$. \ So%
\begin{align*}
G\left(  0^{n-2},01\right)   &  =\left(  0^{n-2},01\right)  ,\\
G\left(  w,01\right)   &  =\left(  w,10\right)  ,\\
G\left(  0^{n-2},00\right)   &  =\left(  0^{n-2},00\right)  ,\\
G\left(  1^{n-2},11\right)   &  =\left(  11^{n-2},11\right)  ,
\end{align*}
where the last two lines again follow from conservativity. \ Hence $G$ (after
these permutations) satisfies the definition of a strong quasi-Fredkin gate.
\end{proof}

Next, call a gate $C$ a \textit{catalyzer} if, for every $x\in\left\{
0,1\right\}  ^{2n}$ with Hamming weight $n$, there exists a \textquotedblleft
program string\textquotedblright\ $p\left(  x\right)  $ such that
\[
C\left(  p\left(  x\right)  ,0^{n}1^{n}\right)  =\left(  p\left(  x\right)
,x\right)  .
\]
In other words, a catalyzer can be used to transform $0^{n}1^{n}$ into any
target string $x$\ of Hamming weight $n$. \ Here $x$\ can be encoded in any
manner of our choice into the auxiliary program string $p\left(  x\right)  $,
as long as $p\left(  x\right)  $ is left unchanged by the transformation.
\ The catalyzer itself cannot depend on $x$.

\begin{lemma}
\label{genswap}Let $Q$\ be a strong quasi-Fredkin gate. \ Then $Q$\ generates
a catalyzer.
\end{lemma}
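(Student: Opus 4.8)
The plan is to build the catalyzer $C$ out of many copies of the strong quasi-Fredkin gate $Q$, using its control strings $a,b,c,d$ as the raw material. The key observation is what $Q$ can do on the last two "data" bits: with control string $a$ it leaves $01$ fixed, with control string $b$ it sends $01\mapsto 10$, and with control strings $c,d$ it leaves $00$ and $11$ fixed. So $Q$ acts as a \emph{conditional transposition} of the last two bits, where the "condition" is encoded entirely in which control string we feed it. Since $Q$ is conservative (it is generated by a conservative gate, and conservativity is preserved under all the closure rules), whenever the last two bits are $00$ or $11$ the control string is also preserved — so these are genuine fixed points no matter what. The effect on $01$ (swap or not) is selected by the control string, and this is exactly the primitive we need to route $1$'s around.

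First I would promote $Q$ to a clean "controlled-swap-if-$01$" primitive: show that $Q$, together with free swaps, generates a gate that takes a single control bit plus two data bits, swaps the two data bits when the control bit is $1$ and the data bits read $01$ (and does nothing on $00$ or $11$, and whatever on the remaining case), leaving the control bit untouched. This is done by feeding $Q$ the fixed strings $c$ on a "$00$" subregister and $d$ on a "$11$" subregister, and conditionally swapping the data pair between the $a$-configuration and the $b$-configuration depending on the control bit; the ancillas holding $a,b,c,d$ are returned to their initial values because $Q$ fixes all four of those configurations. (If one wants the control bit literally to be a single bit rather than a whole string, one routes it in via a dual-rail pair, exactly as is done elsewhere in the paper for $\operatorname{Fredkin}$-based constructions.) Then I would observe that a sequence of such conditional $01$-swaps, wired in a bubble-sort / sorting-network pattern on $2n$ bits, can take the sorted string $0^n1^n$ to \emph{any} rearrangement $x$ of it of Hamming weight $n$: at each comparator we decide, based on one bit of a program string $p(x)$, whether to perform the swap; the comparator only ever sees a $01$, $00$, or $11$ on its two wires (never $10$ after we've committed to an orientation), so the conditional-$01$-swap primitive suffices, and every wire of $p(x)$ is a pure control that comes back unchanged. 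Concretely, $p(x)$ records, for each comparator in a fixed sorting network, whether its swap should fire in order to realize the target permutation of the $1$'s — this is just the routing information of a sorting network, and it manifestly does not depend on anything except $x$.

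The main obstacle is the bookkeeping around \emph{which} of the two cases ($01$ vs.\ $10$) a comparator encounters, since $Q$'s behavior is only pinned down on $01$, $00$, $11$ (its action on $10$ is unconstrained). The fix is to be disciplined about orientation: fix the sorting network so that, reading along each wire, all the $0$'s sit "above" all the $1$'s at the moment a given comparator acts, so that comparator only ever sees $00$, $11$, or $01$ — never $10$. A standard comparator network (e.g.\ odd-even transposition sort on $2n$ elements, which has $O(n^2)$ comparators) has this property, so this is a matter of citing the right network and checking the invariant, not a genuine difficulty. A second, minor obstacle is making sure all the scratch ancillas ($a,b,c,d$ copies, dual-rail control bits) are restored; this follows immediately because every configuration we feed to $Q$ in the construction is one of the four that $Q$ fixes on its data bits and hence fixes entirely (by conservativity the control part is also returned), so each invocation of $Q$ is "catalytic" in precisely the sense needed.
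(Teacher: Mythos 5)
Your core combinatorial idea --- route the $1$'s of $0^n1^n$ to their target positions through a fixed sequence of candidate transpositions, maintain an orientation invariant so that each comparator only ever sees $00$, $01$, or $11$ (never $10$), and let the program string decide which comparators fire --- is exactly the paper's argument, and your observation about running a sorting network in reverse is a perfectly good way to establish the invariant. However, your first step contains a genuine circularity. You propose to ``promote'' $Q$ to a gate with a \emph{single} control bit by ``conditionally swapping the data pair between the $a$-configuration and the $b$-configuration depending on the control bit.'' That conditional swap is itself a Fredkin-type routing operation, and at this stage of the proof you do not have one --- building it is the whole point of Lemmas \ref{genswap} and \ref{catalyst}. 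The same objection applies to your parenthetical about routing the control in via a dual-rail pair ``as is done elsewhere in the paper'': those other constructions already have $\operatorname{Fredkin}$ in hand. The repair is to delete this step entirely: the definition of a catalyzer lets $p(x)$ encode $x$ in any manner of your choice, so you should simply place the appropriate control string $a$, $b$, $c$, or $d$ \emph{verbatim} into the $t$-th register of $p(x)$, chosen at ``compile time'' using your knowledge of what the data pair $z_iz_j$ will be when the $t$-th comparator acts. No runtime selection between $a$ and $b$ is ever needed, and each register is returned unchanged because all four configurations $(a,01)$, $(b,01)\mapsto(b,10)$, $(c,00)$, $(d,11)$ fix the control string by the definition of a strong quasi-Fredkin gate. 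This is precisely what the paper does.

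One further slip worth flagging: you assert that $Q$ is conservative ``because it is generated by a conservative gate.'' The lemma is stated for an arbitrary strong quasi-Fredkin gate, and in Section \ref{MOD} it is applied to gates obtained from non-affine, \emph{non}-conservative gates, so this assumption is not available. Fortunately nothing you need depends on it --- the preservation of the control strings in all four cases is part of the definition of strong quasi-Fredkin, not a consequence of conservativity --- but the appeal to conservativity should be removed.
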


\begin{proof}
Let $z:=0^{n}1^{n}$ be the string that we wish to transform. \ For all
$i\in\left\{  1,\ldots,n\right\}  $ and $j\in\left\{  n+1,\ldots,2n\right\}
$, let $s_{ij}$\ denote the operation that swaps the $i^{th}$\ and $j^{th}%
$\ bit of $z$. \ Then consider the following list of \textquotedblleft
candidate swaps\textquotedblright:%
\[
s_{1,n+1},\ldots,s_{1,2n},~~s_{2,n+1},\ldots,s_{2,2n},~~\ldots,~~s_{n,n+1}%
,\ldots,s_{n,2n}.
\]
Suppose we go through the list in order from left to right, and for each swap
in the list, get to choose whether to apply it or not. \ It is not hard to see
that, by making these choices, we can map $0^{n}1^{n}$\ to any $x$\ such that
$\left\vert x\right\vert =n$, by pairing off the first $0$ bit that should be
$1$ with the first $1$ bit that should be $0$, the second $0$ bit that should
be $1$ with the second $1$ bit that should be $0$, and so on, and choosing to
swap those pairs of bits and not any other pairs.

Now, let the program string $p\left(  x\right)  $\ be divided into $n^{2}%
$\ registers $r_{1},\ldots,r_{n^{2}}$, each of the same size. \ Suppose that,
rather than applying (or not applying) the\ $t^{th}$\ swap $s_{ij}$\ in the
list, we instead apply the gate $F$, with $r_{t}$\ as the control string, and
$z_{i}$\ and $z_{j}$\ as the target bits. \ Then we claim that we can map $z$
to $x$ as well. \ If the $t^{th}$\ candidate swap is supposed to occur, then
we set $r_{t}:=b$. \ If the $t^{th}$\ candidate swap is not supposed to occur,
then we set $r_{t}$\ to either $a$, $c$, or $d$,\ depending on whether
$z_{i}z_{j}$\ equals\ $01$, $00$, or $11$\ at step $t$\ of the swapping
process. \ Note that, because we know $x$ when designing $p\left(  x\right)
$, we know exactly what $z_{i}z_{j}$\ is going to be at each time step.
\ Also, $z_{i}z_{j}$\ will never equal $10$, because of the order in which we
perform the swaps: we swap each $0$ bit $z_{i}$\ that needs to be swapped with
the first $1$ bit $z_{j}$\ that we can. \ After we have performed the swap,
$z_{i}=1$\ will then only be compared against other $1$\ bits, never against
$0$ bits.
\end{proof}

Finally:

\begin{lemma}
\label{catalyst}Let $G$\ be any non-affine gate, and let $C$ be any catalyzer.
\ Then $G+C$ generates $\operatorname*{Fredkin}$.
\end{lemma}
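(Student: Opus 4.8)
The plan is to combine $G$'s power to do universal computation \emph{with garbage} (Section~\ref{GARBSEC}) with the catalyzer $C$'s clean reversibility, via a Bennett-style compute/uncompute argument that exploits $\operatorname{Fredkin}^{2}=\mathrm{id}$.

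First I would record the standard consequence of Lemmas~\ref{lloydnot}, \ref{lloydand}, \ref{lloydcopy} and Proposition~\ref{folkloreprop}: for every Boolean function $\Phi$ there is a fixed ancilla string $s_{0}$ and a circuit $K_{\Phi}$ of $G$-gates with $K_{\Phi}(x,s_{0})=(x,\Phi(x),\operatorname{gar}(x))$ — that is, $G$ computes $\Phi$ while leaving the input $x$ intact and dumping junk into the scratch; since $x$ survives, $K_{\Phi}^{-1}$ re-zeroes that scratch. Fix an injective ``balanced'' encoding $\widehat v:=(v,\overline v,0^{n-3}1^{n-3})$ of a $3$-bit string $v$ into a weight-$n$, length-$2n$ string, where $n$ is the catalyzer's parameter (which may be taken $\ge 3$). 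Then $G$ can compute, with garbage, the program string $p(\widehat v)$ that $C$ needs in order to make $0^{n}1^{n}\mapsto\widehat v$.

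The construction acts on the three raw input bits $\rho=xyz$, a catalyzer register $R$ initialized to $0^{n}1^{n}$, and scratch initialized to $s_{0}$: (i) use $G$ to write the program $p\bigl(\widehat{\operatorname{Fredkin}(\rho)}\bigr)$ and its garbage into the scratch; (ii) apply $C$ to $R$, turning $0^{n}1^{n}$ into $\widehat{\operatorname{Fredkin}(\rho)}$ (this leaves the program untouched); (iii) run (i) in reverse — legitimate because (ii) touched neither the scratch nor $\rho$ — so the scratch is back to $s_{0}$; (iv) swap $\rho$ with the first three bits of $R$, so that the data bits now hold $\rho':=\operatorname{Fredkin}(\rho)$ while $R$ holds $\bigl(\rho,\overline{\rho'},0^{n-3}1^{n-3}\bigr)=\bigl(\operatorname{Fredkin}(\rho'),\overline{\rho'},0^{n-3}1^{n-3}\bigr)$, still a weight-$n$ string and a fixed function of $\rho'$; (v)--(vii) repeat the mirror of (i)--(iii) with $C^{-1}$ in place of $C$: compute that leftover string's $C$-program from $\rho'$ with garbage, apply $C^{-1}$ to reset $R$ to $0^{n}1^{n}$, and uncompute the program. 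At the end the data bits hold $\operatorname{Fredkin}(\rho)$ and every ancilla has returned to its initial value, so $G+C$ generates the genuine, unencoded $\operatorname{Fredkin}$ gate. (The same scheme, with $\operatorname{Fredkin}^{-1}$ in step (v), in fact generates every reversible transformation of the appropriate conservation type, but $\operatorname{Fredkin}$ is all we need.)

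The step I expect to need the most care is exactly this last accounting: certifying that \emph{all} ancillas — catalyzer register, program registers, and garbage scratch — end in their exact starting states, so that we have truly \emph{generated} $\operatorname{Fredkin}$ rather than merely computed it with garbage or on an encoded subspace. The idea that makes it go through is that one never has to ``decode'' the balanced encoding: it appears only transiently inside the catalyzer register, and that register can be cleared at the end because after step (iv) its contents are a known function of the \emph{output} bits, so its $C$-program can be recomputed (by $G$) and fed to $C^{-1}$. One should also check that when $G$ is itself conservative — the case wanted for Theorem~\ref{consfredkin} — the string $s_{0}$ and the garbage can be given Hamming weights making each $G$-subcircuit weight-preserving; this is routine, since the garbage can always absorb the discrepancy.
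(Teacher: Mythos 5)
Your proposal is correct and follows essentially the same route as the paper's proof: use $G$'s garbage-tolerant universality to compute the catalyzer's program string, apply $C$ to produce a complement-padded (hence weight-$n$) encoding of the target, uncompute the program, exchange the data with the catalyzer register, and then mirror the procedure with $C^{-1}$, computing the cleanup program from the \emph{output} so that all ancillas are restored. The only differences are cosmetic --- you specialize to $\operatorname{Fredkin}$ and invoke its involution property where the paper handles an arbitrary conservative $F$ via reversibility.
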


\begin{proof}
We will actually show how to generate \textit{any} conservative transformation
$F:\left\{  0,1\right\}  ^{n}\rightarrow\left\{  0,1\right\}  ^{n}$.

Since $G$\ is non-affine, Lemmas \ref{lloydnot}, \ref{lloydand}, and
\ref{lloydcopy}\ together imply that we can use $G$ to compute any Boolean
function, albeit possibly with input-dependent garbage.

Let $x\in\{0,1\}^{n}$. \ Then by assumption, $C$ maps $0^{n}1^{n}$ to
$F\left(  x\right)  \overline{F\left(  x\right)  }$ using the program string
$p(F\left(  x\right)  \overline{F\left(  x\right)  })$. \ Now, starting with
$x$ and ancillas $0^{n}1^{n}$, we can clearly use $G$ to produce
\[
x,\operatorname{gar}\left(  x\right)  ,p(F\left(  x\right)  \overline{F\left(
x\right)  }),0^{n}1^{n},
\]
for some garbage $\operatorname{gar}\left(  x\right)  $. \ We can then apply
$C$ to get
\[
x,\operatorname{gar}\left(  x\right)  ,p(F\left(  x\right)  \overline{F\left(
x\right)  }),F\left(  x\right)  ,\overline{F\left(  x\right)  }.
\]
Uncomputing $p(F\left(  x\right)  \overline{F\left(  x\right)  })$ yields
\[
x,F\left(  x\right)  ,\overline{F\left(  x\right)  }.
\]
Notice that since $F$ is conservative, we have $\left\vert x,\overline
{F\left(  x\right)  }\right\vert =n$. \ Therefore, there exists some program
string $p(x,\overline{F\left(  x\right)  })$ that can be used as input to
$C^{-1}$ to map $x,\overline{F\left(  x\right)  }$ to $0^{n}1^{n}$. \ Again,
we can generate this program string using the fact that $G$ is non-affine:%
\[
x,F\left(  x\right)  ,\overline{F\left(  x\right)  },\operatorname{gar}\left(
F\left(  x\right)  \right)  ,p(x,\overline{F\left(  x\right)  }).
\]
Applying $C^{-1}$ and then uncomputing, we get%
\[
F\left(  x\right)  ,0^{n}1^{n}%
\]
which completes the proof.
\end{proof}

By Lemma \ref{affcontriv},\ every nontrivial conservative gate is also
non-affine. \ Therefore, combining Lemmas \ref{conquasifred}, \ref{genswap},
and \ref{catalyst}\ completes the proof of Theorem \ref{consfredkin}, that
every nontrivial conservative gate generates $\operatorname{Fredkin}$.

\subsection{Non-Conservative Generates Fredkin\label{MOD}}

Building on our work in Section \ref{CONSERV}, in this section we handle the
\textit{non}-conservative case, proving the following theorem.

\begin{theorem}
\label{danielthm}Every non-affine, non-conservative gate generates
$\operatorname{Fredkin}$.
\end{theorem}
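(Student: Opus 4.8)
The plan is to piggyback on the machinery already built in Section \ref{CONSERV}. By Lemma \ref{genswap}, a strong quasi-Fredkin gate generates a catalyzer, and by Lemma \ref{catalyst}, a non-affine gate together with a catalyzer generates $\operatorname*{Fredkin}$. Since the gate $G$ in question is non-affine by hypothesis, it therefore suffices to prove that $G$ generates a strong quasi-Fredkin gate; this is precisely the non-conservative analogue of Lemma \ref{conquasifred}, in which conservativity was the crucial ingredient. The whole difficulty is that conservativity is unavailable, so the conditional-swap behavior encoded by the relations (equations \ref{q1}--\ref{q4}) must be extracted instead from the \emph{failure} of $G$ to preserve inner products.

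First I would set up the arithmetic. Because $G$ is non-affine it is in particular nonlinear, and it is non-conservative, so Corollary \ref{nonlinip} applies: for every $k\ge 2$ there are inputs $x,y$ with $G(x)\cdot G(y)\not\equiv x\cdot y\pmod k$. Moreover, summing $G(x)\cdot G(y)$ over all pairs $(x,y)$ and reindexing by the bijectivity of $G$ shows that the total inner-product change is $0$, so $G$ can move an inner product in both directions. Taking tensor powers of $G$ (and of $G^{-1}$) adds the triples $\bigl(|H(u)|-|u|,\ |H(v)|-|v|,\ H(u)\cdot H(v)-u\cdot v\bigr)$ coordinatewise, so the set of achievable triples generates a lattice $\Lambda\subseteq\mathbb{Z}^3$. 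The goal is to show that $\Lambda$, or rather the achievable semigroup, contains a point whose third coordinate is exactly $1$ while the first two are controllable, giving a gate $H\in\langle G\rangle$ and inputs $u,v$ with $H(u)\cdot H(v)=u\cdot v+1$ exactly. Following the introduction's outline, I would argue that the only possible obstruction to such a point is a ``modularity obstruction'' --- all achievable inner-product changes divisible by some fixed $k\ge 2$, or the third coordinate forced into a congruence class relative to the first two --- and that this is ruled out by Corollary \ref{nonlinip} together with Theorems \ref{noshifter} and \ref{iponly2} (which constrain how the inner-product coordinate can be coupled to the weight coordinates); the Chinese Remainder Theorem then splices together the prime-by-prime gadgets (one moving the inner product by an amount prime to $p$, for each relevant prime $p$) into one that moves it by exactly $1$.

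Finally I would convert this ``$+1$ inner product'' gadget into a strong quasi-Fredkin gate. Since $G$ is non-affine, Lemmas \ref{lloydnot}, \ref{lloydand}, and \ref{lloydcopy} let $G$ compute arbitrary Boolean functions with input-dependent garbage; feeding appropriately designed control strings into the gadget --- with the garbage absorbed into the ``active'' control string $b$ and its passive variants $a,c,d$ --- realizes exactly the four relations defining a strong quasi-Fredkin gate (the $+1$ inner-product move being, in effect, a clean controlled creation of a matching pair of $1$'s that can be turned into the desired conditional swap on the two target bits). Once a strong quasi-Fredkin gate is in hand, Lemmas \ref{genswap} and \ref{catalyst} complete the proof. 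I expect the main obstacle to be the middle step: showing that the relevant $3$-dimensional lattice has no modularity obstruction, which means pinning down exactly which congruences \emph{must} hold among the achievable weight- and inner-product-changes and then invoking precisely the right non-existence results from Section \ref{HAMMING} to rule out everything else.
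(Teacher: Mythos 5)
Your skeleton matches the paper's for the hard middle step: the paper likewise uses Corollary \ref{nonlinip} plus the Chinese Remainder Theorem (its Lemma \ref{primeslemma}) to get a tensor power of the gate that shifts an inner product by exactly $-1$ modulo a chosen modulus, then runs a three-dimensional lattice argument on "types" to upgrade this to an exact slope point --- the paper's coordinates are the counts of $01$, $10$, $11$ pairs rather than your (weight change, weight change, inner-product change), but these are related by an invertible integer-linear change of variables, so that difference is cosmetic. However, there are two genuine gaps in your plan.

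First, you never arrange for the gate to fix the all-$0$ and all-$1$ inputs, and this is load-bearing in two places. The paper's very first move (Lemma \ref{fixedpoints}) conjugates $G^{\otimes n}$ by itself with a diagonal re-wiring to produce a non-affine gate $H$ fixing $0^{n^2}$ and $1^{n^2}$. The all-$0$ fixed point is what makes the "modularity generators" $(k(H),0,0)$, $(0,k(H),0)$, $(0,0,k(H))$ land in the lattice at all (Lemma \ref{mreduce}: you apply $H^{\otimes t}$ to a block where one of the two strings is all-$0$, and you need that string to be left alone); without it your claim that the only obstruction is a modularity obstruction has no proof. And the two fixed points are exactly what upgrade the weak quasi-Fredkin gate (relations \ref{q1}--\ref{q2}) to a strong one (adding \ref{q3}--\ref{q4} with $c$ the all-$0$ string and $d$ the all-$1$ string), which Lemma \ref{genswap} genuinely requires --- its swapping schedule needs passive control strings for target pairs currently equal to $00$ and to $11$.

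Second, your final conversion step is wrong as described. Once the lattice contains the distinguished point, the paper's Lemma \ref{quasfred} observes that the witnessing gate, after merely permuting input and output wires, already \emph{is} a weak quasi-Fredkin gate --- no computation is involved. Your proposal instead invokes Lemmas \ref{lloydnot}, \ref{lloydand}, and \ref{lloydcopy} to "design control strings" with the garbage "absorbed" into them; but the quasi-Fredkin relations require the control string to be returned \emph{unchanged}, whereas garbage from those lemmas is input-dependent and is not returned. Computing-with-garbage belongs inside Lemma \ref{catalyst} (where it is uncomputed by running the computation in reverse), not in the construction of the quasi-Fredkin gate itself. You also skip the easy reductions the paper performs --- passing to $H\otimes H$ if $H$ is parity-flipping so that one may assume mod-$k(H)$-preservation (via Theorem \ref{noshifter}), and bailing out to Theorem \ref{consfredkin} if the constructed gate happens to be conservative --- but those are minor compared to the two issues above.
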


Thus, let $G$\ be a non-affine, non-conservative gate. \ Starting from $G$, we
will perform a sequence of transformations to produce gates that are
\textquotedblleft gradually closer\textquotedblright\ to
$\operatorname{Fredkin}$. \ Some of these transformations might look a bit
mysterious, but they will culminate in a strong quasi-Fredkin gate, which we
already know from Lemmas \ref{genswap}\ and \ref{catalyst}\ is enough to
generate a $\operatorname{Fredkin}$\ gate (since $G$ is also non-affine).

The first step is to create a non-affine gate with two particular inputs as
fixed points.

\begin{lemma}
\label{fixedpoints}Let $G$ be any non-affine gate on $n$ bits. \ Then $G$
generates a non-affine gate $H$\ on $n^{2}$ bits that acts as the identity on
the inputs $0^{n^{2}}$ and $1^{n^{2}}$.
\end{lemma}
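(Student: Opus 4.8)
The plan is to realize $H$ as a conjugate of the $n$-fold tensor power $G^{\otimes n}$ by a single, carefully chosen bit-swap; this is the only place where any real reasoning enters.

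First I would set up the construction. Write $a:=G(0^n)$ and $b:=G(1^n)$, view the $n^2$ bits as $n$ consecutive blocks of $n$ bits, and define
\[
H\ :=\ (G^{-1})^{\otimes n}\ \circ\ \pi\ \circ\ G^{\otimes n},
\]
where $G^{\otimes n}$ and $(G^{-1})^{\otimes n}$ apply $G$, respectively $G^{-1}$, to every block, and $\pi$ is the bit-swap interchanging bit $i_0$ of block $1$ with bit $i_0$ of block $2$, for an index $i_0\in[n]$ to be fixed below. Since $\langle G\rangle$ is closed under composition, tensor products, inverses, and bit-swaps, $H\in\langle G\rangle$; and since $\pi$ leaves blocks $3,\dots,n$ untouched, $H$ acts as the identity there, so effectively $H=\widehat H\otimes\mathrm{id}$ for a $2n$-bit gate $\widehat H$ on blocks $1$ and $2$. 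To check that $H$ fixes $0^{n^2}$ and $1^{n^2}$: $G^{\otimes n}$ sends $0^{n^2}$ to the concatenation $a^{n}$ and $1^{n^2}$ to $b^{n}$; the two coordinates that $\pi$ interchanges both carry the value $a_{i_0}$ in $a^{n}$ and the value $b_{i_0}$ in $b^{n}$, so $\pi$ fixes $a^{n}$ and $b^{n}$, and then $(G^{-1})^{\otimes n}$ simply undoes $G^{\otimes n}$. This part works for every choice of $i_0$.

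The substantive step — and the main obstacle — is choosing $i_0$ so that $H$ (equivalently $\widehat H$) is non-affine, because for a careless choice this conjugate can collapse to an affine map. Since $G$ is non-affine, not all of its output coordinate functions are affine, so I would fix $i_0$ with $x\mapsto G(x)_{i_0}$ non-affine. Unwinding the definition shows that $\widehat H(x,y)=(x,y)$ when $G(x)_{i_0}=G(y)_{i_0}$ and $\widehat H(x,y)=(\phi(x),\phi(y))$ otherwise, where $\phi:=G^{-1}\circ(\,\cdot\ \oplus\ e_{i_0})\circ G$ is a fixed-point-free involution; hence the fixed-point set of $\widehat H$ is exactly $F:=\{(x,y):G(x)_{i_0}=G(y)_{i_0}\}$, which has size $2^{2n-1}$ because every coordinate of the bijection $G$ is balanced. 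If $\widehat H$ were affine, $F$ would be an affine subspace of $\mathbb F_2^{2n}$ of size $2^{2n-1}$, i.e.\ an affine hyperplane; comparing zero-sets, this would force $(x,y)\mapsto G(x)_{i_0}\oplus G(y)_{i_0}$, and therefore (fixing $y$) the map $x\mapsto G(x)_{i_0}$, to be affine, contradicting the choice of $i_0$. If one wants $H$ to act nontrivially on all $n$ blocks instead of only two, it suffices to take $\pi$ to be a product of such bit-swaps pairing up the blocks, and the same argument goes through.
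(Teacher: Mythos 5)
Your construction is the same skeleton as the paper's---conjugate $G^{\otimes n}$ by a bit-swap and observe that every ``diagonal'' input (an $n$-bit string repeated $n$ times) is fixed---but your verification of non-affineness is genuinely different, and it is correct. The paper swaps the $i$th output bit of the $i$th block with the $i$th output bit of the $n$th block for all $i\in[n-1]$, and argues that one can thereby route an arbitrary string into the final copy of $G^{-1}$, which is non-affine. You instead swap a single pair of bits in coordinate $i_0$, chosen so that $x\mapsto G(x)_{i_0}$ is non-affine, and then argue structurally: the resulting $2n$-bit permutation $\widehat H$ has fixed-point set exactly $\{(x,y):G(x)_{i_0}=G(y)_{i_0}\}$, of size $2^{2n-1}$; were $\widehat H$ affine this set would be an affine hyperplane, forcing $G(\cdot)_{i_0}$ to be affine, a contradiction. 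All the intermediate claims check out ($\phi$ is a fixed-point-free involution, each output coordinate of a bijection is balanced, and two $\{0,1\}$-valued functions with the same zero set coincide). Your argument is arguably more self-contained than the paper's routing argument, at the cost of needing the fixed-point-counting observation; the paper's version has the minor advantage of exhibiting a non-affine subfunction directly. Either way the lemma follows, since non-affineness of $\widehat H$ immediately gives non-affineness of $H=\widehat H\otimes\mathrm{id}$.
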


\begin{proof}
We construct $H$ as follows:

\begin{enumerate}
\item Apply $G^{\otimes n}$ to $n^{2}$\ input bits. \ Let $G_{i}$ be the
$i^{th}$ gate in this tensor product.

\item For all $i\in\left[  n-1\right]  $, swap the $i^{th}$ output bit of
$G_{i}$ with the $i^{th}$ output bit of $G_{n}$.

\item Apply $\left(  G^{-1}\right)  ^{\otimes n}$.
\end{enumerate}

It is easy to see that $H$ maps $0^{n^{2}}$ to $0^{n^{2}}$ and $1^{n^{2}}$ to
$1^{n^{2}}$. \ (Indeed, $H$ maps \textit{every} input that consists of an
$n$-bit string repeated $n$ times to itself.) \ To see that $H$ is also
non-affine, first notice that $G^{-1}$ is non-affine. \ But we can cause any
input $x=x_{1}\ldots x_{n}$\ that we like to be fed into the final copy of
$G^{-1}$, by encoding that input \textquotedblleft
diagonally,\textquotedblright\ with each $G_{i}$\ producing $x_{i}$\ as its
$i^{th}$\ output bit.\ \ Therefore $H$ is non-affine.
\end{proof}

As a remark, with all the later transformations we perform, we will want to
maintain the property that the all-$0$ and all-$1$ inputs are fixed points.
\ Fortunately, this will not be hard to arrange.

Let $H$\ be the output of Lemma \ref{fixedpoints}. \ If $H$ is conservative
(i.e., $k\left(  H\right)  =\infty$), then $H$ already generates
$\operatorname{Fredkin}$ by Theorem~\ref{consfredkin}, so we are done. \ Thus,
we will assume in what follows that $k\left(  H\right)  $ is finite. \ We will
further assume that $H$ is mod-$k\left(  H\right)  $-preserving. \ By Theorem
\ref{noshifter}, the only gates $H$\ that are \textit{not} mod-$k\left(
H\right)  $-preserving are the parity-flipping gates---but if $H$ is
parity-flipping, then $H\otimes H$\ is parity-preserving, and we can simply
repeat the whole construction with $H\otimes H$\ in place of $H$.

Now we want to show that we can use $H$\ to decrease the inner product between
a pair of inputs by exactly $1$\ mod $m$, for any $m$ we like.

\begin{lemma}
\label{primeslemma}Let $H$\ be any non-conservative, nonlinear gate. \ Then
for all $m\geq2$, there is a positive integer $t$, and inputs $x,y$, such that%
\[
H^{\otimes t}(x)\cdot H^{\otimes t}(y)-x\cdot y\equiv-1\left(
\operatorname{mod}m\right)  .
\]

\end{lemma}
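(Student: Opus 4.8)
The plan is to pin down exactly which shifts of the inner product a tensor power of $H$ can produce, and to show that this set is all of $\mathbb{Z}$; since $-1\in\mathbb{Z}$, this proves the statement (in fact with $\equiv$ replaced by $=$, for every $m$ at once). The key observation is the additivity of inner-product shifts under tensoring: splitting $x,y\in\{0,1\}^{tn}$ into $n$-bit blocks $x=x^{(1)}\cdots x^{(t)}$, $y=y^{(1)}\cdots y^{(t)}$, one has $H^{\otimes t}(x)\cdot H^{\otimes t}(y)-x\cdot y=\sum_{i=1}^{t}\bigl(H(x^{(i)})\cdot H(y^{(i)})-x^{(i)}\cdot y^{(i)}\bigr)$. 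So, writing $\Delta(u,v):=H(u)\cdot H(v)-u\cdot v$ and $D:=\{\Delta(u,v):u,v\in\{0,1\}^n\}$, the set of shifts realizable over all $t\ge1$ and all inputs is precisely the set of nonempty nonnegative-integer combinations of elements of $D$, and it suffices to exhibit $-1$ as such a combination.

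Two properties of the finite set $D$ make this work. First, $D$ contains both a positive and a negative integer: taking $v=u$ gives $\Delta(u,u)=|H(u)|-|u|\in W(H)$, and since $H$ is a bijection $\sum_u(|H(u)|-|u|)=0$, while non-conservativity forces some summand to be nonzero, so some $\Delta(u,u)$ is positive and some is negative. Second, $\gcd(D)=1$: by Corollary \ref{nonlinip} (which rests on Theorem \ref{iponly2} and Lemma \ref{orthoglin}), a non-conservative nonlinear gate fails to preserve inner products mod $k$ for every $k\ge2$, i.e.\ no $k\ge2$ divides every element of $D$. This is exactly the ``modularity obstruction'' that the hypotheses on $H$ are there to rule out.

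Given these two facts the rest is elementary number theory. Fix $a>0$ and $-b<0$ in $D$, let $g=\gcd(a,b)$, and note that because the coprime positive integers $a/g$ and $b/g$ satisfy $\{\,i(a/g)-j(b/g):i,j\ge0\,\}=\mathbb{Z}$ (solve over $\mathbb{Z}$, then add a large multiple of $(b/g,\,a/g)$ to both coefficients), the nonnegative combinations of $a$ and $-b$ form exactly $g\mathbb{Z}$. Now pick finitely many $d_1,\dots,d_r\in D$ with $\gcd(d_1,\dots,d_r)=1$; in the finite group $\mathbb{Z}/g\mathbb{Z}$ the set of nonnegative combinations of $d_1,\dots,d_r$ is a subgroup, hence all of $\mathbb{Z}/g\mathbb{Z}$, so every residue class mod $g$ contains a nonnegative combination of the $d_i$. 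Adding such a combination to a suitable element of $g\mathbb{Z}$ exhibits $-1$ as a nonnegative combination of elements of $D$, say $-1=\sum_i c_i\,\Delta(u_i,v_i)$ with $c_i\ge0$ not all zero. Taking $t=\sum_i c_i$, and letting $x$ (resp.\ $y$) be the concatenation of $c_i$ copies of $u_i$ (resp.\ $v_i$) over all $i$, gives inputs with $H^{\otimes t}(x)\cdot H^{\otimes t}(y)-x\cdot y=-1$, which in particular is $\equiv-1\pmod m$ for every $m$.

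The one genuine obstacle is establishing $\gcd(D)=1$; everything after that is bookkeeping. But that is precisely the content of the complex-polynomial arguments of Section \ref{HAMMING}, repackaged as Corollary \ref{nonlinip}, so no new idea is needed here. (If one only wanted the congruence modulo a fixed $m$ one could instead argue prime-power by prime-power and glue with the Chinese Remainder Theorem; the uniform argument above subsumes this and yields the exact value $-1$.)
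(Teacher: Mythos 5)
Your proof is correct, and it shares the two essential ingredients with the paper's: the additivity of the inner-product shift $\Delta$ under tensoring, and Corollary \ref{nonlinip} as the tool that rules out a common modulus dividing every achievable shift. Where you diverge is in the final number-theoretic step, and the divergence is substantive. The paper fixes $m$, picks one witness $\gamma_i$ per prime divisor of $m$, and solves $\sum_i d_i\gamma_i\equiv-1\ (\operatorname{mod}m)$ by Bezout/CRT; since everything happens in $\mathbb{Z}/m\mathbb{Z}$, the signs of the $\gamma_i$ are irrelevant and nonnegativity of the $d_i$ is free. You instead prove the stronger statement that the set of achievable shifts is all of $\mathbb{Z}$, so that the exact value $-1$ is attained. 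This costs you one extra (correct) observation that the paper does not need — that the diagonal values $\Delta(u,u)=\left\vert H(u)\right\vert-\left\vert u\right\vert$ sum to zero over all $u$ by bijectivity and hence, by non-conservativity, supply elements of $D$ of both signs — plus a standard numerical-semigroup argument (nonnegative combinations of $a>0$ and $-b<0$ realize all of $\gcd(a,b)\,\mathbb{Z}$, and the image of the nonnegative combinations in $\mathbb{Z}/g\mathbb{Z}$ is a subgroup, hence everything once $\gcd(D)=1$). What your version buys is a single witness pair $(t,x,y)$ achieving $\Delta=-1$ exactly, valid for every $m$ simultaneously, which is in the same spirit as how the lattice point $(1,1,-1)$ is ultimately used in Lemma \ref{inlattice}; what the paper's version buys is brevity, since the congruence is all that Lemma \ref{inlattice} actually consumes (it cleans up the residual multiple of $k(H)$ itself via Lemma \ref{mreduce}). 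Both proofs are sound.
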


\begin{proof}
Let $m=p_{1}^{\alpha_{1}}p_{2}^{\alpha_{2}}\ldots p_{s}^{\alpha_{s}}$ where
each $p_{i}$ is a distinct prime. \ By Corollary~\ref{nonlinip}, we know that
for each $p_{i}$, there is some pair of inputs $x_{i},y_{i}$ such that%
\[
H\left(  x_{i}\right)  \cdot H\left(  y_{i}\right)  \not \equiv x_{i}\cdot
y_{i}\left(  \operatorname{mod}p_{i}\right)  .
\]
In other words, letting%
\[
\gamma_{i}:=H\left(  x_{i}\right)  \cdot H\left(  y_{i}\right)  -x_{i}\cdot
y_{i},
\]
we have $\gamma_{i}\not \equiv 0\left(  \operatorname{mod}p_{i}\right)  $ for
all $i\in\left\{  1,\ldots,s\right\}  $. \ Our goal is to find an $\left(
x,y\right)  $\ such that%
\[
H^{\otimes t}\left(  x\right)  \cdot H^{\otimes t}\left(  y\right)  -x\cdot
y\equiv-1\left(  \operatorname{mod}m\right)  .
\]
To do so, it suffices to find nonnegative integers $d_{1},\ldots,d_{s}$\ that
solve the equation
\begin{equation}
\sum_{i=1}^{s}d_{i}\gamma_{i}\equiv-1\left(  \operatorname{mod}m\right)  .
\label{diyi}%
\end{equation}
Here $d_{i}$ represents the number of times the pair $\left(  x_{i}%
,y_{i}\right)  $\ occurs in $\left(  x,y\right)  $. \ By construction, no
$p_{i}$\ divides $\gamma_{i}$,\ and since the $p_{i}$'s\ are distinct primes,
they have no common factor. \ This implies that $\gcd\left(  \gamma_{1}%
,\ldots,\gamma_{s},m\right)  =1$. \ So by the Chinese Remainder Theorem, a
solution to (\ref{diyi}) exists.
\end{proof}

Note also that, if $H$\ maps the all-$0$ and all-$1$ strings to themselves,
then $H^{\otimes t}$\ does so as well.

To proceed further, it will be helpful to introduce some terminology.
\ Suppose that we have two strings $x=x_{1}\ldots x_{n}$ and $y=y_{1}\ldots
y_{n}$. \ For each $i$, the pair $x_{i}y_{i}$ has one of four possible values:
$00$, $01$, $10$, or $11$. \ Let the \textit{type} of $\left(  x,y\right)  $
be an ordered triple $\left(  a,b,c\right)  \in\mathbb{Z}^{3}$, which simply
records the number of occurrences in $\left(  x,y\right)  $\ of each of the
three pairs $01$, $10$, and $11$. \ (It will be convenient not to keep track
of $00$ pairs, since they don't contribute to the Hamming weight of either $x$
or $y$.) \ Clearly, by applying swaps, we can convert between any pairs
$\left(  x,y\right)  $\ and $\left(  x^{\prime},y^{\prime}\right)  $\ of the
same type, provided that $x,y,x^{\prime},y^{\prime}$\ all have the same length
$n$.

Now suppose that, by repeatedly applying a gate $H$, we can convert some input
pair $\left(  x,y\right)  $\ of type $\left(  a,b,c\right)  $\ into some pair
$\left(  x^{\prime},y^{\prime}\right)  $\ of type $\left(  a^{\prime
},b^{\prime},c^{\prime}\right)  $. \ Then we say that $H$ \textit{generates
the slope}%
\[
\left(  a^{\prime}-a,b^{\prime}-b,c^{\prime}-c\right)  .
\]
Note that, if $H$\ generates the slope $\left(  p,q,r\right)  $, then by
inverting the transformation, we can also generate the slope $\left(
-p,-q,-r\right)  $. \ Also, if $H$\ generates the slope $\left(  p,q,r\right)
$\ by acting on the input pair $\left(  x,y\right)  $, and the slope $\left(
p^{\prime},q^{\prime},r^{\prime}\right)  $\ by acting on $\left(  x^{\prime
},y^{\prime}\right)  $, then it generates the slope $\left(  p+p^{\prime
},q+q^{\prime},r+r^{\prime}\right)  $\ by acting on $\left(  xx^{\prime
},yy^{\prime}\right)  $. \ For these reasons, the achievable slopes form a
$3$-dimensional \textit{lattice}---that is, a subset of $\mathbb{Z}^{3}%
$\ closed under integer linear combinations---which we can denote
$\mathcal{L}\left(  H\right)  $.

What we really want is for the lattice $\mathcal{L}\left(  H\right)  $\ to
contain a particular point: $\left(  1,1,-1\right)  $. \ Once we have shown
this, we will be well on our way to generating a strong quasi-Fredkin\ gate.
\ We first need a general fact about slopes.

\begin{lemma}
\label{mreduce}Let $H$\ map the all-$0$ input to itself. \ Then $\mathcal{L}%
\left(  H\right)  $\ contains the points $\left(  k\left(  H\right)
,0,0\right)  $, $\left(  0,k\left(  H\right)  ,0\right)  $, and $\left(
0,0,k\left(  H\right)  \right)  $.
\end{lemma}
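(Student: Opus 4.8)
The plan is to produce each of the three lattice points by feeding $H$ a single carefully chosen pair of inputs on which it necessarily moves Hamming weight by exactly $k(H)$, while the type of the pair stays confined to one coordinate axis. The only external fact I would use is Proposition~\ref{whatdanielneeds}: since $k(H)$ is finite, i.e. $H$ is non-conservative, applying that proposition with $q=1$ yields a tensor power $t$ and a string $s$ of some length $m$ with $|H^{\otimes t}(s)|-|s| = k(H)$. Everything else is bookkeeping about how the type of a \emph{degenerate} pair $(x,y)$ built out of a single string records that string's Hamming weight.

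For the point $(0,k(H),0)$ I would take the pair $(s,0^{m})$. Each coordinate contributes $00$ (where $s_i=0$) or $10$ (where $s_i=1$), so its type is $(0,|s|,0)$; since $H$ fixes the all-zero string, $H^{\otimes t}$ fixes $0^m$, so applying $H^{\otimes t}$ to both coordinates yields the pair $(H^{\otimes t}(s),0^m)$, of type $(0,|H^{\otimes t}(s)|,0)$. Hence $H$ generates the slope $\bigl(0,\ |H^{\otimes t}(s)|-|s|,\ 0\bigr)=(0,k(H),0)$. For $(k(H),0,0)$ I would take the pair $(0^m,s)$: its type is $(|s|,0,0)$ and $H^{\otimes t}$ carries it to type $(|H^{\otimes t}(s)|,0,0)$, giving the slope $(k(H),0,0)$. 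For $(0,0,k(H))$ I would take the pair $(s,s)$: each coordinate is $00$ or $11$, so the type is $(0,0,|s|)$, and applying $H^{\otimes t}$ to both copies gives type $(0,0,|H^{\otimes t}(s)|)$, giving the slope $(0,0,k(H))$. Each of these slopes lies in $\mathcal{L}(H)$ by definition.

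I do not anticipate a genuine obstacle: the only nontrivial ingredient is the guaranteed weight-shift of exactly $k(H)$ in some tensor power, which is precisely Proposition~\ref{whatdanielneeds}, and the rest is a direct reading of the definition of ``type.'' The one point requiring a word of care is that the statement implicitly presumes $k(H)<\infty$ (otherwise the three target points are not well defined); this is harmless here, since Section~\ref{MOD} has already reduced to that case, the conservative case $k(H)=\infty$ having been dispatched by Theorem~\ref{consfredkin}.
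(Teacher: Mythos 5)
Your proposal is correct and follows essentially the same route as the paper: both invoke Proposition~\ref{whatdanielneeds} (with $q=1$) to obtain a tensor power $H^{\otimes t}$ and a string $s$ whose weight it shifts by exactly $k(H)$, and then realize each axis point by pairing $s$ against $0^{m}$, $0^{m}$ against $s$, or $s$ against itself, using the hypothesis that $H$ fixes the all-zero input. Your remark that the statement implicitly presumes $k(H)<\infty$ is a fair observation and matches how the lemma is used in Section~\ref{MOD}.
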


\begin{proof}
Recall from Proposition {\ref{whatdanielneeds} }that there exists a $t$, and
an input $w$, such that $\left\vert H^{\otimes t}\left(  w\right)  \right\vert
=\left\vert w\right\vert +k\left(  H\right)  $. \ Thus, to generate the slope
$\left(  k\left(  H\right)  ,0,0\right)  $, we simply need to do the following:

\begin{itemize}
\item Choose an input pair $\left(  x,y\right)  $\ with sufficiently many
$x_{i}y_{i}$\ pairs of the forms $10$\ and $00$.

\item Apply $H^{\otimes t}$\ to a subset of bits on which $x$\ equals $w$, and
$y$\ equals the all-$0$ string.
\end{itemize}

Doing this will increase the number of $10$\ pairs by $k\left(  H\right)  $,
while not affecting the number of $01$\ or $11$ pairs.

To generate the slope $\left(  0,k\left(  H\right)  ,0\right)  $, we do
exactly the same thing, except that we reverse the roles of $x$\ and $y$.

Finally, to generate the slope $\left(  0,0,k\left(  H\right)  \right)  $, we
choose an input pair $\left(  x,y\right)  $\ with sufficiently many
$x_{i}y_{i}$\ pairs of the forms $11$\ and $00$, and then use the same
procedure to increase the number of $11$\ pairs by $k\left(  H\right)  $.
\end{proof}

We can now prove that $\left(  1,1,-1\right)  $\ is indeed in our lattice.

\begin{lemma}
\label{inlattice}Let $H$\ be a mod-$k\left(  H\right)  $-preserving gate that
maps the all-$0$ input to itself, and suppose there exist inputs $x,y$\ such
that%
\[
H(x)\cdot H(y)-x\cdot y\equiv-1\left(  \operatorname{mod}k\left(  H\right)
\right)  .
\]
Then $\left(  1,1,-1\right)  \in\mathcal{L}\left(  H\right)  $.
\end{lemma}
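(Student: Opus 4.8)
The plan is to combine the hypothesis about inner products with Lemma~\ref{mreduce}, exploiting the fact that a change of $-1$ in the inner product corresponds, at the level of types, to a slope whose third coordinate (the count of $11$ pairs) differs by $\pm 1$ from what Lemma~\ref{mreduce} can already produce. Concretely, suppose $(x,y)$ has type $(a,b,c)$ and $H^{\otimes t}$ maps it to some $(x',y')$ of type $(a',b',c')$. Since $x\cdot y = c$ and $x'\cdot y' = c'$ (the inner product counts exactly the $11$ positions), the hypothesis $H^{\otimes t}(x)\cdot H^{\otimes t}(y) - x\cdot y \equiv -1 \pmod{k(H)}$ (with $t=1$, or after passing to a suitable tensor power, which is harmless) says $c' - c \equiv -1 \pmod{k(H)}$. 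Thus $H$ generates a slope $(p,q,r)$ with $r \equiv -1 \pmod{k(H)}$. Meanwhile, Lemma~\ref{mreduce} tells us that $(k(H),0,0)$, $(0,k(H),0)$, and $(0,0,k(H))$ all lie in $\mathcal{L}(H)$. Adding an appropriate integer multiple of $(0,0,k(H))$ to the slope $(p,q,r)$ lets us replace $r$ by exactly $-1$; adding appropriate multiples of $(k(H),0,0)$ and $(0,k(H),0)$ lets us reduce the first two coordinates modulo $k(H)$.

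So after this reduction we know $\mathcal{L}(H)$ contains a point of the form $(p_0, q_0, -1)$ with $0 \le p_0, q_0 < k(H)$. The remaining task is to drive $p_0$ and $q_0$ down to $1$ and $1$ respectively. Here I would use a second consequence of $H$ being non-conservative and mod-$k(H)$-preserving, together with a careful accounting of how the slope $(p_0,q_0,-1)$ interacts with $H$'s action on Hamming weights. The key observation is that any slope $(p,q,r)$ generated by $H$ must respect $H$'s weight behavior: acting on a pair changes $|x|$ by $p + r$ (the $01$ and $11$ counts), and since $H$ is mod-$k(H)$-preserving we must have $p + r \equiv 0 \pmod{k(H)}$ and likewise $q + r \equiv 0 \pmod{k(H)}$ (tracking $|y|$). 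Applied to our reduced slope, $p_0 - 1 \equiv 0$ and $q_0 - 1 \equiv 0 \pmod{k(H)}$, which with $0\le p_0,q_0 < k(H)$ forces $p_0 = q_0 = 1$ (the edge case $k(H)=1$ makes everything trivially $\equiv 0$, but then any slope is available, so we are fine either way). Hence $(1,1,-1) \in \mathcal{L}(H)$.

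The main obstacle I anticipate is the bookkeeping around exactly which tensor power to use and making sure all the auxiliary slopes from Lemma~\ref{mreduce} are available simultaneously on the same input length—i.e., that when we "add" slopes we really can concatenate the witnessing input pairs, which the paragraph preceding Lemma~\ref{mreduce} already justifies (slopes form a lattice under concatenation). A secondary subtlety is the parity-flipping case for $H$; but by the remark following Lemma~\ref{primeslemma} we may assume $H$ is genuinely mod-$k(H)$-preserving (replacing $H$ by $H\otimes H$ if necessary), so the congruences $p+r \equiv 0$ and $q+r\equiv 0 \pmod{k(H)}$ hold on the nose. Once that is in place, the argument is just: lift the inner-product hypothesis to a slope, reduce all three coordinates mod $k(H)$ using Lemma~\ref{mreduce}, and then invoke the weight-preservation congruences to pin the reduced coordinates to exactly $(1,1,-1)$.
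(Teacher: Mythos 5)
Your proposal is correct and follows essentially the same route as the paper's proof: lift the inner-product hypothesis to a slope with third coordinate $\equiv -1 \pmod{k(H)}$, use Lemma~\ref{mreduce} to fix the third coordinate at exactly $-1$, invoke mod-$k(H)$-preservation of Hamming weights to force the first two coordinates $\equiv 1 \pmod{k(H)}$, and reduce again with Lemma~\ref{mreduce}. (The only quibble is a harmless label swap: with the paper's convention that $a$ counts $01$ pairs, $|x|$ changes by $q+r$ and $|y|$ by $p+r$, but both congruences come out the same.)
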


\begin{proof}
The assumption implies directly that $H$\ generates a slope of the form
$\left(  p,q,-1+rk\left(  H\right)  \right)  $, for some integers $p,q,r$.
\ Thus, Lemma \ref{mreduce} implies that $H$ also generates a slope of the
form $\left(  p,q,-1\right)  $, via some gate $G\in\left\langle H\right\rangle
$\ acting on inputs $\left(  x,y\right)  $. \ Now, since $H$ is mod-$k\left(
H\right)  $-preserving, we have $\left\vert G\left(  x\right)  \right\vert
\equiv\left\vert x\right\vert \left(  \operatorname{mod}k\left(  H\right)
\right)  $\ and $\left\vert G\left(  y\right)  \right\vert \equiv\left\vert
y\right\vert \left(  \operatorname{mod}k\left(  H\right)  \right)  $. \ But
this implies that $p\equiv1\left(  \operatorname{mod}k\left(  H\right)
\right)  $\ and $q\equiv1\left(  \operatorname{mod}k\left(  H\right)  \right)
$. \ So, again using Lemma \ref{mreduce}, we can generate the slope $\left(
1,1,-1\right)  $.
\end{proof}

Combining Lemmas \ref{fixedpoints}, \ref{primeslemma}, and \ref{inlattice}, we
can summarize our progress so far as follows.

\begin{corollary}
\label{sofarcor}Let $G$ be any non-affine, non-conservative gate. \ Then
either $G$\ generates $\operatorname{Fredkin}$, or else it generates a gate
$H$\ that maps the all-$0$ and all-$1$ inputs to themselves, and that also
satisfies $\left(  1,1,-1\right)  \in\mathcal{L}\left(  H\right)  $.
\end{corollary}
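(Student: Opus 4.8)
The plan is to chain Lemmas \ref{fixedpoints}, \ref{primeslemma}, and \ref{inlattice} together, inserting one case split on the respecting number and one tensor-doubling trick to dispose of the parity-flipping anomaly. First I would apply Lemma \ref{fixedpoints} to $G$, obtaining a non-affine gate $H_{0}\in\left\langle G\right\rangle$ that acts as the identity on the all-$0$ and all-$1$ strings. Since $H_{0}$ is non-affine it is in particular nontrivial, so if $H_{0}$ is conservative (i.e.\ $k\left(H_{0}\right)=\infty$), then Theorem \ref{consfredkin} already gives us $\operatorname{Fredkin}$ and we are in the first alternative of the corollary. Hence I may assume $k\left(H_{0}\right)$ is finite.

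Next I would normalize $H_{0}$ so that it is mod-$k\left(H_{0}\right)$-preserving. By Theorem \ref{noshifter} the only mod-respecting gates that are not mod-preserving are the parity-flipping ones; and if $H_{0}$ is parity-flipping, then $H_{0}\otimes H_{0}$ is parity-preserving, still fixes the all-$0$ and all-$1$ inputs, and is still non-affine (restrict one tensor factor to a fixed input to recover $H_{0}$). So, replacing $H_{0}$ by $H_{0}\otimes H_{0}$ if necessary, I may assume $H_{0}$ is mod-$k\left(H_{0}\right)$-preserving with $k\left(H_{0}\right)$ finite; in particular $H_{0}$ is non-conservative, and, being non-affine, also nonlinear. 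Now I apply Lemma \ref{primeslemma} with $m=k\left(H_{0}\right)$ to get a $t$ and inputs $x,y$ with $H_{0}^{\otimes t}\left(x\right)\cdot H_{0}^{\otimes t}\left(y\right)-x\cdot y\equiv-1\left(\operatorname{mod}k\left(H_{0}\right)\right)$, and set $H:=H_{0}^{\otimes t}$. By Proposition \ref{kgtensor} we have $k\left(H\right)=k\left(H_{0}\right)$, and $H$ inherits both the fixed points and the mod-$k\left(H\right)$-preserving property from $H_{0}$ (for the fixed points this is the remark following Lemma \ref{primeslemma}). Thus $H$ satisfies every hypothesis of Lemma \ref{inlattice}, which yields $\left(1,1,-1\right)\in\mathcal{L}\left(H\right)$; since $H$ also fixes the all-$1$ input, $H$ is exactly the gate promised in the second alternative.

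The individual steps are short, so I expect no genuine obstacle here---the real work is in the three lemmas being combined. The only thing requiring care is the bookkeeping: propagating non-affineness, the two fixed points, and (after normalization) mod-$k$-preservingness through the tensor-doubling and the $t$-fold tensor power, and checking at each stage that the gate produced still lies in $\left\langle G\right\rangle$.
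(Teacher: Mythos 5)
Your argument is correct and is essentially identical to the paper's: the paper likewise obtains the fixed-point gate via Lemma \ref{fixedpoints}, splits off the conservative case via Theorem \ref{consfredkin}, normalizes away the parity-flipping anomaly by tensoring the gate with itself, and then chains Lemma \ref{primeslemma} (with $m=k(H)$) into Lemma \ref{inlattice}. The only difference is cosmetic: you carry out explicitly the bookkeeping of the two fixed points and the mod-$k$-preserving property through the tensor powers, which the paper leaves implicit.
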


We now explain the importance of the lattice point $\left(  1,1,-1\right)  $.
\ Given a gate $Q$, let us call $Q$\ \textit{weak quasi-Fredkin} if there
exist strings $a$ and $b$ such that%
\begin{align*}
Q\left(  a,01\right)   &  =\left(  a,01\right)  ,\\
Q\left(  b,01\right)   &  =\left(  b,10\right)  .
\end{align*}
Then:

\begin{lemma}
\label{quasfred}A gate $H$ generates a weak quasi-Fredkin gate if and only if
$\left(  1,1,-1\right)  \in\mathcal{L}\left(  H\right)  $.
\end{lemma}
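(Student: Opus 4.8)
The plan is to prove the two directions separately, and the statement turns out to be essentially a translation exercise between the "type/slope" language and the definition of weak quasi-Fredkin. For the forward direction, suppose $(1,1,-1) \in \mathcal{L}(H)$. By definition of the lattice $\mathcal{L}(H)$, this means some gate $G \in \langle H \rangle$ converts an input pair $(x,y)$ of type $(\alpha,\beta,\gamma)$ into a pair $(x',y')$ of type $(\alpha+1, \beta+1, \gamma-1)$. Since all four of $x,y,x',y'$ have the same length $n$, and since the total count of $00$-pairs is $n$ minus the sum of the other three counts, a type with $(a,b,c)$ having $a+b+c \le n$ corresponds to a genuine pair of $n$-bit strings. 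I would take $G$ acting on $(x,y)$ and pad with extra $00$-pairs if needed so that there is room. The key observation is that the pair $(x,y)$ of type $(\alpha,\beta,\gamma)$ always has at least one $01$-coordinate available to play the role of the "active" bits (if not, add one $01$-pair to both sides, which changes the type by $(1,1,0)$ — but wait, this may move us off the target; instead, I should argue directly that the type change $(+1,+1,-1)$ forces the existence of a coordinate whose $x_iy_i$ value changes, and by symmetry under swaps we may assume the relevant coordinate starts as $01$). Concretely: think of $G$ as acting on the "control string" $a$ (all the coordinates whose $(x_i,y_i)$ values we fix) together with two designated target bits. We want to exhibit control strings $a$ and $b$ such that $Q(a,01) = (a,01)$ and $Q(b,01) = (b,10)$. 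The slope $(1,1,-1)$ says: starting from some configuration, we can trade one $11$-pair for one $01$-pair and one $10$-pair while leaving everything else (the control) fixed. I would carefully set up $Q := G$, let $b$ encode the rest of the strings in the "before" configuration where the two target bits are $01$ and they get sent to $10$, and let $a$ encode a configuration (obtained by adjoining an idle $01$-pair, using that $G$ fixes it) where the target $01$ is left alone. The bookkeeping is the only real content here.

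For the converse, suppose $H$ generates a weak quasi-Fredkin gate $Q$ with control strings $a,b$ satisfying $Q(a,01) = (a,01)$ and $Q(b,01) = (b,10)$. I want to show $(1,1,-1) \in \mathcal{L}(H)$. Consider feeding $Q$ an input pair built as follows: for the "$x$" string use $b$ followed by a $0$ then a $1$ (i.e. the target bits in state $01$), and for the "$y$" string use the string that agrees with the control part in the way that makes $Q$ act as the $b$-branch, with target bits also giving $01$ in the second coordinate — actually the cleaner approach is to run $Q$ as a two-input gate on a pair $(x,y)$, where on the control coordinates the pair $(x_i,y_i)$ is whatever $b$ (resp. $a$) prescribes, and on the two target coordinates we have the pair $(0,0)$ on one target bit and... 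Here I need to be careful about whether "control string" refers to one string or to the control part of both $x$ and $y$. Re-reading the strong quasi-Fredkin definition: $Q(b,01) = (b,10)$ means $Q$ takes the $(n{-}2)$-bit control string $b$ plus two bits $01$ to $b$ plus $10$. So when we feed $Q$ a pair $(x,y)$, the control part is a pair of $(n{-}2)$-bit strings, and on the two target bits we have a pair from $\{00,01,10,11\}^2$. Choosing the control part of $(x,y)$ so that $x$'s control is $b$ and the target bits of $x$ are in state "flip", while $y$'s control is chosen to make $Q$ do nothing, we get exactly a type change $(+1,+1,-1)$ or its negation on the target coordinates — so $(1,1,-1) \in \mathcal{L}(H)$ after possibly negating, which is allowed since lattices are closed under negation.

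The main obstacle I expect is the indexing/bookkeeping in matching the "pair $(x,y)$ fed to a gate" picture against the "control string plus target bits" picture: one has to be scrupulous about the fact that in a type-$(a,b,c)$ pair the roles of "which coordinate is a control coordinate" are symmetric under swaps, so without loss of generality the two active target coordinates can be taken to be the last two, and the control string is simply the pair $(x_{\mathrm{rest}}, y_{\mathrm{rest}})$. Once that dictionary is fixed, both directions are short. A minor subtlety in the forward direction is ensuring enough "room" (enough $00$-pairs) so that the pair realizing the slope $(1,1,-1)$ actually corresponds to honest bit strings; this is handled by tensoring with extra trivial coordinates, which does not change $\mathcal{L}(H)$ since a $00$-pair contributes $(0,0,0)$ to the type. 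I would not belabor these routine verifications in the write-up.
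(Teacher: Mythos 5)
Your direction ``weak quasi-Fredkin $\Rightarrow(1,1,-1)\in\mathcal{L}(H)$'' is fine and matches the paper: feeding the pair of inputs $(a,01)$ and $(b,01)$ to $Q$ leaves the control coordinates' pair-values unchanged, turns the $00$ coordinate into $10$ and the $11$ coordinate into $01$, and so realizes the slope $(1,1,-1)$ directly. The problem is the other direction. There you tacitly assume that the gate $G$ realizing the slope $(1,1,-1)$ acts \emph{locally}: that it leaves the ``control'' coordinates fixed and trades pair-values only on two designated ``target'' coordinates. Nothing in the definition of $\mathcal{L}(H)$ gives you this --- the slope only constrains the aggregate type change, and $G$ may alter the pair-value at every single coordinate. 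Relatedly, your construction of the fixed-point branch by ``adjoining an idle $01$-pair, using that $G$ fixes it'' cannot work: a weak quasi-Fredkin gate must \emph{fix} $(a,01)$ and \emph{swap} the last two bits of $(b,01)$ at the \emph{same} two target coordinates, so those coordinates cannot be idle tensor-factor bits. (You are also chasing the wrong coordinates: the two active positions of the input pair $\bigl((a,01),(b,01)\bigr)$ are a $00$ coordinate and a $11$ coordinate, not a $01$ coordinate, and the slope $(1,1,-1)$ converts one $00$ and one $11$ into a $01$ and a $10$.)

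The repair, which is what the paper does, is to extract from the slope the three facts $\lvert Q(x)\rvert=\lvert x\rvert$, $\lvert Q(y)\rvert=\lvert y\rvert$, and $Q(x)\cdot Q(y)=x\cdot y-1$, and then use input/output bit permutations rather than locality. Since the $11$-count of $(Q(x),Q(y))$ is $\gamma-1\ge 0$ and its $00$-count is one less than that of $(x,y)$, the input pair $(x,y)$ contains at least one $00$ coordinate and one $11$ coordinate; permute the inputs so these sit in the last two positions, i.e.\ $x=(a,0,1)$ and $y=(b,0,1)$. Let $y'$ be $y$ with its last two bits swapped. Then $(x,y')$ has type equal to $\operatorname{type}(x,y)+(1,1,-1)=\operatorname{type}(Q(x),Q(y))$, so a \emph{single} output permutation $\tau$ sends $Q(x)\mapsto x$ and $Q(y)\mapsto y'$ simultaneously; the gate $\tau\circ Q$ is then weak quasi-Fredkin with control strings $a$ and $b$. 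Your write-up never invokes the weight-preservation facts that make the fixed-point branch possible, so as it stands the harder direction does not go through.
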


\begin{proof}
If $H$\ generates a weak quasi-Fredkin gate $Q$, then applying $Q$ to the
input pair $\left(  a,01\right)  $\ and $\left(  b,01\right)  $\ directly
generates the slope $\left(  1,1,-1\right)  $. \ For the converse direction,
if $H$\ generates the slope $\left(  1,1,-1\right)  $, then by definition
there exists a gate $Q\in\left\langle H\right\rangle $, and inputs $x,y$, such
that $\left\vert Q\left(  x\right)  \right\vert =\left\vert x\right\vert
$\ and $\left\vert Q\left(  y\right)  \right\vert =\left\vert y\right\vert $,
while%
\[
Q\left(  x\right)  \cdot Q\left(  y\right)  =x\cdot y-1.
\]
In other words, applying $Q$\ decreases by one\ the number of $1$\ bits on
which $x$\ and $y$ agree, while leaving their Hamming weights the same. \ But
in that case, by permuting input and output bits, we can easily put $Q$ into
the form of a weak quasi-Fredkin gate.
\end{proof}

Next, recall the definition of a strong quasi-Fredkin gate from Section
\ref{CONSERV}. \ Then combining Corollary \ref{sofarcor} with Lemma
\ref{quasfred}, we have the following.

\begin{corollary}
\label{strongquasfred}Let $G$ be any non-affine, non-conservative gate. \ Then
either $G$\ generates $\operatorname{Fredkin}$, or else it generates a strong
quasi-Fredkin gate.
\end{corollary}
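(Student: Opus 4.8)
The plan is to read the corollary off by assembling Corollary \ref{sofarcor} and Lemma \ref{quasfred}, together with one extra observation about fixed points. First I would dispose of the trivial alternative: if $G$ already generates $\operatorname{Fredkin}$, there is nothing to prove. So assume instead, via Corollary \ref{sofarcor}, that $G$ generates a gate $H$ which maps the all-$0$ and all-$1$ inputs to themselves and which satisfies $(1,1,-1) \in \mathcal{L}(H)$.

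Applying Lemma \ref{quasfred} to this particular $H$ yields a weak quasi-Fredkin gate $Q \in \langle H \rangle \subseteq \langle G \rangle$: there are control strings $a \neq b$ with $Q(a,01) = (a,01)$ and $Q(b,01) = (b,10)$, which are precisely conditions (\ref{q1}) and (\ref{q2}). The point I would stress is that the $Q$ coming out of the proof of Lemma \ref{quasfred} is obtained from $H$ purely by tensoring, composing, inverting, and permuting bits — the slopes that generate $\mathcal{L}(H)$ are realized by ancilla-free circuits of $H$-gates and swaps — and every one of these operations preserves the property of fixing the constant strings. This is exactly the bookkeeping that the remark following Lemma \ref{fixedpoints} promises to maintain and that Lemmas \ref{mreduce} and \ref{inlattice} carry through. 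Hence the weak quasi-Fredkin gate $Q$, acting on say $n$ bits with its last two coordinates playing the role of target bits, still satisfies $Q(0^n) = 0^n$ and $Q(1^n) = 1^n$.

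Granting this, conditions (\ref{q3}) and (\ref{q4}) come for free: put $c := 0^{n-2}$, so that $Q(c,00) = Q(0^n) = 0^n = (c,00)$, and put $d := 1^{n-2}$, so that $Q(d,11) = Q(1^n) = 1^n = (d,11)$; since the four inputs $(a,01)$, $(b,01)$, $(c,00)$, $(d,11)$ are pairwise distinct (using $a\neq b$ and distinct last two bits), there is no consistency obstruction. Thus $Q$ is a strong quasi-Fredkin gate generated by $G$, which is the second alternative of the corollary. The only real content of the argument — and therefore the step I would be most careful about — is verifying that the all-$0$/all-$1$ fixed-point property genuinely survives every manipulation used to build $Q$ out of $H$ (in particular that no step silently introduces an ancilla initialized to a value other than $0$ or $1$), since that preservation is exactly what upgrades the conclusion from a weak to a strong quasi-Fredkin gate.
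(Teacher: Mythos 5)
Your proposal is correct and follows essentially the same route as the paper: combine Corollary \ref{sofarcor} with Lemma \ref{quasfred} to obtain a weak quasi-Fredkin gate that still fixes the all-$0$ and all-$1$ strings, then take $c$ and $d$ to be those constant strings. The extra care you devote to checking that the fixed-point property survives the ancilla-free tensor/compose/swap constructions behind $\mathcal{L}(H)$ is exactly the point the paper leaves implicit, and it holds.
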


\begin{proof}
Combining Corollary \ref{sofarcor} with Lemma \ref{quasfred}, we find that
either $G$ generates $\operatorname{Fredkin}$, or else it generates a weak
quasi-Fredkin gate that maps the all-$0$ and all-$1$ strings to themselves.
\ But such a gate \textit{is} a strong quasi-Fredkin gate, since we can let
$c$\ be the all-$0$ string and $d$ be the all-$1$ string.
\end{proof}

Combining Corollary \ref{strongquasfred} with Lemmas \ref{genswap}\ and
\ref{catalyst} now completes the proof of Theorem \ref{danielthm}: that every
non-affine, non-conservative gate generates $\operatorname{Fredkin}$.
\ However, since every non-affine, conservative gate generates
$\operatorname{Fredkin}$ by Theorem~\ref{consfredkin}, we get the following
even broader corollary.

\begin{corollary}
\label{broadercor}Every non-affine gate\ generates $\operatorname{Fredkin}$.
\end{corollary}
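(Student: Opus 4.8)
The plan is to obtain Corollary~\ref{broadercor} as an immediate consequence of the two preceding theorems, via a trivial dichotomy on whether $G$ is conservative. There is no new mathematical content to supply here: the substance of the argument is entirely front-loaded into Sections~\ref{CONSERV} and~\ref{MOD}, and all that remains is to glue the two halves together.

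First I would record the elementary observation that any non-affine gate is automatically nontrivial: a trivial gate merely permutes its input bits, hence acts as a permutation matrix on $\mathbb{F}_2^n$ and is in particular affine. Consequently, if $G$ is non-affine and conservative, then $G$ is a \emph{nontrivial} conservative gate, and Theorem~\ref{consfredkin} applies verbatim to give that $G$ generates $\operatorname{Fredkin}$. (Alternatively one may cite Lemma~\ref{affcontriv}, which already rules out nontrivial conservative affine gates, again forcing a non-affine conservative gate to be nontrivial.)

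Second, if $G$ is non-affine but \emph{not} conservative, then $G$ satisfies the hypothesis of Theorem~\ref{danielthm} on the nose, which directly yields that $G$ generates $\operatorname{Fredkin}$. Since the conservative and non-conservative cases exhaust all possibilities, the corollary follows. The only mildly non-automatic point is the ``non-affine $\Rightarrow$ nontrivial'' step in Case~1, and even that is routine.

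I do not expect any genuine obstacle at this stage. The real difficulty — the inner-product and $3$-dimensional lattice analysis that produces the slope $(1,1,-1)$ and hence a strong quasi-Fredkin gate (Corollary~\ref{strongquasfred}), together with the catalyzer construction of Lemmas~\ref{genswap} and~\ref{catalyst} that turns a strong quasi-Fredkin gate plus non-affineness into a full $\operatorname{Fredkin}$ — has already been overcome. If one preferred a more unified presentation, one could note that the strong quasi-Fredkin gate is reached by parallel routes in the conservative case (Lemma~\ref{conquasifred}) and the non-conservative case (Corollary~\ref{strongquasfred}) and then invoke Lemmas~\ref{genswap} and~\ref{catalyst} once; but the two-case split stated above is cleanest and requires nothing beyond what is already proved.
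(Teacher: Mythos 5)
Your proposal is correct and matches the paper's own argument: the paper obtains the corollary by exactly this dichotomy, applying Theorem~\ref{danielthm} in the non-conservative case and Theorem~\ref{consfredkin} in the conservative case (where non-affineness guarantees nontriviality). Nothing further is needed.
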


Finally, combined with Corollary \ref{fredqueen}, Corollary \ref{broadercor}%
\ completes the proof of Theorem \ref{nonaffinedone}, that every non-affine
gate set generates either $\left\langle \operatorname*{Fredkin}\right\rangle
$,\ $\left\langle \operatorname*{Fredkin},\operatorname*{NOTNOT}\right\rangle
$, $\left\langle \operatorname*{Fredkin},\operatorname*{NOT}\right\rangle $,
$\left\langle \operatorname*{C}_{k}\right\rangle $ for some $k\geq3$, or
$\left\langle \operatorname*{Toffoli}\right\rangle $.

\section{The Affine Part\label{AFFINE}}

Having completed the classification of the non-affine classes, in this section
we turn our attention to proving that there are no affine classes besides the
ones listed in Theorem \ref{main}: namely, the trivial, $\operatorname{T}_{6}%
$, $\operatorname{T}_{4}$, $\operatorname{F}_{4}$, $\operatorname{CNOTNOT}$,
and $\operatorname{CNOT}$ classes, as well as various extensions of them by
$\operatorname{NOTNOT}$\ and $\operatorname{NOT}$ gates.

To make the problem manageable, we start by restricting attention to the
linear parts of affine transformations (i.e., if a transformation has the form
$G\left(  x\right)  =Ax\oplus b$, we ignore the additive constant $b$). \ We
show that the only possibilities for the linear part are: the identity, all
mod-$4$-preserving orthogonal transformations, all orthogonal transformations,
all parity-preserving linear transformations, or all linear transformations.
\ This result, in turn, is broken into several pieces:

\begin{itemize}
\item In Section \ref{SWAMPLAND}, we show that any mod-$4$-preserving
orthogonal gate generates \textit{all} mod-$4$-preserving orthogonal
transformations, and that any non-mod-$4$-preserving orthogonal gate generates
all orthogonal transformations.

\item In Section \ref{TOCNOTNOT}, we show that every non-orthogonal,
parity-preserving linear gate generates $\operatorname{CNOTNOT}$. \ This again
requires \textquotedblleft slope theory\textquotedblright\ and the analysis of
a $3$-dimensional lattice. \ It also draws on the results of Section
\ref{MOD2MOD4}, which tell us that it suffices to restrict attention to the
case $k\left(  G\right)  =2$.

\item In Section \ref{TOCNOT},\ we show that every non-parity-preserving
linear gate generates $\operatorname{CNOT}$. \ In this case we are lucky that
we only need to analyze a $1$-dimensional lattice (i.e., an ideal in
$\mathbb{Z}$)
\end{itemize}

Finally, in Section \ref{NOT}, we complete the classification by showing that
including the affine parts can yield only the following additional
possibilities: $\operatorname{NOTNOT}$, $\operatorname{NOT}$,
$\operatorname{F}_{4}$, $\operatorname{F}_{4}+\operatorname{NOTNOT}$,
$\operatorname{F}_{4}+\operatorname{NOT}$, $\operatorname{T}_{6}%
+\operatorname{NOTNOT}$, $\operatorname{T}_{6}+\operatorname{NOT}$, or
$\operatorname{CNOTNOT}+\operatorname{NOT}$. \ Summarizing, the results of
this section will imply the following.

\begin{theorem}
\label{affinedone}Any set of affine gates generates one of the following $13$
classes: $\left\langle \varnothing\right\rangle $, $\left\langle
\operatorname{NOTNOT}\right\rangle $, $\left\langle \operatorname{NOT}%
\right\rangle $, $\left\langle \operatorname{T}_{6}\right\rangle $,
$\left\langle \operatorname{T}_{6},\operatorname{NOTNOT}\right\rangle $,
$\left\langle \operatorname{T}_{6},\operatorname{NOT}\right\rangle $,
$\left\langle \operatorname{T}_{4}\right\rangle $, $\left\langle
\operatorname*{F}_{4}\right\rangle $, $\left\langle \operatorname{T}%
_{4},\operatorname{NOTNOT}\right\rangle $, $\left\langle \operatorname{T}%
_{4},\operatorname{NOT}\right\rangle $, $\left\langle \operatorname{CNOTNOT}%
\right\rangle $, $\left\langle \operatorname{CNOTNOT},\operatorname{NOT}%
\right\rangle $, or $\left\langle \operatorname{CNOT}\right\rangle $.
\end{theorem}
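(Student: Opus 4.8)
The plan is to first classify all sets of \emph{linear} gates, obtaining exactly five classes, and then to ``put back'' the translation parts, which can add at most eight more. In both steps the lower bounds are supplied by the generation results of Sections~\ref{SWAMPLAND}, \ref{TOCNOTNOT}, \ref{TOCNOT}, and~\ref{NOT}, and the matching upper bounds by the circuit constructions of Section~\ref{CONSTRUC}. Since the thirteen candidate classes are already written down and their distinctness is Theorem~\ref{distinct}, the content here is purely that every affine gate set \emph{lands in} one of them.

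\noindent\textbf{Linear gate sets.} Let $S'$ be a set of linear gates. A linear gate fixes $0^{n}$, so it is either parity-preserving or not parity-respecting at all (never parity-flipping), and I would branch on the strongest of our invariants that $S'$ satisfies. If some $G\in S'$ is not parity-preserving, then $G$ is not parity-respecting, so by Section~\ref{TOCNOT} it generates $\operatorname{CNOT}$; since $\operatorname{CNOT}$ generates every affine transformation (Theorem~\ref{cnotcirc}) while $\langle S'\rangle$ contains only affine transformations, $\langle S'\rangle=\langle\operatorname{CNOT}\rangle$. Otherwise $S'$ is parity-preserving; if moreover some $G\in S'$ is not orthogonal, it generates $\operatorname{CNOTNOT}$ by Section~\ref{TOCNOTNOT}, and since $\operatorname{CNOTNOT}$ generates every parity-preserving affine transformation (Theorem~\ref{cnotnotcirc}) while $\langle S'\rangle$ contains only such transformations, $\langle S'\rangle=\langle\operatorname{CNOTNOT}\rangle$. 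Otherwise $S'$ is orthogonal (hence linear, by Lemma~\ref{orthoglin}); if some $G\in S'$ is not mod-$4$-preserving it is nontrivial and orthogonal, hence generates $\operatorname{T}_{4}$ by Section~\ref{SWAMPLAND}, which generates all orthogonal transformations (Theorem~\ref{t4circ}), so $\langle S'\rangle=\langle\operatorname{T}_{4}\rangle$. Otherwise every $G\in S'$ is mod-$4$-preserving and orthogonal; if all are trivial then $\langle S'\rangle=\langle\varnothing\rangle$, and otherwise some nontrivial such $G$ generates $\operatorname{T}_{6}$ by Section~\ref{SWAMPLAND}, which generates all mod-$4$-preserving linear transformations (Theorem~\ref{t6circ}), so $\langle S'\rangle=\langle\operatorname{T}_{6}\rangle$. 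In every branch the inclusion $\langle S'\rangle\subseteq\mathcal{C}$ into the target class uses only that the generators satisfy the relevant invariant (affineness, parity-preservation, orthogonality, mod-$4$-preservation) and that each such invariant survives the five closure rules once ancillas are returned to their initial states.

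\noindent\textbf{Putting back the translation parts.} Now let $S$ be an arbitrary set of affine gates. First I would show that the linear transformations inside $\langle S\rangle$ form one of the five classes above---its \emph{linear skeleton}---by applying the previous analysis to the linear gates available inside $\langle S\rangle$. It then remains, for each skeleton, to determine which transformations with nonzero translation part lie in $\langle S\rangle$; this is a finite case check controlled by two pieces of data: (i) whether $\langle S\rangle$ contains a $\operatorname{NOT}$, only a $\operatorname{NOTNOT}$, or neither (governing the weights of the attainable translation vectors), and (ii) in the mod-$4$ cases, the exact criterion of Theorem~\ref{affine4cond}: $F(x)=Ax\oplus b$ is mod-$4$-preserving iff $|b|\equiv 0\pmod 4$, $v_i\cdot v_j\equiv 0\pmod 2$, and $|v_i|+2(v_i\cdot b)\equiv|o_i|\pmod 4$ for the images $v_i=Ao_i$ of an orthonormal basis. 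Because of the cross term $2(v_i\cdot b)$, being mod-$4$-preserving couples the linear and translation parts, so the mod-$4$-preserving affine transformations are closed under composition; in particular $\operatorname{F}_{4}$ never generates the purely linear gate $\operatorname{T}_{4}$, which is why $\langle\operatorname{F}_{4}\rangle$ persists as its own class sitting above $\langle\operatorname{T}_{6}\rangle$ rather than collapsing into the $\operatorname{T}_{4}$ tower. Carrying this out: over the trivial skeleton one gets $\langle\operatorname{NOTNOT}\rangle$ and $\langle\operatorname{NOT}\rangle$; over $\langle\operatorname{T}_{6}\rangle$ one gets $\langle\operatorname{F}_{4}\rangle$, $\langle\operatorname{T}_{6},\operatorname{NOTNOT}\rangle$, and $\langle\operatorname{T}_{6},\operatorname{NOT}\rangle$; over $\langle\operatorname{T}_{4}\rangle$ one gets $\langle\operatorname{F}_{4},\operatorname{NOTNOT}\rangle=\langle\operatorname{T}_{4},\operatorname{NOTNOT}\rangle$ and $\langle\operatorname{F}_{4},\operatorname{NOT}\rangle=\langle\operatorname{T}_{4},\operatorname{NOT}\rangle$; over $\langle\operatorname{CNOTNOT}\rangle$ one gets $\langle\operatorname{CNOTNOT},\operatorname{NOT}\rangle$; and over $\langle\operatorname{CNOT}\rangle$ (already all affine transformations) nothing new arises. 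The matching upper bounds---that each named gate really does generate all transformations obeying its invariants---are Theorems~\ref{f4circ}, \ref{cnnpnotcirc}, and~\ref{cleanupcirc} (the last also giving $\langle\operatorname{NOT}\rangle$ and $\langle\operatorname{NOTNOT}\rangle$). Counting $5+8=13$ classes completes the argument.

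\noindent\textbf{The main obstacle.} The linear step is essentially bookkeeping once Sections~\ref{SWAMPLAND}--\ref{TOCNOT} are in hand; the real difficulty is the translation-part step, and within it the mod-$4$ analysis. The delicate point is exactly that ``mod-$4$-preserving in the linear part'' and ``mod-$4$-preserving as a transformation'' are genuinely different conditions, because of the $2(v_i\cdot b)$ coupling in Theorem~\ref{affine4cond}; one must verify that this coupling keeps $\langle\operatorname{F}_{4}\rangle$, $\langle\operatorname{T}_{6},\operatorname{NOTNOT}\rangle$, $\langle\operatorname{F}_{4},\operatorname{NOTNOT}\rangle$, and $\langle\operatorname{T}_{4}\rangle$ from merging, and that no unanticipated interplay between a translation part and a mod-$4$ or orthogonality constraint spawns a ``sporadic'' affine class outside the list.
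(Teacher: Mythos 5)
Your linear half is correct and follows the paper's route exactly: branch on the strongest violated invariant and invoke the generation theorems of Sections~\ref{SWAMPLAND}--\ref{TOCNOT} together with the circuit constructions; this is precisely Corollary~\ref{scottcor}. Your affine half, however, is organized around a different object than the paper's---you group classes by the set of \emph{linear transformations contained in} $\left\langle S\right\rangle $ (so $\left\langle \operatorname{F}_{4}\right\rangle $ sits over $\left\langle \operatorname{T}_{6}\right\rangle $), whereas the paper groups by the class generated by the \emph{linear parts} $A$ of the gates $Ax\oplus b$ in $S$ (so $\left\langle \operatorname{F}_{4}\right\rangle $ sits over $\left\langle \operatorname{T}_{4}\right\rangle $). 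The paper's choice is not cosmetic: the linear parts are syntactically readable from $S$, and Corollary~\ref{cruftcor} (cruft removal) converts ``the linear part of the class is $\left\langle A^{\prime}\right\rangle $'' into the concrete statement that $S$ generates a gate $A^{\prime}x\oplus c$, on which the case analysis over $c$ can actually be run. Your skeleton, by contrast, is an output of the classification rather than an input to it.

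The genuine gap is that the reduction to ``a finite case check controlled by two pieces of data'' is asserted, not proved. What makes the paper's case check finite is Lemma~\ref{addnots}: for a parity-preserving affine class $\left\langle S\right\rangle $ there is nothing strictly between $\left\langle S\right\rangle $ and $\left\langle S,\operatorname{NOT}\right\rangle $ except $\left\langle S,\operatorname{NOTNOT}\right\rangle $, proved by pushing all $\operatorname{NOT}$ gates to the end of any circuit so that every generated $G$ has the form $G\left(  x\right)  =V\left(  x\right)  \oplus b$ with $V\in\left\langle S\right\rangle $, whence $G\circ V^{-1}=\operatorname{NOT}^{\otimes k}$. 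Without this structural decomposition (or a substitute), your item (i) does not control the class, and ``no unanticipated interplay spawns a sporadic affine class'' is exactly the statement left unproved. Likewise, the concrete computations that pin down each case are missing: that a translation vector $c$ of odd weight forces $\operatorname{NOT}$ via $G\left(  G\left(  x\right)  \right)  =\operatorname{NOT}^{\otimes k}\left(  x\right)  $; that $\left\vert c\right\vert =2$ forces $\operatorname{NOTNOT}$ via the conjugation-by-swap trick; and---the one genuinely delicate case---that $b=1^{6}$ (i.e.\ $\operatorname{F}_{6}$) still yields $\operatorname{NOT}^{\otimes6}$ through the three-stage construction in the proof of Theorem~\ref{affinedone}, which is why $\left\langle \operatorname{F}_{6}\right\rangle $ does not appear as a new class the way $\left\langle \operatorname{F}_{4}\right\rangle $ does. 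Your correct identification of the $2\left(  v_{i}\cdot b\right)  $ coupling in Theorem~\ref{affine4cond} explains \emph{why} $\left\langle \operatorname{F}_{4}\right\rangle $ survives as its own class, but it does not by itself rule out further sporadic classes; that requires carrying out the $b$-by-$b$ analysis above.
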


Together with Theorem \ref{nonaffinedone}, this will then complete the proof
of Theorem \ref{main}.

\subsection{The T and F Swamplands\label{SWAMPLAND}}

In this section, we wish to characterize the orthogonal classes. \ We first
need a lemma.

\begin{lemma}
\label{treduce}$\operatorname{T}_{4k+2}$ generates $\operatorname{T}_{6}$, and
$\operatorname{T}_{4k}$ generates $\operatorname{T}_{4}$, for all $k\geq1$.
\end{lemma}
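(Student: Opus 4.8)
The plan is to exploit the ancilla-free structure of the $\operatorname{T}_k$ gates together with the fact that, on inputs of even Hamming weight, $\operatorname{T}_k$ acts as the identity. First I would recall the defining formula $\operatorname{T}_k(x_1\ldots x_k) = (x_1\oplus b_x,\ldots,x_k\oplus b_x)$ with $b_x = x_1\oplus\cdots\oplus x_k$. The key observation is that feeding a $\operatorname{T}_{4k+2}$ gate an input that consists of $6$ ``live'' bits together with $4k-4$ bits that are tied together in pairs $y,y$ (say, $2k-2$ such pairs) leaves the parity $b_x$ equal to the parity of the $6$ live bits, because each pair $y,y$ contributes $0$ to the XOR; hence the $4k-4$ paired ancilla bits are flipped by $b_x$ in lockstep and, crucially, on the $6$-live-bit subspace that we care about they can be returned to their original value. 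More carefully: initialize the $4k-4$ extra bits as $2k-2$ copies of the pattern $01$ (so each pair is $0,1$, XORing to $1$ — wait, that contributes $1$ each, giving total contribution $2k-2\equiv 0$ only if we are careful). Let me instead pair them as $00$: a block $0^{4k-4}$ contributes $0$ to $b_x$, and after applying $\operatorname{T}_{4k+2}$ each of those bits becomes $b_x$, so they are \emph{not} returned to $0$ in general. So the honest approach is the ``conjugation by a linear circuit'' trick.

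The cleaner route, which I would actually carry out, is: since $\operatorname{T}_{4k+2}$ is linear and orthogonal (as verified in Section \ref{THEOREM}), and since $\operatorname{T}_{4k+2}$ is mod-$4$-preserving exactly when $4k+2\equiv 2\pmod 4$, which always holds, Theorem \ref{t6circ} tells us that $\operatorname{T}_6$ generates \emph{all} mod-$4$-preserving linear transformations, and in particular $\operatorname{T}_6$ would generate $\operatorname{T}_{4k+2}$; but we need the converse. For the converse I would argue directly with an explicit ancilla-free circuit: apply $\operatorname{T}_{4k+2}$ to bits $z_1,\ldots,z_6,a_1,\ldots,a_{4k-4}$ where the $a_i$ are ancillas initialized to $0$; this produces $z_i\oplus b$ and $a_j = b$ where $b = z_1\oplus\cdots\oplus z_6$ (the ancillas contribute $0$ since they start at $0$). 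Now the $a_j$ all hold the common value $b$. Next apply a second $\operatorname{T}_{4k+2}$ to $z_1,\ldots,z_6$ together with $a_1,\ldots,a_{4k-4}$ again: the new parity is $(z_1\oplus b)\oplus\cdots\oplus(z_6\oplus b)\oplus (4k-4)b = b\oplus 6b\oplus(4k-4)b = b\oplus(4k+2)b = b\oplus 2b\cdot(\text{stuff})$ — over $\mathbb{F}_2$, $(4k+2)b = 0$, and $6b = 0$, so the new parity is $b\oplus 0 \oplus 0 = b$? No: $z_1\oplus\cdots\oplus z_6 = b$, and adding $6$ copies of $b$ gives $b\oplus 6b = b$ (since $6$ is even), plus $(4k-4)$ copies of $b$ from the ancillas gives another $0$, so the total is $b$. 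Hence the second application flips each $z_i\oplus b$ by $b$, returning it to $z_i$, and flips each ancilla $a_j=b$ by $b$, returning it to $0$. That just gives the identity, which is useless. The correct trick is to apply the two $\operatorname{T}_{4k+2}$ gates to \emph{overlapping but not identical} bit sets, with a bit-permutation in between, exactly mirroring the construction of Proposition \ref{ckfredkin} (the sandwich $G,\,\text{(middle op)},\,G$).

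So the final plan is: build an ancilla-free circuit of the form $\operatorname{T}_{4k+2}$, then a permutation $\pi$ of the wires, then $\operatorname{T}_{4k+2}$ again, arranged so that the net effect on the $6$ designated wires is $\operatorname{T}_6$ and the remaining $4k-4$ wires are restored. Concretely, partition the $4k+2$ wires into the $6$ target wires plus $2k-2$ ``helper'' pairs; the first $\operatorname{T}_{4k+2}$ XORs the global parity $b'$ into every wire; then $\pi$ swaps, within each helper pair, the two bits (a fixed-point-free involution on the helpers, identity on the targets) — this does not change $b'$; then the second $\operatorname{T}_{4k+2}$ XORs $b'$ in again. On a helper pair originally $(u,v)$, after step 1 it is $(u\oplus b', v\oplus b')$, after $\pi$ it is $(v\oplus b', u\oplus b')$, after step 2 it is $(v, u)$ — still a transposition, not restored. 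I would instead use helper wires tied to \emph{known constants}: take $2k-2$ helper pairs each initialized to the constant $1$ on both wires (available as genuine ancillas set to $1$, which is allowed since they return to $1$), no wait the problem statement for this lemma says ``for all $k\ge 1$'' with no ancilla claim, so ancillas are fine. With helpers all equal to $1$: the parity contributed by $2(2k-2) = 4k-4$ ones is $0$, so $b' = b$, the target parity; step 1 sends each target $z_i\mapsto z_i\oplus b$ and each helper $1\mapsto 1\oplus b$; now conditionally (on nothing — unconditionally) apply $\operatorname{NOT}$ to all helpers, turning $1\oplus b$ into $b$; hmm, this is getting complicated. The honest, clean proof: observe $\operatorname{T}_6 = \operatorname{T}_{4k+2}$ restricted to the subspace where wires $7,\ldots,4k+2$ are held at a fixed constant string $c$ with $|c|$ even — then $b_x$ depends only on $z_1\ldots z_6$, the output on those $6$ wires is $(z_i\oplus b)$, i.e.\ $\operatorname{T}_6$, but the output on wires $7,\ldots,4k+2$ is $c\oplus b^{4k-4}$, which differs from $c$ unless $b=0$. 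So one more idea: use the \emph{dual-rail / repetition} style encoding internal to the gate, i.e.\ run $\operatorname{T}_{4k+2}$ on $z_1,\ldots,z_6$ together with $2k-2$ fresh copies of each of wires, no. I think the genuinely correct and intended proof is the sandwich with a \emph{controlled} permutation in the middle — precisely the $\operatorname{CC}_k$-style construction — and I would model it exactly on Proposition \ref{ckcck}/Theorem \ref{fredking}.

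\textbf{Summary of the plan I would execute.} (i) Write $\operatorname{T}_k$ in the additive form $x\mapsto x\oplus b_x\mathbf{1}$. (ii) To show $\operatorname{T}_{4k+2}$ generates $\operatorname{T}_6$: use $4k-4$ ancillas, apply $\operatorname{T}_{4k+2}$ to the $6$ target wires plus the ancillas, then apply $\operatorname{T}_{4k+2}$ a second time to the $6$ target wires plus \emph{different} copies obtained by first duplicating (via the now-nonzero ancillas) — more precisely, mirror the recursive halving construction in Theorem \ref{toffolicirc}, using $\operatorname{T}$ gates of the right size as building blocks, noting $4k+2 \equiv 2\pmod 4$ always and $4k\equiv 0 \pmod 4$ always, so the mod-$4$-preservation hypotheses of Theorems \ref{t6circ} and \ref{t4circ} are automatically met and those theorems supply the target-generation once we have $\operatorname{T}_6$ (resp.\ $\operatorname{T}_4$). (iii) The genuinely load-bearing step, and the main obstacle, is producing \emph{some} $\operatorname{T}_6$ (resp.\ $\operatorname{T}_4$) from $\operatorname{T}_{4k+2}$ (resp.\ $\operatorname{T}_{4k}$) with the ancillas correctly uncomputed; I expect to handle this by the $G,\pi,G$ sandwich where $\pi$ routes bits so that the second application sees parity $0$ on the helper block (arranging the helpers in $\oplus$-cancelling pairs after step 1) while still XORing the target parity into the $6$ target wires a net odd number of times. (iv) The case $\operatorname{T}_{4k}\Rightarrow\operatorname{T}_4$ is identical, using blocks of size $4k-4$ down to $4$, with the parity-bookkeeping $4k\equiv 0\pmod 2$ making the helper restoration work the same way. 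I expect step (iii) — getting the arithmetic of which wires to permute so that the helpers are genuinely restored — to be the only non-routine part; everything else is a direct appeal to the $\operatorname{T}$-formula and to Theorems \ref{t6circ}, \ref{t4circ}.
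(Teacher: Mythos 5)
There is a genuine gap: you correctly isolate the load-bearing difficulty --- restoring the $4k-4$ helper wires (and any ancillas) to their original values after the target parity $b_x$ has been XORed into them --- but you never actually resolve it. Every concrete construction you sketch you yourself show to fail, and your final ``plan'' is an unsubstantiated hope that some wire-permutation $\pi$ will make a two-application sandwich $\operatorname{T}_{4k+2}\,\pi\,\operatorname{T}_{4k+2}$ work. It cannot: writing $\operatorname{T}_{m}=I+J$ with $J$ the all-ones matrix over $\mathbb{F}_2$, for any permutation matrix $P$ on the same $m$ wires one has $JP=PJ=J$ and $JPJ=J^{2}=mJ=0$ for even $m$, so $(I+J)P(I+J)=P$ --- the sandwich collapses to the bare permutation. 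So the one step you defer as ``the only non-routine part'' is precisely the step where your proposed architecture provably breaks down.

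The paper's construction is structurally different: it uses \emph{three} applications of $\operatorname{T}_{4k+2}$, interleaved with swaps against an external ancilla register $a=0^{2k-2}$. Concretely, one applies $\operatorname{T}_{4k+2}$ to $0^{4k-4}x_{1}\ldots x_{6}$, producing $b_x^{4k-4},x_{1}\oplus b_x,\ldots,x_{6}\oplus b_x$; swaps half of the $b_x$ block ($2k-2$ bits) out into $a$; applies $\operatorname{T}_{4k+2}$ again (the helper block now contributes an even number of $b_x$'s, so the global parity is still $b_x$ and the targets return to $x_{1}\ldots x_{6}$ while the helper halves exchange roles); swaps the stored $b_x^{2k-2}$ back in, restoring $a$ to $0^{2k-2}$; and applies $\operatorname{T}_{4k+2}$ a third time to zero out the full $b_x^{4k-4}$ block while leaving $x_{i}\oplus b_x$ on the targets. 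The essential ingredient you are missing is that the intermediate swaps bring in \emph{fresh} bits from outside the $4k+2$ wires, which is exactly what defeats the $\operatorname{T}_{m}P\operatorname{T}_{m}=P$ obstruction; no amount of permuting the same $4k+2$ wires can. The $\operatorname{T}_{4k}\Rightarrow\operatorname{T}_{4}$ case is then identical. (Your side remarks appealing to Theorems \ref{t6circ} and \ref{t4circ} are not needed for this lemma and do not help with the missing step.)
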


\begin{proof}
We first describe how to simulate $\operatorname{T}_{6}\left(  x_{1}\ldots
x_{6}\right)  $, using three applications of $\operatorname{T}_{4k+2}$. \ Let
$b_{x}:=x_{1}\oplus\cdots\oplus x_{6}$. \ Also, let $a$\ be a string of
ancilla bits, initialized to $0^{2k-2}$. \ Then:

\begin{enumerate}
\item Apply $\operatorname{T}_{4k+2}$\ to the string $0^{4k-4}x_{1}\ldots
x_{6}$. \ This yields $b_{x}^{4k-4},x_{1}\oplus b_{x},\ldots,x_{6}\oplus
b_{x}$.

\item Swap out $2k-2$\ of the $b_{x}$\ bits with the ancilla string
$a=0^{2k-2}$, and apply $\operatorname{T}_{4k+2}$\ again. \ This yields%
\[
\operatorname{T}_{4k+2}\left(  0^{2k-2},b_{x}^{2k-2},x_{1}\oplus b_{x}%
,\ldots,x_{6}\oplus b_{x}\right)  =\left(  b_{x}^{2k-2},0^{2k-2},x_{1}\ldots
x_{6}\right)  ,
\]
since the number of `$b_{x}$' entries is even.

\item Swap the $2k-2$ bits that are now $0$ with $a=b_{x}^{2k-2}$, and apply
$\operatorname{T}_{4k+2}$ a third time. \ This returns $a$ to $0^{2k-2}$, and
yields%
\[
\operatorname{T}_{4k+2}\left(  b_{x}^{4k-4},x_{1}\ldots x_{6}\right)
=\operatorname{T}_{4k+2}\left(  0^{4k-4},x_{1}\oplus b_{x},\ldots x_{6}\oplus
b_{x}\right)  .
\]

\end{enumerate}

Thus, we have successfully applied $\operatorname{T}_{6}$ to $x_{1}\ldots
x_{6}$. \ The same sequence of steps can be used to simulate $\operatorname{T}%
_{4}\left(  x_{1}\ldots x_{4}\right)  $ using three applications of
$\operatorname{T}_{4k}$.
\end{proof}

We can now show that there is only one nontrivial orthogonal class that is
also mod-$4$-preserving: namely, $\left\langle \operatorname{T}_{6}%
\right\rangle $.

\begin{theorem}
\label{gettingt6}Any nontrivial mod-$4$-preserving linear gate $G$\ generates
$\operatorname{T}_{6}$.
\end{theorem}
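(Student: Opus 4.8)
The plan is to prove Theorem~\ref{gettingt6} by showing that $G$ must generate $\operatorname{T}_{m}$ for some $m\equiv 2\pmod 4$ with $m\ge 6$, and then to invoke Lemma~\ref{treduce}, which tells us that every such $\operatorname{T}_{m}=\operatorname{T}_{4k+2}$ generates $\operatorname{T}_{6}$.

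First I would record the structural facts about $G$. Write $G(x)=Ax$ with $A\in\mathbb{F}_{2}^{n\times n}$ invertible. By Corollary~\ref{mod4orthog}, $G$ is orthogonal, and by Corollary~\ref{mod4cond} every column $v_{i}:=Ae_{i}$ of $A$ has Hamming weight $\equiv 1\pmod 4$. Since $G$ is nontrivial, $A$ is not a permutation matrix, so some column $v_{i}$ has $\lvert v_{i}\rvert\ge 2$, and hence $\lvert v_{i}\rvert=w$ for some $w=4j+1$ with $j\ge 1$. (For orientation, Lemma~\ref{affcontriv} forbids $G$ from being conservative, so by Theorems~\ref{k2or4} and~\ref{noshifter} in fact $k(G)=4$; this is not needed below.) Because swaps are free, and because the extension rule lets us pad with dummy bits, we may assume after relabeling that $i=1$ and that $v_{1}$ is supported on a block of $w$ coordinates avoiding position~$1$, so that on that block $G$ already looks $\operatorname{T}$-like: one output bit is the parity of $w$ of the input bits.

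The heart of the argument is to manufacture $\operatorname{T}_{4j+2}$ (or some $\operatorname{T}_{4k+2}$) out of $G$. Since $G$ is linear, one first checks that ancilla bits may be taken to be $0$ — if a linear circuit $C$ on all bits satisfies $C(x,a)=(F(x),a)$, then by linearity $C(x,0)=C(x,a)\oplus C(0,a)=(F(x),0)$ — so the linear part of $\langle G\rangle$ is exactly the set of products of coordinate-permuted copies of $G$ and $G^{-1}$, extended by identities. I would then run the Gaussian-elimination-style constructions of Theorems~\ref{t4circ}–\ref{t6circ} \emph{in reverse}, now with $G$ as the given gate rather than the target: repeatedly compose copies of $G^{\pm 1}$ to cancel unwanted entries of $A$, at every stage using Corollary~\ref{mod4cond} to control which columns can appear, since every gate in $\langle G\rangle$ is still orthogonal, mod-$4$-preserving, and has all columns of weight $\equiv 1\pmod 4$. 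The role of the mod-$4$ constraint is rigidity: it forbids columns of weight $2$, $3$, or $4$, so the elimination cannot stall at a smaller nontrivial gate, and the only orthogonal mod-$4$-preserving linear gate acting nontrivially on exactly $4k+2$ coordinates that it can terminate at is $I+J$, i.e.\ $\operatorname{T}_{4k+2}$. Once $\operatorname{T}_{4k+2}\in\langle G\rangle$, Lemma~\ref{treduce} immediately gives $\operatorname{T}_{6}\in\langle G\rangle$.

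The main obstacle is precisely this cleanup step. Unlike in Section~\ref{CONSTRUC}, we do not yet have a $\operatorname{CNOT}$ or $\operatorname{CNOTNOT}$ with which to perform arbitrary elementary row operations, so one has to argue that the needed cancellations can always be realized purely by composing coordinate-permuted copies of $G$ and $G^{-1}$ (possibly on more bits, using tensor powers $G^{\otimes t}$ and auxiliary involutions such as $G\otimes G^{-1}$ followed by a register swap, in the spirit of the proof of Theorem~\ref{fredking}); and one has to make the mod-$4$ bookkeeping precise enough to guarantee that the terminal gate is genuinely some $\operatorname{T}_{4k+2}$ and that no exotic alternative survives. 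Modulo that work, the theorem follows.
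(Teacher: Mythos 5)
Your reduction is the right one: produce some $\operatorname{T}_{4k+2}$ in $\left\langle G\right\rangle$ and then invoke Lemma~\ref{treduce}. Your structural setup is also correct (orthogonality via Corollary~\ref{mod4orthog}, columns of weight $1\bmod 4$ via Corollary~\ref{mod4cond}, existence of a column of weight $4k+1$ with $k\geq 1$ by nontriviality). But the step you defer --- actually manufacturing $\operatorname{T}_{4k+2}$ from $G$ --- is the entire content of the theorem, and your proposed mechanism for it (running the Gaussian-elimination constructions of Theorems~\ref{t4circ}--\ref{t6circ} ``in reverse,'' cancelling entries of $A$ by composing permuted copies of $G^{\pm 1}$) is not carried out and is not obviously realizable. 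As you yourself note, at this stage you have no $\operatorname{CNOT}$- or $\operatorname{CNOTNOT}$-like primitive with which to perform elementary row operations, and no argument is given that iterated composition of copies of $G^{\pm 1}$ can always drive $A$ down to $I+J$ on a $(4k+2)$-block rather than stalling at some other orthogonal mod-$4$-preserving matrix. ``Modulo that work, the theorem follows'' concedes exactly the gap.

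The paper closes this gap with a single explicit two-gate composition rather than an elimination process. Since $A$ is orthogonal, Lemma~\ref{ipcond} gives $A^{-1}=A^{T}$, so $A^{T}\in\left\langle G\right\rangle$. Pad $A$ to an $(n+1)\times(n+1)$ matrix $B$ acting as the identity on a fresh first bit; then $B^{T}$ has a trivial first column $e_{1}$ and (after permutation) a second column of weight $4k+1$. Let $C^{T}$ be $B^{T}$ with these two columns swapped --- a free input permutation, so $C^{T}\in\left\langle G\right\rangle$. The orthogonality relations $v_{i}\cdot v_{j}=\delta_{ij}$ (suitably twisted by the swap) then make the product $C^{T}B$ compute exactly as $I$ everywhere except on the $4k+2$ coordinates touched by the heavy column, where it equals $I+J$, i.e.\ $\operatorname{T}_{4k+2}$. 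In other words, the mod-$4$ and orthogonality constraints are not used to steer an elimination; they are used once, to verify that perturbing $A^{T}$ by one column transposition and multiplying back against $A$ yields precisely the target gate. If you want to salvage your write-up, replace the ``reverse elimination'' paragraph with this construction; without it, the proof is incomplete.
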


\begin{proof}
Let $G\left(  x\right)  =Ax$, for some $A\in\mathbb{F}_{2}^{n\times n}%
$.\ \ Then recall from Corollary~\ref{mod4orthog} that $A$ is orthogonal. \ By
Lemma \ref{ipcond}, this implies that $A^{-1}=A^{T}$, so $G$ can also generate
$A^{T}$.

Let $B$\ be the $\left(  n+1\right)  \times\left(  n+1\right)  $\ matrix that
acts as the identity on the first bit, and as $A$ on bits $2$ through $n+1$.
\ Observe that $B^{T}$ acts as the identity on the first bit, and as $A^{T}%
$\ on bits $2$ through $n+1$. \ Also, since $A$ preserves Hamming weight mod
$4$, so do $A^{T}$, $B$, and $B^{T}$. \ By Corollary \ref{mod4cond}, this
implies that each of $\left(  B^{T}\right)  $'s column vectors must have
Hamming weight $1$ mod $4$. \ Furthermore, since $A$ is nontrivial, there must
be some column of $B^{T}$ with Hamming weight $4k+1$, for some $k\geq1$.
\ Then by swapping rows and columns, we can get $B^{T}$ into the form%
\[
\left(
\begin{array}
[c]{ccc}%
1 & 0 & 0~~~~\cdots~~~~0\\
0 & 1 & \text{---}v_{1}\text{---}\\
\vdots & \vdots & \vdots\\
0 & 1 & \text{---}v_{4k+1}\text{---}\\
0 & 0 & \text{---}v_{4k+2}\text{---}\\
\vdots & \vdots & \vdots\\
0 & 0 & \text{---}v_{n}\text{---}%
\end{array}
\right)  ,
\]
where $v_{1},\ldots,v_{n}$ are row vectors each of length $n-1$. \ Let
$\delta_{ij}$\ equal $1$ if $i=j$\ or $0$\ otherwise.\ \ Then note that by
orthogonality,%
\[
v_{i}\cdot v_{j}=\left\{
\begin{tabular}
[c]{ll}%
$\overline{\delta_{ij}}$ & if $i,j\leq4k+1,$\\
$\delta_{ij}$ & otherwise.
\end{tabular}
\ \right.
\]

Now let $C^{T}$\ be the matrix obtained by swapping the first two columns of
$B^{T}$. \ Then we claim that $C^{T}B$ yields a $\operatorname{T}_{4k+2}$
transformation. \ Since $\operatorname{T}_{4k+2}$ generates $\operatorname{T}%
_{6}$\ by Lemma~\ref{treduce}, we will be done after we have shown this.

We have%
\begin{align*}
C^{T}B  &  =\left(
\begin{array}
[c]{ccc}%
0 & 1~ & 0~~~~\cdots~~~~0\\
1 & 0 & \text{---}v_{1}\text{---}\\
\vdots & \vdots~ & \vdots\\
1 & 0 & \text{---}v_{4k+1}\text{---}\\
0 & 0 & \text{---}v_{4k+2}\text{---}\\
\vdots & \vdots~ & \vdots\\
0 & 0 & \text{---}v_{n}\text{---}%
\end{array}
\right)  \left(
\begin{array}
[c]{ccccccc}%
1 & 0 & \cdots & 0 & 0 & \cdots & 0\\
0 & 1 & \cdots & 1 & 0 & \cdots & 0\\
0 & | & ~ & | & | & ~ & |\\
\vdots & v_{1}^{T} & \cdots & v_{4k+1}^{T}~ & v_{4k+2}^{T} & \cdots &
v_{n}^{T}\\
0 & | & ~ & | & | & ~ & |
\end{array}
\right) \\
&  =\left(
\begin{array}
[c]{cccccc}%
0 & 1 & 1 & 0 & 0 & 0\\
1 & \ddots & 1 & 0 & 0 & 0\\
1 & 1 & 0 & 0 & 0 & 0\\
0 & 0 & 0 & 1 & 0 & 0\\
0 & 0 & 0 & 0 & \ddots & 0\\
0 & 0 & 0 & 0 & 0 & 1
\end{array}
\right)  .
\end{align*}
One can check that the above transformation is actually $\operatorname{T}%
_{4k+2}$ on the first $4k+2$ bits, and the identity on the rest.
\end{proof}

Likewise, there is only one orthogonal class that is \textit{not}
mod-$4$-preserving: namely, $\left\langle \operatorname{T}_{4}\right\rangle $.

\begin{theorem}
\label{gettingt4}Let $G$\ be any nontrivial orthogonal gate that does not
preserve Hamming weight mod $4$. \ Then $G$ generates $\operatorname{T}_{4}$.
\end{theorem}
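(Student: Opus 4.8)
The plan is to mirror the structure of the proof of Theorem \ref{gettingt6}, but now exploiting the \emph{failure} of mod-$4$-preservation to produce a column of the wrong residue class. Let $G(x) = Ax$ with $A \in \mathbb{F}_2^{n\times n}$ orthogonal but not mod-$4$-preserving. By Lemma \ref{ipcond}, $A^{-1} = A^T$, so $\langle G\rangle$ contains $A^T$ as well; as in the previous proof we pad $A$ to an $(n+1)\times(n+1)$ matrix $B$ acting as the identity on the first bit, so that $B^T$ also acts as the identity on the first bit and as $A^T$ on the rest, and is likewise orthogonal but not mod-$4$-preserving. By Lemma \ref{ipcond}, every column of $B^T$ still has \emph{odd} Hamming weight, but by Corollary \ref{mod4cond} the failure of mod-$4$-preservation means that \emph{some} column of $B^T$ has Hamming weight $\equiv 3 \pmod 4$ — i.e.\ of the form $4k+3$ for some $k \geq 0$ (the case $4k+3 = 3$, a weight-$3$ column, will be the base case and should already directly give a $\mathrm{T}_4$-type block).

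Next I would normalize as before: by permuting rows and columns, bring $B^T$ to a form where the distinguished column is the second column, the first column is $e_1$, and the distinguished column has its $1$'s in rows $2$ through $4k+3$ (say), writing the remaining rows of $B^T$ as $v_1,\dots,v_n \in \mathbb{F}_2^{n-1}$. Orthogonality of $B^T$ translates, exactly as in Theorem \ref{gettingt6}, into the relations $v_i \cdot v_j = \overline{\delta_{ij}}$ for $i,j \leq 4k+3$ and $v_i \cdot v_j = \delta_{ij}$ otherwise. Then, again imitating the earlier argument, let $C^T$ be obtained from $B^T$ by swapping its first two columns, and compute the product $C^T B$. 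The first two columns of $B^T$ being $e_1$ and the weight-$(4k+3)$ column, while the rows of $B$ past the first two are the $v_i^T$, the product should collapse to the block matrix that is the linear transformation $x \mapsto (x_j \oplus b_x)_j$ on the first $4k+4$ coordinates — that is, $\mathrm{T}_{4k+4}$ — and the identity on the rest. Since $4k+4 = 4(k+1)$, Lemma \ref{treduce} says $\mathrm{T}_{4k+4}$ generates $\mathrm{T}_4$, and we are done.

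The step I expect to be the main obstacle is getting the arithmetic of the $C^T B$ computation to come out to exactly a $\mathrm{T}_{4k+4}$ block rather than some other orthogonal transformation: one has to verify carefully that the entries of $C^T B$ outside the distinguished $4k+4$ rows/columns really are an identity block, which uses the orthogonality relations $v_i \cdot v_j = \delta_{ij}$ for indices past $4k+3$ and the mixed relations to kill cross terms, and one has to check the sign/parity bookkeeping in $\mathbb{F}_2$ so that the $(4k+3)$-many $1$'s in the distinguished column interact with the swap to produce the all-$b_x$ pattern. There is also the small-$k$ boundary case ($4k+3=3$) to handle separately, but that just amounts to observing that a weight-$3$ column directly yields a $\mathrm{T}_4$ on four bits by the same manipulation with $k=0$. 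I would organize the final write-up so that the normalization and the $C^T B$ identity are stated as the two key claims, with the orthogonality relations doing all the work in verifying the second.
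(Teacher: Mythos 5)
Your proposal is correct and follows essentially the same route as the paper's own proof: pad $A$ to $B$, use Corollary \ref{mod4cond} to locate a column of $B^{T}$ of Hamming weight $\equiv 3\ (\operatorname{mod}4)$, swap the first two columns to form $C^{T}$, verify via the orthogonality relations that $C^{T}B$ acts as $\operatorname{T}_{4k+4}$ on a block and as the identity elsewhere, and finish with Lemma \ref{treduce}. Your explicit treatment of the weight-$3$ base case is a minor refinement the paper glosses over, but the argument is the same.
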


\begin{proof}
We use essentially the same construction as in Theorem~\ref{gettingt6}. \ The
only change is that Corollary~\ref{mod4cond} now tells us that there must be a
column of $B^{T}$ with Hamming weight $4k+3$ for some $k\geq1$, so we use that
in place of the column with Hamming weight $4k+1$. \ This leads to an $\left(
n+1\right)  \times\left(  n+1\right)  $\ matrix $C^{T}B$, which acts as
$\operatorname{T}_{4k+4}$ on the first $4k+4$\ bits and as the identity on the
rest. \ But $\operatorname{T}_{4k+4}$ generates $\operatorname{T}_{4}$ by
Lemma \ref{treduce}, so we are done.
\end{proof}

\subsection{Non-Orthogonal Linear Generates CNOTNOT\label{TOCNOTNOT}}

In classifying all linear gate sets, our next goal is to show that
\textquotedblleft there is nothing between orthogonal and
parity-preserving.\textquotedblright\ \ In other words:

\begin{theorem}
\label{scottthm}Let $G$ be any non-orthogonal, parity-preserving linear gate.
\ Then $G$ generates $\operatorname{CNOTNOT}$ (or equivalently, all
parity-preserving linear transformations).
\end{theorem}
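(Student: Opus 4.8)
The plan is to mimic the "slope theory" already used in Section \ref{MOD}, but now tracking a $3$-dimensional lattice associated with \emph{linear} behavior. Since $G$ is linear and parity-preserving, Theorem \ref{k2or4} tells us $k(G)\in\{2,4\}$, and $\operatorname{T}_6$, which is mod-$4$-preserving, is available whenever $k(G)=4$ by Theorem \ref{gettingt6}; so by first tensoring with a suitable gate we may assume $k(G)=2$. The key structural fact is that non-orthogonality of $G$ means there exist inputs $x,y$ with $G(x)\cdot G(y)\not\equiv x\cdot y\pmod 2$, i.e.\ $G$ \emph{changes an inner product by an odd amount}. We want to boost this to a gate $Q\in\langle G\rangle$ that, acting on some input pair, changes the inner product by exactly $1$ while keeping both Hamming weights fixed mod $2$ (so that parity-preservation is respected) --- that is, a "weak quasi-CNOTNOT" behavior analogous to the weak quasi-Fredkin gate. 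Concretely, I would classify input pairs $(x,y)$ by the triple $(a,b,c)$ counting $01$, $10$, $11$ positions (exactly as in Section \ref{MOD}), define the lattice $\mathcal L(G)$ of achievable slopes, and show that the assumption forces a useful lattice point into $\mathcal L(G)$.

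First I would use linearity to get cheap "basis slopes." Since $G$ is linear and non-conservative (Lemma \ref{affcontriv}), Proposition \ref{whatdanielneeds} with $k(G)=2$ gives, via tensor powers, slopes $(2,0,0)$, $(0,2,0)$, $(0,0,2)$ in $\mathcal L(G)$ --- the analogue of Lemma \ref{mreduce}. Next, the non-orthogonality hypothesis: applying $G$ to a pair $(x,y)$ where the inner product flips parity gives a slope whose effect on $x\cdot y$ is odd; working mod $2$ and reducing by the basis slopes $(2,0,0),(0,2,0),(0,0,2)$, I would extract a slope of the form $(p,q,-1)$ (after reflecting, $(p,q,1)$), where the third coordinate records the change in the number of shared $1$'s. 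Because $G$ is parity-preserving, $\lvert G(x)\rvert\equiv\lvert x\rvert$ and $\lvert G(y)\rvert\equiv\lvert y\rvert\pmod 2$, which forces $p\equiv q\equiv \pm1\pmod 2$; combined again with the basis slopes $(2,0,0),(0,2,0)$ this yields the clean lattice point $(1,1,-1)\in\mathcal L(G)$, exactly as in Lemma \ref{inlattice}. I would also arrange, by an analogue of Lemma \ref{fixedpoints} (or just by noting linear gates fix $0^n$), that the relevant gate fixes the all-zero string, so that the construction does not destroy parity.

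From $(1,1,-1)\in\mathcal L(G)$ I would extract, by permuting input/output bits, a gate $Q\in\langle G\rangle$ with $Q(a,01)=(a,01)$ and $Q(b,10)=(b,\cdot)$ decreasing the count of agreeing $1$'s by one --- i.e.\ a weak quasi-Fredkin-like gate. Since $Q$ is also parity-preserving and linear, I expect this to be upgradeable (possibly after another tensor trick and using the basis slopes to "clean up") to a $\operatorname{CNOTNOT}$ gate directly, or else to a linear gate that flips one bit conditioned on the XOR of two others --- which is $\operatorname{CNOTNOT}$ up to relabeling. The concluding step invokes Theorem \ref{cnotnotcirc}: once $\operatorname{CNOTNOT}$ is in hand, it generates all parity-preserving linear transformations.

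\textbf{Main obstacle.} I expect the hard part to be the passage from the raw slope $(1,1,-1)$ (equivalently, a gate that merely \emph{changes an inner product by $1$} on \emph{some} pair of inputs) to an actual $\operatorname{CNOTNOT}$ gate that behaves correctly on \emph{all} $2^3$ input values of the relevant three bits and returns all ancillas to their initial state. In the non-affine case (Section \ref{MOD}) this gap was bridged by the catalyzer machinery of Lemmas \ref{genswap} and \ref{catalyst}, which crucially used that non-affine gates can compute arbitrary Boolean functions with garbage --- a tool \emph{not} available here, since $G$ is affine. So the linearity of $G$ must be exploited positively: a linear gate realizing the slope $(1,1,-1)$ is an $\mathbb F_2$-linear map, and the task becomes a concrete problem in linear algebra over $\mathbb F_2$ --- showing that the orbit of $G$-generated linear maps acting on encoded triples contains the $3\times 3$ matrix of $\operatorname{CNOTNOT}$ --- which I would handle by the same row/column-operation bookkeeping used in the proof of Theorem \ref{gettingt6}, pinning down the offending column of odd-even weight and cleaning the rest with the available orthogonal-but-for-one-obstruction structure.
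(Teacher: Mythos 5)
Your high-level strategy is the right one --- use a three-dimensional slope lattice, extract an all-odd lattice point from non-orthogonality, and use parity-preservation plus the even sublattice to normalize it --- and you have correctly located the hard step. But that hard step is exactly where the proposal has a genuine gap. You derive $(1,1,-1)$ in the \emph{pair-type} lattice of Section \ref{MOD}, which by Lemma \ref{quasfred} gives only a weak quasi-Fredkin gate, and you then concede that the catalyzer machinery of Lemmas \ref{genswap} and \ref{catalyst} is unavailable (it needs non-affine computation with garbage). Your proposed substitute --- ``row/column bookkeeping as in Theorem \ref{gettingt6}'' --- does not work here: that argument leans on orthogonality (it uses $A^{-1}=A^{T}$ to pass to $A^{T}$ and to control the inner products $v_i\cdot v_j$ of the remaining columns), which is precisely the hypothesis we lack. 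So the proposal never actually produces $\operatorname{CNOTNOT}$ from the lattice point; it ends where the real work begins.

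The paper bridges this gap with two ideas you are missing. First, it works in a \emph{different} lattice: types $(a,b,c)$ count the occurrences of $x$, $y$, and $x\oplus y$ in a \emph{single} string built from two formal bits (with free $0$-ancillas), which is symmetric under all six permutations of the coordinates (Proposition \ref{sym3}). In that lattice the target point is $(1,-1,1)$, and it encodes a concrete \emph{partial} gate $\operatorname{CPD}:x0y\mapsto x,x,x\oplus y$ --- copying $x$ while dumping the extra $x$ into a ``parity dumpster'' (Lemma \ref{specialpt}); the all-odd point is obtained by an explicit count of the row patterns $10,11,01,00$ in the first two columns of the matrix of $G$ (Lemma \ref{abc1}), using parity-preservation (all columns odd) and non-orthogonality (two columns with odd inner product). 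Second, Lemma \ref{cpdlem} shows $\operatorname{CPD}$ alone suffices: a Bennett-style compute--copy--uncompute chain, with a single dumpster bit absorbing the parity, implements any parity-preserving linear $F$ without garbage --- this is the linear-world replacement for the catalyzer. A minor additional point: your reduction to $k(G)=2$ via tensoring and $\operatorname{T}_6$ is unnecessary and slightly confused --- if $k(G)=4$ then Corollary \ref{mod4orthog} makes $G$ orthogonal, contradicting the hypothesis, so $k(G)=2$ holds outright.
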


The main idea of the proof is as follows. \ Let $\operatorname{CPD}$, or
\textit{Copying with a Parity Dumpster}, be the following partial reversible
gate:%
\begin{align*}
\operatorname{CPD}\left(  000\right)   &  =000,\\
\operatorname{CPD}\left(  001\right)   &  =001,\\
\operatorname{CPD}\left(  100\right)   &  =111,\\
\operatorname{CPD}\left(  101\right)   &  =110.
\end{align*}
In other words, $\operatorname{CPD}$\ maps $x0y$ to $x,x,x\oplus y$---copying
$x$, but also XORing $x$\ into the $y$\ \textquotedblleft
dumpster\textquotedblright\ in order to preserve the total parity. \ Notice
that $\operatorname{CPD}$\ is consistent with $\operatorname*{CNOTNOT}$;
indeed, it is simply the restriction of $\operatorname*{CNOTNOT}$\ to inputs
whose second bit is $0$. \ Notice also that, whenever we have a $3$-bit string
of the form $xxy$, we can apply $\operatorname{CPD}$\ in reverse to get
$x,0,x\oplus y$.

Then we will first observe that $\operatorname{CPD}$\ generates
$\operatorname*{CNOTNOT}$. \ We will then apply the theory of types and
slopes, which already made an appearance in Section \ref{MOD}, to show that
any non-orthogonal linear gate generates $\operatorname{CPD}$: in essence,
that there are no modularity or other obstructions to generating it.

\begin{lemma}
\label{cpdlem}Let $G$ be any gate that generates $\operatorname{CPD}$. \ Then
$G$ generates $\operatorname*{CNOTNOT}$ (or equivalently, all
parity-preserving linear transformations).
\end{lemma}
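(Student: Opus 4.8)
The plan is to give a short, explicit two-gate construction of $\operatorname*{CNOTNOT}$ from $\operatorname{CPD}$, treating carefully the fact that $\operatorname{CPD}$ is only a partial gate. Recall that $\operatorname{CPD}$ sends $x0y\mapsto x,x,x\oplus y$ and is specified only on the four inputs whose middle bit is $0$; consequently $\operatorname{CPD}^{-1}$ is specified exactly on the four strings $000,001,110,111$, sending $000\mapsto 000$, $001\mapsto 001$, $110\mapsto 101$, $111\mapsto 100$. Since $G$ generates $\operatorname{CPD}$, there is a full reversible gate $G'\in\langle G\rangle$ that agrees with $\operatorname{CPD}$ on those four inputs, and hence $(G')^{-1}\in\langle G\rangle$ agrees with $\operatorname{CPD}^{-1}$ on the four inputs just listed.

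Now take input bits $x,y,z$ (at positions $1,2,3$) together with one ancilla bit $a$ initialized to $0$ (at position $4$). First apply $G'$ to the ordered triple of positions $(1,4,2)$, i.e.\ to $(x,a,y)=(x,0,y)$: since the middle entry is $0$ this input lies in the specified domain, and the output is $(x,x,x\oplus y)$, so positions $1,4,2,3$ now hold $x,\,x,\,x\oplus y,\,z$. Next apply $(G')^{-1}$ to positions $(1,4,3)$, i.e.\ to $(x,x,z)$; one checks in the two cases $x=0$ and $x=1$ that this triple is always one of $000,001,110,111$ — exactly the image of $\operatorname{CPD}$, hence the domain of $\operatorname{CPD}^{-1}$ — and that the output is $(x,0,x\oplus z)$. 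This simultaneously restores the ancilla to $0$ and XORs $x$ into the $z$ register. The net effect on $(x,y,z,0)$ is $(x,\,x\oplus y,\,x\oplus z,\,0)$, so by the ancilla rule $\langle G\rangle$ contains the $3$-bit gate $(x,y,z)\mapsto(x,\,x\oplus y,\,x\oplus z)=\operatorname*{CNOTNOT}(x,y,z)$. The parenthetical equivalence then follows from Theorem~\ref{cnotnotcirc}, which shows $\operatorname*{CNOTNOT}$ generates every parity-preserving linear transformation; the reverse inclusion is immediate because $\operatorname*{CNOTNOT}$ is itself linear and parity-preserving.

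The only delicate point — the ``main obstacle,'' such as it is — is precisely the partiality of $\operatorname{CPD}$: because $G'$ and $(G')^{-1}$ are pinned down on only four inputs each, the construction must be arranged so that every invocation is fed an input in the relevant specified domain. The first application is safe because the ancilla supplies the required $0$ in the middle slot; the second is safe because $(x,x,z)$ is automatically one of the four strings on which $\operatorname{CPD}^{-1}$ is defined, for either value of $x$. Once this is verified the lemma is complete; the genuinely hard work — showing that an arbitrary non-orthogonal parity-preserving linear gate generates $\operatorname{CPD}$ at all — is deferred to the subsequent lemma, where the theory of types and slopes and a $3$-dimensional lattice argument enter.
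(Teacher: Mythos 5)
Your construction is correct, and it takes a genuinely different route from the paper's. The paper proves the lemma by directly synthesizing an arbitrary parity-preserving linear transformation $F$ via a Bennett-style compute--copy--uncompute scheme, in which $\operatorname{CPD}$ serves as the copier and the ``parity dumpster'' bits are guaranteed to return to $0$ at the end because $\left\vert x\right\vert +\left\vert F\left(  x\right)  \right\vert \equiv 0\left(\operatorname{mod}2\right)$; the $\operatorname{CNOTNOT}$ gate is then obtained only as a special case. You instead build $\operatorname{CNOTNOT}$ itself with two applications of a completion $G'$ of $\operatorname{CPD}$ and a single ancilla---feeding $(x,0,y)$ to $G'$ and then $(x,x,z)$ to $(G')^{-1}$---and your domain-checking is exactly right: the first call lands in the specified domain because the ancilla supplies the middle $0$, and the second lands in the image of the specified part because its first two coordinates agree, so the unspecified behavior of $G'$ never matters and the ancilla is restored. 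Deferring to Theorem~\ref{cnotnotcirc} for the parenthetical equivalence is legitimate (that theorem is established earlier and independently). What your approach buys is an explicit constant-size circuit with an explicit $O(1)$ ancilla count and no appeal to garbage-computation lemmas; what the paper's approach buys is that it never needs to exhibit $\operatorname{CNOTNOT}$ concretely and goes straight to the full class. Both are valid proofs of the lemma as stated.
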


\begin{proof}
Let $F:\left\{  0,1\right\}  ^{n}\rightarrow\left\{  0,1\right\}  ^{n}$\ be
any reversible, parity-preserving linear transformation. \ Then we can
generate the following sequence of states:%
\begin{align*}
x  &  \rightarrow x,\operatorname{gar}\left(  x\right)  ,F\left(  x\right) \\
&  \rightarrow x,\operatorname{gar}\left(  x\right)  ,F\left(  x\right)
,F\left(  x\right)  ,\left\vert x\right\vert \left(  \operatorname{mod}%
2\right) \\
&  \rightarrow x,F\left(  x\right)  ,\left\vert x\right\vert \left(
\operatorname{mod}2\right) \\
&  \rightarrow x,F\left(  x\right)  ,\operatorname{gar}\left(  F\left(
x\right)  \right)  ,x,\left\vert x\right\vert \left(  \operatorname{mod}%
2\right) \\
&  \rightarrow F\left(  x\right)  ,\operatorname{gar}\left(  F\left(
x\right)  \right)  ,x,\left\vert x\right\vert +\left\vert F\left(  x\right)
\right\vert \left(  \operatorname{mod}2\right) \\
&  =F\left(  x\right)  ,\operatorname{gar}\left(  F\left(  x\right)  \right)
,x,0\left(  \operatorname{mod}2\right) \\
&  \rightarrow F\left(  x\right)  ,
\end{align*}
for some garbage strings $\operatorname{gar}\left(  x\right)  $\ and
$\operatorname{gar}\left(  F\left(  x\right)  \right)  $. \ Here the first
line computes $F\left(  x\right)  $ from $x$; the second line applies
$\operatorname{CPD}$ to copy $F\left(  x\right)  $\ (using a single
\textquotedblleft dumpster\textquotedblright\ bit for each bit of $F\left(
x\right)  $); the third line uncomputes $F\left(  x\right)  $; the fourth line
computes a second copy of $x$ from $F\left(  x\right)  $; the fifth line
applies $\operatorname{CPD}$ in reverse to erase one of the copies of
$x$\ (reusing same dumpster bit from before); and the sixth line uncomputes
$x$. \ Also, $\left\vert x\right\vert +\left\vert F\left(  x\right)
\right\vert \equiv0\left(  \operatorname{mod}2\right)  $\ follows because $F$
is parity-preserving.
\end{proof}

So, given a non-orthogonal, parity-preserving linear gate $G$, we now need to
show how to implement $\operatorname{CPD}$.

For the rest of this section, we will consider a situation where we are given
an $n$-bit string, with the initial state $xy0^{n-2}$ (where $x$\ and $y$\ are
two arbitrary bits), and then we apply a sequence of $\mathbb{F}_{2}$\ linear
transformations to the string. \ Here we do \textit{not} assume that ancilla
bits initialized to $1$\ are available, though ancilla bits initialized to $0$
are fine. \ As a result, at every time step, every bit in our string will be
either $x$, $y$, $x\oplus y$, or $0$. \ Because we are studying only the
linear case here, not the affine case, we do not need to worry about the
possibilities $x\oplus1$,\ $y\oplus1$, etc., which would considerably
complicate matters. \ (We will handle the affine case in Section \ref{NOT}.)

By analogy to Section \ref{MOD}, let us define the \textit{type} of a string
$z\left(  x,y\right)  \in\left\{  0,1\right\}  ^{n}$\ to be $\left(
a,b,c\right)  $, if $z$\ contains $a$\ copies of $x$ and $b$\ copies of
$y$\ and $c$ copies of $x\oplus y$. \ Since any string of type $\left(
a,b,c\right)  $\ can be transformed into any other string of type $\left(
a,b,c\right)  $ using bit-swaps, the type of $z$ is its only relevant
property. \ As before, if by repeatedly applying a linear gate\ $G$, we can
map some string of type $\left(  a,b,c\right)  $\ into some string of type
$\left(  a^{\prime},b^{\prime},c^{\prime}\right)  $, then we say that $G$
\textit{generates the slope}%
\[
\left(  a^{\prime}-a,b^{\prime}-b,c^{\prime}-c\right)  .
\]
Again, if $G$\ generates the slope $\left(  p,q,r\right)  $, then $G^{-1}%
$\ generates the slope $\left(  -p,-q,-r\right)  $. \ Also, if $G$ generates
the slope $\left(  p,q,r\right)  $\ using the string $z$, and the slope
$\left(  p^{\prime},q^{\prime},r^{\prime}\right)  $\ using the string
$z^{\prime}$, then it generates the slope $\left(  p+p^{\prime},q+q^{\prime
},r+r^{\prime}\right)  $\ using the string $zz^{\prime}$. \ For these reasons,
the set of achievable slopes forms a $3$-dimensional lattice, which we denote
$\mathcal{L}\left(  G\right)  \subseteq\mathbb{Z}^{3}$. \ Moreover, this is a
lattice with a strong symmetry property:

\begin{proposition}
\label{sym3}$\mathcal{L}\left(  G\right)  $ is symmetric under all $6$
permutations of the $3$\ coordinates.
\end{proposition}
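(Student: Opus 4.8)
The plan is to realize each of the six permutations of the coordinates $(a,b,c)$ by an $\mathbb{F}_2$-linear change of the two ``variable'' bits $x,y$. The underlying fact is that $GL_2(\mathbb{F}_2)$ has order $6$ and acts faithfully on the three nonzero vectors of $\mathbb{F}_2^2$, equivalently on the three nonzero linear forms $x,\ y,\ x\oplus y$ in two variables, so this action is all of $S_3$; it is therefore enough to realize the transposition $x\leftrightarrow y$ (the substitution $(x,y)\mapsto(y,x)$) and the transposition $y\leftrightarrow x\oplus y$ (the substitution $(x,y)\mapsto(x,\,x\oplus y)$, which fixes $x$ and interchanges $y$ with $x\oplus y$), since these two transpositions generate $S_3$.

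To make this precise I would regard a string each of whose bits is a linear function of $(x,y)$ as a \emph{symbolic string} $w$: an assignment of one of the symbols $0,x,y,x\oplus y$ to each position, equivalently a linear map $L_w:\mathbb{F}_2^2\to\mathbb{F}_2^n$ whose $i$-th coordinate functional is the symbol at position $i$. The type of $w$ counts how many coordinate functionals of $L_w$ equal $x$, $y$, and $x\oplus y$. For $M\in GL_2(\mathbb{F}_2)$, let $M\cdot w$ be the symbolic string with linear map $L_w\circ M$; this replaces each symbol $\ell$ by $\ell\circ M$, hence permutes the three nonzero symbols by some $\pi_M\in S_3$, so that $\mathrm{type}(M\cdot w)$ is $\mathrm{type}(w)$ with its coordinates permuted according to $\pi_M$. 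The crucial point is that this substitution action commutes with applying any linear gate $Q$: writing $Q*w$ for $Q$ applied symbolically (which makes sense because $G$, swaps, and $0$-initialized ancillas --- the only ingredients of the $Q$'s that arise here --- are all linear), the map $Q$ acts on the $\mathbb{F}_2^n$ side and $M$ on the $\mathbb{F}_2^2$ side, so $L_{Q*(M\cdot w)}=Q\circ L_w\circ M=L_{M\cdot(Q*w)}$, i.e.\ $Q*(M\cdot w)=M\cdot(Q*w)$.

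Given this, the proof is short. Suppose $(p,q,r)\in\mathcal{L}(G)$, witnessed by $Q\in\langle G\rangle$ and a symbolic string $w$ with $\mathrm{type}(Q*w)=\mathrm{type}(w)+(p,q,r)$. Applying the same $Q$ to $M\cdot w$ and using the commutation identity gives
\[
\mathrm{type}(Q*(M\cdot w)) - \mathrm{type}(M\cdot w) = \pi_M(\mathrm{type}(Q*w)) - \pi_M(\mathrm{type}(w)) = \pi_M(p,q,r),
\]
so $\pi_M(p,q,r)\in\mathcal{L}(G)$; letting $M$ range over $GL_2(\mathbb{F}_2)$ yields closure under all six coordinate permutations. (By contrast, in the setting of Section \ref{MOD} the gate is an arbitrary reversible transformation rather than a linear one, so it does not commute with linear substitutions and one obtains only the symmetry exchanging the first two coordinates.) The one point I expect to need care is the definitional check that feeding $M\cdot w$, rather than $w$, into $Q$ is allowed --- i.e.\ that ``generates the slope'' ranges over all symbolic strings of a given type, not merely those literally reachable from the fixed initial state $xy0^{n-2}$. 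This is immediate once one notes that a slope is just a difference of types across a single gate application, independent of the provenance of the input string; alternatively, $M\cdot w$ is produced by the very same gate sequence that produced $w$, now applied to the ``$M$-renamed'' initial configuration, which still has one copy of each of two distinct nonzero symbols and so is an equally good starting point.
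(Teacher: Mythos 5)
Your proof is correct and takes essentially the same route as the paper's: the paper's (much terser) argument likewise observes that one may swap $x\leftrightarrow y$ and also relabel $y':=x\oplus y$ (so that $x'\oplus y'=y$), i.e., act by $GL_2(\mathbb{F}_2)\cong S_3$ on the triple $(x,y,x\oplus y)$. Your extra care about why linear substitution commutes with applying linear gates, and about the slope being a difference of types independent of reachability from $xy0^{n-2}$, simply fills in details the paper leaves implicit.
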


\begin{proof}
Clearly we can interchange the roles of $x$\ and $y$. \ However, we can also,
e.g., define $x^{\prime}:=x$\ and $y^{\prime}:=x\oplus y$, in which case
$x^{\prime}\oplus y^{\prime}=y$. \ In the triple $\left(  x,y,x\oplus
y\right)  $, each element is the XOR of the other two.
\end{proof}

Just like before, our entire question will boil down to whether or not the
lattice $\mathcal{L}\left(  G\right)  $\ contains a certain point. \ In this
case, the point is $\left(  1,-1,1\right)  $. \ The importance of the $\left(
1,-1,1\right)  $ point comes from the following lemma.

\begin{lemma}
\label{specialpt}Let $G$ be any linear gate. \ Then $G$ generates
$\operatorname{CPD}$, if and only if $\left(  1,-1,1\right)  \in
\mathcal{L}\left(  G\right)  $.
\end{lemma}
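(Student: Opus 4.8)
The plan is to prove the lemma in two directions, with the forward direction being essentially trivial and the backward direction carrying all the weight. For the ``only if'' direction: if $G$ generates $\operatorname{CPD}$, then examining how $\operatorname{CPD}$ acts on types, we see that the input $x0y$ has type $(1,1,0)$ (one copy of $x$, one of $y$, zero of $x\oplus y$), while the output $x,x,x\oplus y$ has type $(2,0,1)$. So $\operatorname{CPD}$ directly generates the slope $(2,0,1)-(1,1,0) = (1,-1,1)$, which therefore lies in $\mathcal{L}(G)$. (One should double-check the bookkeeping convention for how $00$ pairs versus $x\oplus y$ entries are counted, matching the definition of ``type'' in this section.)

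For the ``if'' direction, suppose $(1,-1,1) \in \mathcal{L}(G)$. By definition of the lattice, there is a gate $Q \in \langle G\rangle$ and a string $z(x,y)$ of some type $(a,b,c)$ such that $Q$ maps $z$ to a string of type $(a+1, b-1, c+1)$. The first step is to choose $z$ conveniently: since we may freely pad with $0$-bits and apply bit-swaps, I would take $z$ to have the form in which the relevant copies are laid out so that, after applying $Q$, I can isolate one ``fresh'' copy of $x$, one consumed copy of $y$, and one fresh copy of $x\oplus y$. Concretely, I want to engineer a subcircuit that, on input $x, 0, y$ together with ancillas initialized to $0$, produces $x, x, x\oplus y$ together with the ancillas restored to $0$ --- which is exactly generating $\operatorname{CPD}$ (a partial gate, so only the stated four input patterns matter, and garbage on ancillas is disallowed but garbage-free restoration is what the lattice move buys us).

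The main obstacle --- and the place where I'd expect to spend the most care --- is that a single lattice move changes the multiset of copies of $x$, $y$, $x\oplus y$ globally across the whole string, and I need to massage this into the \emph{local} three-bit behavior of $\operatorname{CPD}$ with clean ancilla management. The key trick should be to start from a string already containing many copies of $x$, $y$, and $x\oplus y$ (obtainable from $x, 0, y$ plus $0$-ancillas by ordinary linear gates: $x\oplus y$ is just a CNOT away, and once parity-preserving linear operations are available we can shuffle freely), apply $Q$ (or a tensor power, invoking Proposition~\ref{sym3} and closure of $\mathcal{L}(G)$ under coordinate permutations if the type $(a,b,c)$ needs rebalancing), and then \emph{uncompute} everything except the net change of $(+1,-1,+1)$ --- arriving at a configuration with one extra $x$, one fewer $y$, one extra $x\oplus y$, and all scratch bits back to $0$. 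Reading off three designated bits then gives $x, x, x\oplus y$ as desired. I would also note the symmetric remark already in the text: applying the construction in reverse maps $xxy$ to $x, 0, x\oplus y$, which is how Lemma~\ref{cpdlem} will later use $\operatorname{CPD}$. Finally, since the ambient operations are all linear and no $1$-ancillas are used, every intermediate bit stays in $\{x, y, x\oplus y, 0\}$, so the type analysis is exhaustive and no affine complications arise.
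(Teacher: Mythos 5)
Your forward direction and the overall shape of your converse match the paper's proof: the paper likewise reads off the slope $(1,-1,1)$ from the fact that $\operatorname{CPD}$ sends a string of type $(1,1,0)$ to one of type $(2,0,1)$, and for the converse it sandwiches the slope-$(1,-1,1)$ gate $H$ between a ``replicating'' gate $Q$ and its inverse, where $Q$ maps $xy0^{n-2}$ to a string containing many copies of $x$, $y$, and $x\oplus y$.

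There is, however, a genuine error in how you justify that replicating step. You claim the many copies of $x$, $y$, and $x\oplus y$ are obtainable ``by ordinary linear gates: $x\oplus y$ is just a CNOT away, and once parity-preserving linear operations are available we can shuffle freely.'' Neither $\operatorname{CNOT}$ nor the full class of parity-preserving linear transformations is available here: the only operations at your disposal are those in $\left\langle G\right\rangle$ (plus free bit-swaps and $0$-ancillas), and showing that $\left\langle G\right\rangle$ contains all parity-preserving linear transformations is precisely what this section is building toward via Lemma~\ref{cpdlem}. If $G$ is parity-preserving, for instance, then $\operatorname{CNOT}\notin\left\langle G\right\rangle$, so this step is circular. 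The paper's route around this is that the hypothesis $\left(1,-1,1\right)\in\mathcal{L}\left(G\right)$ already forces $G$ to be non-degenerate, and Lemma~\ref{lloydcopy} then shows that $G$ by itself can increase the numbers of copies of $x$, $y$, and $x\oplus y$ without bound starting from $xy0^{n-2}$. With that substitution, your plan---replicate, apply the slope gate to a sub-block of type $\left(a,b,c\right)$, then uncompute---coincides with the paper's ``apply $Q$, then $H$, then $Q^{-1}$,'' and the bookkeeping you correctly flag as the delicate part is handled there in exactly that (equally terse) way.
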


\begin{proof}
If $G$ generates $\operatorname{CPD}$, then it maps $x0y$, which has type
$\left(  1,1,0\right)  $,\ to $x,x,x\oplus y$, which has type $\left(
2,0,1\right)  $. \ This amounts to generating the slope $\left(
1,-1,1\right)  $.

Conversely, suppose $\left(  1,-1,1\right)  \in\mathcal{L}\left(  G\right)  $.
\ Then there is some gate $H\in\left\langle G\right\rangle $, and some string
of the form $z=x^{a}y^{b}\left(  x\oplus y\right)  ^{c}$, such that%
\[
H\left(  z\right)  =x^{a+1}y^{b-1}\left(  x\oplus y\right)  ^{c+1}.
\]
But the very fact that $G$ generates such an $H$ implies that $G$ is
non-degenerate, and if $G$\ is non-degenerate, then Lemma \ref{lloydcopy}%
\ implies that, starting from $xy0^{n-2}$, we can use $G$ to increase the
numbers of $x$, $y$, and $x\oplus y$ simultaneously without bound. \ That is,
there is some $Q\in\left\langle G\right\rangle $\ such that (omitting the $0$
bits)%
\[
Q\left(  xy\right)  =x^{a^{\prime}}y^{b^{\prime}}\left(  x\oplus y\right)
^{c^{\prime}},
\]
where $a^{\prime}>a$ and $b^{\prime}>b$ and $c^{\prime}>c$. \ So then the
procedure to implement $\operatorname{CPD}$\ is to apply $Q$,\ then $H$, then
$Q^{-1}$.
\end{proof}

Thus, our goal now is to show that, if $G$\ is any non-orthogonal,
parity-preserving linear gate, then $\left(  1,-1,1\right)  \in\mathcal{L}%
\left(  G\right)  $. \ Observe that, if $k\left(  G\right)  =4$, then
Corollary \ref{mod4orthog}\ implies that $G$ is orthogonal, contrary to
assumption. \ By Theorem \ref{k2or4}, this means that the only remaining
possibility is $k\left(  G\right)  =2$. \ This has the following consequence
for the lattice $\mathcal{L}\left(  G\right)  $.

\begin{proposition}
\label{prop200}If $G$\ is a linear gate with $k\left(  G\right)  \leq2$, then
$\mathcal{L}\left(  G\right)  $ contains all even points (i.e., all $\left(
p,q,r\right)  $ such that $p\equiv q\equiv r\equiv0\left(  \operatorname{mod}%
2\right)  $).
\end{proposition}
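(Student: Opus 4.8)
The plan is to reduce the whole statement to producing the single slope $(2,0,0)$, and then to invoke the coordinate-permutation symmetry of $\mathcal{L}(G)$ together with its closure under integer linear combinations. Note first that by definition $k(G)\ge 1$, so the hypothesis $k(G)\le 2$ means $k(G)\in\{1,2\}$; in either case $2$ is a multiple of $k(G)$. Also $G$ is non-conservative, since a conservative gate has $k(G)=\infty>2$. Hence Proposition \ref{whatdanielneeds} applies: taking $q$ with $q\cdot k(G)=2$, there is a positive integer $t$ and a string $w\in\{0,1\}^{nt}$ with $|G^{\otimes t}(w)|=|w|+2$.

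The second step translates this Hamming-weight fact into a slope, in exactly the manner of Lemma \ref{mreduce}. Consider the formal string $z=x^{|w|}0^{\,nt-|w|}$, and identify its $nt$ coordinates with $[nt]$ so that the copies of $x$ sit at the positions where $w$ has a $1$ (so the indicator vector of the $x$-positions is $w$ itself). Apply $G^{\otimes t}$ to these $nt$ coordinates. Since $G$ is $\mathbb{F}_2$-linear and $G(0^n)=0^n$, each input coordinate $j$ holds the formal value $w_j\cdot x$, so each output coordinate $i$ holds $\big(\sum_j (G^{\otimes t})_{ij}\,w_j\big)x=(G^{\otimes t}(w))_i\cdot x$; equivalently, the new $x$-positions have indicator vector $G^{\otimes t}(w)$. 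Thus the output is a formal string over $\{x,0\}$ with exactly $|G^{\otimes t}(w)|=|w|+2$ copies of $x$ (and no copies of $y$ or $x\oplus y$ are ever created). So $z$, of type $(|w|,0,0)$, is mapped by a sequence of $t$ applications of $G$ to a string of type $(|w|+2,0,0)$, which shows $(2,0,0)\in\mathcal{L}(G)$.

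Finally, Proposition \ref{sym3} gives that $\mathcal{L}(G)$ is invariant under all six permutations of the three coordinates, so $(0,2,0)$ and $(0,0,2)$ lie in $\mathcal{L}(G)$ as well; since $\mathcal{L}(G)$ is a lattice, it contains the $\mathbb{Z}$-span of $(2,0,0),(0,2,0),(0,0,2)$, which is precisely the set of all even points. The only point that needs care is the linearity bookkeeping in the middle paragraph — confirming that feeding a formal string over $\{x,0\}$ into $G^{\otimes t}$ yields another formal string over $\{x,0\}$ whose $x$-count equals $|G^{\otimes t}(w)|$ — and this is immediate from $G$ being $\mathbb{F}_2$-linear and fixing the all-zero input. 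I do not anticipate a genuine obstacle here; essentially all the content is in setting up the application of Proposition \ref{whatdanielneeds}, and the slope argument then mirrors the one already used in Section \ref{MOD}.
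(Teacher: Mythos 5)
Your proposal is correct and follows essentially the same route as the paper: invoke Proposition \ref{whatdanielneeds} (valid since $k(G)\le 2$ forces $G$ non-conservative and $2$ is a multiple of $k(G)$) to obtain the slope $(2,0,0)$, then use Proposition \ref{sym3} and lattice closure to get all even points. Your middle paragraph merely spells out more carefully the linearity bookkeeping that the paper compresses into the remark that $0^n$ is fixed by any linear transformation.
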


\begin{proof}
By Proposition \ref{whatdanielneeds}, we must be able to use $G$ to map
$10^{n-1}$\ to $1110^{n-3}$. \ Since $0^{n}$ is mapped to itself by any linear
transformation, this implies that $G$ can map $x0^{n-1}$\ to $xxx0^{n-3}%
$,\ which means that it generates the slope $\left(  2,0,0\right)  $. \ So
$\left(  2,0,0\right)  \in\mathcal{L}\left(  G\right)  $. \ By Proposition
\ref{sym3}, then, $\mathcal{L}\left(  G\right)  $ also contains the points
$\left(  0,2,0\right)  $\ and $\left(  0,0,2\right)  $. \ But these three
generate all the even points.
\end{proof}

Proposition \ref{prop200}\ has the following immediate corollary.

\begin{corollary}
\label{cor200}Let $G$\ be a linear gate with $k\left(  G\right)  \leq2$, and
suppose $\mathcal{L}\left(  G\right)  $\ contains any point $\left(
p,q,r\right)  $\ such that $p\equiv q\equiv r\equiv1\left(  \operatorname{mod}%
2\right)  $. \ Then $\mathcal{L}\left(  G\right)  $\ contains $\left(
1,-1,1\right)  $.
\end{corollary}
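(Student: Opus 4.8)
The plan is to reduce immediately to Proposition \ref{prop200}. That proposition already tells us that, under the hypothesis $k\left(G\right)\leq2$, the lattice $\mathcal{L}\left(G\right)$ contains \emph{every} triple $\left(p,q,r\right)$ whose three coordinates are all even. So the only work left is to exhibit $\left(1,-1,1\right)$ as the difference of the given all-odd point and some all-even point, and then invoke closure of a lattice under subtraction.

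Concretely, suppose $\left(p,q,r\right)\in\mathcal{L}\left(G\right)$ with $p\equiv q\equiv r\equiv1\left(\operatorname{mod}2\right)$. Consider the vector
\[
\left(p,q,r\right)-\left(1,-1,1\right)=\left(p-1,\,q+1,\,r-1\right).
\]
Each of $p-1$, $q+1$, $r-1$ is even, so this is an even point, hence lies in $\mathcal{L}\left(G\right)$ by Proposition \ref{prop200}. Since $\mathcal{L}\left(G\right)$ is closed under integer linear combinations (in particular under subtraction), and it contains both $\left(p,q,r\right)$ and $\left(p-1,q+1,r-1\right)$, it must contain their difference $\left(1,-1,1\right)$. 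That completes the argument.

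There is essentially no obstacle here: the corollary is a one-line consequence of Proposition \ref{prop200} together with the lattice axioms, and the only thing to check is the trivial parity bookkeeping that $\left(p,q,r\right)-\left(1,-1,1\right)$ has all-even coordinates exactly when $\left(p,q,r\right)$ has all-odd coordinates. I would present it as a short \texttt{proof} block immediately after the statement, with no further case analysis needed.
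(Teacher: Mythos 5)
Your proof is correct and matches the paper's intent exactly: the paper states this as an ``immediate corollary'' of Proposition \ref{prop200} without writing out the details, and the details are precisely your observation that $\left(p,q,r\right)-\left(1,-1,1\right)$ is an all-even point, hence in $\mathcal{L}\left(G\right)$, so the difference $\left(1,-1,1\right)$ lies in the lattice by closure under subtraction.
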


Thus, it remains only to prove the following lemma.

\begin{lemma}
\label{abc1}Let $G$\ be any parity-preserving, non-orthogonal linear gate.
\ Then $\mathcal{L}\left(  G\right)  $\ contains a point $\left(
p,q,r\right)  $\ such that $p\equiv q\equiv r\equiv1\left(  \operatorname{mod}%
2\right)  $.
\end{lemma}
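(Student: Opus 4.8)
The plan is to produce the required odd-odd-odd slope by a \emph{single} application of $G$ to an input supported on just two coordinates, and then to read off the parities. This is simpler than the lattice-saturation arguments used for the companion lemmas of Section \ref{MOD}: here the point $(p,q,r)$ drops out almost immediately from the column structure of the matrix.

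Concretely, write $G(x)=Ax$ with columns $v_{1},\dots,v_{n}$. First I would observe that, since $G$ is parity-preserving, $|v_{\ell}|=|Ae_{\ell}|\equiv|e_{\ell}|\equiv 1\ (\operatorname{mod}2)$, so every column of $A$ has odd Hamming weight. By Lemma~\ref{ipcond}, $G$ is orthogonal iff moreover $v_{i}\cdot v_{j}\equiv 0\ (\operatorname{mod}2)$ for all $i\neq j$; hence, because $G$ is non-orthogonal, there is a pair $i\neq j$ with $v_{i}\cdot v_{j}\equiv 1\ (\operatorname{mod}2)$. Now feed $G$ the $n$-bit string $w=w(x,y)$ carrying $x$ in coordinate $i$, $y$ in coordinate $j$, and $0$ elsewhere. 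This string has type $(1,1,0)$ (and is reachable from $xy0^{n-2}$ by bit-swaps), and its image is $G(w)=xv_{i}\oplus yv_{j}$, whose $\ell$-th coordinate is $0$, $x$, $y$, or $x\oplus y$ according to whether $((v_{i})_{\ell},(v_{j})_{\ell})$ is $(0,0)$, $(1,0)$, $(0,1)$, or $(1,1)$. Writing $\alpha,\beta,\gamma$ for the number of coordinates of the last three kinds, the type of $G(w)$ is $(\alpha,\beta,\gamma)$, so $G$ generates the slope $(\alpha-1,\beta-1,\gamma)\in\mathcal{L}(G)$.

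It then remains to check parities: by construction $|v_{i}|=\alpha+\gamma$, $|v_{j}|=\beta+\gamma$, and $v_{i}\cdot v_{j}=\gamma$. Oddness of $v_{i}\cdot v_{j}$ forces $\gamma$ odd, and oddness of $|v_{i}|$ and $|v_{j}|$ then forces $\alpha$ and $\beta$ even, so $(\alpha-1,\beta-1,\gamma)$ has all three coordinates odd, which is exactly the point claimed. The only thing requiring any care is the first move---recognizing that the two-column-supported input $xe_{i}\oplus ye_{j}$ is the right object to hand to $G$; after that the argument is pure bookkeeping with column weights and the inner product $v_{i}\cdot v_{j}$, and no Diophantine or Chinese-remainder input beyond Lemma~\ref{ipcond} is needed (the role of $k(G)=2$ and Proposition~\ref{prop200} is confined to the subsequent passage from this point to $(1,-1,1)$, not to the present lemma).
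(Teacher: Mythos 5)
Your proposal is correct and is essentially the paper's own proof: the paper likewise takes the two columns with odd inner product, counts the rows of each of the four patterns (its $a,b,c$ are your $\alpha,\gamma,\beta$), applies $G$ to $xy0^{n-2}$, and deduces that the resulting slope $(a-1,c-1,b)$ is odd in every coordinate from the oddness of the column weights and of the inner product. The only cosmetic difference is that you invoke Lemma~\ref{ipcond} explicitly to locate the offending pair of columns, which the paper leaves implicit.
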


\begin{proof}
In the proof of Theorem \ref{scottthm}, this is the first place where we use
the linearity of $G$ in an essential way---i.e., not just to deduce that
$k\left(  G\right)  \in\left\{  2,4\right\}  $, or to avoid dealing with bits
of the form $x\oplus1$,\ $y\oplus1$, etc. \ It is also the first place where
we use the non-orthogonality of $G$, other than to rule out the possibility
that $k\left(  G\right)  =4$; and the first place where we use that $G$ is parity-preserving.

Let us view $G$ as an $n\times n$ matrix over $\mathbb{F}_{2}$. \ Then the
fact that $G$ is parity-preserving means that every column of $G$ has odd
Hamming weight. \ Also, the fact that $G$ is non-orthogonal means that it must
have two columns with an odd inner product. \ Assume without loss of
generality that these are the first and second columns. \ Let the first two
columns of $G$ consist of:%
\begin{align*}
&  a\text{ rows of the form }1,0,\\
&  b\text{ rows of the form }1,1,\\
&  c\text{ rows of the form }0,1,\\
&  d\text{ rows of the form }0,0,
\end{align*}
where $a,b,c,d$\ are nonnegative integers summing to $n$. \ Then from the
above, we have that $a+b$ and $b+c$ and $b$ are all odd, from which it follows
that $a$ and $c$ are even.

Now consider applying $G$ to the input $xy0^{n-2}$. \ The result will contain:%
\begin{align*}
&  a\text{ copies of }x\text{,}\\
&  c\text{ copies of }y\text{,}\\
&  b\text{ copies of }x\oplus y\text{.}%
\end{align*}
This means that we've mapped a string of type $\left(  1,1,0\right)  $\ to a
string of type $\left(  a,c,b\right)  $, thereby generating the slope $\left(
a-1,c-1,b\right)  $. \ But this is the desired odd point in $\mathcal{L}%
\left(  G\right)  $.
\end{proof}

Combining Lemma \ref{cpdlem}, Lemma \ref{specialpt}, Corollary \ref{cor200},
and Lemma \ref{abc1}\ now completes the proof of Theorem \ref{scottthm}.

\subsection{Non-Parity-Preserving Linear Generates CNOT\label{TOCNOT}}

To complete the classification of linear gate sets, our final task is to prove
the following theorem.

\begin{theorem}
\label{scottthm2}Let $G$ be any non-parity-preserving linear gate. \ Then $G$
generates $\operatorname{CNOT}$ (or equivalently, all linear transformations).
\end{theorem}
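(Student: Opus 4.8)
The plan is to mimic the ``slope theory'' of Sections~\ref{MOD} and~\ref{TOCNOTNOT}, but now in a one-dimensional lattice, and to arrange matters so that the only conceivable obstruction to copying a bit is a modularity obstruction, which the hypothesis will rule out. Write $G\left(  x\right)  =Ax$ for an invertible $A\in\mathbb{F}_{2}^{n\times n}$. First I would record the easy consequences of $G$ being linear and non-parity-preserving: $G$ is nontrivial, in fact non-degenerate (a degenerate linear gate is a permutation matrix), and non-conservative (a conservative linear gate is trivial by Lemma~\ref{affcontriv}, hence parity-preserving). The key reduction is then that $k\left(  G\right)  =1$. Indeed, for nontrivial linear $G$ the respecting number can only be $1$, $2$, $4$, or $\infty$: if $k\left(  G\right)  \geq3$ then $G$ is mod-$k\left(  G\right)  $-preserving by Theorem~\ref{noshifter}, hence mod-$p$-preserving for each prime $p\mid k\left(  G\right)  $ and mod-$4$-preserving if $4\mid k\left(  G\right)  $, and Theorem~\ref{k2or4} kills everything except $k\left(  G\right)  \in\left\{  2,4\right\}  $. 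But $k\left(  G\right)  =\infty$ or $4$ would make $G$ parity-preserving, and $k\left(  G\right)  =2$ would make $G$ parity-respecting; since a linear gate fixes $0^{n}$ it cannot be parity-flipping, so this too forces $G$ parity-preserving --- contradicting the hypothesis. Hence $k\left(  G\right)  =1$, i.e.\ $G$ is non-mod-respecting.

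Now set up the slope theory. Restrict attention to strings over the two-letter ``alphabet'' $\left\{  0,x\right\}  $, where $x$ is a single Boolean variable: a linear transformation maps such a string to another such string, since each output coordinate is an $\mathbb{F}_{2}$-linear function of $x$ and so equals $0$ or $x$. The only relevant quantity is the number of copies of $x$, and exactly as in Section~\ref{MOD}, say that $G$ generates the slope $r\in\mathbb{Z}$ if some transformation in $\left\langle G\right\rangle $ maps an $\left\{  0,x\right\}  $-string to another of the same length whose number of $x$-copies has increased by $r$. The achievable slopes form an ideal $\mathcal{L}\left(  G\right)  \subseteq\mathbb{Z}$, closed under negation (via inverses) and under addition (via concatenating strings). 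The theorem then reduces to $1\in\mathcal{L}\left(  G\right)  $: by Proposition~\ref{whatdanielneeds}, since $G$ is non-conservative with $k\left(  G\right)  =1$, there are a $t$ and an input $w$ with $\left\vert G^{\otimes t}\left(  w\right)  \right\vert =\left\vert w\right\vert +1$, so feeding $x$ into every coordinate $i$ with $w_{i}=1$ and $0$ elsewhere, linearity makes $G^{\otimes t}$ carry this $\left\{  0,x\right\}  $-string (which has $\left\vert w\right\vert $ copies of $x$) to one with $\left\vert G^{\otimes t}\left(  w\right)  \right\vert =\left\vert w\right\vert +1$ copies. Using $G^{\otimes t}$ and its inverse, together with Lemma~\ref{lloydcopy} (to first inflate the number of $x$-copies so that the slope moves ``fit'') and free bit-swaps (to sort strings), one then obtains a circuit over $G$, using only ancillas set to $0$, that sends $x0\cdots0$ to $xx0\cdots0$ --- that is, $G$ generates the partial gate $\operatorname*{COPY}$.

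Finally, bootstrap from $\operatorname*{COPY}$ to $\operatorname*{CNOT}$. Because $G$ is a non-degenerate linear gate, two columns of $A$ share a coordinate in their supports (otherwise $A$ is a permutation matrix), so placing the two input bits in those columns and $0$'s elsewhere shows that $G$ computes $u\oplus v$ with garbage in the sense of Section~\ref{ALTGEN} (and $\operatorname*{NOT}$ with garbage, via Lemma~\ref{lloydnot}, though this is not needed). A standard Bennett-style uncomputation --- compute $x\oplus y$ with garbage, use $\operatorname*{COPY}$ to deposit the answer on a fresh bit, then run the garbage computation in reverse --- cleanly realizes the partial gate $P:\left(  x,y,0\right)  \mapsto\left(  x,y,x\oplus y\right)  $. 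And now $\operatorname*{CNOT}$ follows: on input $x,y$, adjoin a fresh $0$, apply $P$ to get $x,y,x\oplus y$, permute to $x,\,x\oplus y,\,y$, and apply $P^{-1}$, which (since the third bit is the XOR of the first two) returns $x,\,x\oplus y,\,0$; discarding the ancilla leaves $\left(  x,y\right)  \mapsto\left(  x,x\oplus y\right)  $. An appeal to Theorem~\ref{cnotcirc} then shows $G$ generates all affine, hence all linear, transformations.

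I expect the main obstacle to be the implication ``$1\in\mathcal{L}\left(  G\right)  \Rightarrow G$ generates $\operatorname*{COPY}$'': conceptually it just says that no modularity obstruction survives once $k\left(  G\right)  =1$, but one must be careful about padding (ensuring the slope-$(\pm1)$ moves have enough room of each letter to apply, in the spirit of Lemma~\ref{mreduce}) and about the fact that ancilla bits may not secretly carry copies of $x$. The reduction to $k\left(  G\right)  =1$ and the final $\operatorname*{COPY}\to\operatorname*{CNOT}$ bootstrap are routine by comparison.
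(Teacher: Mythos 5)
Your proposal is correct and follows essentially the same route as the paper: reduce to $k\left(G\right)=1$ via Theorem \ref{k2or4} and the impossibility of linear parity-flipping gates, set up the one-dimensional slope ideal $\mathcal{L}\left(G\right)\subseteq\mathbb{Z}$, use Proposition \ref{whatdanielneeds} to get $1\in\mathcal{L}\left(G\right)$, inflate via Lemma \ref{lloydcopy} to realize $\operatorname*{COPY}$, and then bootstrap to $\operatorname{CNOT}$ by Bennett-style uncomputation (the paper's Lemmas \ref{lg1}, \ref{lgz}, and \ref{finishit}). Your final step differs only cosmetically, constructing the partial gate $\left(x,y,0\right)\mapsto\left(x,y,x\oplus y\right)$ and conjugating by a swap, where the paper directly synthesizes an arbitrary linear $F$ by the same copy-and-uncompute pattern.
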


Recall that $\operatorname{COPY}$\ is the partial gate that maps $x0$\ to
$xx$. \ We will first show how to use $G$ to generate $\operatorname{COPY}$,
and then use $\operatorname{COPY}$\ to generate $\operatorname{CNOT}$.

Note that since $G$\ is linear, it cannot be parity-flipping. \ So since $G$
is non-parity-preserving, it is also non-parity-respecting, and $k\left(
G\right)  $\ must be finite and odd. \ But by Theorem \ref{k2or4}, this means
that $k\left(  G\right)  =1$: in other words, $G$ is non-mod-respecting.

Let $z$ be an $n$-bit string that consists entirely of copies of $x$\ and $0$.
\ Let the \textit{type} of $z$ be the number of copies of $x$. \ Clearly we
can map any $z$ to any other $z$ of the same type using swaps, so the type of
$z$\ is its only relevant property. \ Also, we say that a gate $G$
\textit{generates the slope} $p$, if by applying $G$ repeatedly, we can map
some input $z$\ of type $a$ to some input $z^{\prime}$\ of type $a+p$. \ Note
that if $G$ generates the slope $p$, then by reversibility, it also generates
the slope $-p$. \ Also, if $G$\ generates the slope $p$\ by mapping $z$\ to
$z^{\prime}$, and the slope $q$ by mapping $w$\ to $w^{\prime}$, then it
generates the slope $p+q$\ by mapping $zw$\ to $z^{\prime}w^{\prime}$. \ For
these reasons, the set of achievable slopes forms an ideal in $\mathbb{Z}$
(i.e., a $1$-dimensional lattice), which we can denote $\mathcal{L}\left(
G\right)  $. \ The question of whether $G$ generates $\operatorname{COPY}%
$\ can then be rephrased as the question of whether $\mathcal{L}\left(
G\right)  $\ contains $1$---or equivalently, of whether $\mathcal{L}\left(
G\right)  =\mathbb{Z}$.

\begin{lemma}
\label{lg1}A linear gate $G$ generates $\operatorname{COPY}$\ if and only if
$1\in\mathcal{L}\left(  G\right)  $.
\end{lemma}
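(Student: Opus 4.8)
The plan is to prove both directions; the forward direction is a one-line observation and the reverse direction is the real content, carried out by the same ``types and slopes'' method already used in Lemma~\ref{specialpt}. For the forward direction ($\Rightarrow$): if $G$ generates $\operatorname{COPY}$, then there is a gate in $\langle G\rangle$ that, using only $0$-ancillas returned to $0$, maps the weight-one string $x0^{n-1}$ (type $1$) to a string with exactly two copies of $x$ and the rest $0$ (type $2$); hence $G$ generates the slope $+1$, i.e.\ $1 \in \mathcal{L}(G)$.

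For the reverse direction ($\Leftarrow$), assume $1 \in \mathcal{L}(G)$. First I extract the basic data: there is a gate $H \in \langle G\rangle$ taking a string $z = x^a 0^{\ell}$ of type $a$ to a string $H(z)$ of type $a+1$, where $a \geq 1$ since a linear invertible gate fixes the all-zeros string. Next I observe that $G$ is non-degenerate: a degenerate linear gate is merely a permutation of bits, which cannot change the number of copies of $x$, contradicting the existence of $H$. So Lemma~\ref{lloydcopy} applies and $G$ generates $\operatorname{COPY}$ with garbage; since the whole computation is linear, the garbage is itself just some further copies of $x$ and $0$'s. Iterating this (copying one of the available copies each time) then yields, for some $a' \ge a$, a gate $Q \in \langle G\rangle$ with $Q(x0^{\cdots}) = x^{a'}0^{\cdots}$.

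With $Q$ and $H$ in hand I would implement $\operatorname{COPY}$ by ``$Q$, then $H$, then $Q^{-1}$'', inserting free bit-swaps: starting from a single copy of $x$ plus $0$-ancillas, apply $Q$ to obtain $a'$ copies; swap $a$ of them into the ``$x$''-positions of $H$'s input, park the remaining $a'-a$ copies on bits disjoint from $H$, and pad $H$'s other input bits with $0$-ancillas; apply $H$, raising the count to $a'+1$; then swap $a'$ of these $a'+1$ copies onto $\operatorname{supp}(Q(e_1))$, so that $Q$'s bits now literally hold $Q$ applied to one copy of $x$, park the single leftover copy elsewhere, and apply $Q^{-1}$, which returns those $a'$ copies to one copy. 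The net effect is one copy of $x$ $\mapsto$ two copies of $x$ with every other bit $0$; all $0$-ancillas return to $0$ automatically because the composite gate is linear and fixes the all-zeros string, so (after relabeling bits) this is a gate in $\langle G\rangle$ consistent with $\operatorname{COPY}$.

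The step I expect to be fussiest is the bookkeeping in the last paragraph: one must check that the single extra copy manufactured by $H$ (the jump from $a'$ to $a'+1$) is exactly what survives $Q^{-1}$, which is the reason the $a'$ other copies are routed precisely onto the support of $Q(e_1)$ before $Q^{-1}$ is applied. This is the $1$-dimensional shadow of the argument in Lemma~\ref{specialpt} --- ``over-provision copies, perform the $+1$ move, then contract'' --- and no new number-theoretic input is needed here, since all of that has been absorbed into the hypothesis $1\in\mathcal{L}(G)$.
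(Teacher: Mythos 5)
Your proof is correct and takes essentially the same route as the paper's: the forward direction is the same one-line observation, and for the converse both arguments use Lemma \ref{lloydcopy} to over-provision copies of $x$ (your $Q$, the paper's $H$), perform the $+1$ move supplied by $1\in\mathcal{L}(G)$, and then contract with $Q^{-1}$. Your version merely spells out the non-degeneracy check and the bit-routing bookkeeping that the paper leaves implicit.
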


\begin{proof}
If $G$\ generates $\operatorname{COPY}$, then clearly $1\in\mathcal{L}\left(
G\right)  $. \ For the converse direction, suppose $1\in\mathcal{L}\left(
G\right)  $. \ Then $G$ can be used to map an input of type $a$\ to an input
of type $a+1$, for some $a$. \ Hence $G$\ can also be used to map inputs of
type $b$\ to inputs of type $b+1$, for all $b\geq a$. \ This also implies that
$G$\ is non-degenerate, so by Lemma \ref{lloydcopy}, it can be used to
increase the number of copies of $x$\ without bound. \ So to copy a bit $x$,
we first apply some gate $H\in\left\langle G\right\rangle $\ to map $x$ to
$x^{b}$\ for some $b\geq a$, then map $x^{b}$\ to $x^{b+1}$, and finally apply
$H^{-1}$\ to map $x^{b+1}$\ to $x^{2}$.
\end{proof}

Now, the question of whether $1\in\mathcal{L}\left(  G\right)  $\ is easily answered.

\begin{lemma}
\label{lgz}Let $G$ be any non-mod-respecting linear gate. \ Then
$\mathcal{L}\left(  G\right)  =\mathbb{Z}$.
\end{lemma}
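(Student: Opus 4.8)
The plan is to pin down the ideal $\mathcal{L}(G)\subseteq\mathbb{Z}$ and show it is all of $\mathbb{Z}$. Write $\mathcal{L}(G)=\ell\mathbb{Z}$ for the unique integer $\ell\ge 0$; it then suffices to prove $\ell=1$.

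First I would identify a concrete family of slopes lying in $\mathcal{L}(G)$. Suppose $G$ acts on $m$ bits, say $G(v)=Av$ for an invertible $A\in\mathbb{F}_2^{m\times m}$. If we feed $G$ an input in which each of its $m$ bits is set either to $x$ or to $0$ — recorded by the indicator vector $u\in\{0,1\}^m$ of the bits holding $x$ — then by linearity the output bits hold exactly the pattern described by $Au$. So this single application maps a string of type $|u|$ to one of type $|Au|$, which means $|Au|-|u|\in\mathcal{L}(G)$ for every $u\in\{0,1\}^m$. In particular $\ell$ divides $|Av|-|v|=|G(v)|-|v|$ for every $v\in\{0,1\}^m$.

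Now I would rule out every value of $\ell$ except $1$. If $\ell=0$, then $|Ae_i|=|e_i|=1$ for all $i$, so every column of $A$ has Hamming weight one, $A$ is a permutation matrix, and $G$ is trivial; but a trivial gate is conservative, hence mod-$k$-respecting for every $k$, contradicting that $G$ is non-mod-respecting. If $\ell\ge 2$, then $|G(v)|\equiv|v|\pmod\ell$ for all $v\in\{0,1\}^m$, so $G$ is mod-$\ell$-preserving, hence mod-$\ell$-respecting with $\ell\ge 2$ — again contradicting non-mod-respecting-ness. Therefore $\ell=1$, i.e.\ $\mathcal{L}(G)=\mathbb{Z}$; together with Lemma~\ref{lg1} this yields that $G$ generates $\operatorname{COPY}$.

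I do not anticipate a genuine obstacle here; as the excerpt remarks, the one-dimensional lattice is the lucky, easy case. The only point needing a moment of care is the step "$\ell\ge 2\Rightarrow G$ is mod-respecting": one must observe that the divisibility $\ell\mid|Au|-|u|$, which the argument produces for $x/0$-patterns $u$, in fact holds for every $v\in\{0,1\}^m$ simply because every bit string is such a pattern, and that this is precisely the definition of $G$ being mod-$\ell$-preserving. Everything else is immediate.
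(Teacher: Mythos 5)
Your proof is correct and follows essentially the same route as the paper: both reduce the type-slopes of a linear gate to Hamming-weight changes (your observation that a single application turns the pattern $u$ into $Au$, so the slope is $\left\vert Au\right\vert -\left\vert u\right\vert$, is the paper's remark that setting $x=1$ identifies type with Hamming weight), and both conclude from non-mod-respecting-ness that the gcd of these changes is $1$. The only cosmetic difference is that you argue by contradiction on the generator $\ell$ of the ideal, whereas the paper invokes Proposition \ref{whatdanielneeds} to exhibit the slope $1$ directly.
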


\begin{proof}
This follows almost immediately from\ Proposition \ref{whatdanielneeds},
together with the fact that $k\left(  G\right)  =1$. \ We simply need to
observe that, if $x=1$, then the number of copies of $x$\ corresponds to the
Hamming weight.
\end{proof}

Finally, we show that $\operatorname{COPY}$\ suffices for $\operatorname{CNOT}%
$.

\begin{lemma}
\label{finishit}Let $G$\ be any linear gate that generates
$\operatorname{COPY}$. \ Then $G$ generates $\operatorname{CNOT}$.
\end{lemma}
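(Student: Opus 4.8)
The plan is to reuse the ``type/slope'' lattice of Section~\ref{TOCNOTNOT}. First I would unpack the hypothesis. ``$G$ generates $\operatorname{COPY}$'' means there is a linear reversible $F\in\langle G\rangle$ and an ancilla string $\vec a$ with $F(x,0,\vec a)=(x,x,\vec a)$ for $x\in\{0,1\}$. Writing $F(x,0,\vec a)=x\,F(e_1)+F(0,0,\vec a)$ by linearity, the case $x=0$ shows $F(0,0,\vec a)$ is a fixed point and the case $x=1$ shows $F(e_1)=e_1\oplus e_2$. Consequently $F(x,0,\vec 0)=x\,F(e_1)=(x,x,\vec 0)$, i.e.\ $F$ already copies using \emph{all-zero} ancillas. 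This is exactly the reduction we need, since Section~\ref{TOCNOTNOT}'s framework (every wire holds $x$, $y$, $x\oplus y$, or $0$) forbids $1$-valued ancillas, and we have just seen none are required.

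Next I would feed $G$ strings over the symbols $\{0,x,y,x\oplus y\}$, record the type $(\#x,\#y,\#(x\oplus y))\in\mathbb{Z}^3$, and let $\mathcal{L}(G)\subseteq\mathbb{Z}^3$ be the lattice of achievable slopes, exactly as in Section~\ref{TOCNOTNOT}. The gate $F$ above maps the string ``$x$'' (type $(1,0,0)$) to ``$x\,x$'' (type $(2,0,0)$), so $(1,0,0)\in\mathcal{L}(G)$. By Proposition~\ref{sym3} the lattice is invariant under all six permutations of the three coordinates, so $(0,1,0),(0,0,1)\in\mathcal{L}(G)$ as well; these three span $\mathbb{Z}^3$, so $\mathcal{L}(G)=\mathbb{Z}^3$. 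In particular $(0,-1,1)\in\mathcal{L}(G)$. (Note that $(1,-1,1)\in\mathcal{L}(G)$ too, so Lemma~\ref{specialpt} together with Lemma~\ref{cpdlem} already gives $\operatorname{CNOTNOT}$; but $\operatorname{CNOTNOT}$ is only parity-preserving, so to reach the full $\operatorname{CNOT}$ one must use the point $(0,-1,1)$, whose realization changes parity.)

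The point $(0,-1,1)$ is precisely the signature of $\operatorname{CNOT}$: the symbolic map $(a,b)\mapsto(a,a\oplus b)$ turns a string of type $(1,1,0)$ into one of type $(1,0,1)$. So I would finish by running the argument of Lemma~\ref{specialpt} verbatim, with $(0,-1,1)$ in place of $(1,-1,1)$. Here $G$ is non-degenerate (a degenerate gate permutes its wires up to flips, so cannot produce two copies of one input bit, hence cannot generate $\operatorname{COPY}$), so by Lemma~\ref{lloydcopy} it can increase the numbers of $x$, $y$, and $x\oplus y$ simultaneously without bound; we therefore pad $(a,b)$ up to whatever type the witness for $(0,-1,1)$ expects, apply that witness (re-sorting bits with free swaps to match layouts), and then invert the padding to return to type $(1,0,1)$, i.e.\ to $(a,a\oplus b)$. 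All intermediate gates lie in $\langle G\rangle$, so every scratch bit is restored; and since $(a,b)\mapsto(a,a\oplus b)$ is a \emph{total} specification of a $2$-bit gate (unlike the partial gate $\operatorname{CPD}$), the result is the full $\operatorname{CNOT}$. Finally, $\operatorname{CNOT}$ generates all linear transformations by Theorem~\ref{cnotcirc}, which completes this lemma and, with it, the proof of Theorem~\ref{scottthm2}.

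The only real work is in the third paragraph: one must check, exactly as in Lemmas~\ref{specialpt} and~\ref{catalyst}, that the pad-up / apply-witness / pad-down sandwich composes to precisely the symbolic map $(a,b)\mapsto(a,a\oplus b)$ (and not merely to something with the correct effect on symbol counts) and that the auxiliary registers come back to their initial values. This goes through because ``generating a slope via $z\to z'$'' means the gate realizes $z(x,y)\mapsto z'(x,y)$ for \emph{all} bit-values of $x,y$, and because a type pins down its string up to bit-swaps, which are free; the layout bookkeeping is identical to the one already carried out for $\operatorname{CPD}$.
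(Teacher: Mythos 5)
Your proof is correct in outline, but it takes a genuinely different route from the paper's. The paper never touches the slope lattice here: it observes that a gate generating $\operatorname{COPY}$ is non-degenerate, hence (being linear) can compute any linear $F(x)$ with garbage, and then runs the standard compute--copy--uncompute--reverse sandwich (the same template as Lemmas \ref{cpdlem} and \ref{catalyst}) to produce $F(x)$ cleanly; this yields \emph{all} linear transformations directly, with $\operatorname{CNOT}$ as a special case. You instead (i) use linearity to show the $\operatorname{COPY}$ witness already works with all-zero ancillas---a nice observation the paper does not need but which is exactly what makes the Section \ref{TOCNOTNOT} framework applicable; (ii) deduce $(1,0,0)\in\mathcal{L}(G)$ and hence $\mathcal{L}(G)=\mathbb{Z}^{3}$ via Proposition \ref{sym3}; and (iii) realize $(0,-1,1)$ as a total $\operatorname{CNOT}$ by the pad-up/apply-witness/pad-down sandwich. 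Step (iii) is where all the content lives, and you correctly flag it: note that it carries exactly the same bookkeeping burden as the paper's own Lemma \ref{specialpt} (the ``pad-down'' map must be a gate in $\left\langle G\right\rangle$ taking type $(1,0,1)$ to the \emph{specific} intermediate type produced by the pad-up and the witness; one cannot literally apply $Q^{-1}$ to a string that is no longer in the image of $Q$, and matching types only identifies strings up to swaps once such a gate is in hand). So your argument is no less rigorous than the sketch it imitates, but also no more detailed. The trade-off: the paper's route is shorter and bypasses the lattice entirely, while yours additionally establishes $\mathcal{L}(G)=\mathbb{Z}^{3}$ (so, e.g., $(1,-1,1)$ and hence $\operatorname{CNOTNOT}$ come for free) and correctly isolates why the parity-violating point $(0,-1,1)$, rather than $(1,-1,1)$, is the one that buys the full $\operatorname{CNOT}$.
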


\begin{proof}
We will actually prove that $G$ generates \textit{any} linear transformation
$F$. \ Observe that, if $G$ generates $\operatorname{COPY}$, then it must be
non-degenerate. \ Therefore, by copying bits whenever needed, and using $G$ to
do computation on them, clearly we can map the input $x$ to a string of the
form%
\[
x,\operatorname{gar}\left(  x\right)  ,F\left(  x\right)  ,
\]
for some garbage string $\operatorname{gar}\left(  x\right)  $. \ Since $G$
generates $\operatorname{COPY}$, we can then make one copy of $F\left(
x\right)  $, mapping the above to%
\[
x,\operatorname{gar}\left(  x\right)  ,F\left(  x\right)  ,F\left(  x\right)
.
\]
Next we can uncompute the computation of $F$ to get%
\[
x,F\left(  x\right)  .
\]
By reversibility, we can then map the above to%
\[
x,F\left(  x\right)  ,\operatorname{gar}\left(  F\left(  x\right)  \right)
,x.
\]
By inverting $\operatorname{COPY}$, we can then implement $xx\rightarrow x$,
to map the above to%
\[
F\left(  x\right)  ,\operatorname{gar}\left(  F\left(  x\right)  \right)  ,x.
\]
Finally, we can uncompute the computation of $x$ to get $F\left(  x\right)  $\ alone.
\end{proof}

Combining Lemmas \ref{lg1}, \ref{lgz}, and \ref{finishit}\ now completes the
proof of Theorem \ref{scottthm2}. \ Then combining Theorems \ref{cnotcirc},
\ref{cnotnotcirc}, \ref{t4circ}, \ref{t6circ}, \ref{gettingt6},
\ref{gettingt4},\ \ref{scottthm}, and \ref{scottthm2}, we can summarize our
progress on the linear case as follows.

\begin{corollary}
\label{scottcor}Every set of linear gates generates either $\left\langle
\varnothing\right\rangle $, $\left\langle \operatorname{T}_{6}\right\rangle $,
$\left\langle \operatorname{T}_{4}\right\rangle $, $\left\langle
\operatorname{CNOTNOT}\right\rangle $, or $\left\langle \operatorname{CNOT}%
\right\rangle $.
\end{corollary}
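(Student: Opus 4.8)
The plan is to treat this as a bookkeeping consequence of the structural results already proved in Sections \ref{SWAMPLAND}--\ref{TOCNOT}, together with the circuit constructions of Section \ref{CONSTRUC}, organized as a four-way case analysis. Let $S$ be an arbitrary set of linear gates (possibly infinite, possibly on varying numbers of bits). The single recurring principle I will invoke is that $\langle S\rangle$ satisfies every invariant satisfied by all of $S$'s gates, so that whenever I name a candidate class I automatically get the containment $\langle S\rangle\subseteq(\text{that class})$; the reverse containment will always come from exhibiting a \emph{single} nontrivial gate $G\in S$ that generates the class's canonical generator, after which the relevant construction theorem of Section \ref{CONSTRUC} upgrades that generator to the whole invariant class.

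First I would dispose of the degenerate case: if every gate in $S$ merely permutes its input bits, then $\langle S\rangle=\langle\varnothing\rangle$, and we are done. So assume henceforth that $S$ contains a nontrivial linear gate. Next, branch on parity. If $S$ contains a non-parity-preserving gate $G$---which, being linear, cannot be parity-flipping (as any linear gate fixes $0^n$), hence is non-parity-respecting---then Theorem \ref{scottthm2} gives that $G$ generates $\operatorname{CNOT}$, while Theorem \ref{cnotcirc} says $\operatorname{CNOT}$ generates all linear transformations; since every member of $\langle S\rangle$ is linear, $\langle S\rangle=\langle\operatorname{CNOT}\rangle$.

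Otherwise every gate of $S$ is parity-preserving and linear, so $\langle S\rangle$ lies inside the parity-preserving linear transformations. Within this case, branch on orthogonality: if some nontrivial $G\in S$ is non-orthogonal, Theorem \ref{scottthm} gives that $G$ generates $\operatorname{CNOTNOT}$, which by Theorem \ref{cnotnotcirc} generates all parity-preserving linear transformations, so $\langle S\rangle=\langle\operatorname{CNOTNOT}\rangle$. Otherwise every gate of $S$ is orthogonal, hence (by Lemma \ref{orthoglin}) linear and in fact an isometry in its linear part, and $\langle S\rangle$ lies inside the orthogonal transformations. Branch a final time on Hamming weight mod $4$---the only surviving distinction, by Theorem \ref{k2or4} and Corollary \ref{mod4orthog}: if some nontrivial $G\in S$ fails to preserve Hamming weight mod $4$, then Theorem \ref{gettingt4} shows $G$ generates $\operatorname{T}_{4}$ and Theorem \ref{t4circ} shows $\operatorname{T}_{4}$ generates all orthogonal transformations, so $\langle S\rangle=\langle\operatorname{T}_{4}\rangle$; and if every gate of $S$ is mod-$4$-preserving, then the nontrivial gate generates $\operatorname{T}_{6}$ by Theorem \ref{gettingt6}, while Theorem \ref{t6circ} shows $\operatorname{T}_{6}$ generates every mod-$4$-preserving linear transformation, so $\langle S\rangle=\langle\operatorname{T}_{6}\rangle$.

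I do not expect a genuine obstacle: the substantive work lives in the cited theorems, and this statement merely assembles them. The one point requiring attention---and it is purely organizational---is to verify in each branch that the invariant shared by \emph{all} of $S$ genuinely pins down the stated upper bound, so that a class generated by a single nontrivial gate cannot accidentally exceed the class named; this is exactly the "$\langle S\rangle$ inherits the invariants of $S$" principle combined with the equivalence of the gate and invariant definitions established in Section \ref{CONSTRUC}. Distinctness of the five classes, hence non-redundancy of the list, is already recorded in Theorem \ref{distinct}.
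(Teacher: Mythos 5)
Your proposal is correct and follows essentially the same route as the paper, which establishes this corollary precisely by combining Theorems \ref{cnotcirc}, \ref{cnotnotcirc}, \ref{t4circ}, \ref{t6circ}, \ref{gettingt6}, \ref{gettingt4}, \ref{scottthm}, and \ref{scottthm2}; your only addition is to make the four-way case split explicit. (One tiny slip: members of $\left\langle S\right\rangle$ need only be affine, not linear---ancillas set to $1$ can introduce an affine shift---but each containment you need, e.g.\ $\left\langle S\right\rangle \subseteq\left\langle \operatorname{CNOT}\right\rangle$, requires only the affine/parity/orthogonality invariant actually shared by $S$, so nothing breaks.)
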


\subsection{Adding Back the NOTs\label{NOT}}

Now that we have completed the classification of the \textit{linear} gate
classes, the final step that remains is to take care of the affine parts. \ We
first give some useful lemmas for manipulating affine gates.

\begin{lemma}
\label{notlem}$\operatorname{NOT}^{\otimes k}$ generates
$\operatorname{NOTNOT}$ for all $k\geq1$, as well as $\operatorname{NOT}$\ if
$k$ is odd.
\end{lemma}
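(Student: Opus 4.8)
The plan is to prove the two assertions in turn, with the first doing most of the work and the second bootstrapping off it.

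\textbf{Generating $\operatorname{NOTNOT}$.} The key observation is a ``symmetric-difference'' identity: if we apply $\operatorname{NOT}^{\otimes k}$ first to a set $S$ of $k$ bits and then to a set $S'$ of $k$ bits (using the swapping and extension rules to choose which physical bits play which role), then each bit is XORed with $[\,\cdot \in S\,] + [\,\cdot \in S'\,] \pmod 2$, so the net effect is to flip exactly the bits in $S \triangle S'$ and leave all others untouched. So, to implement $\operatorname{NOTNOT}$ on target bits $y$ and $z$, take $k-1$ ancilla bits $a_1,\ldots,a_{k-1}$ (their initial values are irrelevant), put $S = \{y, a_1, \ldots, a_{k-1}\}$ and $S' = \{z, a_1,\ldots,a_{k-1}\}$, and apply $\operatorname{NOT}^{\otimes k}$ to $S$ and then to $S'$. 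Since $S \triangle S' = \{y,z\}$, the bits $y$ and $z$ are flipped while every $a_i$ is flipped twice and hence returned to its initial state, as the ancilla rule demands. This handles every $k \geq 2$; the case $k = 1$ is immediate, since $\operatorname{NOT}$ applied to each of two bits is $\operatorname{NOTNOT}$.

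\textbf{Generating $\operatorname{NOT}$ for odd $k$.} Now suppose $k$ is odd. To flip a single target bit $y$, apply $\operatorname{NOT}^{\otimes k}$ to $y$ together with $k-1$ ancilla bits initialized to $0$. After this, $y$ is flipped and the $k-1$ ancillas have all become $1$; since $k$ is odd, $k-1$ is even, so we may pair the ancillas off and apply the $\operatorname{NOTNOT}$ gate -- which we have just shown is available -- to each of the $(k-1)/2$ pairs, restoring every ancilla to $0$. The net effect is precisely $\operatorname{NOT}$ on $y$ with all ancillas returned to their original state.

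\textbf{Main obstacle.} There is essentially no obstacle here; the lemma is a short bookkeeping argument. The only point requiring care is verifying the ancilla condition -- that every ancilla bit ends in its starting state -- and this is exactly what the ``flipped twice'' count in the symmetric-difference step and the ``even number, pair them up'' count in the second step are arranged to guarantee. One should also note explicitly that bit-swaps are free, so choosing which bits serve as $y$, $z$, and the $a_i$'s costs nothing.
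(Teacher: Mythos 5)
Your proof is correct. The $\operatorname{NOTNOT}$ construction is exactly the paper's (apply $\operatorname{NOT}^{\otimes k}$ to $x,a_1\ldots a_{k-1}$ and then to $y,a_1\ldots a_{k-1}$). For the $\operatorname{NOT}$ part you differ slightly: the paper uses three overlapping applications of $\operatorname{NOT}^{\otimes k}$ on $x$ together with two halves of the ancillas, so that $x$ is flipped three times and each ancilla twice, whereas you apply $\operatorname{NOT}^{\otimes k}$ once and then clean up the $k-1$ flipped ancillas with $(k-1)/2$ of the $\operatorname{NOTNOT}$ gates you just built; both are valid parity-bookkeeping arguments and the difference is immaterial.
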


\begin{proof}
To implement $\operatorname*{NOTNOT}\left(  x,y\right)  $, apply
$\operatorname{NOT}^{\otimes k}$\ to $x,a_{1}\ldots a_{k-1}$ and then to
$y,a_{1}\ldots a_{k-1}$. \ To implement $\operatorname*{NOT}\left(  x\right)
$,\ let $\ell:=\frac{k-1}{2}$. \ Apply $\operatorname{NOT}^{\otimes k}$\ to
$x,a_{1}\ldots a_{\ell},b_{1}\ldots b_{\ell}$, then $x,a_{1}\ldots a_{\ell
},c_{1}\ldots c_{\ell}$,\ then $x,b_{1}\ldots b_{\ell},c_{1}\ldots c_{\ell}$.
\end{proof}

More generally:

\begin{lemma}
\label{makenotnot}Let $G$ be any gate of the form $\operatorname{NOT}\otimes
H$. \ Then $G$ generates $\operatorname{NOTNOT}$.
\end{lemma}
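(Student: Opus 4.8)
The plan is to build $\operatorname{NOTNOT}$ as a circuit made of copies of $G$, using the $H$-part of $G$ as a reusable ancilla register. Write $G=\operatorname{NOT}\otimes H$, so $G$ acts on $1+m$ bits: the first bit receives a $\operatorname{NOT}$, and $H$ acts on the remaining $m$ bits. The key observation is that if I apply $G$ repeatedly, always feeding the same fixed $m$-bit register into the $H$-slot but varying which external ``target'' bit I put into the $\operatorname{NOT}$-slot, then every application advances that register by one step of the permutation $H$, regardless of which target bit was used, while the chosen target bit gets flipped.

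Concretely, I would set up two external bits $x,y$ (the intended inputs of the $\operatorname{NOTNOT}$ gate) and $m$ ancilla bits $\vec a$ initialized, say, to $0^m$. Let $N$ be the order of $H$ as a permutation of $\{0,1\}^m$, so $H^N=\mathrm{id}$. I want to apply $G$ a total of $p+q$ times: $p$ of the applications on the pair $(x,\vec a)$ and $q$ of them on $(y,\vec a)$, interleaved in any order. After all of these, $\vec a$ has been mapped to $H^{p+q}(\vec a)$, the bit $x$ to $\operatorname{NOT}^{p}(x)$, and $y$ to $\operatorname{NOT}^{q}(y)$. So I need $p$ and $q$ both odd (so each target bit genuinely flips) and $N\mid p+q$ (so the ancilla register returns to its initial state). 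Since a sum of two odd numbers is automatically even, this is easy to arrange: take $p=q=N$ when $N$ is odd, and $p=1,\ q=N-1$ when $N$ is even; in both cases $p,q$ are positive odd integers with $N\mid p+q$.

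Having fixed these parameters, I would assemble the circuit — $p$ copies of $G$ acting on $(x,\vec a)$ and $q$ copies acting on $(y,\vec a)$, in any interleaving — and check that the overall effect is $(x,y,\vec a)\mapsto(\overline x,\overline y,\vec a)$, with the ancilla bits restored to their initial value and never having depended on $x$ or $y$. By the closure rules for generation this exhibits $\operatorname{NOTNOT}\in\langle G\rangle$. (As a sanity check, the degenerate case $m=0$, i.e.\ $G=\operatorname{NOT}$, is covered with $N=1$ and no ancilla needed, consistent with Lemma~\ref{notlem}; and when $N$ is odd one can even shortcut: $G^{N}=\operatorname{NOT}\otimes\mathrm{id}$ already generates $\operatorname{NOT}$, and then Lemma~\ref{notlem} finishes.)

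This argument is essentially routine, so there is no deep obstacle; the points that need a little care are (i) the parity bookkeeping — simultaneously making each target bit flip an odd number of times and forcing $H^{p+q}=\mathrm{id}$, the latter being exactly the ground-rule requirement that ancillas be returned to their initial state — and (ii) spelling out why the interleaving order of the $p+q$ applications is irrelevant, namely that every application of $G$, whichever target wire it touches, advances the shared register by one application of $H$, so only the total count $p+q$ matters for the register and the counts $p,q$ matter separately for the two target bits.
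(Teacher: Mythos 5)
Your proposal is correct and is essentially the paper's argument in unpacked form: both use the $H$-register as a shared ancilla, flip each of $x$ and $y$ an odd number of times, and arrange for the net action on the register to be the identity. The paper achieves this more succinctly by applying $G$ to $(x,a)$ and then $G^{-1}$ to $(y,a)$ — invoking the fact that $\left\langle G\right\rangle$ is closed under inverses — which is exactly your scheme with $p=1$ and $q$ equal to one less than the order of $G$, so your explicit parity bookkeeping with $N=\operatorname{ord}(H)$ is sound but not needed.
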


\begin{proof}
To implement $\operatorname*{NOTNOT}\left(  x,y\right)  $, first apply $G$\ to
$x,a$\ where $a$ is some ancilla string; then apply $G^{-1}$\ to $y,a$.
\end{proof}

Also:

\begin{lemma}
\label{notslin}Let $G\left(  x\right)  =Ax\oplus b$\ be an affine gate. \ Then
$G+\operatorname*{NOTNOT}$\ generates $A$ itself.
\end{lemma}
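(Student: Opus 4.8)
The plan is to eliminate the additive constant $b$ from $G(x)=Ax\oplus b$ by post-composing with a product of $\operatorname{NOTNOT}$ gates, and to sidestep the parity obstruction (when $\lvert b\rvert$ is odd) by running two copies of $G$ in parallel. Concretely, first note that $G^{\otimes 2}\in\langle G\rangle$ (apply $G$ to two disjoint $n$-bit registers, using the extension, swapping, and composition rules), and that it acts by $(x,y)\mapsto(Ax\oplus b,\,Ay\oplus b)$. Its affine part is the fixed vector $(b,b)\in\mathbb{F}_2^{2n}$, whose Hamming weight $2\lvert b\rvert$ is even regardless of the parity of $\lvert b\rvert$.

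Next I would build, out of $\operatorname{NOTNOT}$ gates alone, the linear transformation $N(z)=z\oplus(b,b)$ on $2n$ bits: for each coordinate $i$ in the support of $b$, apply one $\operatorname{NOTNOT}$ to the pair of coordinates $(i,\,n+i)$. Composing the resulting $\lvert b\rvert$ gates XORs $\sum_{i\in\operatorname{supp}(b)}(e_i\oplus e_{n+i})=(b,b)$ into the string, so $N\in\langle\operatorname{NOTNOT}\rangle$. Then $N\circ G^{\otimes 2}$ is the linear gate $z\mapsto\operatorname{diag}(A,A)\,z$ on $2n$ bits; that is, it sends $(x,y)$ to $(Ax,Ay)$.

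Finally I would invoke the ancilla rule with ancilla string $0^n$ occupying the second register. Since every linear transformation fixes $0^n$, we have $\operatorname{diag}(A,A)(x,0^n)=(Ax,0^n)$ for all $x$, so the ancilla rule hands us the $n$-bit gate $x\mapsto Ax$, which is $A$. Hence $\langle G,\operatorname{NOTNOT}\rangle$ contains $A$, as claimed.

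The one point requiring care is the case $\lvert b\rvert$ odd: there, no product of $\operatorname{NOTNOT}$ gates can XOR $b$ into the output of a single copy of $G$, since such products realise only even-weight flips, so one genuinely needs the doubling trick above. (When $\lvert b\rvert$ is even one could instead skip it and simply note that $N_b\circ G$, with $N_b$ the product of $\lvert b\rvert/2$ copies of $\operatorname{NOTNOT}$ placed on the support of $b$, already equals $A$ with no ancillas.)
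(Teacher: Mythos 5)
Your proposal is correct and is essentially the paper's own argument: the paper likewise applies $G^{\otimes 2}$ to $x,0^{n}$ to get $Ax\oplus b,\,b$ and then cancels both copies of $b$ with $\operatorname{NOTNOT}$ gates pairing corresponding coordinates, restoring the ancilla register to $0^{n}$. The only cosmetic difference is that you describe the full $2n$-bit linear map $\operatorname{diag}(A,A)$ before specializing the second register, whereas the paper traces through with the ancilla fixed at $0^{n}$ from the start.
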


\begin{proof}
First we use $G^{\otimes2}$\ to map $x,0^{n}$\ to $Ax\oplus b,b$; then we use
$\operatorname*{NOTNOT}$\ gates to map $Ax\oplus b,b$\ to $Ax,0^{n}$.
\end{proof}

By combining Lemmas \ref{makenotnot}\ and \ref{notslin}, we obtain the following.

\begin{corollary}
[Cruft Removal]\label{cruftcor}Let $G\left(  x\right)  =Ax\oplus b$\ be an
$n$-bit affine gate. \ Suppose $A$\ applies a linear transformation
$A^{\prime}$\ to the first $m$ bits of $x$, and acts as the identity on the
remaining $n-m$\ bits. \ Then $G$ generates an $m$-bit gate of the form
$H\left(  x\right)  =A^{\prime}x\oplus c$.
\end{corollary}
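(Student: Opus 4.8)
The plan is to split on whether the additive constant $b$ has a nonzero coordinate among the last $n-m$ bits (the coordinates on which $A$ acts as the identity). Write each input as $x=(u,v)$ with $u\in\{0,1\}^m$ and $v\in\{0,1\}^{n-m}$, and split $b=(b_{1},b_{2})$ accordingly, so that the block-diagonal hypothesis on $A$ gives $G(u,v)=(A'u\oplus b_{1},\,v\oplus b_{2})$ (here $A'$ is invertible since $A$ is, so $u\mapsto A'u\oplus c$ is indeed a reversible $m$-bit gate for any $c$).

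\emph{Easy case: $b_{2}=0^{n-m}$.} Then $G$ maps $(u,0^{n-m})$ to $(A'u\oplus b_{1},0^{n-m})$ for every $u$, and the last $n-m$ bits are returned to their initial value $0$. So the ancilla rule from Section~\ref{CLASSES}, applied to $G$ with ancilla string $0^{n-m}$, immediately exhibits the $m$-bit gate $H(u)=A'u\oplus b_{1}$ in $\langle G\rangle$, which has the claimed form with $c:=b_{1}$.

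\emph{General case: $b_{2}\neq 0^{n-m}$.} Choose a coordinate $i$ among the last $n-m$ bits with $(b_{2})_{i}=1$. Since $A$ is block-diagonal, the $i$th output of $G$ equals $\operatorname{NOT}(x_{i})$ and no other output depends on $x_{i}$; that is, $G=\operatorname{NOT}\otimes H'$ for some $(n-1)$-bit gate $H'$. Hence Lemma~\ref{makenotnot} gives $\operatorname{NOTNOT}\in\langle G\rangle$, and then Lemma~\ref{notslin} gives that the purely linear gate $x\mapsto Ax$ lies in $\langle G\rangle$. That gate has exactly the same block-diagonal structure but zero additive constant, so applying the easy case to it (with ancilla string $0^{n-m}$) produces $H(u)=A'u$, of the claimed form with $c:=0^{m}$.

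I do not expect a real obstacle here. The two points that need a moment's care are: (i) checking that, when $b_{2}\neq 0$, the gate genuinely factors as $\operatorname{NOT}\otimes H'$ --- this is where the block-diagonal hypothesis is essential, since otherwise bit $i$ could feed into other outputs; and (ii) confirming that the compound constructions inside Lemmas~\ref{makenotnot} and~\ref{notslin} return all their auxiliary bits to their starting values, so that invoking the ancilla rule afterward is legitimate. Both are routine.
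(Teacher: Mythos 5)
Your proof is correct and follows essentially the same route as the paper's: case on whether $b$ vanishes on the last $n-m$ coordinates, use the ancilla rule directly if it does, and otherwise observe that $G$ factors as $\operatorname{NOT}\otimes H'$ so that Lemmas~\ref{makenotnot} and~\ref{notslin} strip off the affine part before restricting to the first $m$ bits. You spell out the details (the block-diagonal factorization and the restriction via the ancilla rule) that the paper leaves implicit, but the argument is the same.
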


\begin{proof}
If $b_{i}=0$\ for all $i>m$, then we are done. \ Otherwise, we can use Lemma
\ref{makenotnot} to generate $\operatorname*{NOTNOT}$, and then Lemma
\ref{notslin} to generate $H\left(  x\right)  =A^{\prime}x$.
\end{proof}

\begin{lemma}
\label{addnots}Let $S$ be any class of parity-preserving linear or affine
gates. \ Then there are no classes between $\left\langle S\right\rangle $ and
$\left\langle S+\operatorname{NOT}\right\rangle $ other than $\left\langle
S+\operatorname{NOTNOT}\right\rangle $.
\end{lemma}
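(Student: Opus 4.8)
The plan is as follows. Fix a class $\mathcal{C}$ with $\langle S\rangle\subseteq\mathcal{C}\subseteq\langle S+\operatorname{NOT}\rangle$, and show that $\mathcal{C}$ must be one of $\langle S\rangle$, $\langle S+\operatorname{NOTNOT}\rangle$, $\langle S+\operatorname{NOT}\rangle$. Since every gate of $S$ is parity-preserving and $\operatorname{NOT}$ is parity-flipping, and since the parity shift $|G(x)|-|x|\bmod 2$ adds under composition and is unaffected by swaps, extensions and ancillas, every gate in $\langle S+\operatorname{NOT}\rangle$ is parity-respecting, hence either parity-preserving or parity-flipping. The argument then splits according to whether $\mathcal{C}$ contains a parity-flipping gate, and rests on two facts, both proved by moving $\operatorname{NOT}$ gates around inside a circuit.

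First, \emph{the parity-preserving members of $\langle S+\operatorname{NOT}\rangle$ are exactly the gates in $\langle S+\operatorname{NOTNOT}\rangle$}. The inclusion $\langle S+\operatorname{NOTNOT}\rangle\subseteq\langle S+\operatorname{NOT}\rangle$ is clear and its source is parity-preserving. For the converse, take a circuit over $S\cup\{\operatorname{NOT}\}$ (with ancillas) computing a parity-preserving $F$. A parity count shows the number of $\operatorname{NOT}$ gates in it is even: each $\operatorname{NOT}$ shifts the total Hamming weight by $1$, each $S$-gate by $0$, and the net effect of $F$ together with its returned ancillas is $0$. Now add a single fresh ancilla wire $w$ initialised to $0$ and replace each $\operatorname{NOT}$ (on a wire $i$) by a $\operatorname{NOTNOT}$ on $\{i,w\}$: the original wires follow exactly the same trajectory, while $w$ is flipped an even number of times and so returns to $0$. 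By the ancilla rule, $F\in\langle S+\operatorname{NOTNOT}\rangle$.

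Second, the key \emph{extraction} step: \emph{if $G\in\langle S+\operatorname{NOT}\rangle$ but $G\notin\langle S\rangle$, then $\operatorname{NOTNOT}\in\langle S,G\rangle$, and in fact $\operatorname{NOT}\in\langle S,G\rangle$ when $G$ is parity-flipping}. To see this, start from a circuit over $S\cup\{\operatorname{NOT}\}$ for $G$ and push all of its $\operatorname{NOT}$ gates to the end. A $\operatorname{NOT}$ commutes through a swap by relabelling, and a $\operatorname{NOT}$ on wire $i$ followed by an affine $S$-gate $P$ equals $P$ followed by the block of $\operatorname{NOT}$'s supported on the $i$-th column $v_i$ of $P$'s matrix, where $|v_i|$ is odd because $P$ is parity-preserving (cf.\ the column condition in the proof of Theorem~\ref{cnotnotcirc}). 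Hence the parity of the number of $\operatorname{NOT}$ gates is unchanged, and it equals the parity shift of $G$. The rearranged circuit consists of an $S$-subcircuit followed by a layer $N$ of $\operatorname{NOT}$'s; choosing the ancilla values appropriately one arranges that $N$ touches only the $n$ non-ancilla wires, so that the $S$-subcircuit returns every ancilla to its initial value and, by the ancilla rule, computes a genuine $n$-bit gate $C\in\langle S\rangle$. Then $N=G\circ C^{-1}\in\langle S,G\rangle$ is a product of $\operatorname{NOT}$'s; it is not the identity (otherwise $G=C\in\langle S\rangle$), and the number of $\operatorname{NOT}$'s in it is congruent mod $2$ to the parity shift of $G$. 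Lemma~\ref{notlem} then supplies $\operatorname{NOTNOT}$ from $N$ in all cases, and $\operatorname{NOT}$ from $N$ precisely when $G$ is parity-flipping.

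Given these two facts the lemma follows quickly. If every gate of $\mathcal{C}$ is parity-preserving, the first fact gives $\langle S\rangle\subseteq\mathcal{C}\subseteq\langle S+\operatorname{NOTNOT}\rangle$; either $\mathcal{C}=\langle S\rangle$, or some $G\in\mathcal{C}\setminus\langle S\rangle$ exists and the extraction step puts $\operatorname{NOTNOT}$ into $\mathcal{C}$, forcing $\mathcal{C}=\langle S+\operatorname{NOTNOT}\rangle$. If instead $\mathcal{C}$ contains a parity-flipping gate $G'$, then $G'\notin\langle S\rangle$ (that class being parity-preserving), so the extraction step puts $\operatorname{NOT}$ into $\mathcal{C}$ and $\mathcal{C}=\langle S+\operatorname{NOT}\rangle$. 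I expect the ancilla bookkeeping in the extraction step to be the main obstacle: one must verify that the leftover $\operatorname{NOT}$-layer $N$ can be pushed off the ancilla wires---equivalently, that after rearrangement the $S$-part of the circuit returns each ancilla to its starting value, so that $C$ is an honest element of $\langle S\rangle$ rather than a transformation that merely fixes the ancillas loosely.
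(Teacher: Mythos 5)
Your proposal is essentially the paper's own proof: the paper likewise pushes the $\operatorname{NOT}$ gates to the end of a circuit for $G$ (using that $S$ is affine and parity-preserving) to write $G=V\oplus b$ with $V\in\left\langle S\right\rangle$ and $b\neq 0$, composes with $V^{-1}$ to isolate the layer $\operatorname{NOT}^{\otimes\left\vert b\right\vert}$, and invokes Lemma \ref{notlem} to recover $\operatorname{NOTNOT}$ (or $\operatorname{NOT}$ when $\left\vert b\right\vert$ is odd), which matches your extraction step and your mod-$2$ count of the $\operatorname{NOT}$ gates. The ancilla bookkeeping you flag as the main obstacle is treated just as informally in the paper—it simply asserts that the rearranged circuit ``leaves us with a circuit for the $S$ part of $G$''—so your plan is not missing anything the paper itself supplies, and your dummy-wire trick for showing that parity-preserving members of $\left\langle S+\operatorname{NOT}\right\rangle$ lie in $\left\langle S+\operatorname{NOTNOT}\right\rangle$ is a clean way to handle that inclusion.
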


\begin{proof}
Let $G$ be a transformation that is generated by $S+\operatorname{NOT}$\ but
not by $S$. \ Then we need to show how to generate $\operatorname{NOT}$ or
$\operatorname{NOTNOT}$ themselves using $S+G$.

We claim that$\ G$ acts as%
\[
G\left(  x\right)  =V\left(  x\right)  \oplus b,
\]
where $V\left(  x\right)  $ is some parity-preserving affine transformation
generated by $S$, and $b$ is some nonzero string. \ First, $V$ must be
generated by $S$ because, given any circuit for $G$ over the set
$S+\operatorname{NOT}$, we can always push the $\operatorname{NOT}$ gates to
the end; this leaves us with a circuit for the \textquotedblleft$S$
part\textquotedblright\ of $G$. \ (This is the one place where we use that
$S$\ is affine.) \ Also, $b$ must be nonzero because otherwise, $G$ would
already be generated by $S$.

Given $x$, suppose we first apply $V^{-1}$ (which must be generated by $S$),
then apply $G$. \ This yields%
\[
G\left(  V^{-1}\left(  x\right)  \right)  =V\left(  V^{-1}\left(  x\right)
\right)  \oplus b=x\oplus b,
\]
which is equivalent to $\operatorname{NOT}^{\otimes k}$\ for some nonzero $k$.
\ By Lemma \ref{notlem}, this generates $\operatorname{NOTNOT}$. \ If
$\left\vert b\right\vert $\ is always even, then since $V$\ is
parity-preserving, clearly we remain within $\left\langle
S+\operatorname{NOTNOT}\right\rangle $. \ If, on the other hand, $\left\vert
b\right\vert $\ is ever odd, then again by Lemma \ref{notlem}, we can generate
$\operatorname{NOT}$.
\end{proof}

We can finally complete the proof of Theorem \ref{affinedone}, characterizing
the possible affine classes.

\begin{proof}
[Proof of Theorem \ref{affinedone}]If we restrict ourselves to the linear part
of the class, then we know from Corollary \ref{scottcor} that the only
possibilities are $\left\langle \operatorname{CNOT}\right\rangle $,
$\left\langle \operatorname{CNOTNOT}\right\rangle $, $\left\langle
\operatorname{T}_{4}\right\rangle $, $\left\langle \operatorname{T}%
_{6}\right\rangle $, and $\left\langle \varnothing\right\rangle $ (i.e., the
trivial class). \ We will handle these possibilities one by one.

\textbf{Linear part is }$\left\langle \operatorname{CNOT}\right\rangle $.
\ Since $\operatorname{CNOT}$ can already generate all affine transformations
(by Theorem \ref{cnotcirc}), using an ancilla bit initialized to $1$, we have
$\left\langle S\right\rangle \subseteq\left\langle \operatorname{CNOT}%
\right\rangle $. \ For the other direction, Corollary \ref{cruftcor} implies
that $S$\ must generate a gate of the form $\operatorname{CNOT}\left(
x\right)  \oplus b$, for some $b\in\left\{  0,1\right\}  ^{2}$. \ However, it
is not hard to see that all such gates can generate $\operatorname{CNOT}$ itself.

\textbf{Linear part is }$\left\langle \operatorname{CNOTNOT}\right\rangle $.
\ Here we clearly have $\left\langle S\right\rangle \subseteq\left\langle
\operatorname{CNOTNOT},\operatorname{NOT}\right\rangle $. \ Meanwhile,
Corollary \ref{cruftcor} again implies that $S$\ generates a gate of the form
$G\left(  x\right)  =\operatorname{CNOTNOT}\left(  x\right)  \oplus b$, for
some $b\in\left\{  0,1\right\}  ^{3}$. \ Suppose the first bit of $b$ is $1$;
this is the bit that corresponds to the control of the $\operatorname{CNOTNOT}%
$. \ Then $G\left(  G\left(  x\right)  \right)  $ generates
$\operatorname{NOTNOT}$, so by Lemma~\ref{notslin}, we can generate
$\operatorname{CNOTNOT}$. \ If, on the other hand, the first bit of $b$ is
$0$, then $G$ generates $\operatorname{NOT}$ or $\operatorname{NOTNOT}$
directly, so we can again use Lemma~\ref{notslin} to generate
$\operatorname{CNOTNOT}$. \ Therefore $\left\langle S\right\rangle $ lies
somewhere between $\left\langle \operatorname{CNOTNOT}\right\rangle $ and
$\left\langle \operatorname{CNOTNOT},\operatorname{NOT}\right\rangle $.\ \ But
since $\operatorname{CNOTNOT}$\ already generates\ $\operatorname{NOTNOT}$,
Lemma~\ref{addnots} says that the only possibilities are $\left\langle
\operatorname{CNOTNOT}\right\rangle $ and $\left\langle \operatorname{CNOTNOT}%
,\operatorname{NOT}\right\rangle $.

\textbf{Linear part is }$\left\langle \operatorname{T}_{4}\right\rangle $.
\ In this case $\left\langle S\right\rangle \subseteq\left\langle
\operatorname{T}_{4},\operatorname{NOT}\right\rangle $. \ Again Corollary
\ref{cruftcor} implies that $S$ generates a gate of the form $G\left(
x\right)  =\operatorname{T}_{4}\left(  x\right)  \oplus b$, for some
$b\in\left\{  0,1\right\}  ^{4}$. \ If $b=1111$, then $S$\ generates
$\operatorname*{F}_{4}$.\ \ So $\left\langle S\right\rangle $ lies somewhere
between $\left\langle \operatorname*{F}_{4}\right\rangle $\ and $\left\langle
\operatorname*{F}_{4},\operatorname{NOT}\right\rangle =\left\langle
\operatorname{T}_{4},\operatorname{NOT}\right\rangle $, but then
Lemma~\ref{addnots} ensures that $\left\langle \operatorname*{F}%
_{4}\right\rangle $, $\left\langle \operatorname*{F}_{4},\operatorname{NOTNOT}%
\right\rangle =\left\langle \operatorname{T}_{4},\operatorname{NOTNOT}%
\right\rangle $, and $\left\langle \operatorname{T}_{4},\operatorname{NOT}%
\right\rangle $\ are the only possibilities. \ Likewise, if $b=0000$, then
$S$\ generates $\operatorname{T}_{4}$, so $\left\langle \operatorname{T}%
_{4}\right\rangle $, $\left\langle \operatorname{T}_{4},\operatorname{NOTNOT}%
\right\rangle $, and $\left\langle \operatorname{T}_{4},\operatorname{NOT}%
\right\rangle $ are the only possibilities.

Next suppose $\left\vert b\right\vert $\ is odd. \ Then $G\left(  G\left(
x\right)  \right)  =\operatorname{NOT}^{\otimes4}\left(  x\right)  $, which
generates $\operatorname{NOTNOT}$\ by Lemma \ref{notlem}. \ So by Lemma
\ref{notslin}, we generate $\operatorname{T}_{4}$\ as well. \ Thus we have at
least $\left\langle \operatorname{T}_{4},\operatorname{NOTNOT}\right\rangle $.
\ But since $G$\ itself is parity-flipping, $\left\langle S\right\rangle $\ is
not parity-preserving, leaving $\left\langle \operatorname{T}_{4}%
,\operatorname{NOT}\right\rangle $\ as the only possibility by Lemma
\ref{addnots}. \ Finally suppose $\left\vert b\right\vert =2$: without loss of
generality, $b=1100$. \ Let $Q$\ be an operation that swaps the first two bits
of $x$ with the last two bits. \ Then $G\left(  Q\left(  G\left(  x\right)
\right)  \right)  $ is equivalent to $\operatorname{NOT}^{\otimes4}\left(
x\right)  $ up to swaps, so again we have at least $\left\langle
\operatorname{T}_{4},\operatorname{NOTNOT}\right\rangle $, leaving
$\left\langle \operatorname{T}_{4},\operatorname{NOTNOT}\right\rangle $ and
$\left\langle \operatorname{T}_{4},\operatorname{NOT}\right\rangle $\ as the
only possibilities.

\textbf{Linear part is }$\left\langle \operatorname{T}_{6}\right\rangle $.
\ In this case $\left\langle S\right\rangle \subseteq\left\langle
\operatorname{T}_{6},\operatorname{NOT}\right\rangle $. \ Again, Corollary
\ref{cruftcor} implies that $S$ generates $G(x)=\operatorname{T}_{6}\left(
x\right)  \oplus b$ for some $b\in\left\{  0,1\right\}  ^{6}$. \ If
$b=000000$, then $S$\ generates $\operatorname{T}_{6}$, so $\left\langle
\operatorname{T}_{6}\right\rangle $, $\left\langle \operatorname{T}%
_{6},\operatorname{NOTNOT}\right\rangle $, and $\left\langle \operatorname{T}%
_{6},\operatorname{NOT}\right\rangle $ are the only possibilities by Lemma
\ref{addnots}. \ If $\left\vert b\right\vert $ is odd, then $G\left(  G\left(
x\right)  \right)  =\operatorname{NOT}^{\otimes6}\left(  x\right)  $. \ By
Lemma \ref{notlem}, this means that $S$\ generates $\operatorname{NOTNOT}$, so
by Lemma \ref{notslin}, it generates $\operatorname{T}_{6}$\ as well. \ But
$G$ is parity-flipping, leaving $\left\langle \operatorname{T}_{6}%
,\operatorname{NOT}\right\rangle $\ as the only possibility by Lemma
\ref{addnots}. \ If $\left\vert b\right\vert $ is $2$ or $4$, then by an
appropriate choice of swap operation $Q$, we can cause $G\left(  Q\left(
G\left(  x\right)  \right)  \right)  $\ to generate $\operatorname{NOTNOT}$,
so again $\left\langle \operatorname{T}_{6},\operatorname{NOTNOT}\right\rangle
$\ and $\left\langle \operatorname{T}_{6},\operatorname{NOT}\right\rangle
$\ are the only possibilities.

Finally, if $b=111111$, then\ $G(x)=\operatorname*{F}_{6}\left(  x\right)  $.
\ In this case we start with the operation
\[
\operatorname*{F}\nolimits_{6}\left(  x00000\right)  =1\overline{x}%
\overline{x}\overline{x}\overline{x}\overline{x}%
\]
Using three of the $\overline{x}$ outputs and three fresh $0$\ ancilla bits,
we then perform
\[
\operatorname*{F}\nolimits_{6}\left(  \overline{x}\overline{x}\overline
{x}000\right)  =111xxx
\]
Next, bringing\ the $xxx$ outputs together with the remaining $\overline
{x}\overline{x}$\ outputs and one fresh $0$\ ancilla bit, we apply%
\[
\operatorname*{F}\nolimits_{6}\left(  xxx\overline{x}\overline{x}0\right)
=11100\overline{x}%
\]
In summary, we have performed a $\operatorname{NOT}\left(  x\right)  $
operation with some garbage still around. \ However, if we repeat this entire
procedure $6$ times, then the Hamming weight of the garbage will be a multiple
of $6$. \ We can remove all this of garbage using the $\operatorname*{F}_{6}$
gate. \ Therefore, we have created a $\operatorname{NOT}^{\otimes6}$\ gate,
which generates $\operatorname{NOTNOT}$ by Lemma \ref{notlem}. \ So again we
can generate $\operatorname{T}_{6}$, leaving $\left\langle \operatorname{T}%
_{6},\operatorname{NOTNOT}\right\rangle $\ and $\left\langle \operatorname{T}%
_{6},\operatorname*{NOT}\right\rangle $\ as the only possibilities by
Lemma~\ref{notslin}.

\textbf{Linear part is }$\left\langle \varnothing\right\rangle $. \ In this
case $\left\langle S\right\rangle \subseteq\left\langle \operatorname{NOT}%
\right\rangle $, so Lemma \ref{addnots}\ implies that the only possibilities
are $\left\langle \varnothing\right\rangle $, $\left\langle
\operatorname{NOTNOT}\right\rangle $, and $\left\langle \operatorname{NOT}%
\right\rangle $.
\end{proof}

\section{Open Problems\label{OPEN}}

As discussed in Section \ref{INTRO},\ the central challenge we leave is to
give a complete classification of all \textit{quantum} gate sets acting on
qubits, in terms of which unitary transformations they can generate or
approximate. \ Here, just like in this paper, one should assume that
qubit-swaps are free, and that arbitrary ancillas are allowed as long as they
are returned to their initial states.

A possible first step---which would build directly on our results here---would
be to classify all possible quantum gate sets within the \textit{stabilizer
group},\ which is a quantum generalization of the group of affine classical
reversible\ transformations. \ Since the stabilizer group is discrete, here at
least there is no need for representation theory, Lie algebras, or any notion
of approximation, but the problem still seems complicated. \ A different step
in the direction we want, which \textit{would} involve Lie algebras, would be
to classify all sets of $1$- and $2$-qubit gates. \ A third step would be to
classify qubit Hamiltonians (i.e., the infinitesimal-time versions of unitary
gates), in terms of which $n$-qubit Hamiltonians they can be used to generate.
\ Here the recent work of Cubitt and Montanaro \cite{cubitt:qma}, which
classifies qubit Hamiltonians in terms of the complexity of approximating
ground state energies, might be relevant. \ Yet a fourth possibility would be
to classify quantum gates under the assumption that intermediate measurements
are allowed. \ Of course, these simplifications can also be combined.

On the classical side, we have left completely open the problem of classifying
reversible gate sets over non-binary alphabets. \ In the non-reversible
setting, it was discovered in the 1950s (see \cite{lau}) that Post's lattice
becomes dramatically different and more complicated when we consider gates
over a $3$-element set rather than Boolean gates: for example, there is now an
uncountable infinity of clones, rather than \textquotedblleft
merely\textquotedblright\ a countable infinity. \ Does anything similar happen
in the reversible case? \ Even for reversible gates over (say) $\left\{
0,1,2\right\}  ^{n}$, we cannot currently give an algorithm to decide whether
a given gate $G$ generates another gate $H$\ any better than the
triple-exponential-time algorithm that comes from clone theory, nor can we
give reasonable upper bounds on the number of gates or ancillas needed in the
generating circuit, nor can we answer basic questions like whether every class
is finitely generated.

Finally, can one reduce the number of gates in each of our circuit
constructions to the limits imposed by Shannon-style counting arguments?
\ What are the tradeoffs, if any, between the number of gates and the number
of ancilla bits?

\section{Acknowledgments}

At the very beginning of this project, Emil Je\v{r}\'{a}bek \cite{jerabek}%
\ brought the $\left\langle \operatorname*{C}_{k}\right\rangle $ and
$\left\langle \operatorname*{T}_{6}\right\rangle $\ classes to our attention,
and also proved that every reversible gate class is characterized by
invariants (i.e., that the \textquotedblleft clone-coclone
duality\textquotedblright\ holds for reversible gates). \ Also, Matthew Cook
gave us encouragement, asked pertinent questions, and helped us understand the
$\left\langle \operatorname*{T}_{4}\right\rangle $\ class. \ We are grateful
to both of them. \ We also thank Adam Bouland, Seth Lloyd, Igor Markov, and
particularly Siyao Xu for helpful discussions.

\bibliographystyle{plain}
\bibliography{thesis}

\section{Appendix: Post's Lattice with Free Constants\label{POST}}

For completeness, in this appendix we prove a `quick-and-dirty'\ version of
Post's 1941 classification theorem \cite{post}, for sets of ordinary
(non-reversible) Boolean logic gates.

\begin{figure}[ptb]
\begin{center}
\begin{tikzpicture}[>=latex]
\tikzstyle{class}=[circle, thick, minimum size=1.2cm, text width=1.0cm, align=center, draw, font=\tiny]
\tikzstyle{all}=[class,fill=blue!20]
\tikzstyle{affine}=[class,fill=green!20]
\tikzstyle{monotone}=[class,fill=yellow!20]
\tikzstyle{none}=[class,fill=red!20]
\tikzstyle{invis}=[class,draw=none]
\matrix[row sep=1cm,column sep=1.4cm] {
& \node (ALL) [all] {$\top$}; & \\
\node (XOR) [affine] {$\mathsf{XOR}$}; & \node (MONO1) {}; & \node (MONO2) {}; \\
\node (NOT) [affine] {$\mathsf{NOT}$}; & \node (AND) [monotone] {$\mathsf{AND}$}; & \node (OR) [monotone] {$\mathsf{OR}$}; \\
& \node (NONE1) [invis] {}; & \node (NONE2) [invis] {}; \\
};
\node (MONO) [monotone] at ($(MONO1)!0.5!(MONO2)$) {$\mathsf{MONO}$};
\node (NONE) [none] at ($(NONE1)!0.5!(NONE2)$) {$\bot$};
\path[draw,->] (ALL) edge (XOR)
(ALL) edge (MONO)
(MONO) edge (AND)
(MONO) edge (OR)
(AND) edge (NONE)
(OR) edge (NONE)
(XOR) edge (NOT)
(NOT) edge (NONE);
\end{tikzpicture}
\par
\label{postlitefig}
\end{center}
\caption{\textquotedblleft Post's Lattice Lite\textquotedblright}%
\end{figure}
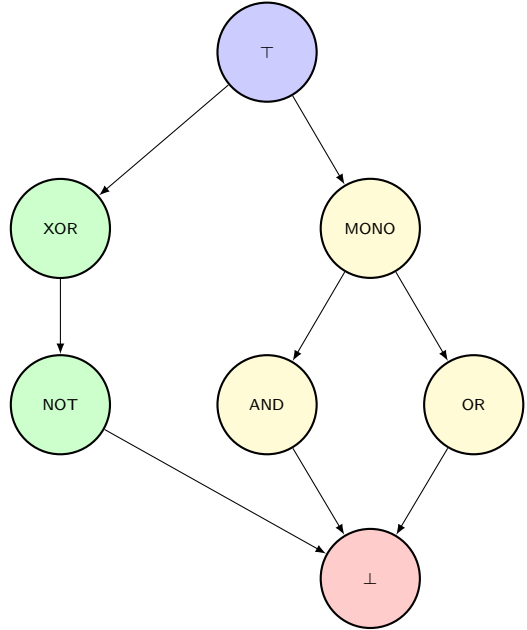

\begin{theorem}
[Post's Lattice Lite]\label{postlite}Assume the constant functions $f=0$ and
$f=1$, as well as the identity function $f\left(  x\right)  =x$, are available
for free. \ Then the only Boolean clones (i.e., classes of Boolean functions
$f:\left\{  0,1\right\}  ^{n}\rightarrow\left\{  0,1\right\}  $\ closed under
composition and addition of dummy variables) are the following:

\begin{enumerate}
\item The trivial class (containing the constant and identity functions).

\item The $\operatorname*{AND}$\ class.

\item The $\operatorname*{OR}$\ class.

\item The class of monotone functions (generated by $\left\{
\operatorname*{AND},\operatorname*{OR}\right\}  $).

\item The $\operatorname*{NOT}$\ class.

\item The class of affine functions (generated by $\left\{
\operatorname*{XOR},\operatorname*{NOT}\right\}  $).

\item The class of all Boolean functions (generated by $\left\{
\operatorname*{AND},\operatorname*{NOT}\right\}  $).
\end{enumerate}
\end{theorem}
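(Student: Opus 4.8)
The plan is to let $\mathcal{C}$ be an arbitrary clone containing the constants $0,1$ (and, by definition, all projections), and to pin it down by a short case analysis. The one fact used over and over is that, since the constants are free, substituting a constant for an input variable of a function in $\mathcal{C}$ — and identifying or permuting variables — produces another function in $\mathcal{C}$; so ``passing to a subfunction'' is a legal clone operation. The first move is a clean dichotomy: \textbf{either every function in $\mathcal{C}$ is monotone, or $\operatorname*{NOT}\in\mathcal{C}$}. Indeed, if $f\in\mathcal{C}$ is non-monotone, pick $u\le v$ with $f(u)=1$, $f(v)=0$ and walk down a maximal chain from $v$ to $u$: at the first step $w+e_i\to w$ where the value changes we get $f(w)=1$, $f(w+e_i)=0$; substituting $w_k$ for $x_k$ at every $k\neq i$ collapses $f$ to the unary map $x\mapsto\overline{x}$, i.e.\ $\operatorname*{NOT}$.

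\textbf{Case 1: $\mathcal{C}$ is monotone.} If every $f\in\mathcal{C}$ is a projection or a constant, $\mathcal{C}$ is the trivial class. Otherwise some $f\in\mathcal{C}$ is neither, which means $f$ either has a unique minimal $1$-input $m$ with $|m|\ge 2$, or has two distinct minimal $1$-inputs. In the first case, choosing $i\ne j$ in $\operatorname{supp}(m)$ and setting $x_k:=m_k$ for all $k\notin\{i,j\}$ turns $f$ into $x_i\wedge x_j$, so $\operatorname*{AND}\in\mathcal{C}$. In the second case, among all pairs $(u,v)$ of $1$-inputs with $f(u\wedge v)=0$ (which exist: two distinct minterms form one) pick a pair minimizing $|u|+|v|$; minimality forces $u$ and $v$ to be minterms, and then setting the coordinates of $\operatorname{supp}(u)\cap\operatorname{supp}(v)$ to $1$, those outside $\operatorname{supp}(u)\cup\operatorname{supp}(v)$ to $0$, and collapsing $\operatorname{supp}(u)\setminus\operatorname{supp}(v)$ and $\operatorname{supp}(v)\setminus\operatorname{supp}(u)$ to single variables $s,t$ yields $s\vee t$, so $\operatorname*{OR}\in\mathcal{C}$. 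Finally, if $\mathcal{C}$ contains only functions of the form $\bigwedge_{i\in S}x_i$ together with constants it is exactly the $\operatorname*{AND}$ class, and symmetrically for the $\operatorname*{OR}$ class; otherwise $\mathcal{C}$ contains a monotone function that is neither, which by the argument just given yields \emph{both} $\operatorname*{AND}$ and $\operatorname*{OR}$, and then the monotone-DNF normal form shows $\mathcal{C}$ is the full monotone class.

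\textbf{Case 2: $\operatorname*{NOT}\in\mathcal{C}$.} If every function in $\mathcal{C}$ is affine, then either all of them depend on at most one variable, so $\mathcal{C}$ is the $\operatorname*{NOT}$ class, or some affine $f$ depends on $\ge 2$ variables, in which case restricting $f$ to two of them and composing with $\operatorname*{NOT}$ produces the binary $\operatorname*{XOR}$, whence $\{\operatorname*{XOR},\operatorname*{NOT}\}$ (with the free constants) generates every affine function. If instead $\mathcal{C}$ contains a non-affine function, then iterating Proposition~\ref{folkloreprop} — each step being a constant substitution — yields a non-affine $2$-bit function in $\mathcal{C}$; negating inputs and/or the output using $\operatorname*{NOT}$ turns it into $\operatorname*{AND}$, and $\{\operatorname*{AND},\operatorname*{NOT}\}$ generates all Boolean functions by DNF. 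This exhausts the possibilities. The Hasse diagram of Figure~\ref{postlitefig} then follows at once: each covering inclusion holds because the lower generator is obtained from the upper one by fixing a constant, and the seven classes are pairwise distinct because they are separated by the invariants ``monotone,'' ``affine,'' and ``depends on at most one variable.''

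The main obstacle is the minterm bookkeeping inside Case 1: one must verify that a \emph{single} simultaneous restriction (not two separate ones) already exhibits $\operatorname*{AND}$ or $\operatorname*{OR}$, and the minimal-counterexample choice of the pair $(u,v)$ is exactly what makes this go through. Everything else — the initial dichotomy, the affine subcase, and the reduction to $\operatorname*{AND}$ in the non-affine subcase — is a routine restrict-and-compose argument.
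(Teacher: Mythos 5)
Your proof is correct, and its skeleton coincides with the paper's: the same initial dichotomy (all-monotone versus $\operatorname*{NOT}\in\mathcal{C}$, obtained by restricting a non-monotone function along a sensitive edge), the same affine/non-affine split in the second case, and the same invocation of Proposition~\ref{folkloreprop} to extract a $2$-bit non-affine function and normalize it to $\operatorname*{AND}$ via $\operatorname*{NOT}$s. The one place you genuinely diverge is the monotone case. The paper dispatches it with a trick: a nontrivial monotone function is automatically non-affine (the only affine monotone functions are projections and constants), so Proposition~\ref{folkloreprop} again yields a $2$-bit non-affine \emph{monotone} subfunction, which can only be $\operatorname*{AND}$ or $\operatorname*{OR}$. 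You instead argue directly with minterms, exhibiting a single restriction that produces $\operatorname*{AND}$ (from a minterm of size $\ge 2$) or $\operatorname*{OR}$ (from two incomparable $1$-inputs $u,v$ with $f(u\wedge v)=0$). Your route is longer and requires the bookkeeping you flag, but it is self-contained and makes explicit which restriction witnesses each gate; the paper's route is shorter but leans twice on the folklore proposition. (Two very minor remarks: the minimality of $|u|+|v|$ in your $\operatorname*{OR}$ extraction is not actually needed, since $f(u\wedge v)=0$ already forces $u$ and $v$ to be incomparable so that both collapsed variables are genuine; and the three invariants you cite at the end do not by themselves separate the $\operatorname*{AND}$ class from the $\operatorname*{OR}$ class, though their distinctness is immediate from your explicit description of those classes as conjunctions, respectively disjunctions, plus constants.)
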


\begin{proof}
We take it as known that $\left\{  \operatorname*{AND},\operatorname*{OR}%
\right\}  $\ generates all monotone functions, $\left\{  \operatorname*{XOR}%
,\operatorname*{NOT}\right\}  $ generates all affine functions, and $\left\{
G,\operatorname*{NOT}\right\}  $\ generates all functions, for any $2$-bit
non-affine gate $G$.

Let $\mathcal{C}$ be a Boolean clone that contains the constant $0$ and $1$
functions. \ Then $\mathcal{C}$ is closed under restrictions (e.g., if
$f\left(  x,y\right)  \in\mathcal{C}$, then $f\left(  0,y\right)  $\ and
$f\left(  x,1\right)  $\ are also in $\mathcal{C}$), and that is the crucial
fact we exploit.

First suppose $\mathcal{C}$ contains a non-monotone gate. \ Then certainly we
can construct a $\operatorname{NOT}$ gate by restricting inputs. \ If, in
addition, $\mathcal{C}$\ contains a non-affine gate, then by Proposition
\ref{folkloreprop}, we can construct a $2$-bit\ non-affine gate by restricting
inputs: $\operatorname{AND}$, $\operatorname*{OR}$, $\operatorname*{NAND}$,
$\operatorname*{NOR}$, $\operatorname{IMPLIES}$, or $\operatorname{NOT}\left(
\operatorname{IMPLIES}\right)  $. \ Together with the $\operatorname{NOT}$
gate, this puts us in class 7. \ If, on the other hand, $\mathcal{C}$ contains
only affine gates, then as long as one of those gates depends on at least two
input bits, by restricting inputs we can construct a $2$-bit\ non-degenerate
affine gate: $\operatorname{XOR}$\ or $\operatorname{NOT}\left(
\operatorname{XOR}\right)  $. \ Together with the $\operatorname{NOT}$ gate,
this puts us in class 6. \ If, on the other hand, every gate depends on only
$1$ input bit, then we are in class 5.

Next suppose $\mathcal{C}$ contains only monotone gates. \ Clearly the only
\textit{affine} monotone gates are trivial. \ Thus, as long as one of the
gates is nontrivial, it is non-affine, so Proposition \ref{folkloreprop} again
implies that we can construct a non-affine $2$-bit\ monotone gate by
restricting inputs: $\operatorname{AND}$\ or $\operatorname*{OR}$. \ If we can
construct only $\operatorname{AND}$\ gates, then we are in class 2; if only
$\operatorname*{OR}$ gates, then we are in class 3; if both, then we are in
class 4. \ If, on the other hand, every gate is trivial, then we are in class 1.
\end{proof}

The simplicity of Theorem \ref{postlite}\ underscores how much more
complicated it is to understand reversible gates than non-reversible gates,
when we impose a similar rule in both cases (i.e., that $0$\ and $1$ constant
or ancilla bits are available for free).

\section{Appendix: The Classification Theorem with Loose Ancillas\label{LOOSE}%
}

\begin{theorem}
\label{loosethm}Under the loose ancilla rule, the only change to
Theorem~\ref{main} is that every $\mathcal{C}+\operatorname*{NOTNOT}$ class
collapses with the corresponding $\mathcal{C}+\operatorname*{NOT}$ class.
\end{theorem}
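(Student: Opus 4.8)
The plan is to fix a gate set $S$, study the set $\langle S\rangle_{\ell}$ of transformations $S$ \emph{loosely} generates, show it is always one of the classes of Theorem~\ref{main}, and pin down exactly when it differs from $\langle S\rangle$. Two general facts drive everything. First, for \emph{every} $S$ one has $\langle S\rangle_{\ell}\subseteq\langle S\cup\{\operatorname{NOT}\}\rangle$: a loose generation $G(x,a)=(F(x),b)$ with $G\in\langle S\rangle$ becomes a strict one over $S\cup\{\operatorname{NOT}\}$ by running $G$ and then applying $\operatorname{NOT}$ to each ancilla bit on which $a$ and $b$ disagree, restoring the ancilla string. Second, a ``substitution principle'': if $S$ loosely generates every gate in a set $T$, then $\langle T\rangle\subseteq\langle S\rangle_{\ell}$ --- given a circuit over $T$ for some $F\in\langle T\rangle$, replace each $T$-gate by its loose implementation on \emph{fresh} ancillas that are never reset; the main bits evolve exactly as before and the only net effect is that the fresh ancillas end in a fixed, input-independent configuration, so $F\in\langle S\rangle_{\ell}$. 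Combining the two: whenever $S$ loosely generates $\operatorname{NOT}$ we get $\langle S\rangle_{\ell}=\langle S\cup\{\operatorname{NOT}\}\rangle$, a class on the list.

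Next observe that $\operatorname{NOTNOT}$ loosely generates $\operatorname{NOT}$, via $\operatorname{NOTNOT}(x,0)=(\overline{x},1)$ (reading the second bit as a loose ancilla), so any $S$ with $\operatorname{NOTNOT}\in\langle S\rangle$ satisfies $\langle S\rangle_{\ell}=\langle S\cup\{\operatorname{NOT}\}\rangle$. Running through Theorem~\ref{main}: if $\operatorname{NOT}$ already lies in the class, this says $\langle S\rangle_{\ell}=\langle S\rangle$ (no change); and if $\operatorname{NOTNOT}\in\langle S\rangle$ but $\operatorname{NOT}\notin\langle S\rangle$ --- which, by inspection, happens precisely for $\langle\operatorname{NOTNOT}\rangle$, $\langle\operatorname{Fredkin},\operatorname{NOTNOT}\rangle$, $\langle\operatorname{CNOTNOT}\rangle$, $\langle\operatorname{F}_4,\operatorname{NOTNOT}\rangle=\langle\operatorname{T}_4,\operatorname{NOTNOT}\rangle$, and $\langle\operatorname{T}_6,\operatorname{NOTNOT}\rangle$ --- then $\langle S\rangle_{\ell}$ equals the corresponding ``$\mathcal{C}+\operatorname{NOT}$'' class (namely $\langle\operatorname{NOT}\rangle$, $\langle\operatorname{Fredkin},\operatorname{NOT}\rangle$, $\langle\operatorname{CNOTNOT},\operatorname{NOT}\rangle$, $\langle\operatorname{T}_4,\operatorname{NOT}\rangle$, $\langle\operatorname{T}_6,\operatorname{NOT}\rangle$), using the descriptions in Theorems~\ref{cnnpnotcirc} and \ref{cleanupcirc}. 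This is exactly the collapse in the theorem (with $\mathcal{C}=\langle\operatorname{CNOTNOT}\rangle$ the slightly degenerate case where ``$\mathcal{C}+\operatorname{NOTNOT}$'' already equals $\mathcal{C}$).

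It remains to show that the classes containing \emph{neither} $\operatorname{NOT}$ nor $\operatorname{NOTNOT}$ --- $\langle\varnothing\rangle$, $\langle\operatorname{Fredkin}\rangle$, $\langle\operatorname{C}_k\rangle$ for $k\geq3$, $\langle\operatorname{F}_4\rangle$, $\langle\operatorname{T}_4\rangle$, $\langle\operatorname{T}_6\rangle$ --- are \emph{not} enlarged. Each is invariant-defined (bit-permutations; conservative transformations, by Theorem~\ref{fredkincirc}; mod-$k$-preserving transformations, by Corollary~\ref{ckcor}; mod-$4$-preserving affine transformations, by Theorem~\ref{f4circ}; orthogonal linear transformations, by Theorem~\ref{t4circ} and Lemma~\ref{orthoglin}; mod-$4$-preserving orthogonal linear transformations, by Theorem~\ref{t6circ} and Corollary~\ref{mod4orthog}), so it suffices to check that the defining invariant passes to $F$ whenever $G(x,a)=(F(x),b)$ with $G$ in the class. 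For Hamming weight this is where Theorem~\ref{noshifter} does the work: $G$ mod-$k$-preserving forces $|F(x)|\equiv|x|+(|a|-|b|)\pmod k$, so $F$ is a mod-$k$-shifter, whence the shift is $0$ for $k\geq3$; for the ordinary Hamming weight, evaluating at $0^n$ and $1^n$ makes the shift $0$, preserving conservativity. Affineness of $F$ is inherited at once; if $G$ is linear then the linear part $F'$ of $F$ (whose columns are the first $n$ coordinates of $G(e_i,0)$) inherits linearity, orthogonality and mod-$4$-preservation from $G$. The one new point is $F(0^n)=0^n$: from orthogonality, $G(e_i,0)\cdot G(0,a)\equiv(e_i,0)\cdot(0,a)=0\pmod 2$, while $G(e_i,0)=(v_i,0)$ and $G(0,a)=(F(0^n),b)$, so $v_i\cdot F(0^n)\equiv0$ for all $i$; since $F$ is reversible the $v_i$ span $\mathbb{F}_2^n$, forcing $F(0^n)=0^n$. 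The case $\langle\varnothing\rangle$ is this idea stripped down: a bit-permutation $G$ with $G(x,a)=(F(x),b)$ cannot route a constant ancilla bit into $F$'s output (else $F$ has a constant coordinate, contradicting reversibility), so $F$ just permutes the bits of $x$.

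The main obstacle is this last step --- ruling out that any of the six ``no-$\operatorname{NOT}$/$\operatorname{NOTNOT}$'' classes is strictly enlarged. The Hamming-weight-mod-$k$ classes and the mod-$4$-preserving affine class fall out cleanly from Theorem~\ref{noshifter}, but the orthogonal-linear classes $\langle\operatorname{T}_4\rangle$ and $\langle\operatorname{T}_6\rangle$ genuinely need the short linear-algebra argument above, i.e.\ a proof that $\operatorname{T}_4$ and $\operatorname{T}_6$ do \emph{not} loosely generate $\operatorname{NOT}$, which is not a priori evident. Finally, since loose generation depends only on $\langle S\rangle$ and is monotone in it, the loose-classes form exactly the lattice of Figure~\ref{lattice} with each pair $\mathcal{C}+\operatorname{NOTNOT}$, $\mathcal{C}+\operatorname{NOT}$ identified and nothing else altered.
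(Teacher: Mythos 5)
Your proposal is correct and follows essentially the same route as the paper: the collapse comes from $\operatorname{NOTNOT}$ loosely simulating $\operatorname{NOT}$ on a dummy ancilla, and the absence of further collapses comes from checking that every defining invariant other than parity (mod-$k$ weight shifts via Theorem~\ref{noshifter}, conservativity, affineness, and the linear-part/orthogonality conditions via the block decomposition forced by $M_{21}=0$) passes from $G$ to $F$ whenever $G(x,a)=(F(x),b)$. Your explicit scaffolding (the inclusion $\langle S\rangle_{\ell}\subseteq\langle S\cup\{\operatorname{NOT}\}\rangle$ plus the fresh-ancilla substitution principle) and your direct inner-product verification that orthogonality forces $F(0^n)=0^n$ are minor reorganizations of the same argument, the paper instead invoking that the inverse of an orthogonal block-triangular matrix is its transpose to kill $M_{12}$.
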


\begin{proof}
That this collapse happens is clear: under the loose ancilla rule, we can
always simulate a $\operatorname*{NOT}$ gate by applying a
$\operatorname*{NOTNOT}$ gate to the desired bit, as well as to a ``dummy"
ancilla bit that will never be used for any other purpose.

To see that no other collapses happen, we must show that the remaining classes
are distinct. \ Under the usual ancilla rule, the classes are distinct because
for any pair of classes we can find an invariant satisfied by one, but not the
other, to separate the two. \ We would like to do the same for loose ancilla
classes, but invariants under the usual rule need not, \textit{a priori}, be
invariants under the loose ancilla rule. \ More concretely, as we have seen, a
gate set that preserves parity under the usual rule need no longer preserve it
under the loose ancilla rule. \ However, we claim that all the \textit{other}
invariants are also loose ancilla invariants.

Suppose $G\left(  x,a\right)  =\left(  H\left(  x\right)  ,b\right)  $ is a
transformation generated under the loose ancilla rule, where $a$ and $b$ are
constants, so that under the loose ancilla rule, we have also generated $H$.
\ We would like to show that any invariant of $G$ must also hold for $H$, so
let us consider the invariants one by one.

\begin{itemize}
\item If $G$ is mod-$k$-respecting then
\[
\left\vert G\left(  x\right)  \right\vert -\left\vert x\right\vert =\left\vert
H\left(  x\right)  \right\vert -\left\vert x\right\vert +\left\vert
a\right\vert -\left\vert b\right\vert ,
\]
is constant modulo $k$, and hence $\left\vert H\left(  x\right)  \right\vert
-\left\vert x\right\vert $ is constant modulo $k$, so $H\left(  x\right)  $ is
mod-$k$-respecting. \ For $k\geq3$, mod-$k$-respecting is equivalent to
mod-$k$-preserving by Theorem~\ref{noshifter}. \ When $k=2$, we have already
seen that $\operatorname*{NOTNOT}$ collapses with $\operatorname*{NOT}$.

\item If $G$ is conservative then $0=\left\vert G\left(  x\right)  \right\vert
-\left\vert x\right\vert =\left\vert H\left(  x\right)  \right\vert
-\left\vert x\right\vert +\left\vert a\right\vert -\left\vert b\right\vert $
as above. \ If we average over all $x$ and appeal to reversibility, then we
see that $\left\vert a\right\vert -\left\vert b\right\vert $ must be $0$, and
hence $H$ is conservative.

\item If $G$ is affine then
\[
G%
\begin{pmatrix}
x\\
a
\end{pmatrix}
=%
\begin{pmatrix}
M_{11} & M_{12}\\
M_{21} & M_{22}%
\end{pmatrix}%
\begin{pmatrix}
x\\
a
\end{pmatrix}
+%
\begin{pmatrix}
c_{1}\\
c_{2}%
\end{pmatrix}
=%
\begin{pmatrix}
H(x)\\
b
\end{pmatrix}
,
\]
so clearly $H\left(  x\right)  =M_{11}x+M_{12}a+c_{1}$ is affine as well.
\ Since $M_{21}x+M_{22}a+c_{2}=b$ for all $x$, we must have $M_{21}=0$. \ But
this means that if the columns of%
\[%
\begin{pmatrix}
M_{11} & M_{12}\\
M_{21} & M_{22}%
\end{pmatrix}
,
\]
the linear part of $G$, have weight $2$, weight $4$, or are orthogonal, then
the same is true of columns of
\[%
\begin{pmatrix}
M_{11}\\
0
\end{pmatrix}
,
\]
and hence the columns of $M_{11}$ itself. \ In short, if the linear part of
$G$ has any of the properties we are interested in, then so does the linear
part of $H$.

\item If $G$ is orthogonal then $c_{1}=0$ and $c_{2}=0$. \ Recall that
$M_{21}=0$, and since a matrix of the form
\[%
\begin{pmatrix}
A & B\\
0 & C
\end{pmatrix}
\]
has an inverse of the same form, and the inverse of an orthogonal matrix is
its transpose, we see that $M_{12}=0$. \ It follows that $H\left(  x\right)
=M_{11}x+M_{12}a+c_{1}$ is actually just $H\left(  x\right)  =M_{11}x$ when
$G$ is orthogonal, therefore $H$ is orthogonal because $M_{11}$ is orthogonal.
\end{itemize}
\end{proof}

\section{Appendix: Number of Gates Generating Each Class\label{NUMBER}}

In this appendix, we count how many $n$-bit gates belong to each of the
classes of Theorem \ref{main}. \ Let us write $\left\langle G\right\rangle
_{n}$ for the set of $n$-bit gates generated by $G$, and $\#\left\langle
G\right\rangle _{n}$ for the number of $n$-bit gates generated by $G$. \ Then
Theorem~\ref{exactcount} gives the exact number of gates in each class, while
Theorem~\ref{approxcount} gives the asymptotics.

\begin{theorem}
\label{exactcount}Let $n\geq1$ be an integer.

\begin{itemize}
\item The total number of gates is
\[
\#\left\langle \operatorname{Toffoli}\right\rangle _{n}=\left(  2^{n}\right)
!
\]
and the non-affine classes break down as follows. \ For $k\geq3$,
\begin{align*}
\#\left\langle \operatorname{Fredkin},\operatorname{NOT}\right\rangle _{n}  &
=2\left(  2^{n-1}!\right)  ^{2}\\
\#\left\langle \operatorname{Fredkin},\operatorname{NOTNOT}\right\rangle _{n}
&  =\left(  2^{n-1}!\right)  ^{2}\\
\#\left\langle \operatorname*{C}\nolimits_{k}\right\rangle _{n}  &
=\prod_{i=0}^{k-1}\left(  \sum_{j-i\equiv0\left(  \operatorname{mod}k\right)
}\binom{n}{j}\right)  !\\
\#\left\langle \operatorname{Fredkin}\right\rangle _{n}  &  =\prod_{i=0}%
^{n}\left(  \binom{n}{i}!\right)
\end{align*}

\item The total number of affine gates is
\[
\#\left\langle \operatorname{CNOT}\right\rangle _{n}=2^{n\left(  n+1\right)
/2}\prod_{i=1}^{n}\left(  2^{i}-1\right)  .
\]

\item The numbers of parity-preserving and parity-respecting gates are:%
\begin{align*}
\#\left\langle \operatorname{CNOTNOT}\right\rangle _{n}  &  =2^{n\left(
n+1\right)  /2-1}\prod_{i=1}^{n-1}\left(  2^{i}-1\right) \\
\#\left\langle \operatorname{CNOTNOT},\operatorname{NOT}\right\rangle _{n}  &
=2^{n\left(  n+1\right)  /2}\prod_{i=1}^{n-1}\left(  2^{i}-1\right)
\end{align*}

\item The numbers of gates in $\left\langle \varnothing\right\rangle $,
$\left\langle \operatorname*{T}_{6}\right\rangle $, and $\left\langle
\operatorname*{T}_{4}\right\rangle $ are:%
\begin{align*}
\#\left\langle \varnothing\right\rangle _{n}  &  =n!\\
\#\left\langle \operatorname*{T}\nolimits_{4}\right\rangle _{n}  &  =%
\begin{cases}
2^{m^{2}}\prod_{i=1}^{m-1}\left(  2^{2i}-1\right)  , & \text{if $n=2m$}\\
2^{m^{2}}\prod_{i=1}^{m}\left(  2^{2i}-1\right)  , & \text{if $n=2m+1$}%
\end{cases}
\\
\#\left\langle \operatorname*{T}\nolimits_{6}\right\rangle _{n}  &  =%
\begin{cases}
1 & \text{if $n=1$}\\
2^{4m^{2}+1}\prod_{i=1}^{2m}\left(  2^{2i}-1\right)  & \text{if $n=4m+2$}\\
2^{4m^{2}+2m+1}\left(  2^{2m+1}+\left(  -1\right)  ^{m}\right)  \prod
_{i=1}^{2m}\left(  2^{2i}-1\right)  & \text{if $n=4m+3$}\\
2^{4m^{2}-2m+1}\left(  2^{2m-1}-\left(  -1\right)  ^{m}\right)  \prod
_{i=1}^{2m-2}\left(  2^{2i}-1\right)  & \text{if $n=4m\geq4$}\\
2^{4m^{2}-2m+1}\left(  2^{2m}-\left(  -1\right)  ^{m}\right)  \prod
_{i=1}^{2m-1}\left(  2^{2i}-1\right)  & \text{if $n=4m+1\geq5$}%
\end{cases}
\end{align*}
Furthermore,
\[
\#\left\langle \operatorname*{F}\nolimits_{4}\right\rangle _{n}=\#\left\langle
\operatorname*{T}\nolimits_{4}\right\rangle _{n}.
\]

\item For any linear class $\left\langle G\right\rangle $ we have
\begin{align*}
\#\left\langle G,\operatorname{NOT}\right\rangle _{n}  &  =\#\left\langle
G\right\rangle _{n}2^{n}\\
\#\left\langle G,\operatorname{NOTNOT}\right\rangle _{n}  &  =\#\left\langle
G\right\rangle _{n}2^{n-1}%
\end{align*}

\end{itemize}
\end{theorem}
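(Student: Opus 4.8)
The plan is to combine the classification (Theorem~\ref{main}) with the circuit constructions of Section~\ref{CONSTRUC}, which together show that each class in the list is exactly the set of \emph{all} reversible transformations satisfying its defining invariant. Hence $\#\langle G\rangle_n$ is nothing but the number of permutations of $\{0,1\}^n$ obeying that invariant, and the theorem becomes a collection of counting problems. Many are immediate: $\#\langle\operatorname{Toffoli}\rangle_n=(2^n)!$ since no constraint is imposed; a conservative permutation must act within each Hamming-weight slice (of size $\binom n i$) and may do so arbitrarily, giving $\prod_{i=0}^n\binom n i!$; replacing ``slice'' by ``residue class $\{x:|x|\equiv i\pmod k\}$'' (of size $\sum_{j\equiv i\,(k)}\binom n j$) gives the $\operatorname{C}_k$ count; a parity-preserving permutation acts arbitrarily within the two parity classes of size $2^{n-1}$, giving $(2^{n-1}!)^2$, and a parity-respecting one either does this or is a parity-preserving permutation composed with a fixed parity flip, doubling the count. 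The trivial class is the $n!$ coordinate permutations.

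For the affine classes I would use elementary group theory over $\mathbb F_2$. The affine bijections number $|\operatorname{GL}_n(\mathbb F_2)|\cdot2^n$, which rearranges to the stated $\#\langle\operatorname{CNOT}\rangle_n$. A linear map $A$ is parity-preserving iff every column has odd weight, i.e.\ $A^{T}\mathbf1=\mathbf1$; since $\operatorname{GL}_n(\mathbb F_2)$ is transitive on nonzero vectors, the stabilizer of $\mathbf1$ has order $|\operatorname{GL}_n(\mathbb F_2)|/(2^n-1)$, and multiplying by $2^{n-1}$ or $2^n$ (even-weight or arbitrary translations, all realizable by $\operatorname{NOTNOT}$ or $\operatorname{NOT}$ gates once $A$ is built) gives the two $\operatorname{CNOTNOT}$ entries. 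By Lemma~\ref{ipcond}, $\#\langle\operatorname{T}_4\rangle_n$ is the order of the isometry group of the standard symmetric bilinear form on $\mathbb F_2^n$; this I would compute by the recursion that picks the image of a unit basis vector and then restricts to its (nondegenerate) orthogonal complement, observing the complement again carries the standard form \emph{unless} the chosen image is $\mathbf1$ (possible only when $n$ is odd, in which case $\mathbf1^\perp$ is alternating and hence forbidden). This yields $|O(n)|=2^{n-1}|O(n-1)|$ for $n$ even and $(2^{n-1}-1)|O(n-1)|$ for $n$ odd, which telescopes to the two product formulas. The same mechanism gives $\#\langle G,\operatorname{NOT}\rangle_n=\#\langle G\rangle_n\cdot2^n$ and $\#\langle G,\operatorname{NOTNOT}\rangle_n=\#\langle G\rangle_n\cdot2^{n-1}$ for linear $G$: pushing $\operatorname{NOT}$ gates to the end shows every element is a linear map in $\langle G\rangle_n$ post-composed with a translation, with all (resp.\ all even-weight) translations available and, since the relevant linear classes are parity-preserving, no collapse occurring.

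The two genuinely delicate entries are $\langle\operatorname{T}_6\rangle$ and the coincidence $\#\langle\operatorname{F}_4\rangle_n=\#\langle\operatorname{T}_4\rangle_n$. For $\operatorname{T}_6$, Corollaries~\ref{mod4orthog} and~\ref{mod4cond} identify the class with the orthogonal matrices all of whose columns have weight $\equiv1\pmod4$; running the peel-and-restrict recursion while tracking weights modulo $4$ (not merely parity) forces the four-way split on $n\bmod4$ together with the degenerate case $n=1$, and produces the extra factors $2^{2m}\pm(-1)^m$ from the two inequivalent nondegenerate forms the complement can carry. For $\operatorname{F}_4$, Theorem~\ref{affine4cond} shows a mod-$4$-preserving affine map $x\mapsto Ax\oplus b$ has orthogonal linear part $A$, and that the congruences $v_i\cdot b\equiv\frac{1-|v_i|}{2}\pmod2$ determine $b$ uniquely given $A$ (call it $b_0(A)$), so the ``take the linear part'' homomorphism $\langle\operatorname{F}_4\rangle_n\to\langle\operatorname{T}_4\rangle_n$ is injective; the equal-cardinality claim is thus surjectivity, which reduces to showing $|b_0(A)|\equiv0\pmod4$ for \emph{every} orthogonal $A$ over $\mathbb F_2$. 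I expect this to be the main obstacle. The approach I would take: set $Q(x)=\sum_{i<j}x_ix_j\equiv\binom{|x|}2\pmod2$; orthogonality of $A$ forces $Q\circ A$ and $Q$ to have the same polarization, hence $Q(Ax)=Q(x)+w\cdot x$ for a fixed $w$ with $|w|=\#\{i:|v_i|\equiv3\pmod4\}$, and a short computation gives $|b_0(A)|\equiv|w|\pmod4$. Then $|w|$ is even because $A\mathbf1=\mathbf1$, and $|w|\equiv0\pmod4$ follows by evaluating the quadratic Gauss sum and comparing parities in $\sum_x(-1)^{Q(x)}=\sum_x(-1)^{Q(x)+w\cdot x}$ — with the residue $n\equiv3\pmod4$ needing a separate treatment since there the Gauss sum vanishes. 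Once every entry is in place, the asymptotics (Theorem~\ref{approxcount}) drop out from standard estimates on $\prod_i(2^i-1)$ and on the factorials via Stirling's formula.
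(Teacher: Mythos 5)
Most of your entries are counted exactly as the paper counts them: once the classification and the constructions of Section~\ref{CONSTRUC} identify each class with ``all permutations satisfying the invariant,'' the Toffoli, Fredkin, $\operatorname{C}_k$, parity, and trivial counts are immediate, and the affine/parity-preserving-affine counts reduce to $|\operatorname{GL}_n(\mathbb{F}_2)|$ and a stabilizer computation (the paper reaches the $\operatorname{CNOTNOT}$ count by restricting to the even-weight hyperplane rather than by orbit--stabilizer, but the bookkeeping is equivalent). Your two genuine departures are both reasonable. For $\#\langle\operatorname{T}_4\rangle_n$ the paper simply cites MacWilliams, whereas your peel-off-a-unit-vector recursion ($|O(n)|=2^{n-1}|O(n-1)|$ for $n$ even, $(2^{n-1}-1)|O(n-1)|$ for $n$ odd, the lost vector being $\mathbf{1}$ whose complement is alternating) is correct and self-contained. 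For $\#\langle\operatorname{F}_4\rangle_n=\#\langle\operatorname{T}_4\rangle_n$ your reduction to ``$|b_0(A)|\equiv 0\ (\operatorname{mod}4)$ for every orthogonal $A$'' is exactly the paper's Lemma~\ref{charvec} (your $w$ is the paper's characteristic vector $c(A)$ and $b_0(A)=Ac(A)$); your Gauss-sum finish is believable, but note the paper gets this for free from Theorem~\ref{t4circ}: decompose $A$ into $\operatorname{T}_4$ gates and replace each by $\operatorname{F}_4$, which exhibits a mod-$4$-preserving affine map with linear part $A$ and settles surjectivity without any case analysis on $n\bmod 4$.

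The genuine gap is $\#\langle\operatorname{T}_6\rangle_n$, which is the hardest formula in the statement and the one you do not actually derive. ``Running the peel-and-restrict recursion while tracking weights modulo $4$'' requires (i) a Witt-type extension theorem showing the group is transitive on vectors of weight $\equiv 1\ (\operatorname{mod}4)$, (ii) an identification of the point stabilizer with the isometry group of the induced $\mathbb{Z}/4$-valued form on the orthogonal complement, and (iii) a classification of which such forms arise, since unlike the $\operatorname{T}_4$ case the complement's type genuinely varies and the recursion branches; asserting that this ``produces the extra factors $2^{2m}\pm(-1)^m$'' is naming the answer, not computing it. The paper avoids all of this by working inside $\langle\operatorname{T}_4\rangle_n$: Lemma~\ref{uniquerep} shows two orthogonal matrices lie in the same coset of $\langle\operatorname{T}_6\rangle_n$ iff they have the same characteristic vector, Lemma~\ref{charvec} shows the achievable characteristic vectors are exactly the strings of weight divisible by $4$, and the index is therefore $\sum_{4\mid j}\binom{n}{j}$, for which a closed form is quoted from Knuth. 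You should either carry out the $\mathbb{Z}/4$-form recursion in full (it is doable but substantially more work than your sketch suggests) or adopt the coset-counting route, which in any case you have most of the ingredients for, since your $F_4$ argument already establishes $|c(A)|\equiv 0\ (\operatorname{mod}4)$.
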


Let us count each class in turn. \ To start, note that an $n$-bit reversible
gate is, by definition, a permutation of $\left\{  0,1\right\}  ^{n}$, so
there are $\left(  2^{n}\right)  !$ gates in total.

Parity-preserving gates map even-weight strings to even-weight strings, and
odd-weight strings to odd-weight strings. \ It follows that there are $\left(
\left(  2^{n-1}\right)  !\right)  ^{2}$ parity-preserving gates. \ Clearly
there are exactly twice as many parity-respecting gates, since we can append a
$\operatorname{NOT}$ gate to any parity-preserving gate to get a
parity-flipping gate, and vice versa.

The mod-$k$-preserving gates (for $k\geq3$) also decompose into a product of
permutations, one for each Hamming weight class modulo $k$. \ This leads to
the formula
\[
\#\left\langle \operatorname*{C}\nolimits_{k}\right\rangle _{n}=\prod
_{i=0}^{k-1}\left(  \sum_{j-i\equiv0\left(  \operatorname{mod}k\right)
}\binom{n}{j}\right)  !
\]
Likewise, for conservative gates, we have%
\[
\#\left\langle \operatorname{Fredkin}\right\rangle _{n}=\prod_{i=0}^{n}\left(
\binom{n}{i}!\right)  .
\]

The linear part of an affine gate is an $n\times n$\ invertible matrix $A$.
\ The number of such matrices is well-known to be%
\[
\prod_{i=0}^{n-1}\left(  2^{n}-2^{i}\right)  =2^{n\left(  n-1\right)  /2}%
\prod_{i=1}^{n}\left(  2^{i}-1\right)  .
\]
There are an additional $2^{n}$ choices for the affine part, so%
\[
\#\left\langle \operatorname{CNOT}\right\rangle _{n}=2^{n\left(  n+1\right)
/2}\prod_{i=1}^{n}\left(  2^{i}-1\right)  .
\]

A parity-preserving affine transformation is an affine transformation on the
$\left(  n-1\right)  $-dimensional subspace of even Hamming-weight vectors,
extended to the entire space by defining the transformation on any odd-weight
vector. \ There are $2^{n\left(  n-1\right)  /2}\prod_{i=1}^{n-1}\left(
2^{i}-1\right)  $ affine transformations on $n-1$ dimensions and $2^{n-1}$
choices of odd-weight vector, so there are%
\[
\#\left\langle \operatorname{CNOTNOT}\right\rangle _{n}=2^{n\left(
n+1\right)  /2-1}\prod_{i=1}^{n-1}\left(  2^{i}-1\right)
\]
parity-preserving affine transformations, and twice as many parity-respecting
affine transformations.

We refer to MacWilliams \cite{macwilliams} for the formula (below) for the
number of orthogonal $n\times n$ matrices.
\[
\#\left\langle \operatorname*{T}\nolimits_{4}\right\rangle _{n}=%
\begin{cases}
2^{m^{2}}\prod_{i=1}^{m-1}\left(  2^{2i}-1\right)  , & \text{if $n=2m$,}\\
2^{m^{2}}\prod_{i=1}^{m}\left(  2^{2i}-1\right)  , & \text{if $n=2m+1$.}%
\end{cases}
\]
We now turn our attention to counting $\left\langle \operatorname{T}%
_{6}\right\rangle _{n}$, which is more involved. \ The approach will be
similar to that of MacWilliams \cite{macwilliams}. \ It will help to consider
$\left\langle \operatorname{T}_{4}\right\rangle _{n}$ and $\left\langle
\operatorname{T}_{6}\right\rangle _{n}$ as groups. \ Indeed, $\left\langle
\operatorname{T}_{4}\right\rangle _{n}$ is just the orthogonal group
$\mathrm{O}(n)$ over $\mathbb{F}_{2}$, and $\left\langle \operatorname{T}%
_{6}\right\rangle _{n}$ is a proper subgroup.

The idea is to find a unique representative for each of the cosets of
$\left\langle \operatorname{T}_{6}\right\rangle _{n}$ in $\left\langle
\operatorname{T}_{4}\right\rangle _{n}$. \ Since we know $\#\left\langle
\operatorname{T}_{4}\right\rangle _{n}$ by \cite{macwilliams}, dividing by the
number of unique representatives will give us $\#\left\langle \operatorname{T}%
_{6}\right\rangle _{n}$ as desired.

Recall that by Lemma~\ref{ipcond}, the Hamming weight of each column vector of
an orthogonal matrix is either $1\operatorname{mod}4$\ or $3\operatorname{mod}%
4$. \ If $A\in\left\langle \operatorname{T}_{4}\right\rangle _{n}$ is an
orthogonal matrix with column vectors $a_{1},\ldots,a_{n}$, then the
\textit{characteristic vector} $c\left(  A\right)  $ is an $n$-dimensional
vector whose $i^{th}$ entry, $c_{i}(A)$, is defined as follows:
\[
c_{i}\left(  A\right)  =%
\begin{cases}
1 & \text{if }\left\vert a_{i}\right\vert \equiv3\left(  \operatorname{mod}%
4\right) \\
0 & \text{if }\left\vert a_{i}\right\vert \equiv1\left(  \operatorname{mod}%
4\right)
\end{cases}
.
\]
The following lemma shows that these characteristic vectors can be used as a
representatives for the cosets of $\left\langle \operatorname{T}%
_{6}\right\rangle _{n}$.

\begin{lemma}
\label{uniquerep}Two orthogonal transformations, $A,B\in\left\langle
\operatorname{T}_{4}\right\rangle _{n}$, are in the same coset of
$\left\langle \operatorname{T}_{6}\right\rangle _{n}$ if and only if $c\left(
A\right)  =c\left(  B\right)  $.
\end{lemma}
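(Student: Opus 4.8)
The plan is to reduce the whole statement to one arithmetic identity describing how an orthogonal matrix over $\mathbb{F}_2$ acts on Hamming weight modulo $4$, and then to track the characteristic vector $c(\cdot)$ through matrix multiplication and inversion.

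First I would prove the basic formula: for any $A\in\langle\operatorname{T}_4\rangle_n$ with columns $a_1,\dots,a_n$ and any $v\in\mathbb{F}_2^n$,
\[
|Av|\equiv|v|+2\,\bigl|v\wedge c(A)\bigr|\pmod 4 .
\]
Since $Av=\bigoplus_{j:\,v_j=1}a_j$, Lemma~\ref{inclex} expands $|Av|$ as an inclusion/exclusion sum; modulo $4$ only the one- and two-element terms survive, and each two-element term $2|a_j\wedge a_k|=2(a_j\cdot a_k)$ vanishes mod $4$ because orthogonality forces $a_j\cdot a_k\equiv 0\pmod 2$ for $j\neq k$ (Lemma~\ref{ipcond}). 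Hence $|Av|\equiv\sum_{j:\,v_j=1}|a_j|\pmod 4$, and since every column of an orthogonal matrix has odd weight we have $|a_j|\equiv 1+2c_j(A)\pmod 4$, which gives the formula. I would also record here, via Theorem~\ref{t6circ} together with Lemma~\ref{orthoglin}, Corollary~\ref{mod4orthog}, and Corollary~\ref{mod4cond}, that $\langle\operatorname{T}_6\rangle_n=\{M\in\langle\operatorname{T}_4\rangle_n:\ c(M)=0\}$ — i.e.\ $\langle\operatorname{T}_6\rangle_n$ is exactly the subgroup of orthogonal matrices with trivial characteristic vector.

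Next I would use the formula to compute $c$ on products. The $i$th column of $AB$ is $Ab_i$, where $b_i$ is the $i$th column of $B$; applying the formula and using $|b_i|\equiv 1+2c_i(B)\pmod 4$ (as $B$ is orthogonal) yields $|(AB)_i|\equiv 1+2\bigl(c_i(B)\oplus (B^{T}c(A))_i\bigr)\pmod 4$, so
\[
c(AB)=c(B)\oplus B^{T}c(A).
\]
Specializing to $B:=A^{-1}$ and using $c(I)=0$ and $(A^{-1})^{T}=A$ (orthogonality) gives $c(A^{-1})=A\,c(A)$. Now $A$ and $B$ lie in the same right coset of $\langle\operatorname{T}_6\rangle_n$ iff $AB^{-1}\in\langle\operatorname{T}_6\rangle_n$, i.e.\ iff $c(AB^{-1})=0$; and
\[
c(AB^{-1})=c(B^{-1})\oplus (B^{-1})^{T}c(A)=B\,c(B)\oplus B\,c(A)=B\bigl(c(A)\oplus c(B)\bigr),
\]
which, since $B$ is invertible, is zero exactly when $c(A)=c(B)$. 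As the number of left cosets equals the number of right cosets of a subgroup of a finite group, this version of the lemma is exactly what the coset count in the surrounding argument needs.

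The substance of the proof is the mod-$4$ formula; I expect the only real obstacles to be bookkeeping ones: making sure the two-element inclusion/exclusion terms genuinely die modulo $4$ (they contribute $2\cdot(\text{even})$), keeping straight that the transpose in $c(AB)=c(B)\oplus B^{T}c(A)$ attaches to the \emph{left} factor $A$ while the coordinate index runs over the right factor $B$, and noticing that it is the \emph{right} cosets of $\langle\operatorname{T}_6\rangle_n$ for which the characteristic vector is a clean invariant (the computation for left cosets does not simplify, so the statement should be read with right cosets in mind).
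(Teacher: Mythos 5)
Your proof is correct, and it reaches the same pivot as the paper --- namely that $\left\langle \operatorname{T}_{6}\right\rangle _{n}$ is exactly the subgroup of orthogonal matrices that preserve Hamming weight mod $4$ (equivalently, those with trivial characteristic vector), so that coset equality reduces to a mod-$4$ weight comparison --- but the way you execute that reduction is genuinely different. The paper sets $T:=BA^{-1}=BA^{T}$, observes $T(a_{i})=b_{i}$, and then lets Theorem \ref{affine4cond} do all the work: because that theorem is stated for an \emph{arbitrary} orthonormal basis, one may test mod-$4$-preservation of $T$ directly on the basis $\{a_{i}\}$ and read off $|a_{i}|\equiv|b_{i}|\ (\operatorname{mod}4)$ for all $i$, i.e.\ $c(A)=c(B)$, in one line. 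You instead rebuild the mod-$4$ arithmetic from Lemmas \ref{inclex} and \ref{ipcond}, obtaining the identity $|Av|\equiv|v|+2\left(v\cdot c(A)\right)\ (\operatorname{mod}4)$ and from it the transformation laws $c(AB)=c(B)\oplus B^{T}c(A)$ and $c(A^{-1})=A\,c(A)$, so that $c(AB^{-1})=B\bigl(c(A)\oplus c(B)\bigr)$ vanishes iff $c(A)=c(B)$. (Note your condition $AB^{-1}\in H$ and the paper's $BA^{-1}\in H$ are equivalent since $H$ is closed under inverses, so the right-versus-left caveat at the end, while a worthwhile observation given that $\left\langle \operatorname{T}_{6}\right\rangle _{n}$ need not be normal, does not actually create a discrepancy with the paper.) What your route buys is the explicit cocycle formula for $c$ under products and inverses, which makes the coset structure transparent and would, for instance, immediately recover Lemma \ref{charvec}'s constraint-free description of which $c$ arise from products of $\operatorname{T}_{4}$'s; what the paper's route buys is brevity, by exploiting the basis-freedom already built into Theorem \ref{affine4cond}.
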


\begin{proof}
Note that $A$ and $B$ are in the same coset if and only if $T:=BA^{-1}=BA^{T}$
is in $\left\langle \operatorname*{T}\nolimits_{6}\right\rangle _{n}$. \ We
know that $T\in\left\langle \operatorname*{T}\nolimits_{4}\right\rangle _{n}$,
and that $T\left(  a_{i}\right)  =b_{i}$ for all $i$. \ Since $a_{1}%
,\ldots,a_{n}$ is an orthogonal basis, Theorem~\ref{t6circ} says that
$T\in\left\langle \operatorname*{T}\nolimits_{6}\right\rangle _{n}$ if and
only if $T$ is mod-$4$-preserving. \ By Theorem~\ref{affine4cond}, this holds
if and only if $\left\vert a_{i}\right\vert \equiv\left\vert b_{i}\right\vert
\left(  \operatorname{mod}4\right)  $ for all $i$, or equivalently, $c\left(
A\right)  =c\left(  B\right)  $.
\end{proof}

Lemma \ref{uniquerep}\ shows that it suffices to count the number of possible
characteristic vectors. \ Perhaps surprisingly, not every characteristic
vector is achievable;\ the following lemma shows exactly which ones are.

\begin{lemma}
\label{charvec}If $A\in\left\langle \operatorname{T}_{4}\right\rangle _{n}$,
then $\left\vert c\left(  A\right)  \right\vert \equiv0\left(
\operatorname{mod}4\right)  $. \ Furthermore, for every characteristic vector
$c$ such that $\left\vert c\right\vert \equiv0\left(  \operatorname{mod}%
4\right)  $, there exists a matrix $A\in\left\langle \operatorname{T}%
_{4}\right\rangle _{n}$ such that $c(A)=c$.
\end{lemma}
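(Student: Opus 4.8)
The plan is to establish the two halves of the lemma independently: the constraint $|c(A)|\equiv 0\pmod 4$ via a complex Gauss-sum identity, and the realizability of every admissible characteristic vector via an explicit block construction.

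For the constraint, I would fix an orthogonal $A\in\langle\operatorname{T}_{4}\rangle_{n}$ with columns $a_{1},\dots,a_{n}$; by Lemma \ref{ipcond} these form an orthonormal basis over $\mathbb{F}_{2}$, so each $|a_{j}|$ is odd and $a_{j}\cdot a_{j'}$ is even for $j\neq j'$. Writing $i=\sqrt{-1}$, I would evaluate $\Sigma:=\sum_{x\in\{0,1\}^{n}}i^{|Ax|}$ in two ways. Since $A$ permutes $\{0,1\}^{n}$, the substitution $y=Ax$ gives $\Sigma=\sum_{y}i^{|y|}=(1+i)^{n}$. On the other hand, with $S=\operatorname{supp}(x)$ we have $Ax=\bigoplus_{j\in S}a_{j}$, and by Lemma \ref{inclex} the quantity $|Ax|$ differs from $\sum_{j\in S}|a_{j}|$ only by terms divisible by $4$ (the size-two terms because $|a_{j}\wedge a_{j'}|=a_{j}\cdot a_{j'}$ is even, the larger terms because of the coefficient $(-2)^{|T|-1}$); hence $|Ax|\equiv\sum_{j\in S}|a_{j}|\pmod 4$, so $i^{|Ax|}=\prod_{j\in S}i^{|a_{j}|}$ and therefore $\Sigma=\prod_{j=1}^{n}(1+i^{|a_{j}|})$. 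Since each $|a_{j}|$ is odd, $i^{|a_{j}|}$ is $i$ when $c_{j}(A)=0$ and $-i$ when $c_{j}(A)=1$, giving $\Sigma=(1+i)^{n-|c(A)|}(1-i)^{|c(A)|}$. Comparing the two evaluations and cancelling $(1+i)^{n-|c(A)|}\neq 0$ leaves $(1-i)^{|c(A)|}=(1+i)^{|c(A)|}$; since $(1-i)/(1+i)=-i$, this says $(-i)^{|c(A)|}=1$, which forces $4\mid|c(A)|$.

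For realizability, given $c$ with $|c|=4t$ I would split $\operatorname{supp}(c)$ into $t$ quadruples of coordinates and let $A_{c}$ apply a copy of the $4$-bit gate $\operatorname{T}_{4}$ to each quadruple and the identity to the remaining $n-4t$ coordinates. Then $A_{c}\in\langle\operatorname{T}_{4}\rangle_{n}$ by construction, and $A_{c}$ is a direct sum of orthogonal maps, hence orthogonal. Since every column of $\operatorname{T}_{4}$ has Hamming weight $3$, a column of $A_{c}$ lying in a quadruple has weight $3\equiv 3\pmod 4$ while every other column is a standard basis vector of weight $1$; thus $c(A_{c})=c$. I expect the realizability half to be entirely routine, and essentially all the content to sit in the constraint: naive weight bookkeeping (for instance, expanding $\mathbf 1=\bigoplus_{j}a_{j}$ and using inclusion/exclusion modulo $4$) only delivers $|c(A)|\equiv 0\pmod 2$, and it is the Gauss sum — the same flavor of complex-analytic device used throughout Section \ref{HAMMING} — that upgrades this to modulo $4$.
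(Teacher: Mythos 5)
Your proposal is correct, and the first (harder) half takes a genuinely different route from the paper. The paper proves $4\mid|c(A)|$ by lifting $A$ to an affine map $f(x)=Ax\oplus b$ that preserves Hamming weight mod $4$ (such a lift exists by replacing each $\operatorname{T}_4$ in the circuit decomposition of Theorem \ref{t4circ} with $\operatorname{F}_4$), then using Theorem \ref{affine4cond} to deduce $A^{T}b=c(A)$, computing $|b|\equiv 3|c(A)|\pmod 4$ and combining this with $|b|\equiv 0\pmod 4$. Your argument instead evaluates the character sum $\sum_{x}i^{|Ax|}$ twice — as $(1+i)^{n}$ via the substitution $y=Ax$, and as $(1+i)^{n-|c(A)|}(1-i)^{|c(A)|}$ via the mod-$4$ congruence $|Ax|\equiv\sum_{j\in\operatorname{supp}(x)}|a_{j}|$ that Lemma \ref{inclex} and orthogonality supply — and concludes from $(-i)^{|c(A)|}=1$. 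Each step checks out: the pairwise terms in the inclusion/exclusion carry a factor $-2(a_{j}\cdot a_{j'})\equiv 0\pmod 4$ and the higher-order terms a factor divisible by $4$, so the congruence holds, and the cancellation of $(1+i)^{n-|c(A)|}$ is legitimate. Your route is more self-contained (it needs only Lemmas \ref{ipcond} and \ref{inclex}, not the circuit synthesis of Theorem \ref{t4circ} or the affine machinery of Theorem \ref{affine4cond}) and is in the same spirit as the complex-polynomial arguments of Section \ref{HAMMING}; what it does not produce is the explicit vector $b=Ac(A)$, which the paper's proof yields as a byproduct and then reuses to show $\#\left\langle\operatorname{F}_{4}\right\rangle_{n}=\#\left\langle\operatorname{T}_{4}\right\rangle_{n}$. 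The realizability half — tensoring $\operatorname{T}_4$ blocks over the support of $c$ — is essentially identical to the paper's.
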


\begin{proof}
Let $A\in\left\langle \operatorname{T}_{4}\right\rangle _{n}$ with column
vectors $a_{1},\ldots,a_{n}$. \ Of course, $A$ might not preserve Hamming
weight mod $4$. \ The main idea of the proof is to promote $A$ to an affine
function $f\left(  x\right)  =Ax\oplus b$ that \textit{does} preserve Hamming
weight mod $4$. \ We know that such a function exists because we can decompose
$A$ into a circuit of $\operatorname{T}_{4}$ gates by Theorem~\ref{t4circ}.
\ Replacing each such gate with $\operatorname{F}_{4}$ will yield a circuit of
the desired form that preserves Hamming weight mod $4$.

Recall from Theorem~\ref{affine4cond} that if $f$ preserves Hamming weight mod
$4$, then $\left\vert a_{i}\right\vert +2\left(  a_{i}\cdot b\right)
\equiv1\left(  \operatorname{mod}4\right)  $. \ Expanding out this condition
we get
\[
a_{i}\cdot b\equiv%
\begin{cases}
1\left(  \operatorname{mod}2\right)  & \text{if }\left\vert a_{i}\right\vert
\equiv3\left(  \operatorname{mod}4\right) \\
0\left(  \operatorname{mod}2\right)  & \text{if }\left\vert a_{i}\right\vert
\equiv1\left(  \operatorname{mod}4\right)
\end{cases}
,
\]
which is equivalent to the condition $A^{T}b=c(A)$. \ Therefore,
\[
\left\vert b\right\vert =\left\vert Ac\left(  a\right)  \right\vert
=\left\vert \sum_{i=1}^{n}a_{i}c_{i}\left(  A\right)  \right\vert \equiv
\sum_{i=1}^{n}c_{i}\left(  A\right)  \left\vert a_{i}\right\vert +2\sum
_{i<j}c_{i}\left(  A\right)  c_{j}\left(  A\right)  \left(  a_{i}\cdot
a_{j}\right)  \equiv\sum_{i=1}^{n}c_{i}\left(  A\right)  \left\vert
a_{i}\right\vert \equiv3\left\vert c\left(  a\right)  \right\vert \left(
\operatorname{mod}4\right)  ,
\]
which implies that $\left\vert b\right\vert \equiv\left\vert c\left(
A\right)  \right\vert \left(  \operatorname{mod}4\right)  $. \ But we know by
Theorem~\ref{affine4cond} that $\left\vert b\right\vert \equiv0\left(
\operatorname{mod}4\right)  $. So $\left\vert c\left(  A\right)  \right\vert
\equiv0\left(  \operatorname{mod}4\right)  $, which completes the first part
of the lemma.

We now need to show that any characteristic vector of Hamming weight divisible
by $4$ is realized by some matrix $A\in\left\langle \operatorname{T}%
_{4}\right\rangle _{n}$. \ Notice that $c\left(  \operatorname{T}_{4}\right)
=\left(  1,1,1,1\right)  $. \ Therefore, by taking an appropriate tensor
product of $\operatorname{T}_{4}$ gates and permuting the rows and columns, we
can achieve any characteristic vector of Hamming weight divisible by $4$.
\end{proof}

\begin{corollary}
$\#\left\langle \operatorname{F}_{4}\right\rangle _{n}=\#\left\langle
\operatorname{T}_{4}\right\rangle _{n}$.
\end{corollary}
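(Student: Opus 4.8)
The plan is to construct an explicit bijection between the $n$-bit gates in $\langle \operatorname{F}_4 \rangle_n$ and those in $\langle \operatorname{T}_4 \rangle_n$, namely the map $\Phi$ that sends an affine transformation to its linear part. By Theorem~\ref{f4circ}, $\langle \operatorname{F}_4 \rangle_n$ is precisely the set of all $n$-bit mod-$4$-preserving affine transformations; by Theorem~\ref{t4circ} together with Lemma~\ref{orthoglin}, $\langle \operatorname{T}_4 \rangle_n$ is precisely the set of all $n$-bit orthogonal transformations, which are automatically linear. So it suffices to show that $\Phi(Ax \oplus b) := A$ is a well-defined bijection between these two sets.

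Well-definedness and surjectivity are quick. If $f(x) = Ax \oplus b$ is mod-$4$-preserving, then applying Theorem~\ref{affine4cond} with the standard orthonormal basis $\{e_i\}$ shows that the columns $a_i$ of $A$ satisfy $\left\vert a_i\right\vert \equiv 1 \pmod 2$ and $a_i \cdot a_j \equiv 0 \pmod 2$ for $i \neq j$, so $A$ is orthogonal by Lemma~\ref{ipcond}; thus $A \in \langle \operatorname{T}_4 \rangle_n$. For surjectivity, given an orthogonal $A$, I would decompose it into a circuit of $\operatorname{T}_4$ gates and bit-swaps using Theorem~\ref{t4circ}, then replace every $\operatorname{T}_4$ by $\operatorname{F}_4$; since $\operatorname{F}_4$ is affine and mod-$4$-preserving and has linear part $\operatorname{T}_4$, the resulting transformation $f$ lies in $\langle \operatorname{F}_4 \rangle_n$ and satisfies $\Phi(f) = A$. (This ``promotion'' already appears inside the proof of Lemma~\ref{charvec}.)

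For injectivity, suppose $f(x) = Ax \oplus b$ and $f'(x) = Ax \oplus b'$ are both mod-$4$-preserving with the same linear part. The forward direction of Theorem~\ref{affine4cond} gives $\left\vert a_i\right\vert + 2(a_i \cdot b) \equiv 1 \pmod 4$ for every $i$; since each $\left\vert a_i\right\vert$ is odd, a short congruence manipulation rewrites this as $a_i \cdot b \equiv c_i(A) \pmod 2$, i.e., $A^{T} b = c(A)$ over $\mathbb{F}_2$, where $c(A)$ is the characteristic vector of Lemma~\ref{charvec}. The same holds for $b'$, and since $A$ is invertible this forces $b = b' = (A^{T})^{-1} c(A)$, hence $f = f'$. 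Therefore $\Phi$ is a bijection and $\#\langle \operatorname{F}_4 \rangle_n = \#\langle \operatorname{T}_4 \rangle_n$. The only places that require any care are the bookkeeping translating the mod-$4$ condition $\left\vert a_i\right\vert + 2(a_i \cdot b) \equiv 1$ into the mod-$2$ condition $a_i \cdot b \equiv c_i(A)$, and confirming that the $\operatorname{T}_4 \mapsto \operatorname{F}_4$ substitution genuinely preserves the linear part (which follows because the linear part of a composition is the composition of the linear parts); neither is a serious obstacle given the results already in hand.
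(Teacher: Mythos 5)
Your proposal is correct and follows essentially the same route as the paper: the paper's own (very terse) proof also observes that the relation $A^{T}b=c(A)$ from the proof of Lemma~\ref{charvec} pins down a unique affine part $b=Ac(A)$ for each orthogonal $A$, which is exactly your injectivity-plus-surjectivity argument for the linear-part map $\Phi$. You have simply made explicit the well-definedness and surjectivity steps (via Theorem~\ref{affine4cond} and the $\operatorname{T}_4\mapsto\operatorname{F}_4$ promotion) that the paper leaves implicit in Lemma~\ref{charvec}.
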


\begin{proof}
The condition $A^{T}b=c(A)$ in the proof of Lemma~\ref{charvec} implies that
there is a unique vector $b=Ac(A)$ such that $f(x)=A(x)\oplus b$ is mod-$4$-preserving.
\end{proof}

Combining Lemmas~\ref{uniquerep} and \ref{charvec}, we find that the number of
representatives for the cosets of $\left\langle \operatorname{T}%
_{6}\right\rangle _{n}$ in $\left\langle \operatorname{T}_{4}\right\rangle
_{n}$ equals the number of $n$-bit strings with Hamming weight $4$. \ An
explicit formula for this quantity is given by Knuth \cite[p. 70]{knuth:vol1}.
\ This now completes the proof of Theorem~\ref{exactcount}.

Table~\ref{table:counting} gives the number of generators of each class for
$3\leq n\leq7$.

\begin{table}[h]
\scalebox{.9}{
\parbox{\textwidth}{
\begin{tabular}{l |  l l l l l}
& \multicolumn{1}{c}{$n=3$} & \multicolumn{1}{c}{$n=4$} & \multicolumn{1}{c}{$n=5$} & \multicolumn{1}{c}{$n=6$} & \multicolumn{1}{c}{$n=7$} \\
\hline$\left<\operatorname{Toffoli}\right>$ & $\num{37980}$ & $\num{20919528228864}$ & $2.6313\times 10^{35}$ & $1.2689\times 10^{89}$ & $3.8562\times 10^{215}$  \\
$\left<\operatorname{Fredkin, NOT}\right>$ & $\num{480}$ & $\num{1625691648}$ & $4.3776\times 10^{26}$ & $6.9238\times 10^{70}$ & $1.6100\times 10^{178}$  \\
$\left<\operatorname{Fredkin, NOTNOT}\right>$ & $\num{450}$ & $\num{1624862256}$ & $4.3776\times 10^{26}$ & $6.9238\times 10^{70}$ & $1.6100\times 10^{178}$  \\
$\left<\operatorname{C_{3}}\right>$ & $\num{36}$ & $\num{9953280}$ & $5.7818\times 10^{21}$ & $2.9340\times 10^{60}$ & $5.1283\times 10^{156}$  \\
$\left<\operatorname{C_{4}}\right>$ & $\num{0}$ & $\num{414696}$ & $6.6368\times 10^{18}$ & $5.1015\times 10^{53}$ & $1.2863\times 10^{142}$  \\
$\left<\operatorname{C_{5}}\right>$ & $\num{0}$ & $\num{0}$ & $1.8962\times 10^{17}$ & $1.0352\times 10^{50}$ & $1.1760\times 10^{133}$  \\
$\left<\operatorname{C_{6}}\right>$ & $\num{0}$ & $\num{0}$ & $\num{0}$ & $2.1567\times 10^{48}$ & $4.4602\times 10^{128}$  \\
$\left<\operatorname{C_{7}}\right>$ & $\num{0}$ & $\num{0}$ & $\num{0}$ & $\num{0}$ & $7.0797\times 10^{126}$  \\
$\left<\operatorname{Fredkin}\right>$ & $\num{30}$ & $\num{414696}$ & $1.8962\times 10^{17}$ & $2.1567\times 10^{48}$ & $7.0797\times 10^{126}$  \\
$\left<\operatorname{CNOT}\right>$ & $\num{1152}$ & $\num{301056}$ & $\num{309657600}$ & $\num{1269678735360}$ & $\num{20807658944593920}$  \\
$\left<\operatorname{CNOTNOT,NOT}\right>$ & $\num{72}$ & $\num{10368}$ & $\num{5149440}$ & $\num{10238607360}$ & $\num{82569982279680}$  \\
$\left<\operatorname{CNOTNOT}\right>$ & $\num{72}$ & $\num{10368}$ & $\num{5149440}$ & $\num{10238607360}$ & $\num{82569982279680}$  \\
$\left<\operatorname{T_4, NOT}\right>$ & $\num{0}$ & $\num{192}$ & $\num{9600}$ & $\num{691200}$ & $\num{90316800}$  \\
$\left<\operatorname{T_4, NOTNOT}\right>$ & $\num{0}$ & $\num{144}$ & $\num{8400}$ & $\num{648000}$ & $\num{87494400}$  \\
$\left<\operatorname{T_6, NOT}\right>$ & $\num{0}$ & $\num{0}$ & $\num{0}$ & $\num{23040}$ & $\num{2257920}$  \\
$\left<\operatorname{T_6, NOTNOT}\right>$ & $\num{0}$ & $\num{0}$ & $\num{0}$ & $\num{22320}$ & $\num{2222640}$  \\
$\left<\operatorname{T_4}\right>, \left<\operatorname{F_4}\right>$ & $\num{0}$ & $\num{24}$ & $\num{600}$ & $\num{21600}$ & $\num{1411200}$  \\
$\left<\operatorname{T_6}\right>$ & $\num{0}$ & $\num{0}$ & $\num{0}$ & $\num{720}$ & $\num{35280}$  \\
$\left<\operatorname{NOT}\right>$ & $\num{24}$ & $\num{192}$ & $\num{1920}$ & $\num{23040}$ & $\num{322560}$  \\
$\left<\operatorname{NOTNOT}\right>$ & $\num{18}$ & $\num{168}$ & $\num{1800}$ & $\num{22320}$ & $\num{317520}$  \\
$\left<\varnothing\right>$ & $\num{6}$ & $\num{24}$ & $\num{120}$ & $\num{720}$ & $\num{5040}$  \\
\end{tabular}
}}\caption{Number of $n$-bit generators for each reversible gate class.}%
\label{table:counting}%
\end{table}

\begin{theorem}
\label{approxcount}The asymptotic size of each reversible gate class is as
follows.%
\begin{align*}
\log_{2}\#\left\langle \operatorname{Toffoli}\right\rangle _{n}  &
=n2^{n}-\frac{2^{n}}{\ln2}+\frac{n}{2}+\frac{1}{2}\log_{2}2\pi+O(2^{-n})\\
\log_{2}\#\left\langle \operatorname{Fredkin},\operatorname{NOTNOT}%
\right\rangle _{n}  &  =n2^{n}-\frac{2^{n}}{\ln2}-2^{n}+n\log_{2}n+\log_{2}%
\pi+O(2^{-n})\\
\log_{2}\#\left\langle \operatorname{Fredkin},\operatorname{NOT}\right\rangle
_{n}  &  =\log_{2}\#\left\langle \operatorname{Fredkin},\operatorname{NOTNOT}%
\right\rangle _{n}+1\\
\log_{2}\#\left\langle \operatorname*{C}\nolimits_{k}\right\rangle _{n}  &
=n2^{n}-\frac{2^{n}}{\ln2}-2^{n}\log_{2}k+o( 2^{n} )\\
\log_{2}\#\left\langle \operatorname{Fredkin}\right\rangle _{n}  &
=n2^{n}-\frac{2^{n}}{\ln2}-2^{n}\log_{2}\frac{\pi e\sqrt{n}}{2}+o(2^{n})\\
\log_{2}\#\left\langle \operatorname{CNOT}\right\rangle _{n}  &  =n\left(
n+1\right)  -\alpha+O(2^{-n})\\
\log_{2}\#\left\langle \operatorname{CNOTNOT},\operatorname{NOT}\right\rangle
_{n}  &  =n\left(  n-1\right)  -\alpha+O(2^{-n})\\
\log_{2}\#\left\langle \operatorname{CNOTNOT}\right\rangle _{n}  &  =\log
_{2}\#\left\langle \operatorname{CNOTNOT},\operatorname{NOT}\right\rangle
_{n}-1\\
\log_{2}\#\left\langle \varnothing\right\rangle _{n}  &  =n\log_{2}n-\frac
{n}{\ln2}+\frac{1}{2}\log_{2}2\pi+O\mathopen{}\left(  \frac{1}{n}\right)
\mathclose{}\\
\log_{2}\#\left\langle \operatorname*{T}\nolimits_{4}\right\rangle _{n}  &
=\frac{n(n-1)}{2}-\beta+O( 2^{-n} )\\
\log_{2}\#\left\langle \operatorname*{T}\nolimits_{6}\right\rangle _{n}  &
=\frac{n^{2}-3n+4}{2}-\beta+O\mathopen{}\left(  2^{-n/2}\right)  \mathclose{}
,
\end{align*}
where
\begin{align*}
\alpha &  =-\sum_{i=1}^{\infty}\log_{2}(1-2^{-i})\approx1.7919,\\
\beta &  =-\sum_{i=1}^{\infty}\log_{2}(1-2^{-2i})\approx0.53839.
\end{align*}
Recall that $\#\left\langle \operatorname{F}_{4}\right\rangle _{n}%
=\#\left\langle \operatorname*{T}\nolimits_{4}\right\rangle _{n}$. \ The
asymptotics of the remaining affine classes follow from the rules
\begin{align*}
\log_{2}\#\left\langle G,\operatorname{NOT}\right\rangle _{n}  &  =n+\log
_{2}\#\left\langle G\right\rangle _{n},\\
\log_{2}\#\left\langle G,\operatorname{NOTNOT}\right\rangle _{n}  &
=n-1+\log_{2}\#\left\langle G\right\rangle _{n},
\end{align*}
where $\left\langle G\right\rangle $ is a linear class.
\end{theorem}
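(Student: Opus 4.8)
The plan is to derive Theorem~\ref{approxcount} directly from the exact formulas of Theorem~\ref{exactcount} by routine asymptotic analysis; the content is entirely in recognizing which standard estimate applies to each closed form. The formulas in Theorem~\ref{exactcount} come in three flavors: (i) a factorial of a quantity that is $\Theta(2^n)$ (the classes $\langle\operatorname{Toffoli}\rangle$, $\langle\operatorname{Fredkin},\operatorname{NOTNOT}\rangle$, and—after the reductions below—$\langle\operatorname{C}_k\rangle$ and $\langle\operatorname{Fredkin}\rangle$); (ii) a ``$q$-Pochhammer'' product $\prod_i(2^i-1)$ or $\prod_i(2^{2i}-1)$ times a power of $2$ (the affine/linear classes $\langle\operatorname{CNOT}\rangle$, $\langle\operatorname{CNOTNOT}\rangle$, $\langle\operatorname{T}_4\rangle=\langle\operatorname{F}_4\rangle$, and $\langle\varnothing\rangle=n!$); and (iii) products of factorials of binomial coefficients ($\langle\operatorname{Fredkin}\rangle$ and $\langle\operatorname{C}_k\rangle$), together with the one genuinely group-theoretic case $\langle\operatorname{T}_6\rangle$. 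For the NOT/NOTNOT extensions there is nothing new to do: the last bullet of Theorem~\ref{exactcount} already records $\#\langle G,\operatorname{NOT}\rangle_n=2^n\#\langle G\rangle_n$ and $\#\langle G,\operatorname{NOTNOT}\rangle_n=2^{n-1}\#\langle G\rangle_n$ for linear $\langle G\rangle$, so those asymptotics follow by adding $n$ or $n-1$ to the base class; likewise $\log_2\#\langle\operatorname{Fredkin},\operatorname{NOT}\rangle_n=\log_2\#\langle\operatorname{Fredkin},\operatorname{NOTNOT}\rangle_n+1$ is immediate.

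For flavor (i) the tool is Stirling in the form $\log_2(m!)=m\log_2 m-m/\ln 2+\tfrac12\log_2(2\pi m)+O(1/m)$: applied with $m=2^n$ it gives $\langle\operatorname{Toffoli}\rangle$, with $m=2^{n-1}$ (squared) it gives $\langle\operatorname{Fredkin},\operatorname{NOTNOT}\rangle$, and with $m=n$ it gives $\langle\varnothing\rangle$. For flavor (ii) I would use the identity $\sum_{i=1}^{n}\log_2(2^i-1)=\tfrac{n(n+1)}{2}+\sum_{i=1}^{n}\log_2(1-2^{-i})=\tfrac{n(n+1)}{2}-\alpha+O(2^{-n})$, where $\alpha=-\sum_{i\ge 1}\log_2(1-2^{-i})$ converges and the tail past $i=n$ is $O(2^{-n})$ since $\log_2(1-2^{-i})=-2^{-i}/\ln 2+O(4^{-i})$; the analogous identity with $2^{2i}-1$ produces $\beta$. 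Feeding these into the powers of $2$ that prefix the products in Theorem~\ref{exactcount} yields the quadratic-in-$n$ leading terms and the constants $\alpha,\beta$ for $\langle\operatorname{CNOT}\rangle$, $\langle\operatorname{CNOTNOT}\rangle$, $\langle\operatorname{T}_4\rangle$, $\langle\operatorname{F}_4\rangle$, with the parity $n=2m$ versus $n=2m+1$ tracked through the upper limit of the $\prod(2^{2i}-1)$ product.

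For flavor (iii) two further estimates are needed. First, for $\langle\operatorname{C}_k\rangle$, a roots-of-unity filter gives $\sum_{j\equiv i\,(k)}\binom nj=\tfrac1k\sum_{\ell=0}^{k-1}\omega^{-i\ell}(1+\omega^\ell)^n=\tfrac{2^n}{k}+O\bigl((2\cos(\pi/k))^n\bigr)$ with $\omega=e^{2\pi i/k}$, and since $2\cos(\pi/k)<2$ this error is $o(2^n)$; plugging $m_i=\tfrac{2^n}{k}(1+o(1))$ into Stirling and summing the $k$ terms gives $n2^n-\tfrac{2^n}{\ln 2}-2^n\log_2 k+o(2^n)$. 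Second, for $\langle\operatorname{Fredkin}\rangle=\prod_{i=0}^{n}\binom ni!$, Stirling reduces matters to $\sum_i\binom ni\log_2\binom ni=n2^n-\sum_i\binom ni\log_2\!\tfrac{2^n}{\binom ni}$, and the subtracted sum is $2^n$ times the Shannon entropy (in bits) of the binomial distribution $\mathrm{Bin}(n,\tfrac12)$, which by a local central limit theorem equals $\tfrac12\log_2\!\bigl(\tfrac{\pi e n}{2}\bigr)+O(1/n)$; this, combined with the $-2^n/\ln 2$ coming from $\sum_i\binom ni=2^n$ in Stirling's second term, produces the stated $-2^n\log_2\sqrt{\pi e n/2}$ behavior, while the lower-order sum $\sum_i\tfrac12\log_2(2\pi\binom ni)$ and the $\tfrac{1}{12m}$ corrections are $o(2^n)$ and vanish.

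The main obstacle is $\langle\operatorname{T}_6\rangle$. Here one first needs the exact count, which is not a single closed form: as in Lemmas~\ref{uniquerep} and~\ref{charvec}, $\langle\operatorname{T}_6\rangle_n$ is an index-$N$ subgroup of the orthogonal group $\langle\operatorname{T}_4\rangle_n$ with $N=\#\{c\in\{0,1\}^n : |c|\equiv 0\ (\mathrm{mod}\ 4)\}$, so $\#\langle\operatorname{T}_6\rangle_n=\#\langle\operatorname{T}_4\rangle_n/N$; by the roots-of-unity estimate $N=\tfrac{2^n}{4}+O(2^{n/2})$, whence $\log_2\#\langle\operatorname{T}_6\rangle_n=\log_2\#\langle\operatorname{T}_4\rangle_n-(n-2)+O(2^{-n/2})=\tfrac{n^2-3n+4}{2}-\beta+O(2^{-n/2})$. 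The delicate bookkeeping is in combining the four residues of $n$ modulo $4$, the oscillating factor $(-1)^{\lfloor n/4\rfloor}$ in the exact value of $N$, and the parity of the $\prod(2^{2i}-1)$ upper limit inherited from $\#\langle\operatorname{T}_4\rangle_n$; none of this affects the two leading terms or the constant $\beta$, but it does pin down the error exponent as $2^{-n/2}$ rather than $2^{-n}$ (because $N$'s correction is only $O(2^{n/2})$). With that case in hand, assembling the pieces—and checking that each error term is uniform in the relevant residue class—completes the proof.
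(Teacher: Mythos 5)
Your proposal is correct and follows essentially the same route as the paper: read off each asymptotic from the exact counts of Theorem~\ref{exactcount} via Stirling, the $q$-Pochhammer tail estimate defining $\alpha$ and $\beta$, the $a_i=\tfrac{2^n}{k}(1+o(1))$ estimate for $\left\langle \operatorname*{C}_k\right\rangle$, and the entropy of $\operatorname{Bin}(n,\tfrac12)$ for $\left\langle \operatorname{Fredkin}\right\rangle$. The only (harmless) deviation is that for $\left\langle \operatorname*{T}_6\right\rangle$ you bypass the four explicit formulas mod $4$ and instead divide $\#\left\langle \operatorname*{T}_4\right\rangle_n$ by the coset index $N=\tfrac{2^n}{4}+O(2^{n/2})$ from Lemmas~\ref{uniquerep} and~\ref{charvec}, which cleanly yields the same $\tfrac{n^2-3n+4}{2}-\beta+O(2^{-n/2})$ without the case analysis.
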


\begin{proof}
Most of these results follow directly from Theorem~\ref{exactcount} with
liberal use of well-known logarithm properties, especially Stirling's
approximation:
\[
\log_{2}(m!)=m\log_{2}m-\frac{m}{\ln2}+\frac{1}{2}\log_{2}2\pi m+O\left(
\frac{1}{m}\right)  .
\]
For the affine classes, we use the fact that
\begin{align*}
\sum_{i=1}^{m}\log_{2}(2^{i}-1)  &  =\frac{m(m+1)}{2}+\sum_{i=1}^{m}\log
_{2}(1-2^{-i})\\
&  =\frac{m(m+1)}{2}-\alpha+O(2^{-m})
\end{align*}
where $\alpha=-\sum_{i=1}^{\infty}\log_{2}\left(  1-2^{-i}\right)  $. \ Note
that $\alpha=-\log_{2}\left(  1/2;1/2\right)  _{\infty}$ where
$(1/2;1/2)_{\infty}$ is the $q$\textit{-Pochhammer symbol}. \ Similarly,
$\beta:=-\sum_{i=1}^{\infty}\log_{2}\left(  1-2^{-2i}\right)  =-\log
_{2}\left(  1/4;1/4\right)  _{\infty}$ differs from the $m^{th}$ partial sum
by $O(2^{-2m})$.

It turns out that the even and odd cases of $\#\left\langle \operatorname*{T}%
\nolimits_{4}\right\rangle $ have the same asymptotic behavior, and similarly
for the four cases of $\#\left\langle \operatorname*{T}\nolimits_{6}%
\right\rangle $.

However, there are two special cases that require extra care: $\left\langle
\operatorname*{C}\nolimits_{k}\right\rangle $ (for $k\geq3$) and
$\langle\operatorname{Fredkin}\rangle$. \ Recall that
\[
\#\left\langle \operatorname*{C}\nolimits_{k}\right\rangle _{n}=\prod
_{i=0}^{k-1}a_{i}!.
\]
where we define $a_{i}=\sum_{j\equiv i\left(  \operatorname{mod}k\right)
}\binom{n}{j}$. \ Clearly $a_{i}=\frac{2^{n}}{k}\left(  1+o(1)\right)  $.
\ Then Stirling's approximation gives
\begin{align*}
\log_{2}\#\left\langle \operatorname*{C}\nolimits_{k}\right\rangle _{n}  &
=\sum_{i=0}^{k-1}\left(  a_{i}\log_{2}a_{i}-\frac{a_{i}}{\ln2}+o(a_{i})\right)
\\
&  =\sum_{i=0}^{k-1}\left(  a_{i}\log_{2}\frac{2^{n}}{k}+a_{i}\log_{2}\left(
1+o(1)\right)  -\frac{a_{i}}{\ln2}+o(a_{i})\right) \\
&  =n2^{n}-\frac{2^{n}}{\ln2}-2^{n}\log_{2}k+o(2^{n}).
\end{align*}

For $\langle\operatorname{Fredkin}\rangle$, we use the fact that if $x$\ is a
uniformly-random $n$-bit string, then the entropy of $\left\vert x\right\vert
$\ is
\[
\frac{1}{2}\log_{2}\frac{\pi en}{2}+O\left(  \frac{1}{n}\right)  =-\sum
_{i=0}^{n}2^{-n}\binom{n}{i}\log_{2}\left(  2^{-n}\binom{n}{i}\right)  .
\]
One can show this by approximating the binomial with a Gaussian distribution.
\ Rearranging gives us
\[
\sum_{i=0}^{n}\binom{n}{i}\log_{2}\binom{n}{i}=n2^{n}-2^{n}\log_{2}\frac{\pi
e\sqrt{n}}{2}-O\left(  \frac{2^{n}}{n}\right)  .
\]
Now we can apply Stirling's approximation to $\#\left\langle
\operatorname{Fredkin}\right\rangle _{n}$, as calculated in
Theorem~\ref{exactcount}:
\begin{align*}
\log_{2}\#\left\langle \operatorname{Fredkin}\right\rangle _{n}  &
=\sum_{i=0}^{n}\left[  \binom{n}{i}\log_{2}\binom{n}{i}-\binom{n}{i}+o\left(
\binom{n}{i}\right)  \right] \\
&  =n2^{n}-\frac{2^{n}}{\ln2}-2^{n}\log_{2}\frac{\pi e\sqrt{n}}{2}-o(2^{n}).
\end{align*}

\end{proof}

One can clearly see \textquotedblleft the pervasiveness of
universality\textquotedblright\ in Table~\ref{table:counting}: within almost
every class, the gates that are universal for that class quickly come to
dominate the gates that are not universal for that class in number. \ Theorem
\ref{approxcount}\ lets us make that observation rigorous.

\begin{corollary}
Let $\mathcal{C}$ be any reversible gate class, and let $G$\ be an $n$-bit
gate chosen uniformly at random from $\mathcal{C}$. \ Then%
\[
\Pr\left[  G\text{ generates }\mathcal{C}\right]  =1-O\left(  2^{-n}\right)
,
\]
unless $\mathcal{C}$ is one of the \textquotedblleft$\operatorname{NOT}%
$\ classes\textquotedblright\ ($\left\langle \operatorname{Fredkin}%
,\operatorname{NOT}\right\rangle $, $\left\langle \operatorname{F}%
_{4},\operatorname{NOT}\right\rangle $, $\left\langle \operatorname{T}%
_{6},\operatorname{NOT}\right\rangle $, or $\left\langle \operatorname{NOT}%
\right\rangle $), in which case%
\[
\Pr\left[  G\text{ generates }\mathcal{C}\right]  =\frac{1}{2}-O\left(
2^{-n}\right)  .
\]

\end{corollary}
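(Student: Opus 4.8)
The plan is to express the event ``$G$ generates $\mathcal{C}$'' in terms of the lower covers of $\mathcal{C}$ in the lattice of Figure~\ref{lattice}, and then read off the sizes of these subclasses from Theorems~\ref{exactcount} and \ref{approxcount}. Call $\mathcal{D}$ a \emph{lower cover} of $\mathcal{C}$ if $\mathcal{D}\subsetneq\mathcal{C}$ with no class strictly in between. Since the lattice has no infinite ascending chain, every proper subclass of $\mathcal{C}$ is contained in some lower cover of $\mathcal{C}$; hence for an $n$-bit gate $G$ drawn uniformly from $\mathcal{C}_n$ we have $\left\langle G\right\rangle=\mathcal{C}$ iff $G$ lies in no lower cover of $\mathcal{C}$, so
\[
\Pr\left[\left\langle G\right\rangle\neq\mathcal{C}\right]=\frac{\bigl|\bigcup_{\mathcal{D}}\mathcal{D}_n\bigr|}{|\mathcal{C}_n|},
\]
the union running over the lower covers $\mathcal{D}$ of $\mathcal{C}$, and this quantity lies between $\max_{\mathcal{D}}|\mathcal{D}_n|/|\mathcal{C}_n|$ and $\sum_{\mathcal{D}}|\mathcal{D}_n|/|\mathcal{C}_n|$. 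So the whole problem reduces to bounding the ratios $|\mathcal{D}_n|/|\mathcal{C}_n|$ over lower covers.

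First I would check that only finitely many distinct sets $\mathcal{D}_n$ occur. For every class except the $\gateMOD k$ family this is immediate: Figure~\ref{lattice} exhibits at most three lower covers. The class $\gateMOD k$ has infinitely many lower covers ($\gateMOD{kp}$ for each prime $p$, and for $k\in\{2,4\}$ one affine class as well), but this is harmless: once $kp>n$ the Hamming weights $0,\dots,n$ are already distinct modulo $kp$, so $\gateMOD{kp}$ coincides with $\left\langle\operatorname{Fredkin}\right\rangle$ on $n$-bit transformations, leaving only $O(n)$ distinct subclass sets, whence a union bound is legitimate.

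Then I would plug in the asymptotics. The key dichotomy is this. For every lower cover $\mathcal{D}$ of $\mathcal{C}$ \emph{other than} a ``$+\operatorname{NOTNOT}$'' class sitting directly below the corresponding ``$+\operatorname{NOT}$'' class, Theorem~\ref{approxcount} gives $\log_2|\mathcal{C}_n|-\log_2|\mathcal{D}_n|\ge n-O(1)$, so $|\mathcal{D}_n|/|\mathcal{C}_n|=O(2^{-n})$: the gap is $\Theta(2^n)$ within the conservative/mod-$k$ portion (e.g.\ $\log_2\#\gateMOD{kp}_n-\log_2\#\gateMOD k_n\sim-2^n\log_2 p$, and $\log_2\#\gateMOD 2_n$ exceeds $\log_2\#\left\langle\operatorname{CNOTNOT}\right\rangle_n$ by $\Theta(2^n)$), and it is at least $2n$ in the affine portion (e.g.\ $\left\langle\operatorname{CNOT}\right\rangle$ over $\left\langle\operatorname{CNOTNOT},\operatorname{NOT}\right\rangle$, $\left\langle\operatorname{CNOTNOT}\right\rangle$ over $\left\langle\operatorname{F}_4,\operatorname{NOTNOT}\right\rangle$, $\left\langle\operatorname{T}_4\right\rangle$ over $\left\langle\operatorname{T}_6\right\rangle$, or $\left\langle\operatorname{T}_6,\operatorname{NOT}\right\rangle$ over $\left\langle\operatorname{NOT}\right\rangle$). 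By contrast, when $\mathcal{C}$ is one of the ``$\operatorname{NOT}$ classes'' ($\left\langle\operatorname{Fredkin},\operatorname{NOT}\right\rangle$, $\left\langle\operatorname{F}_4,\operatorname{NOT}\right\rangle$, $\left\langle\operatorname{T}_6,\operatorname{NOT}\right\rangle$, $\left\langle\operatorname{NOT}\right\rangle$) and $\mathcal{D}$ is the matching ``$\operatorname{NOTNOT}$'' class, the \emph{exact} formulas of Theorem~\ref{exactcount} give $|\mathcal{D}_n|/|\mathcal{C}_n|=\tfrac12$: either $(2^{n-1}!)^2/\bigl(2(2^{n-1}!)^2\bigr)$ in the Fredkin case, or $2^{n-1}\#\mathcal{D}^{\ast}_n/\bigl(2^n\#\mathcal{D}^{\ast}_n\bigr)$ when the linear part is $\mathcal{D}^{\ast}$.

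Combining the two cases finishes the proof. If $\mathcal{C}$ is not a ``$\operatorname{NOT}$ class'', every ratio is $O(2^{-n})$ and there are $O(n)$ of them, so $\Pr[\left\langle G\right\rangle=\mathcal{C}]=1-O(2^{-n})$; if $\mathcal{C}$ is a ``$\operatorname{NOT}$ class'', then $|\bigcup\mathcal{D}_n|/|\mathcal{C}_n|$ lies between $\tfrac12$ (from the matching $\operatorname{NOTNOT}$ cover alone) and $\tfrac12+O(2^{-n})$ (the remaining covers add only $O(2^{-n})$), so $\Pr[\left\langle G\right\rangle=\mathcal{C}]=\tfrac12-O(2^{-n})$. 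The main obstacle is organizational rather than conceptual: one must enumerate the lower covers of the roughly twenty finite shapes of class in Figure~\ref{lattice}, plus the mod-$k$ family, and verify ratio by ratio that the leading terms of Theorem~\ref{approxcount} leave a gap of at least $n-O(1)$ in every case except a ``$+\operatorname{NOT}$'' class sitting directly above its ``$+\operatorname{NOTNOT}$'' partner, where the index drops only to $2$ — precisely the exceptional situation recorded in the statement.
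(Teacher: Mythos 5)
Your strategy---reduce the event $\langle G\rangle\neq\mathcal{C}$ to membership in a lower cover, observe that only $O(n)$ distinct $n$-bit restrictions of lower covers occur (since $\left\langle \operatorname*{C}_{kp}\right\rangle$ collapses to $\left\langle \operatorname{Fredkin}\right\rangle$ at level $n$ once $kp>n$), and then compare leading terms of Theorem~\ref{approxcount} cover by cover---is exactly the argument the paper intends; the paper itself offers no written proof beyond the remark that Theorem~\ref{approxcount} ``lets us make that observation rigorous,'' and your union bound over lower covers is the correct formalization. The asymptotic gaps you cite are also right where you cite them.

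The problem is in the final matching step. Your dichotomy says the ratio is $\tfrac12$ precisely when the lower cover is ``the matching $\operatorname{NOTNOT}$ class,'' and $O(2^{-n})$ otherwise. Carried out honestly, this flags a \emph{fifth} class: $\left\langle \operatorname{CNOTNOT},\operatorname{NOT}\right\rangle$. Its index-two lower cover is $\left\langle \operatorname{CNOTNOT}\right\rangle=\left\langle \operatorname{CNOTNOT},\operatorname{NOTNOT}\right\rangle$ (since $\operatorname{CNOTNOT}$ already generates $\operatorname{NOTNOT}$ using one ancilla set to $1$), and Theorem~\ref{exactcount} gives $\#\left\langle \operatorname{CNOTNOT}\right\rangle_{n}=\tfrac12\#\left\langle \operatorname{CNOTNOT},\operatorname{NOT}\right\rangle_{n}$ exactly: a parity-respecting affine map generates the $\operatorname{NOT}$ part only if it is parity-flipping, and exactly half of them are. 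So $\Pr[\langle G\rangle=\left\langle \operatorname{CNOTNOT},\operatorname{NOT}\right\rangle]=\tfrac12-O(2^{-n})$, not $1-O(2^{-n})$; Table~\ref{table:counting} corroborates this, as the generator counts for $\left\langle \operatorname{CNOTNOT},\operatorname{NOT}\right\rangle$ and $\left\langle \operatorname{CNOTNOT}\right\rangle$ coincide for every $n$ shown. Your write-up asserts that the enumeration ``lands precisely'' on the four classes in the statement, which it does not: either your case-check silently skipped $\left\langle \operatorname{CNOTNOT},\operatorname{NOT}\right\rangle$, or you should have reported that the statement's exceptional list needs this class added. Since your method is sound, the honest conclusion of your proof is a corrected version of the corollary, and you should say so explicitly rather than forcing agreement with the list as printed.
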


\section{Appendix: Alternate Proofs of Theorems \ref{noshifter} and
\ref{k2or4}\label{ALTPROOF}}

\begin{proof}
[Alternate Proof of Theorem \ref{noshifter}]Suppose $j\not \equiv 0\left(
\operatorname{mod}k\right)  $, and let $q$\ be $j$'s order mod $k$ (that is,
the least positive $i$\ such that $ij\equiv0\left(  \operatorname{mod}%
k\right)  $). \ We first show that $q$ must be a power of $2$. \ For
$i\in\left\{  0,\ldots,q-1\right\}  $, let $S_{i}$ be the set of all
$x\in\left\{  0,1\right\}  ^{n}$ whose Hamming weight satisfies $\left\vert
x\right\vert \equiv ij\left(  \operatorname{mod}k\right)  $. \ Let $q$ be the
number of distinct $S_{i}$'s. \ Now, since the gate $G$ maps everything in
$S_{i}$ to $S_{\left(  i+1\right)  \operatorname{mod}q}$, we have%
\[
\left\vert S_{0}\right\vert =\cdots=\left\vert S_{q-1}\right\vert =\frac
{2^{n}}{q}.
\]
But the above must be an integer.

Observe that, if there existed a $G$ such that $\left\vert G\left(  x\right)
\right\vert \equiv\left\vert x\right\vert +j\left(  \operatorname{mod}%
k\right)  $, where $j$'s order mod $k$ was any positive power of $2$\ (say
$2^{p}$), then the iterated map $G^{2^{p-1}}$ would satisfy%
\[
\left\vert G^{2^{p-1}}\left(  x\right)  \right\vert \equiv\left\vert
x\right\vert +\frac{k}{2}\left(  \operatorname{mod}k\right)  ,
\]
and so would have order \textit{exactly} $2$ mod $k$. \ For that reason, it
suffices to rule out, for all $k\geq2$ and all $n$, the possibility of a
reversible transformation $G$ that satisfies%
\[
\left\vert G\left(  x\right)  \right\vert \equiv\left\vert x\right\vert
+k\left(  \operatorname{mod}2k\right)
\]
for all $x\in\left\{  0,1\right\}  ^{n}$.

To do the above, it is necessary and sufficient to show that there is a
\textquotedblleft cardinality obstruction\textquotedblright\ to any $G$ of the
required form. \ In other words, for all $j\in\left\{  0,\ldots,2k-1\right\}
$, let%
\[
A_{n,j}:=\left\{  x\in\left\{  0,1\right\}  ^{n}:\left\vert x\right\vert
\equiv j\left(  \operatorname{mod}2k\right)  \right\}
\]
be the set of $n$-bit strings of Hamming weight $j$ mod\ $2k$. \ Then the
problem boils down to showing that for all $k\geq2$ and $n$, there exists a
$j<k$ such that $\left\vert A_{n,j}\right\vert \neq\left\vert A_{n,j+k}%
\right\vert $---and therefore, that no mapping from $A_{n,j}$\ to $A_{n,j+k}%
$\ (or vice versa) can be reversible.

This, in turn, can be interpreted as a statement about binomial coefficients:
for all $k\geq2$ and all $n$, there exists a $j$ such that%
\[
\sum_{i=j,j+2k,j+4k,\ldots}\binom{n}{i}\neq\sum_{i=j,j+2k,j+4k,\ldots}%
\binom{n}{i+k}.
\]
A nice way to prove the above statement is by using what we call the
\textit{wraparound Pascal's triangle of width }$2k$: that is, Pascal's
triangle with a periodic boundary condition. \ This is simply an iterative map
on row vectors $\left(  a_{0},\ldots,a_{2k-1}\right)  \in\mathbb{Z}^{2k}$,
obtained by starting from the row $\left(  1,0,\ldots,0\right)  $, then
repeatedly applying the update rule $a_{i}^{\prime}:=a_{i}+a_{\left(
i-1\right)  \operatorname{mod}2k}$\ for all $i$. \ So for example, when
$2k=4$\ we obtain%
\[%
\begin{array}
[c]{cccc}%
1 & 0 & 0 & 0\\
1 & 1 & 0 & 0\\
1 & 2 & 1 & 0\\
1 & 3 & 3 & 1\\
2 & 4 & 6 & 4\\
6 & 6 & 10 & 10\\
16 & 12 & 16 & 20\\
\vdots & \vdots & \vdots & \vdots
\end{array}
\]
It is not hard to see that the $i^{th}$\ entry of the $n^{th}$\ row of the
above \textquotedblleft triangle,\textquotedblright\ encodes $\left\vert
A_{n,i}\right\vert $: that is, the number of $n$-bit strings\ whose Hamming
weights are congruent to $i$\ mod\ $2k$.

So the problem reduces to showing that, when $k\geq2$, no row of the
wraparound Pascal's triangle of width $2k$\ can have the form%
\[
\left(  a_{0},\ldots,a_{k-1},a_{0},\ldots,a_{k-1}\right)  .
\]
That is, no row can consist of the same list of $k$ numbers repeated twice.
\ (Note we \textit{can} get rows that satisfy $a_{i}=a_{i+k}$\ for
\textit{specific} values of $i$: to illustrate, in the width-$4$ case
above,\ we have $a_{1}=a_{3}=4$ in the fifth row, and $a_{0}=a_{2}=16$ in the
seventh row. \ But we need to show that no row can satisfy $a_{i}=a_{i+k}%
$\ for all $i\in\left\{  0,\ldots,k-1\right\}  $\ simultaneously.) \ We prove
this as follows.

Notice that the update rule that defines the wraparound Pascal's triangle,
namely $a_{i}^{\prime}:=a_{i}+a_{\left(  i-1\right)  \operatorname{mod}2k}$,
is just a linear transformation on $\mathbb{R}^{2k}$, corresponding to a
$2k\times2k$ band-diagonal matrix $M$. \ For example, when $k=2$ we have%
\[
M=\left(
\begin{array}
[c]{cccc}%
1 & 1 & 0 & 0\\
0 & 1 & 1 & 0\\
0 & 0 & 1 & 1\\
1 & 0 & 0 & 1
\end{array}
\right)  .
\]
Notice further that $\operatorname*{rank}\left(  M\right)  =2k-1$. \ The image
of $M$ is a $\left(  2k-1\right)  $-dimensional\ subspace $P\leq
\mathbb{R}^{2k}$ (the \textquotedblleft parity-respecting
subspace\textquotedblright), which is defined by the linear equation%
\[
a_{0}+a_{2}+\cdots+a_{2k-2}=a_{1}+a_{3}+\cdots+a_{2k-1}.
\]
Thus, $M$ acts invertibly, as long we restrict to vectors in $P$.

Next, let $D\leq\mathbb{R}^{2k}$ (the \textquotedblleft duplicate
subspace\textquotedblright) be the $k$-dimensional subspace defined by the
$k$\ linear equations%
\[
a_{0}=a_{k},\ldots,a_{k-1}=a_{2k-1}.
\]
Then let $S=P\cap D$\ be the $\left(  k-1\right)  $-dimensional intersection
of the parity-respecting and duplicate subspaces.

Observe that $S$ is an invariant subspace of $M$: that is, if $x\in S$, then
$Mx\in S$. \ But now, using the fact that $M$ acts invertibly within $P$, this
means that the converse also holds: namely, if $x\in P\setminus S$, then
$Mx\in P\setminus S$. \ In other words: as we generate more and more rows of
the wraparound Pascal's triangle, if we're not \textit{already} in $S$ by the
second row (i.e., after the first time we've applied $M$), then we're never
going to get into $S$.

Now, the first row of the wraparound Pascal's triangle is $\left(
1,0,\ldots,0\right)  $, and the second row is $\left(  1,1,0,\ldots,0\right)
$. \ This second row is not in $S$ unless $k=1$.
\end{proof}

\begin{proof}
[Alternate Proof of Theorem \ref{k2or4}]We will actually prove a stronger
result, that if $G$ is any nontrivial \textit{affine} gate that preserves
Hamming weight mod $k$, then either $k=2$\ or $k=4$. \ We have $G\left(
x\right)  =Ax\oplus b$, where $A$ is an $n\times n$ invertible matrix over
$\mathbb{F}_{2}$, and $b\in\mathbb{F}_{2}^{n}$. \ Since $G$ is nontrivial,
Lemma \ref{affcontriv}\ implies that\ at least one of $A$'s column vectors
$v_{1},\ldots,v_{n}$ must have Hamming weight at least $2$; assume without
loss of generality that $v_{1}$\ is such a column. \ Notice that $\left\vert
G\left(  0^{n}\right)  \right\vert \equiv\left\vert b\right\vert
\equiv0\left(  \operatorname{mod}k\right)  $, while%
\[
\left\vert G\left(  e_{1}\right)  \right\vert \equiv\left\vert v_{1}\oplus
b\right\vert \equiv1\left(  \operatorname{mod}k\right)
\]

Clearly $\left\vert G\left(  e_{1}\right)  \right\vert \equiv\left\vert
e_{1}\right\vert \equiv1\left(  \operatorname{mod}k\right)  $. \ Let $y$ be an
$n$-bit string whose first bit is $0$. \ Then by Lemma \ref{inclex}, we have%
\begin{align*}
1+\left\vert y\right\vert  &  \equiv\left\vert e_{1}\oplus y\right\vert \\
&  \equiv\left\vert G\left(  e_{1}\oplus y\right)  \right\vert \\
&  \equiv\left\vert G\left(  e_{1}\right)  \oplus G\left(  y\right)  \oplus
b\right\vert \\
&  \equiv\left\vert Ae_{1}\oplus b\oplus b\oplus G\left(  y\right)
\right\vert \\
&  \equiv\left\vert v_{1}\oplus G\left(  y\right)  \right\vert \\
&  \equiv\left\vert v_{1}\right\vert +\left\vert G\left(  y\right)
\right\vert -2\left(  v_{1}\cdot G\left(  y\right)  \right) \\
&  \equiv\left\vert v_{1}\right\vert +\left\vert y\right\vert -2\left(
v_{1}\cdot G\left(  y\right)  \right)  \left(  \operatorname{mod}k\right)  .
\end{align*}
Thus%
\[
2\left(  v_{1}\cdot G\left(  y\right)  \right)  \equiv\left\vert
v_{1}\right\vert -1\left(  \operatorname{mod}k\right)  .
\]
Note that the above equation must hold for all $2^{n-1}$\ possible $y$'s that
start with $0$. \ Such $y$'s, of course, account for half of all $n$-bit
strings. \ So we deduce that%
\[
\Pr_{x\in\left\{  0,1\right\}  ^{n}}\left[  2\left(  v_{1}\cdot x\right)
\equiv\left\vert v_{1}\right\vert -1\left(  \operatorname{mod}k\right)
\right]  \geq\frac{1}{2}.
\]
Equivalently, if we let $S$\ be the set of all $x\in\left\{  0,1\right\}
^{\left\vert v_{1}\right\vert }$ such that $2\left\vert x\right\vert
\equiv\left\vert v_{1}\right\vert -1\left(  \operatorname{mod}k\right)  $,
then we find that%
\begin{equation}
\Pr_{x\in\left\{  0,1\right\}  ^{\left\vert v_{1}\right\vert }}\left[  x\in
S\right]  \geq\frac{1}{2}, \label{prin}%
\end{equation}
or $\left\vert S\right\vert \geq2^{\left\vert v_{1}\right\vert -1}$. \ But we
will prove this impossible.

First suppose $k$ is even. \ Then for the inequality (\ref{prin})\ to have any
chance of being satisfied, $\left\vert v_{1}\right\vert $\ needs to be odd, so
assume it is. \ Then $S$ equals the set of all $x\in\left\{  0,1\right\}
^{\left\vert v_{1}\right\vert }$\ such that%
\begin{equation}
\left\vert x\right\vert \equiv0\left(  \operatorname{mod}\frac{k}{2}\right)  .
\label{prin2}%
\end{equation}
If $k=2$, then $\left\vert x\right\vert \equiv0\left(  \operatorname{mod}%
1\right)  $\ holds for all $x$, while if $k=4$, then $\left\vert x\right\vert
\equiv0\left(  \operatorname{mod}2\right)  $ holds whenever $\left\vert
x\right\vert $\ is even. \ In either case, (\ref{prin})\ is satisfied. \ On
the other hand, suppose $k\geq6$. \ Then we claim that (\ref{prin}) cannot
hold: in other words, that $\left\vert S\right\vert <2^{\left\vert
v_{1}\right\vert -1}$. \ To prove this, let%
\[
S^{\prime}:=\left\{  x\oplus e_{1}:x\in S\right\}
\]
contain, for each $x\in S$, the string $x^{\prime}$\ obtained by flipping the
first bit of $x$. \ Then clearly\ $\left\vert S\right\vert =\left\vert
S^{\prime}\right\vert $, and $S$ and $S^{\prime}$\ are disjoint (since no two
elements of $S$\ are neighbors in the Hamming cube). \ So it suffices to show
that $S\cup S^{\prime}$\ still does not cover all of $\left\{  0,1\right\}
^{\left\vert v_{1}\right\vert }$. \ Since $\frac{k}{2}\geq3$, observe that
$S^{\prime}$\ can contain at most one string of Hamming weight $1$, namely
$x^{\prime}=10\cdots0$\ (the neighbor of $x=0^{\left\vert v_{1}\right\vert }%
$). \ But since $\left\vert v_{1}\right\vert \geq2$, there are other strings
of Hamming weight $1$, not included in $S^{\prime}$. \ Hence $S\cup S^{\prime
}\neq\left\{  0,1\right\}  ^{\left\vert v_{1}\right\vert }$.

Next suppose $k\geq3$ is odd. Then first, we claim that we cannot have
$\left\vert v_{1}\right\vert =2$. \ For suppose we did. \ Then $\left\vert
b\oplus v_{1}\right\vert $ would be either $\left\vert b\right\vert $,\ or
$\left\vert b\right\vert -2$, or $\left\vert b\right\vert +2$. \ But this
contradicts the facts that $\left\vert b\right\vert \equiv0\left(
\operatorname{mod}k\right)  $, while $\left\vert b\oplus v_{1}\right\vert
\equiv1\left(  \operatorname{mod}k\right)  $. \ Since $\left\vert
v_{1}\right\vert \neq1$, this means that $\left\vert v_{1}\right\vert \geq3$.
\ But in that case, we can use a similar argument as before to show that
(\ref{prin}) cannot hold, and that $\left\vert S\right\vert <2^{\left\vert
v_{1}\right\vert -1}$. \ Letting $S^{\prime}$\ be as above, we again have that
$\left\vert S\right\vert =\left\vert S^{\prime}\right\vert $, and that $S$ and
$S^{\prime}$\ are disjoint. \ And we will again show that $S\cup S^{\prime}%
$\ fails to cover all of $\left\{  0,1\right\}  ^{\left\vert v_{1}\right\vert
}$. \ Notice that, since the Hamming weights of the $S$\ elements are
separated by $k\geq3$, every $S^{\prime}$ element that is \textquotedblleft
below\textquotedblright\ an $S$\ element must start with $0$, and every
$S^{\prime}$ element that is \textquotedblleft above\textquotedblright\ an
$S$\ element must start with $1$. \ Also, since $\left\vert v_{1}\right\vert
\geq3$, there must be some $x^{\prime}\in S^{\prime}$\ with a Hamming weight
that is neither maximal nor minimal (that is, neither $\left\vert
v_{1}\right\vert $\ nor $0$). \ But since the first bit of $x^{\prime}$\ has a
fixed value, not all strings of Hamming weight $\left\vert x^{\prime
}\right\vert $\ can belong to $S^{\prime}$. \ Hence $S\cup S^{\prime}%
\neq\left\{  0,1\right\}  ^{\left\vert v_{1}\right\vert }$, and $\left\vert
S\right\vert <2^{\left\vert v_{1}\right\vert -1}$.
\end{proof}

\end{document}